\documentclass[letterpaper,12pt]{article}
\usepackage{xr}
\usepackage{rotating}
\usepackage{lscape}
\usepackage{rotating}
\usepackage{graphicx}
\usepackage{pgf}
\usepackage{subfig}
\usepackage[utf8]{inputenc}
\usepackage{geometry,setspace}
\geometry{verbose,tmargin=1in,bmargin=1in,lmargin=1in,rmargin=1in}

\usepackage{siunitx,booktabs,caption,widetable,threeparttable}
\usepackage{tikz,fullpage}
\usetikzlibrary{arrows,	petri, topaths}

\usepackage[authoryear]{natbib}
\usepackage{bibentry}
\usepackage[colorlinks = true, allcolors = blue]{hyperref}
\usepackage{listings}
\usepackage[reftex]{theoremref}

\usepackage{amsmath, amsthm, amssymb, bm, mathtools, amsfonts, subdepth}
\mathtoolsset{showonlyrefs}

\theoremstyle{remark}
\newtheorem{rem}{Remark}

\theoremstyle{definition}
\newtheorem{example}{Example}

\newenvironment{customthm}[1]
{\innercustomthm}
{\endinnercustomthm}

\theoremstyle{plain}
\newtheorem{thm}{Theorem}
\newtheorem{assumption}{Assumption}
\newtheorem{lem}{Lemma}

\newenvironment{customass}[1]
{\innercustomass}
{\endinnercustomass}

\newtheorem{theorem}{Theorem}[section]
\newtheorem{lemma}{Lemma}[section]

\newtheorem{res}{Result}[section]

\usepackage{fixme}

\usepackage{graphicx,float}
\usepackage{color}
\usepackage{pdfpages}




\newcommand{\diag}{\operatorname{diag}}
\newcommand{\E}{\operatorname{E}}


%

%

%

%




\newcommand{\ba}{\begin{eqnarray}}
\newcommand{\be}{\begin{equation}}
\newcommand{\ban}{\begin{eqnarray*}}
\newcommand{\ea}{\end{eqnarray}}
\newcommand{\ean}{\end{eqnarray*}}
\newcommand{\ee}{\end{equation}}

\def\1{\mathop{1_{}}\nolimits}

\def\diag{\mathop{{\rm diag}}\nolimits}

\def\E{\mathop{{\rm E}_{}}\nolimits}

\def\th{\hat\theta}


\renewcommand{\mid}{\,|\,}

\renewcommand{\Xi}{X_{i}}
\usepackage{algorithm}
\usepackage{algpseudocode}
\usepackage[title]{appendix}

\DeclarePairedDelimiter{\abs}{\lvert}{\rvert}
\DeclarePairedDelimiter{\norm}{\lVert}{\rVert}

\newcommand\inverse{^{-1}}
\newcommand{\R}{\mathbb{R}}
\newcommand{\N}{\mathbb{N}}

\newcommand{\superscript}[1]{\ensuremath{^{\textrm{#1}}}}

\renewcommand{\th}[0]{\superscript{th}}


\renewcommand{\E}{\mathbb{E}}
\renewcommand{\Pr}{\mathbb{P}}

\title{Leave-out estimation of variance components}

\author{Patrick Kline, Raffaele Saggio, Mikkel S\o lvsten\thanks{We thank Isaiah Andrews, Bruce Hansen, Whitney Newey, Anna Mikusheva, Jack Porter, Andres Santos, Azeem Shaikh and seminar participants at UC Berkeley, CEMFI, Chicago, Harvard, UCLA, MIT, Northwestern, NYU, Princeton, Queens, UC San Diego, Wisconsin, the NBER Labor Studies meetings, and the CEME Interactions workshop for helpful comments. The data used in this study was generously provided by the Fondazione Rodolfo De Benedetti and originally developed by the Economics Department of the Universit\`a Ca Foscari Venezia under the supervision of Giuseppe Tattara. We thank the Berkeley Institute for Research on Labor and Employment for funding support and Schmidt Futures, which provided financial assistance for this project through the Labor Science Initiative at the Berkeley Opportunity Lab.}}

\begin{document}
	
	\maketitle

	\begin{abstract}
		We propose leave-out estimators of quadratic forms designed for the study of linear models with unrestricted heteroscedasticity. Applications include analysis of variance and tests of linear restrictions in models with many regressors.
		An approximation algorithm is provided that enables accurate computation of the estimator in very large datasets. 
		We study the large sample properties of our estimator allowing the number of regressors to grow in proportion to the number of observations. Consistency is established in a variety of settings where plug-in methods and estimators predicated on homoscedasticity exhibit first-order biases. 
		For quadratic forms of increasing rank, the limiting distribution can be represented by a linear combination of normal and non-central $\chi^2$ random variables, with normality ensuing under strong identification. Standard error estimators are proposed that enable tests of linear restrictions and the construction of uniformly valid confidence intervals for quadratic forms of interest.
		We find in Italian social security records that leave-out estimates of a variance decomposition in a two-way fixed effects model of wage determination yield substantially different conclusions regarding the relative contribution of workers, firms, and worker-firm sorting to wage inequality than conventional methods. Monte Carlo exercises corroborate the accuracy of our asymptotic approximations, with clear evidence of non-normality emerging when worker mobility between blocks of firms is limited. 	
	\end{abstract}
	
	\small{Keywords: variance components, heteroscedasticity, fixed effects, leave-out estimation, many regressors, weak identification, random projection}
	
	\newpage{}
	\onehalfspacing
	
	As economic datasets have grown large, so has the number of parameters employed in econometric models. Typically, researchers are interested in certain low dimensional summaries of these parameters that communicate the relative influence of the various economic phenomena under study. An important benchmark comes from \citet{fisher1925statistical}'s foundational work on analysis of variance (ANOVA) which he proposed as a means of achieving a ``separation of the variance ascribable to one group of causes, from the variance ascribable to other groups.''\footnote{See \citet{cochran1980fisher} for a discussion of the intellectual development of this early work.}
	
	A large experimental literature \citep{sacerdote2001peer,graham2008identifying,chetty2011does,angrist2014perils} employs variants of Fisher's ANOVA approach to	infer the degree of variability attributable to peer or classroom	effects. Related methods are often used to study heterogeneity across firms, workers, and schools in their responsiveness to exogenous regressors with continuous variation \citep{raudenbush1986hierarchical,raudenbush2002hierarchical,arellano2011identifying,graham2012identification}.
	In labor economics, log-additive models of worker and firm fixed effects are increasingly used to study worker-firm	sorting and the dispersion of firm specific pay premia \citep{abowd1999high,card2013workplace,card2016firms,song2015firming,sorkin2018ranking} and analogous methods have been applied to settings in health	economics \citep{finkelstein2016sources,silver2016essays} and the economics of education \citep{arcidiacono2012estimating}.

	This paper considers estimation of and inference on \emph{variance components}, which we define broadly as quadratic forms in the parameters of a linear model. Notably, this definition yields an important connection to the recent literature on testing linear restrictions in models with many regressors \citep{anatolyev2012inference,chao2014testing,cattaneo2017inference}. Traditional variance component estimators are predicated on the assumption that the errors in a linear model are identically distributed draws from a normal distribution. Standard references on this subject \citep[e.g.,][]{searle2009variance} suggest diagnostics for heteroscedasticity and non-normality, but offer little guidance regarding estimation and inference when these problems are encountered. A closely related literature on panel data econometrics proposes variance component estimators designed for fixed effects models that either restrict the dimensionality of the underlying group means \citep{bonhomme2019distributional} or the nature of the heteroscedasticity governing the errors \citep{andrews2008high,jochmans2016fixed}.
	
	
	Our first contribution is to propose a new variance component estimator designed for unrestricted linear models with 
	heteroscedasticity of unknown form.	The estimator is finite sample unbiased and can be written as a naive ``plug-in'' variance component estimator plus a bias correction term that involves ``cross-fit'' \citep{newey2018cross} estimators of observation-specific error variances. We also develop a representation of the estimator in terms of a covariance between outcomes and a ``leave-one-out'' generalized prediction \cite[e.g., as in][]{powell1989semiparametric}, which allows us to apply recent results on the behavior of second order U-statistics. Building on work by \cite{achlioptas2001database}, we propose a random projection method that enables computation of our estimator in very large
	datasets with little loss of accuracy.

	We study the asymptotic behavior of the proposed leave-out estimator in an environment where the number of regressors may be proportional to the sample size: a framework that has alternately been termed ``many covariates'' \citep{cattaneo2017inference} or ``moderate dimensional'' \citep{lei2016asymptotics} asymptotics. Verifiable design requirements are provided under which the estimator is consistent and we show in an Appendix that these conditions are weaker than those required by jackknife bias correction procedures \citep{quenouille1949approximate,hahn2004jackknife,dhaene2015split}. 
	A series of examples is discussed where the leave-out estimator is consistent, while estimators relying on jackknife or homoscedasticity-based bias corrections are not.
	
	We present three sets of theoretical results that enable inference based upon our estimator in a variety of settings. The first result concerns inference on quadratic forms of fixed rank, a problem which typically arises when testing a few linear restrictions in a model with many covariates \citep{cattaneo2017inference}. Familiar examples of such applications include testing that particular regressors are significant in a fixed effects model and conducting inference on the coefficients from a projection of fixed effects onto a low dimensional vector of covariates. Extending classic proposals by \cite{horn1975estimating} and \cite{mackinnon1985some}, we show that our leave-out approach can be used to construct an Eicker-White style variance estimator that is unbiased in the presence of unrestricted heteroscedasticity and that enables consistent inference on linear contrasts under weaker design restrictions than those considered by \citet{cattaneo2017inference}. 
	
	Next, we derive a result establishing asymptotic normality of quadratic forms of growing rank. Such quadratic forms typically arise when conducting analysis of variance but also feature in tests of model specification involving a large number of linear restrictions \citep{anatolyev2012inference,chao2014testing}. The large sample distribution of the estimator is derived using a variant of the arguments in \citet{chatterjee2008new} and \citet{soelvsten2017robust} and a standard error estimator is proposed that utilizes sample splitting formulations of the sort considered by \cite{newey2018cross}. This standard error estimator is shown to enable consistent inference on quadratic forms of growing rank in the presence of unrestricted heteroscedasticity when the regressor design allows for sample splitting and to provide conservative inference otherwise.
	

	
	Finally, we present conditions under which the large sample distribution of our estimator is non-pivotal and can be represented by a linear combination of normal and non-central $\chi^{2}$ random variables, with the non-centralities of the $\chi^{2}$ terms serving as weakly identified nuisance parameters. This distribution arises in a two-way fixed effects model when there are ``bottlenecks'' in the mobility network. 
	Such bottlenecks are shown to emerge, for example, when worker mobility is governed by a stochastic block model with limited mobility between blocks. To construct asymptotically valid confidence intervals in the presence of nuisance parameters, we propose inverting a minimum distance test statistic.	Critical values are obtained via an application of the procedure of \cite{andrews2016geometric}. The resulting confidence interval is shown to be valid uniformly in the values of the nuisance parameters and to have a closed form representation in many settings, which greatly simplifies its computation.

	We illustrate our results with an application of the two-way worker-firm
	fixed effects model of \citet{abowd1999high} to Italian social security records. 
	The proposed leave-out estimator
	finds a substantially smaller contribution of firms to wage inequality
	and much more assortativity in the matching of workers to firms than
	either the uncorrected plug-in estimator originally considered by \citet{abowd1999high}
	or the homoscedasticity-based correction procedure of \citet{andrews2008high}.
	When studying panels of length greater than two, we allow for
	serial correlation in the errors by employing a generalization of our estimator
	that leaves out all the observations in a worker-firm match. Failing to account for this dependence
	is shown to yield over-estimates of the variance of firm effects. 
	
	
	Projecting firm effect estimates onto measures of worker age and firm size, we find
	that older workers tend to be employed at firms offering higher firm wage effects;
	however, this phenomenon is largely explained by the tendency of older
	workers to sort to bigger firms. Leave-out standard errors for the coefficients
	of these linear projections are found to be several times larger than 
	a naive standard error predicated on the assumption that the estimated fixed effects are independent of each other.
	Stratifying our analysis by birth cohort, we formally reject the null hypothesis
	that older and younger workers face identical vectors of firm effects. However, the two sets of firm effects are 
	estimated to have a correlation coefficient of nearly 0.9, while the plug-in estimate of correlation is only 0.54.

	To assess the accuracy of our asymptotic
	approximations, we conduct a series of Monte Carlo exercises utilizing the realized mobility patterns of workers between firms. Clear evidence of non-normality arises in the sampling distribution
	of the estimated variance of firm effects
	in settings where the worker-firm mobility network is weakly connected. The proposed confidence regions are shown to provide reliable size control in both strongly and weakly identified settings.

	\section{Unbiased Estimation of Variance Components}
	\label{sec:unbiased}
	
	Consider the linear model
	\begin{align}\label{eq:model}
		y_{i} = x_{i}'\beta + \varepsilon_{i} && (i = 1,\ldots,n)
	\end{align}
	where the regressors $x_{i}\in\mathbb{R}^{k}$ are non-random and the design matrix $S_{xx}=\sum_{i=1}^{n}x_{i}x_{i}'$ has full rank. The unobserved errors $\{ \varepsilon_{i} \}_{i=1}^{n}$ are mutually independent and obey $\E[\varepsilon_{i}]=0$, but may possess observation specific variances $\E[\varepsilon_{i}^2]=\sigma_{i}^{2}$. 
	
	Our object of interest is a quadratic form $\theta=\beta'A\beta$ for some known non-random symmetric matrix $A\in\mathbb{R}^{k\times k}$ of rank $r$. Following \cite{searle2009variance}, when $A$ is positive semi-definite $\theta$ is a \emph{variance} component, while when $A$ is non-definite $\theta$ may be referred to as a \emph{covariance} component. Note that linear restrictions on the parameter vector $\beta$ can be formulated in terms of variance components: for a non-random vector $v$, the null hypothesis $v'\beta=0$ is equivalent to the restriction $\theta=0$ when $A=vv'$. Examples from the economics literature where variance components are of direct interest are discussed in Section \ref{sec:examples}.

	\subsection{Estimator}
	A naive plug-in estimator of $\theta$ is given by the quadratic form $\hat \theta_{\text{PI}}= \hat{\beta}'A\hat{\beta}$,	where $\hat{\beta}=S_{xx}^{-1}\sum_{i=1}^{n}x_{i}y_{i}$ denotes the Ordinary Least Squares (OLS) estimator of $\beta$. Estimation error in $\hat \beta$ leads the plug-in estimator to exhibit a bias involving a linear combination of the unknown variances $\{ \sigma_{i}^{2}\} _{i=1}^{n}$. Specifically, standard results on quadratic forms imply that $\E [ \hat \theta ]=\theta + \text{trace}( A \mathbb{V}[\hat \beta])$, where
	\begin{align}\label{eq:bias}
		\text{trace}\left( A \mathbb{V}[\hat \beta]\right) = \sum_{i=1}^{n} B_{ii} \sigma_{i}^{2} & \quad \mbox{and} \quad B_{ii}=x_{i}' S_{xx}^{-1} A S_{xx}^{-1} x_{i}.
	\end{align}
	As discussed in Section \ref{sec:examples}, this bias can be particularly severe when the dimension of the regressors $k$ is large relative to the sample size.

	A bias correction can be motivated by observing that an unbiased estimator of the $i$-th error variance is 
	\begin{align}\label{eq:sigmahat}
		\hat{\sigma}_{i}^{2}=y_{i}\left(y_{i}-x_{i}'\hat{\beta}_{-i}\right),
	\end{align}	
	where $\hat \beta_{-i}=\left(S_{xx} - x_i x_i' \right)\inverse\sum_{\ell\neq i}x_{\ell}y_{\ell}$ denotes the leave-$i$-out OLS estimator of $\beta$. This insight suggests the following bias-corrected estimator of $\theta$:
	\begin{align}\label{eq:estimator}
		\hat{\theta}=\hat{\beta}'A\hat{\beta}-\sum_{i=1}^{n} B_{ii}\hat{\sigma}_{i}^{2}.
	\end{align}
	While \cite{newey2018cross} observe that ``cross-fit'' covariances relying on sample splitting can be used to remove bias of the sort considered here, we are not aware of existing estimators involving the leave-one-out estimators $\{\hat{\sigma}_{i}^{2}\} _{i=1}^{n}$.

	One can also motivate $\hat{\theta}$ via a change of variables argument.	Letting $\tilde{x}_{i}=A S_{xx}^{-1} x_{i}$ denote a vector of ``generalized'' regressors, we can write
	\begin{align}
		\theta = \beta' A \beta = \beta' S_{xx} S_{xx}\inverse A  \beta = \sum_{i=1}^{n}\beta'  x_{i} \tilde {x}_{i}'\beta  =  \sum_{i=1}^{n}\mathbb{E}\left[y_{i}\tilde{x}_{i}'\beta\right].
	\end{align}
	This observation suggests using the unbiased \emph{leave-out} estimator
	\begin{align}\label{eq:cov}
		\hat \theta = \sum_{i=1}^{n}y_{i}\tilde{x}_{i}'\hat{\beta}_{-i}.
	\end{align}	
	
	Note that direct computation of $\hat \beta_{-i}$ can be avoided by exploiting the representation
		\begin{align}\label{eq:Pii}
			y_{i}-x_{i}'\hat{\beta}_{-i}=\dfrac{y_i - x_i'\hat \beta}{1-P_{ii}},
		\end{align}
where $P_{ii}= x_i' S_{xx}\inverse x_i$ gives the leverage of observation $i$. Applying the Sherman-Morrison-Woodbury formula \citep{woodbury1949stability,sherman1950adjustment}, this representation also reveals that \eqref{eq:estimator} and \eqref{eq:cov} are numerically equivalent:
	\begin{align}
		y_i \tilde x_i'\hat \beta_{-i} &= \underbrace{y_i \tilde x_i' S_{xx}\inverse \sum_{\ell\neq i}x_{\ell}y_{\ell}}_{=y_i \tilde x_i'\hat \beta - B_{ii} y_i^2} +  \underbrace{\frac{y_i \tilde x_i' S_{xx} \inverse x_i x_i' S_{xx}\inverse}{1 - P_{ii}} \sum_{\ell\neq i}x_{\ell}y_{\ell}}_{=B_{ii} y_i x_i'\hat \beta_{-i}} = y_i \tilde x_i'\hat \beta - B_{ii} \hat \sigma_i^2.
	\end{align}
	A similar combination of a change of variables argument and a leave-one-out estimator was used by \cite{powell1989semiparametric} in the context of weighted average derivatives. The JIVE estimators proposed by \cite{phillips1977bias} and \cite{angrist1999jackknife} also use a leave-one-out estimator, though without the change of variables.\footnote{ The object of interest in JIVE estimation is a \emph{ratio} of quadratic forms $\beta_1'S_{xx}\beta_2/\beta_2'S_{xx}\beta_2$ in the two-equation model $y_{ij} = x_{i}'\beta_j + \varepsilon_{ij}$ for $j=1,2$. When no covariates are present, using leave-out estimators of both the numerator and denominator of this ratio yields the JIVE1 estimator of \cite{angrist1999jackknife}.}

	\begin{rem}
		The $\{\hat\sigma_i^2 \}_{i=1}^n$ can also be used to construct an unbiased variance estimator 
		\begin{align}
			\hat{\mathbb{V}}[\hat{\beta}] = S_{xx}^{-1}\left(\sum_{i=1}^{n} x_{i} x_{i}' \hat{\sigma}_{i}^{2} \right)S_{xx}^{-1}.
		\end{align}
		Section \ref{sec:InfLow} shows that $\hat{\mathbb{V}}[\hat{\beta}]$ can be used to perform asymptotically valid inference on linear contrasts in settings where existing Eicker-White estimators fail. Specifically, $\hat{\mathbb{V}}[\hat{\beta}]$ leads to valid inference under conditions where the MINQUE estimator of \cite{rao1970estimation} and the MINQUE-type estimator of \cite{cattaneo2017inference} do not exist \cite[see, e.g.,][]{horn1975estimating,verdier2016estimation}.
	\end{rem}

	\begin{rem}
		The quantity $\hat{\mathbb{V}}[\hat{\beta}]$ is closely related to the HC2 variance estimator of \cite{mackinnon1985some}. While the HC2 estimator employs observation specific variance estimators $\hat \sigma^2_{i,\text{HC2}}=\frac{(y_i - x_i'\hat \beta)^2}{1-P_{ii}}$, $\hat{\mathbb{V}}[\hat{\beta}]$ relies instead on $\hat \sigma^2_i=\frac{y_i(y_i - x_i'\hat \beta)}{1-P_{ii}}$
	\end{rem}

	\begin{rem}\th\label{rem:cluster}
		In some cases it may be important to allow dependence in the errors in addition to heteroscedasticity. A common case arises when the data are organized into mutually exclusive and independent ``clusters'' within which the errors may be dependent \citep{moulton1986random}. The same change of variables argument implies that an estimator of the form $\sum_{i=1}^{n}y_{i}\tilde{x}_{i}'\hat{\beta}_{-c\left(i\right)}$
		will be unbiased in such settings, where $\hat{\beta}_{-c\left(i\right)}$ is the OLS estimator obtained after leaving out all observations in the cluster to which observation $i$ belongs. 
	\end{rem}

\subsection{Large Scale Computation} \label{sec:comp}
From \eqref{eq:estimator} and \eqref{eq:Pii}, computation of $\hat \theta$ relies on the values $\{B_{ii},P_{ii}\}_{i=1}^n$. Section \ref{sec:examples} provides some canonical examples where these quantities can be computed in closed form. When closed forms are unavailable, a number of options exist for accelerating computation. For example, in the empirical application of Section \ref{sec:application}, we make use of a preconditioned conjugate gradient algorithm suggested by \cite{koutis2011combinatorial} to compute exact leave-out variance decompositions in a two-way fixed effects model involving roughly one million observations and hundreds of thousands of parameters (see Appendix \ref{sec:Lambda} for details). However, in very large scale applications involving tens or hundreds of millions of parameters, exact computation of $\{B_{ii},P_{ii}\}_{i=1}^n$ is likely to become infeasible. Fortunately, it is possible to quickly approximate $\hat \theta$ in such settings using a variant of the random projection method introduced by \cite{achlioptas2001database}. We refer to this method as the Johnson-Lindenstrauss approximation (JLA) for its connection to the work of \cite{johnson1984extensions}.


	
	JLA can be described by the following algorithm: fix a $p \in \N$ and generate the matrices $R_B, R_P \in \R^{p \times n}$, where $(R_B,R_P)$ are composed of mutually independent Rademacher random variables that are independent of the data, i.e., their entries take the values $1$ and $-1$ with probability $1/2$. Next decompose $A$ into $A=\frac{1}{2}( A_1'A_2 + A_2'A_1)$ for $A_1,A_2 \in \R^{n \times k}$ where $A_1=A_2$ if $A$ is positive semi-definite.\footnote{Interpretable choices of $A_1$ and $A_2$ are typically suggested by the structure of the problem; see, for instance, the discussion in \th\ref{ex:AKM} of Section \ref{sec:examples}.} Let
	\begin{align}
		\hat P_{ii} = \frac{1}{p}\norm*{ R_P X S_{xx}\inverse x_i}^2
		\quad \text{and} \quad
		\hat B_{ii} =\frac{1}{p} \left(R_B A_1 S_{xx}\inverse x_i\right)'\left(R_B A_2 S_{xx}\inverse x_i\right)
	\end{align}
	where $X =(x_1,\dots,x_n)'$. 
	The Johnson-Lindenstrauss approximation to $\hat \theta$ is 
	\begin{align}
		\hat \theta_{JLA} = \hat \beta'A\hat \beta - \sum_{i=1}^n \hat B_{ii} \hat \sigma_{i,JLA}^2,
	\end{align}
	where $\hat \sigma_{i,JLA}^2 = \frac{y_{i}\left(y_{i}-x_{i}'\hat{\beta}\right)}{1-\hat P_{ii}}\left(1- \frac{1}{p}\frac{3\hat P_{ii}^3 + \hat P_{ii}^2}{1-\hat P_{ii}}\right)$. The term $\frac{1}{p}\frac{3\hat P_{ii}^3 + \hat P_{ii}^2}{1-\hat P_{ii}}$ removes a non-linearity bias introduced by approximating $P_{ii}$. 
	
	Section \ref{sec:conspaper} establishes asymptotic equivalence between $\hat \theta_{JLA}$ and $\hat \theta$. Appendix \ref{sec:Lambda} discusses implementation details and numerically illustrates the trade-off between computation time and the bias introduced by JLA for different choices of $p$ under a range of sample sizes. Notably, we show
	that JLA allows us to accurately compute a variance decomposition in a two-way fixed effects model with roughly 15 million parameters -- a scale comparable to the study of \cite{card2013workplace} -- in under an hour. A MATLAB package \citep{kssCODE} implementing both the exact and JLA versions of our estimator in the two-way fixed effects model is available online.

	\subsection{Relation to Existing Approaches}		
	As discussed in Section \ref{sec:examples}, several literatures make use of bias corrections nominally predicated on homoscedasticity. A common ``homoscedasticity-only'' estimator takes the form
	\begin{align}\label{eq:HO}
		\hat \theta_{\text{HO}} = \hat{\beta}'A\hat{\beta}-\sum_{i=1}^{n} B_{ii} \hat{\sigma}_{\text{HO}}^{2}
	\end{align}
	where $\hat{\sigma}_{\text{HO}}^{2} = \frac{1}{n-k} \sum_{i=1}^n (y_i - x_i' \hat \beta)^2$ is the degrees-of-freedom corrected variance estimator. A sufficient condition for unbiasedness of $\hat \theta_{\text{HO}}$ is that there be no empirical covariance between $\sigma^{2}_{i}$ and $(B_{ii},P_{ii})$. This restriction is in turn implied by the special cases of \emph{homoscedasticity} where $\sigma_i^2$ does not vary with $i$ or \emph{balanced design} where $(B_{ii},P_{ii})$ does not vary with $i$. In general, however, this estimator will tend to be biased \citep[see, e.g.,][chapter 10, or Appendix \ref{app:rel}]{scheffe1959analysis}.

	A second estimator, closely related to $\hat{\theta}$, relies upon a jackknife bias-correction \citep{quenouille1949approximate} of the plug-in estimator. This estimator can be written
	\begin{align}
		\hat \theta_{\text{JK}} = n \hat \theta_\text{PI} - \frac{n-1}{n} \sum_{i=1}^n \hat \theta_{\text{PI},-i}
		\quad \text{where} \quad
		\hat \theta_{\text{PI},-i} = \hat \beta_{-i}' A \hat \beta_{-i}.
	\end{align}
	In Appendix \ref{app:rel} we illustrate that jackknife bias-correction tends to over-correct and produce a first order bias in the opposite direction of the bias in the plug-in estimator. This is analogous to the upward bias in the jackknife estimator of $\mathbb{V}[\hat \beta]$ which was derived by \cite{efron1981} and shown by \cite{el2018can} to be of first order importance for inference with many Gaussian regressors.

	There are several proposed adaptations of the jackknife to long panels that can decrease bias under stationarity restrictions on the regressors. Letting $t(i)\in\{1,...,T\}$ denote the time period in which an observation is observed, we can write the panel jackknife of \cite{hahn2004jackknife} as
	\begin{align}
		\hat \theta_\text{PJK} = T \hat \theta_{\text{PI}} - \frac{T-1}{T} \sum_{t=1}^T \hat \theta_{\text{PI},-t}
		\quad \text{where} \quad 
		\hat \theta_{\text{PI},-t} = \hat \beta_{-t}' A\hat \beta_{-t}
	\end{align}
	and $\hat \beta_{-t} =(\sum_{i : t(i) \neq t} x_i x_i')\inverse \sum_{i : t(i) \neq t} x_i y_i$
	is the OLS estimator that excludes all observations from period $t$. \cite{dhaene2015split} propose a closely related split panel jackknife
	\begin{align}
		\hat \theta_\text{SPJK} = 2\hat \theta_{\text{PI}} - \frac{\hat \theta_{\text{PI},1} + \hat \theta_{\text{PI},2}}{2}
		\quad \text{where} \quad 
		\hat \theta_{\text{PI},j} = \hat \beta_{j}' A\hat \beta_{j}
	\end{align}
	and $\hat \beta_{1}$ (and $\hat \beta_{2}$) are OLS estimators based on the first half (and the last half) of an even number of time periods. In Appendix \ref{app:rel}, we illustrate how short panels can lead these adaptations of the jackknife to produce first order biases in the opposite direction of the bias in the plug-in estimator.
	
	\subsection{Finite Sample Properties}\label{fsample}
	
	We now study the finite sample properties of the leave-out estimator $\hat \theta$ and its infeasible analogue $\theta^* = \hat \beta'A\hat \beta - \sum_{i=1}^n B_{ii}\sigma_i^2$, which uses knowledge of the individual error variances. First, we note that $\hat \theta$ is unbiased whenever each of the leave-one-out estimators $\hat \beta_{-i}$ exists, which can equivalently be expressed as the requirement that $\max_i P_{ii} <1$. This condition turns out to also be necessary for the existence of unbiased estimators, which highlights the need for additional restrictions on the model or sample whenever some leverages equal one.

	\begin{lem}
		\hangindent\leftmargini
		\th\label{lem:unbiased}
		\text{1.} If $\max_i P_{ii} <1$, then $\E[\hat \theta] = \theta$.
		\begin{enumerate}
			\setcounter{enumi}{1}
			\item Unbiased estimators of $\theta = \beta'A\beta$ exist for all $A$ if and only if $\max_{i} P_{ii} < 1$.
		\end{enumerate}
	\end{lem}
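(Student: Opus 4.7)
When $\max_i P_{ii}<1$ each matrix $S_{xx}-x_ix_i'$ is invertible (the matrix determinant lemma gives $\det(S_{xx}-x_ix_i')=(1-P_{ii})\det(S_{xx})$), so $\hat\beta_{-i}$ is well-defined and I may use the representation $\hat\theta=\sum_{i=1}^{n}y_i\tilde x_i'\hat\beta_{-i}$ from \eqref{eq:cov}. Because the $\varepsilon_i$ are independent and $\hat\beta_{-i}$ depends only on $\{y_\ell\}_{\ell\neq i}$, $y_i$ and $\hat\beta_{-i}$ are independent; together with the standard OLS unbiasedness $\E[\hat\beta_{-i}]=\beta$ this yields $\E[y_i\tilde x_i'\hat\beta_{-i}]=(x_i'\beta)(\tilde x_i'\beta)$. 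Summing over $i$ and collapsing $\sum_i x_i\tilde x_i'=\sum_i x_ix_i'S_{xx}\inverse A=A$ gives $\E[\hat\theta]=\beta'A\beta=\theta$.

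\textbf{Part 2.} The ``if'' direction is immediate from Part 1. For the converse, suppose $P_{i_0 i_0}=1$ for some $i_0$. Since $P=XS_{xx}\inverse X'$ is an orthogonal projection, $P_{i_0 i_0}=1$ forces $Pe_{i_0}=e_{i_0}$, so $e_{i_0}\in\mathrm{col}(X)$ and there is $v\in\R^k$ with $Xv=e_{i_0}$ (explicitly $x_j'v=\delta_{j,i_0}$). I pick $A=x_{i_0}x_{i_0}'$, so that $\theta=(x_{i_0}'\beta)^2$, and will show no estimator $T(y)$ is unbiased for this $\theta$. Since the permitted class of error distributions contains the independent Gaussians with matching first two moments, it suffices to derive a contradiction in that subfamily. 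Reparametrize $\beta=tv$: then $y_{i_0}\sim N(t,\sigma_{i_0}^2)$ independently of $y_{-i_0}\sim N(0,\Sigma_{-i_0})$, with $(t,\sigma_{i_0}^2,\Sigma_{-i_0})$ all free. Fix $\Sigma_{-i_0}$ and define $T_0(y_{i_0}):=\E_{y_{-i_0}\sim N(0,\Sigma_{-i_0})}[T(y_{i_0},y_{-i_0})]$; unbiasedness becomes $\E_{N(t,\sigma^2)}[T_0]=t^2$ for every $(t,\sigma^2)\in\R\times\R_{>0}$.

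\textbf{Impossibility and main obstacle.} The resulting one-dimensional problem is the classical impossibility of estimating $\mu^2$ unbiasedly from a single Gaussian observation with unknown variance. To derive the contradiction concretely, use the heat-equation identity $\partial_{\sigma^2}\phi_{t,\sigma^2}=\tfrac12\partial_y^2\phi_{t,\sigma^2}$ together with integration by parts on the Gaussian density to obtain $\partial_{\sigma^2}\E_{N(t,\sigma^2)}[T_0]=\tfrac12\E_{N(t,\sigma^2)}[T_0'']$. Setting this derivative to zero for all $(t,\sigma^2)$ and invoking completeness of the Gaussian location--scale family forces $T_0''\equiv 0$, so $T_0$ is affine in $y$ and $\E[T_0]$ is affine in $t$, contradicting $\E[T_0]=t^2$. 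The main obstacle is identifying the right $A$ in the converse: once $P_{i_0 i_0}=1$ is translated into $e_{i_0}\in\mathrm{col}(X)$ and $A=x_{i_0}x_{i_0}'$ is chosen so that the target $\theta$ isolates the mean of the single observation whose variance cannot be learned from residuals, the remaining one-dimensional Gaussian impossibility is standard; working inside the Gaussian subfamily with $\Sigma_{-i_0}$ fixed after integration sidesteps any need for completeness of the full heteroscedastic model.
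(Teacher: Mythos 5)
Your Part 1 is correct and is essentially the paper's own argument: the appendix proof expands $\hat\theta$ as the U-statistic $\sum_{i}\sum_{\ell\neq i}C_{i\ell}y_iy_\ell$ and takes expectations term by term, which is just the computational form of your independence-plus-change-of-variables argument, so there is nothing to add there.

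Part 2 is where you genuinely diverge. The paper proves necessity with the target $A=S_{xx}$ and a purely algebraic argument: any candidate unbiased estimator is written as $y'Dy+U$ with $\E[U]=0$, unbiasedness across all variance profiles and all $\beta$ forces $D_{ii}=0$ and $X'DX=S_{xx}$, and $P_{i_0i_0}=1$ makes these incompatible because $M_{i_0\ell}=0$ for all $\ell$ then pins $D_{i_0i_0}=1$. You instead pick the rank-one target $A=x_{i_0}x_{i_0}'$, use $P_{i_0i_0}=1\Rightarrow e_{i_0}\in\mathrm{col}(X)$ to reparametrize $\beta=tv$ so that $\theta=t^2$ and $y_{i_0}\sim N(t,\sigma_{i_0}^2)$ is the only informative coordinate, and reduce to the classical impossibility of unbiasedly estimating $\mu^2$ from one Gaussian draw with unknown variance. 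This buys something the paper's version does not: you never assume the estimator is a quadratic form (the paper's representation claim is asserted rather than derived), so your argument covers arbitrary estimators. The price is the one-dimensional impossibility step, and that is where your writeup has a genuine, though repairable, gap: you integrate by parts to land the derivatives on $T_0$, obtaining $\E[T_0'']$, and then invoke completeness to conclude $T_0''\equiv 0$. Nothing guarantees that $T_0$ is differentiable, let alone twice. The fix is to keep all derivatives on the expectation functional: since $\partial_{\sigma^2}\phi_{t,\sigma^2}=\tfrac12\,\partial_t^2\phi_{t,\sigma^2}$ and $g(t,\sigma^2):=\E_{N(t,\sigma^2)}[T_0]$ is real-analytic on the natural parameter space of the Gaussian exponential family (so differentiation under the integral is licensed), unbiasedness gives $0=\partial_{\sigma^2}g=\tfrac12\,\partial_t^2 g=1$, a contradiction requiring no smoothness of $T_0$ and no appeal to completeness. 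With that one substitution your proof is complete.
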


	Next, we show that when the errors are normal, the infeasible estimator $\theta^*$ is a weighted sum of a series of non-central $\chi^2$ random variables. This second result provides a useful point of departure for our asymptotic approximations and highlights the important role played by the matrix
	\begin{align}
		\tilde A = S_{xx}^{-1/2} A S_{xx}^{-1/2},
	\end{align}
	which encodes features of both the target parameter (which is defined by $A$) and the design matrix $S_{xx}$.

	Let $\lambda_1,\dots,\lambda_r$ denote the non-zero eigenvalues of $\tilde A$, where $\lambda_1^2 \ge\dots \ge \lambda_r^2$ and each eigenvalue appears as many times as its algebraic multiplicity. We use $Q$ to refer to the corresponding matrix of orthonormal eigenvectors so that $\tilde A = Q D Q'$ where $D = \text{diag}(\lambda_1,\dots,\lambda_r)$. With these definitions we have
	\begin{align}
		\hat \beta'A\hat \beta = \sum_{\ell = 1}^r \lambda_\ell \hat b_\ell^2, 
	\end{align}
	where $\hat b = (\hat b_1,\dots,\hat b_r)' = Q'S_{xx}^{1/2} \hat \beta$ contains $r$ linear combinations of the elements in $\hat \beta$.
	The random vector $\hat b$ and the eigenvalues $\lambda_1,\dots,\lambda_r$ are central to both the finite sample distribution provided below in \th\ref{lem:fin} and the asymptotic properties of $\hat \theta$ as studied in Sections \ref{sec:InfLow}--\ref{sec:weak}. Each eigenvalue of $\tilde A$ can be thought of as measuring how strongly $\theta$ depends on a particular linear combination of the elements in $\beta$ relative to the difficulty of estimating that combination (as summarized by $S_{xx}^{-1}$). As discussed in Section \ref{sec:weak}, when a few of these eigenvalues are large relative to the others, a form of weak identification can arise.
	
	\begin{lem}\th\label{lem:fin}
		If $\varepsilon_{i} \sim \mathcal{N}(0, \sigma_i^2)$, then
		\begin{enumerate}
			\item $\hat b \sim \mathcal{N}\left( b , \mathbb{V}[\hat b ] \right)$ where $b = Q' S_{xx}^{1/2} \beta$,
			\item $\theta^* =  \sum_{\ell = 1}^{r} \lambda_\ell\left(\hat b_\ell^2-\mathbb{V}[\hat b_\ell ]\right)$
		\end{enumerate}
	\end{lem}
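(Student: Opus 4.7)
The plan is to prove both parts by essentially linear-algebraic manipulations, using only the spectral decomposition $\tilde A = QDQ'$ and the trace identity for the bias that the paper already records in \eqref{eq:bias}.

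For part (i), I would simply observe that $\hat \beta = \beta + S_{xx}^{-1}\sum_{i=1}^n x_i \varepsilon_i$ is an affine function of the independent Gaussian errors $\varepsilon_i$, and therefore $\hat\beta$ is itself multivariate normal with mean $\beta$ and variance $\mathbb{V}[\hat\beta] = S_{xx}^{-1}\bigl(\sum_i \sigma_i^2 x_i x_i'\bigr) S_{xx}^{-1}$. Since $\hat b = Q' S_{xx}^{1/2}\hat\beta$ is in turn a linear transformation of $\hat\beta$, it is normal with mean $Q' S_{xx}^{1/2}\beta = b$ and with covariance matrix that I denote $\mathbb{V}[\hat b]$, giving the stated distribution. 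Note that $\mathbb{V}[\hat b]$ need not be diagonal under heteroscedasticity, but only its diagonal entries matter for part (ii).

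For part (ii), I would first recall from the discussion preceding the lemma the identity $\hat\beta' A \hat\beta = \sum_{\ell=1}^r \lambda_\ell \hat b_\ell^2$, which follows from writing $A = S_{xx}^{1/2}\tilde A S_{xx}^{1/2} = S_{xx}^{1/2} Q D Q' S_{xx}^{1/2}$ and then grouping $Q' S_{xx}^{1/2}\hat\beta = \hat b$. Next I would rewrite the bias term via the trace identity \eqref{eq:bias}:
\begin{align}
\sum_{i=1}^n B_{ii}\sigma_i^2 = \mathrm{trace}\bigl(A\,\mathbb{V}[\hat\beta]\bigr) = \mathrm{trace}\bigl(S_{xx}^{1/2} Q D Q' S_{xx}^{1/2}\,\mathbb{V}[\hat\beta]\bigr).
\end{align}
Using the cyclic property of the trace and the definition $\mathbb{V}[\hat b] = Q' S_{xx}^{1/2}\mathbb{V}[\hat\beta] S_{xx}^{1/2} Q$, this equals $\mathrm{trace}(D\,\mathbb{V}[\hat b]) = \sum_{\ell=1}^r \lambda_\ell \mathbb{V}[\hat b_\ell]$ since $D$ is diagonal. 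Subtracting this from the expression for $\hat\beta'A\hat\beta$ yields $\theta^* = \sum_{\ell=1}^r \lambda_\ell(\hat b_\ell^2 - \mathbb{V}[\hat b_\ell])$, as claimed.

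There is no serious obstacle here: both claims reduce to standard facts about Gaussian linear models combined with the spectral decomposition already set up in the text. The only thing one has to be careful about is that $Q$ is $k \times r$ with $Q'Q = I_r$ (since we discard the zero eigenvalues of $\tilde A$), so the $r \times r$ matrix $\mathbb{V}[\hat b]$ is well-defined and the trace manipulations are legitimate; everything else is routine.
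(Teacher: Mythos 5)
Your proposal is correct and follows essentially the same route as the paper's own proof: normality of $\hat b$ as a linear transformation of the Gaussian $\hat\beta$, the identity $\hat\beta'A\hat\beta=\sum_{\ell=1}^r\lambda_\ell\hat b_\ell^2$ from the spectral decomposition $\tilde A=QDQ'$, and the trace manipulation $\sum_{i=1}^n B_{ii}\sigma_i^2=\mathrm{trace}(A\,\mathbb{V}[\hat\beta])=\mathrm{trace}(D\,\mathbb{V}[\hat b])=\sum_{\ell=1}^r\lambda_\ell\mathbb{V}[\hat b_\ell]$. Your remark about $Q$ being $k\times r$ with $Q'Q=I_r$ is a helpful clarification but does not change the argument.
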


	The distribution of $\theta^*$ is a sum of $r$ potentially dependent non-central $\chi^2$ random variables with non-centralities $b = (b_1,\dots,b_r)'$. In the special case of homoscedasticity $(\sigma_i^2 = \sigma^2)$ and no signal $(b=0)$ we have that $\hat b \sim \mathcal{N}\left( 0, \sigma^2 I_r \right)$, which implies that the distribution of $\theta^*$ is a weighted sum of $r$ \emph{independent} central $\chi^2$ random variables. The weights are the eigenvalues of $\tilde A$, therefore consistency of $\theta^*$ follows whenever the sum of the squared eigenvalues converges to zero. The next subsection establishes that the leave-out estimator remains consistent when a signal is present ($b\neq0$) and the errors exhibit unrestricted heteroscedasticity.
	
	\subsection{Consistency}\label{sec:conspaper}
	
	We now drop the normality assumption and provide conditions under which $\hat \theta$ remains consistent. To accommodate high dimensionality of the regressors we allow all parts of the model to change with $n$: 
	\begin{align}
	y_{i,n} = x_{i,n}'\beta_n + \varepsilon_{i,n} && (i = 1,\ldots,n)
	\end{align}
	where $x_{i,n}\in\mathbb{R}^{k_n}$, $S_{xx,n}=\sum_{i=1}^{n}x_{i,n}x_{i,n}'$, $\E[\varepsilon_{i,n}]=0$, $\E[\varepsilon_{i,n}^2]=\sigma_{i,n}^{2}$ and $\theta_n=\beta_n'A_n\beta_n$ for some sequence of known non-random symmetric matrices $A_n\in\mathbb{R}^{k_n\times k_n}$ of rank $r_n$. By treating $x_{i,n}$ and $A_n$ as sequences of constants, all uncertainty derives from the disturbances $\left\{ \varepsilon_{i,n} : 1\le i \le n, n \ge 1 \right\}$. This \emph{conditional} perspective is common in the statistics literatures on ANOVA \citep{scheffe1959analysis,searle2009variance} and allows us to be agnostic about the potential dependency among the $\{x_{i,n}\}_{i=1}^n$ and $A_n$.\footnote{An \emph{unconditional} analysis might additionally impose distributional assumptions on $A_n$ and consider $\bar \theta = \beta'\E_{A_n}[A_n] \beta$ as the object of interest. The uncertainty in $\hat \theta - \bar \theta$ can always be decomposed into components attributable to $\hat \theta - \theta$ and $\theta - \bar \theta$. Because the behavior of $\theta - \bar \theta$ depends entirely on model choices, we leave such an analysis to future work.} Following standard practice we drop the $n$ subscript in what follows. All limits are taken as $n$ goes to infinity unless otherwise noted.
	
	Our analysis makes heavy use of the following assumptions.	
	\begin{assumption}\th\label{ass:reg} 
		(i) $\max_i \left( \E[\varepsilon_i^4] + \sigma^{-2}_i \right) = O(1)$, (ii) there exist a $c <1$ such that $\max_i P_{ii} \le c$ for all $n$, and (iii) $\max_i (x_i'\beta)^2 = O(1)$.
	\end{assumption}		
	Part $(i)$ of this condition limits the thickness of the tails in the error distribution, as is typically required for OLS estimation \citep[see, e.g.,][page 10]{cattaneo2017inference}. The bounds on $(x_i'\beta)^2$ and $P_{ii}$ imply that $\hat \sigma_i^2$ has bounded variance. Part $(iii)$ is a technical condition that can be relaxed to allow $\max_i (x_i'\beta)^2$ to 
	increase slowly with sample size as discussed further in Section \ref{sec:verify}.
	From $(ii)$ it follows that $\frac{k}{n} \le c < 1$ for all $n$.

	The following \th\nameref{lem:cons} establishes consistency of $\hat \theta$. 
	\begin{lem}\th\label{lem:cons}
		If \th\ref{ass:reg} and one of the following conditions hold, then $\hat{\theta} - \theta \overset{p}{\rightarrow} 0$.
		\begin{enumerate}
			\item[(i)] $A$ is positive semi-definite, $\theta = \beta'A\beta= O(1)$, and $\text{trace}(\tilde A^2) = \sum_{\ell =1}^r \lambda_{\ell}^2= o(1)$.
			
			\item[(ii)] $A=\frac{1}{2}( A_1'A_2 + A_2'A_1)$ where $\theta_1 = \beta'A_1'A_1\beta$ and $\theta_2 = \beta'A_2'A_2\beta$ satisfy (i).
		\end{enumerate}
%
		
	\end{lem}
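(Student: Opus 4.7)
The plan is to apply Chebyshev's inequality. Since \ref{lem:unbiased}(1) together with Assumption \ref{ass:reg}(ii) already gives $\E[\hat\theta]=\theta$, it suffices to show $\mathbb{V}[\hat\theta] = o(1)$ under the conditions of part (i); part (ii) will then follow by polarization.

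For part (i), I would decompose
\begin{align}
\hat\theta - \theta = \bigl(\hat\beta'A\hat\beta - \E[\hat\beta'A\hat\beta]\bigr) - \sum_{i=1}^{n} B_{ii}\bigl(\hat\sigma_i^2 - \sigma_i^2\bigr),
\end{align}
and bound the variance of each summand. The central quantity is the off-diagonal extension $\tilde B_{ij}=x_i'S_{xx}^{-1}AS_{xx}^{-1}x_j$ of $B_{ii}$, which obeys the key identity $\sum_{i,j}\tilde B_{ij}^2 = \text{trace}(\tilde A^2) = \sum_\ell\lambda_\ell^2 = o(1)$. Writing $u=\hat\beta-\beta = S_{xx}^{-1}\sum_i x_i\varepsilon_i$, I would split $\hat\beta'A\hat\beta - \theta = 2\beta'Au + u'Au$: the linear part has variance bounded by $C\beta'AS_{xx}^{-1}A\beta = C\sum_\ell\lambda_\ell^2 b_\ell^2 \le C\lambda_1\theta$, which is $o(1)$ because $\lambda_1^2\le\sum_\ell\lambda_\ell^2 \to 0$ and $\theta=O(1)$; the quadratic part $u'Au = \sum_{i,j}\tilde B_{ij}\varepsilon_i\varepsilon_j$ has variance controlled by $\sum_{i,j}\tilde B_{ij}^2=o(1)$ using mutual independence and bounded fourth moments, with the cross-covariance absorbed via Cauchy--Schwarz.

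For the bias-correction piece I would expand
\begin{align}
\hat\sigma_i^2 - \sigma_i^2 = (\varepsilon_i^2-\sigma_i^2) + (x_i'\beta)\varepsilon_i - \varepsilon_i\, x_i'(\hat\beta_{-i}-\beta) - (x_i'\beta)\, x_i'(\hat\beta_{-i}-\beta),
\end{align}
exploiting independence of $\varepsilon_i$ and $\hat\beta_{-i}$, together with the Sherman--Morrison consequence $x_i'(\hat\beta_{-i}-\beta)=\sum_{\ell\ne i}\frac{P_{i\ell}}{1-P_{ii}}\varepsilon_\ell$. The first two summands contribute variances of order $\sum_i B_{ii}^2\le\text{trace}(\tilde A^2)=o(1)$ under Assumption \ref{ass:reg}. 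I expect the main obstacle to be the last summand: $\sum_i B_{ii}(x_i'\beta)\, x_i'(\hat\beta_{-i}-\beta)$ is a linear combination of the errors whose variance is determined by $\sum_\ell\bigl(\sum_{i\ne\ell}B_{ii}(x_i'\beta)P_{i\ell}/(1-P_{ii})\bigr)^2$. I would bound this by recognizing it as $\|(P-\text{diag}(P))h\|^2$ with $h_i=B_{ii}(x_i'\beta)/(1-P_{ii})$ and invoking idempotency of the hat matrix $P=XS_{xx}^{-1}X'$ together with $\max_i P_{ii}\le c<1$, which yields $O(\|h\|^2)=O(\sum_i B_{ii}^2)=o(1)$. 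The remaining bilinear cross-term $\sum_i B_{ii}\varepsilon_i x_i'(\hat\beta_{-i}-\beta)$ is handled analogously by reshuffling the double sum.

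For part (ii), I would apply the polarization identity $A = \tfrac{1}{2}(A_+'A_+ - A_-'A_-)$ with $A_\pm = (A_1\pm A_2)/\sqrt{2}$, reducing the problem to two instances of part (i) by linearity of the leave-out construction, $\hat\theta(A)=\tfrac{1}{2}\bigl(\hat\theta(A_+'A_+)-\hat\theta(A_-'A_-)\bigr)$. The hypothesis that $A_j'A_j$ satisfies part (i) for $j=1,2$ implies the same for $A_\pm'A_\pm$: the boundedness $\beta'A_\pm'A_\pm\beta\le\theta_1+\theta_2=O(1)$ is immediate, and the trace condition follows from the matrix Cauchy--Schwarz inequality $\|C_i'C_j\|_F^2\le\|C_i'C_i\|_F\|C_j'C_j\|_F$ applied with $C_j=A_jS_{xx}^{-1/2}$.
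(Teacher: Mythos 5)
Your proposal is correct. For part (i) it is essentially the paper's own argument: the same three-way decomposition into a linear form in the errors (bounded via $\beta'AS_{xx}^{-1}A\beta\le\lambda_1\theta$ using positive semi-definiteness), an off-diagonal quadratic form (bounded by $\mathrm{trace}(\tilde A^2)$), and the bias-correction remainder $\sum_i B_{ii}(\hat\sigma_i^2-\sigma_i^2)$, whose dangerous piece is controlled exactly as you describe — by recognizing the coefficient vector as the image of a bounded vector under the (idempotent) hat or annihilator matrix, so that its squared norm is $O(\sum_i B_{ii}^2)=O(\mathrm{trace}(\tilde A^2))$. Where you genuinely depart from the paper is part (ii). The paper does not polarize; it keeps the non-definite $A$ and reruns the same variance bounds after replacing the two inequalities that used positive semi-definiteness, namely $\beta'AS_{xx}^{-1}A\beta\le\tfrac{1}{2}\bigl(\theta_1\lambda_{\max}(\tilde A_2)+\theta_2\lambda_{\max}(\tilde A_1)\bigr)$ and $\mathrm{trace}(\tilde A^2)\le\mathrm{trace}(\tilde A_1^2)^{1/2}\mathrm{trace}(\tilde A_2^2)^{1/2}$. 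Your polarization $A=\tfrac{1}{2}(A_+'A_+-A_-'A_-)$ with $A_\pm=(A_1\pm A_2)/\sqrt{2}$, combined with linearity of $A\mapsto\hat\theta(A)$ and of $A\mapsto\theta(A)$, instead reduces (ii) to two black-box invocations of (i); the verification that $A_\pm'A_\pm$ inherits the hypotheses is exactly as you say (the bound $\beta'A_\pm'A_\pm\beta\le\theta_1+\theta_2$ and the Frobenius Cauchy--Schwarz inequality $\norm{C_1'C_2}_F^2\le\norm{C_1'C_1}_F\norm{C_2'C_2}_F$ with $C_j=A_jS_{xx}^{-1/2}$). Your route buys modularity — part (i) is reused verbatim and no variance computation is repeated — at the cost of the factor-of-two loss from the triangle inequality, which is immaterial here; the paper's route avoids introducing the auxiliary matrices $A_\pm$ but must re-examine each variance term for a non-definite $A$.
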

	The first condition of \th\ref{lem:cons} establishes consistency of variance components given boundedness of $\theta$ and a joint condition on the design matrix $S_{xx}$ and the matrix $A$. The second condition shows that consistency of covariance components follows from consistency of variance components that dominate them via the Cauchy-Schwarz inequality, i.e., $\theta^2 = (\beta'A_1'A_2\beta)^2 \le \theta_1 \theta_2$. In several of the examples discussed in the next section, $\text{trace}(\tilde A^2)$ is of order $r/n^2$, which is necessarily small in large samples. A more extensive discussion of primitive conditions that yield $\text{trace}(\tilde A^2) = o(1)$ is provided in Section \ref{sec:verify}.

	We conclude this section by establishing asymptotic equivalence between the leave-out estimator $\hat \theta$ and its approximation $\hat \theta_{JLA}$ under the condition that $p^4$ is large relative to sample size.

	\begin{lem}\th\label{lem:JLA}
	If \th\ref{ass:reg} is satisfied, $n/p^4 = o(1)$, $\mathbb{V}[\hat  \theta]\inverse = O(n)$, and one of the following conditions hold, then $\mathbb{V}[\hat  \theta]^{-1/2}{(\hat \theta_{JLA} - \hat \theta - \mathrm{B}_p)}{} = o_p(1)$ where $\abs{\mathrm{B}_p} \le \frac{1}{p} \sum_{i=1}^n P_{ii}^2 \abs{B_{ii}} \sigma_i^2$.

	\begin{enumerate}
		\item[(i)] $A$ is positive semi-definite and $\E[\hat \beta'A\hat \beta] - \theta = \sum_{i=1}^n B_{ii} \sigma_i^2 = O(1)$.
		
		\item[(ii)] $A=\frac{1}{2}( A_1'A_2 + A_2'A_1)$ where $\theta_1 = \beta'A_1'A_1\beta$ and $\theta_2 = \beta'A_2'A_2\beta$ satisfy (i) and $\frac{\mathbb{V}[\hat  \theta_1]\mathbb{V}[\hat  \theta_2]}{n\mathbb{V}[\hat  \theta]^2} = O(1)$.
	\end{enumerate}
	
	\end{lem}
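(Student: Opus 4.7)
The plan is to take $\mathrm{B}_p = \mathbb{E}[\hat \theta_{JLA} - \hat \theta]$ and apply Chebyshev's inequality. Because $\hat \theta$ is unbiased, $\mathrm{B}_p = \mathbb{E}[\hat \theta_{JLA}] - \theta$, so the recentered quantity $\hat \theta_{JLA} - \hat \theta - \mathrm{B}_p$ has mean zero and the claim follows once $\mathbb{V}[\hat \theta_{JLA} - \hat \theta] = o(\mathbb{V}[\hat \theta])$. Given $\mathbb{V}[\hat \theta]^{-1} = O(n)$ by hypothesis, the task reduces to showing that the approximation error has variance $o(n^{-1})$. I will split that variance via the law of total variance into the expected conditional variance $\mathbb{E}[\mathbb{V}_R[\hat \theta_{JLA}\mid y]]$, with respect to the JLA randomness $R = (R_B, R_P)$, plus the variance $\mathbb{V}[\mathbb{E}_R[\hat \theta_{JLA}\mid y] - \hat \theta]$ of the conditional bias.

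I will first verify the bias bound. Mutual independence of $R_B$, $R_P$, and $y$, together with the Rademacher identity $\mathbb{E}[(v'R_{Bj})^2] = \|v\|^2$, gives $\mathbb{E}[\hat B_{ii}] = B_{ii}$ and $\mathbb{E}[\hat P_{ii}] = P_{ii}$, so $\mathbb{E}[\hat B_{ii}\,\hat \sigma_{i,JLA}^2] = B_{ii}\,\mathbb{E}[\hat \sigma_{i,JLA}^2]$. Because $\hat \sigma_i^2$ is unbiased for $\sigma_i^2$, one obtains $\mathbb{E}[\hat \sigma_{i,JLA}^2] = \sigma_i^2(1-P_{ii})\,\mathbb{E}_R\!\left[(1-c_{i,p})/(1-\hat P_{ii})\right]$ where $c_{i,p} = \tfrac{1}{p}(3\hat P_{ii}^3 + \hat P_{ii}^2)/(1-\hat P_{ii})$. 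A Taylor expansion of $(1-\hat P_{ii})^{-1}$ in powers of $\hat P_{ii} - P_{ii}$, combined with the Rademacher fourth-moment identity $\mathbb{E}[(u'R_j)^4] = 3\|u\|^4 - 2\|u\|_4^4$, yields $\mathbb{V}_R[\hat P_{ii}] \le 2P_{ii}^2/p$ and bounds of order $O(1/p^2)$ on the third and higher central moments. The correction $c_{i,p}$ is engineered precisely to cancel the leading $O(1/p)$ non-linearity bias in this expansion up to a residual of order $P_{ii}^2/p$; summing against $|B_{ii}|\sigma_i^2$ delivers $|\mathrm{B}_p| \le \tfrac{1}{p}\sum_{i=1}^n P_{ii}^2 |B_{ii}|\sigma_i^2$.

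The hard part will be controlling the conditional variance. Each $\hat B_{ii}$ is an average of $p$ independent Rademacher quadratic forms, so $\mathbb{V}_R[\hat B_{ii}] = O(1/p)$ individually, but the $\hat B_{ii}$ share the common matrix $R_B$ and therefore are correlated across $i$. Using the Rademacher four-point function $\mathbb{E}[R_k R_\ell R_m R_n] = \delta_{k\ell}\delta_{mn} + \delta_{km}\delta_{\ell n} + \delta_{kn}\delta_{\ell m} - 2\delta_{k\ell m n}$, I will express $\mathrm{Cov}_R[\hat B_{ii}, \hat B_{jj}]$ as $p^{-1}$ times inner products such as $(A_1 S_{xx}^{-1}x_i)'(A_1 S_{xx}^{-1}x_j)$; when summed against $\hat \sigma_{i,JLA}^2 \hat \sigma_{j,JLA}^2$ these collapse into expressions involving $\mathrm{trace}(\tilde A^2)$ and the leverages, both controllable under condition (i) and \th\ref{ass:reg}. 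An analogous treatment of the $\hat P_{ii}$ fluctuations, which enter multiplicatively through $(1-c_{i,p})/(1-\hat P_{ii})$ in $\hat \sigma_{i,JLA}^2$, together with the extra $1/p^2$ damping coming from the product $\hat B_{ii}\hat \sigma_{i,JLA}^2$, yields an expected conditional variance that is $o(\mathbb{V}[\hat \theta])$ precisely when $n/p^4 = o(1)$ and $\mathbb{V}[\hat \theta]^{-1} = O(n)$. The variance of the conditional bias depends on $y$ only through $\sum_i B_{ii}\hat \sigma_i^2$ with a $1/p$-scaled coefficient and so fits within the same bound.

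Finally, case (ii) reduces to case (i) via the Cauchy--Schwarz decomposition already used in \th\ref{lem:cons}. Writing $A = \tfrac{1}{2}(A_1'A_2 + A_2'A_1)$, the JLA approximation error for $\hat \theta$ is bounded in absolute value by the geometric mean of the approximation errors for the variance components $\hat \theta_1$ and $\hat \theta_2$, each of which satisfies case (i). The additional hypothesis $\mathbb{V}[\hat \theta_1]\mathbb{V}[\hat \theta_2]/(n\mathbb{V}[\hat \theta]^2) = O(1)$ is exactly what is needed for the rescaling $\mathbb{V}[\hat \theta]^{-1/2}$ to pass through the resulting geometric-mean bound and deliver the claimed $o_p(1)$ conclusion.
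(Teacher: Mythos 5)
Your overall strategy—Rademacher moment identities, a Taylor expansion of $(1-\hat P_{ii})^{-1}$ around $P_{ii}$, conditional variance control given $y$, and the reduction of case (ii) to case (i)—is the same route the paper takes. But there is a genuine gap at the very first step: you set $\mathrm{B}_p = \E[\hat\theta_{JLA}-\hat\theta]$ and then write $\E[\hat\sigma_{i,JLA}^2]$ in terms of $\E_R[(1-c_{i,p})/(1-\hat P_{ii})]$. That expectation need not exist. The quantity $\hat P_{ii}=\frac{1}{p}\norm{R_PXS_{xx}\inverse x_i}^2$ is a discrete random variable whose support is not confined to $[0,1)$; it can equal or exceed $1$ with positive probability, so $(1-\hat P_{ii})^{-1}$ is non-integrable (indeed possibly undefined on a non-null event). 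The paper avoids this by never taking the exact expectation of $\hat\theta_{JLA}$: it defines $\mathrm{B}_p$ as the mean of an explicit \emph{second-order polynomial} approximation in $\hat a_i=(\hat P_{ii}-P_{ii})/(1-P_{ii})$, namely $\mathrm{B}_p=\frac{1}{p}\sum_i B_{ii}\sigma_i^2\,\frac{2\sum_{\ell\neq i}P_{i\ell}^4-P_{ii}^2(1-P_{ii})^2}{(1-P_{ii})^2}$, and then disposes of the genuinely nonlinear remainder \emph{in probability} using $\E[\hat a_i^3]=O(p^{-2})$, $\E[\hat a_i^4]=O(p^{-2})$ and the concentration bound $\max_i\abs{\hat a_i}=o_p(\log(n)/\sqrt{p})$ from \cite{achlioptas2001database}. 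Relatedly, the stated bound $\abs{\mathrm{B}_p}\le\frac{1}{p}\sum_i P_{ii}^2\abs{B_{ii}}\sigma_i^2$ is an exact inequality that falls out of this closed form via $\sum_{\ell\neq i}P_{i\ell}^2=P_{ii}(1-P_{ii})$, hence $0\le\sum_{\ell\neq i}P_{i\ell}^4\le P_{ii}^2(1-P_{ii})^2$; your ``residual of order $P_{ii}^2/p$'' gives the right magnitude but does not deliver the inequality with the stated constant, and with your centering there is no closed form to extract it from.

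Two smaller points. First, your variance plan is sound in outline and matches the paper's term-by-term conditional computations, but the same integrability issue resurfaces there: you must work with the polynomial expansion (or on the high-probability event where $\max_i\abs{\hat a_i}$ is small), not with $(1-\hat P_{ii})^{-1}$ directly. Second, in case (ii) the claim that the JLA approximation error for $\hat\theta$ is ``bounded by the geometric mean of the approximation errors for $\hat\theta_1$ and $\hat\theta_2$'' is not a literal Cauchy--Schwarz consequence, since the error is a signed random difference rather than a quadratic form; in the paper the cross terms $\text{trace}(\tilde A_1^2)^{1/2}\text{trace}(\tilde A_2^2)^{1/2}$ enter through bounds on $\E[\hat B_{ii}\hat B_{\ell\ell}]$ inside the individual variance terms, and the hypothesis $\frac{\mathbb{V}[\hat\theta_1]\mathbb{V}[\hat\theta_2]}{n\mathbb{V}[\hat\theta]^2}=O(1)$ is used to absorb precisely those terms at the $p^{-2}$ scale.
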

	
%
%

	\th\ref{lem:JLA} requires that $\hat \theta$ is not super-consistent and that the bias in the plug-in estimator is asymptotically bounded, assumptions which can be shown to be satisfied in the examples introduced in the next section. For variance components, the \th\nameref{lem:JLA} characterizes an approximation bias $\mathrm{B}_p$ in $\hat \theta_{JLA}$ of order $1/p$ and provides an interpretable bound on $\mathrm{B}_p$: the approximation bias is at most $1/p$ times the bias in the plug in estimator $\hat \beta 'A \hat \beta$. For covariance components, asymptotic equivalence follows when the variance components defined by $A_1'A_1$ and $A_2'A_2$ do not converge at substantially slower rates than $\hat \theta$. Under this condition, the approximation bias is at most $1/p$ times the average of the biases in the plug in estimators $\hat \beta 'A_1'A_1 \hat \beta$ and $\hat \beta 'A_2'A_2 \hat \beta$.
	
	These bounds on the approximation bias suggests that a $p$ of a few hundred should suffice for point estimation. However, unless $n/p^2 = o(1)$, the resulting approximation bias needs to be accounted for when conducting inference. Specifically, one can lengthen the tails of the confidence sets proposed in Sections \ref{sec:dist} and \ref{sec:variance} by $\frac{1}{p} \sum_{i=1}^n \hat P_{ii}^2 \abs{\hat B_{ii}} \hat \sigma_{i,JLA}^2$ when relying on JLA. 
	

	\section{Examples}\label{sec:examples}
	
	We now consider four commonly encountered empirical examples where our proposed estimation strategy provides an advantage over existing methods.

	\begin{example}[Coefficient of determination]\th\label{ex:R2}
		\textit{}
		
		Sewall \citet{wright1921correlation} proposed measuring the explanatory power of a linear model using the coefficient of determination. When $x_i$ includes an intercept, the object of interest and its corresponding plug-in estimator can be written
		\begin{align}
		R^2 &= \frac{\beta' A \beta }{\beta' A \beta + \frac{1}{n}\sum_{i=1}^{n} \sigma_i^2} = \frac{\sigma^2_{X\beta}}{\sigma^2_y} 
		\quad \text{and} \quad
		\hat R_\text{PI}^2 = \frac{\hat \beta' A \hat \beta}{\frac{1}{n}\sum_{i=1}^{n} (y_i - \bar y)^{2}} = \frac{\hat \sigma^2_{X\beta,\text{PI}}}{\hat \sigma^2_y} 
		\shortintertext{where}
		A &= \frac{1}{n}\sum_{i=1}^{n} (x_i-\bar x)(x_i-\bar x)', \quad \bar x = \frac{1}{n} \sum_{i=1}^n x_i, \quad \bar y = \frac{1}{n} \sum_{i=1}^n y_i.
		\end{align}
		\cite{theil1961economic} noted that the plug-in estimator of $\sigma^2_{X\beta}$ is biased and proposed an adjusted $R^2$ measure that utilizes the homoscedasticity-only estimator in \eqref{eq:HO}. The above choice of $A$ yields $B_{ii} = \frac{1}{n} (P_{ii} - \frac{1}{n}),$ which implies $\sum_{i=1}^n B_{ii} = \frac{k-1}{n}$. Hence, Theil's proposal can be written
		\begin{align}
		\hat R^2_\text{adj} = \frac{\hat \sigma^2_{X\beta,\text{HO}}}{\hat \sigma^2_y} = \frac{\hat \beta' A \hat \beta - \frac{k-1}{n}\hat{\sigma}_{\text{HO}}^{2}}{\hat \sigma^2_y}.
		\end{align}
		A rearrangement gives the familiar representation $\frac{1-\hat R^2_\text{adj}}{1-\hat R_{\text{PI}}^2} = \frac{n-1}{n-k}$ which highlights that the adjusted estimator of $R^2$ relates to the unadjusted one through a degrees-of-freedom correction. 
		
		The leave-out estimator of $\sigma^2_{X\beta}$ allows for unrestricted heteroscedasticity and can be found by noting that $\tilde x_i = A S_{xx}\inverse x_i = \frac{1}{n}(x_i - \bar x)$, which yields
		\begin{align}
		\hat R^2 = \frac{\hat \sigma_{X\beta}^2}{\hat \sigma^2_y}
		\quad \text{where} \quad
		\hat \sigma^2_{X\beta}=\frac{1}{n}\sum_{i=1}^{n}y_i (x_{i} - \bar x)'\hat\beta_{-i}.
		\end{align}
		In general, this estimator does not have an interpretation in terms of degrees-of-freedom corrections. Instead, the explanatory power of the linear model is assessed using the empirical covariance between leave-one-out predictions $(x_{i} - \bar x)'\hat \beta_{-i}$ and the left out observation $y_i$.
	\end{example}

	\begin{example}[Analysis of covariance]\th\label{ex:ANOVA}
		\textit{}
		
		Since the work of \citet{fisher1925statistical}, it has been common to summarize the effects of experimentally assigned treatments on outcomes with estimates of variance components. Consider a dataset comprised of observations on $N$ groups with $T_{g}$ observations in the ${g}$-th group. The ``analysis of covariance'' model posits that outcomes can be written
		\begin{align}\label{eq:panel}
		y_{{g} t} =\alpha_{{g}} + x_{{g} t}'\delta + \varepsilon_{{g} t} && ({g}=1,\dots,N, \ t = 1,\dots,T_{g} \ge 2),
		\end{align}
		where $\alpha_g$ is a group effect and $x_{gt}$ is a vector of strictly exogenous covariates.

		A prominent example comes from \cite{chetty2011does} who study the adult earnings $y_{gt}$ of $n = \sum_{{g}=1}^N T_{g}$ students assigned experimentally to one of $N$ different classrooms. Each student also has a vector of predetermined background characteristics $x_{gt}$. The variability in student outcomes attributable to classrooms can be written: 
		\begin{align}
		\sigma_{\alpha}^{2}=\frac{1}{n}\sum_{{g}=1}^{N} T_{g} \left(\alpha_{g}-\bar{\alpha}\right)^{2}
		\end{align}
		where $\bar{\alpha}=\frac{1}{n}\sum_{{g}=1}^{N} T_{g} \alpha_{{g}}$ gives the (enrollment-weighted) mean classroom effect. 
		
		This model and object of interest can written in the notation of the preceding section ($y_i = x_i'\beta + \varepsilon_{i}$ and $\sigma_{\alpha}^{2} = \beta'A\beta$) by letting $i = i(g,t)$ where $i(\cdot,\cdot)$ is bijective with inverse denoted $(g(\cdot),t(\cdot))$, $y_i= y_{{g} t}$, $\varepsilon_{i} = \varepsilon_{{g} t}$,
		\begin{align}
		x_i &= (d_i',x_{{g} t}')', 
		\quad
		\beta = (\alpha',\delta')', \quad \alpha = (\alpha_1,\dots,\alpha_N)',
		\quad
		d_i = (\mathbf{1}_{\{{g}=1\}},\dots,\mathbf{1}_{\{{g}=N\}})',
		\shortintertext{and}
		A &= \mbox{\scriptsize $\begin{bmatrix} 		A_{dd} & 0 \\ 0 & 0  	\end{bmatrix} $}
		\quad \text{where} \quad 
		A_{dd}=\frac{1}{n}\sum_{i=1}^{n} (d_{i} - \bar d) (d_{i} - \bar d)', \quad \bar d=\frac{1}{n} \sum_{i=1}^{n} d_{i}.
		\end{align}
		\cite{chetty2011does} estimate $\sigma_{\alpha}^{2}$ using a random effects ANOVA estimator \cite[see e.g.,][]{searle2009variance} which is of the homoscedasticity-only type given in \eqref{eq:HO}. As discussed in Section \ref{sec:unbiased} and Appendix \ref{app:rel}, this estimator is in general first order biased when the errors are heteroscedastic and group sizes are unbalanced.

		\noindent \textbf{\emph{Special Case: No Common Regressors}} 
		When there are no common regressors ($x_{{g} t}=0$ for all ${g},t$), the leave-out estimator of $\sigma_{\alpha}^{2}$ has a particularly simple representation:
		\begin{align}\label{eq:AP}
		\hat{\sigma}_{\alpha}^{2} 
		&= \dfrac{1}{n}\sum_{g=1}^{N} \left( T_g \left(\hat \alpha_{g}-\hat{\bar{\alpha}}\right)^{2} - \left(1-\frac{T_g}{n}\right) \hat \sigma_g^2 \right)
		\quad \text{for} \quad \hat \sigma_g^2 = \frac{1}{T_g-1} \sum_{t=1}^{T_g} (y_{gt} - \hat \alpha_g)^2,
		\end{align}
		where $\hat \alpha_{g}= \frac{1}{T_g}\sum_{t=1}^{T_g} y_{gt}$, and $\hat{\bar{\alpha}}= \frac{1}{n}\sum_{{g}=1}^{N} T_{g} \hat \alpha_{{g}}$. This representation shows that if the model consists only of group specific intercepts, then the leave-out estimator relies on group level degrees-of-freedom corrections.  The statistic in \eqref{eq:AP} was analyzed by \cite{akritas2004heteroscedastic} in the context of testing the null hypothesis that $\sigma_\alpha^2=0$ while allowing for heteroscedasticity at the group level.
		
		\noindent \textbf{\emph{Covariance Representation}} 
		Another instructive representation of the leave-out estimator is in terms of the empirical covariance
		\begin{align}
			\hat{\sigma}_{\alpha}^{2} &= \sum_{i=1}^{n} y_i \tilde{d}_i'\hat \alpha_{-i}
			\quad \text{where} \quad \hat \beta_{-i} = (\hat \alpha_{-i}',\hat \delta_{-i}').
		\end{align}
		The generalized regressor $\tilde{d}_i$ can be described as follows: if there are no common regressors then $\tilde{d}_i = \frac{1}{n} (d_i - \bar d)$, which is analogous to \th\ref{ex:R2}. If the model includes common regressors then $\tilde d_i  = \frac{1}{n} \left( (d_i - \bar d) - \hat \varGamma'(x_{g(i)t(i)} - \bar x_{g(i)}) \right)$ where $\bar x_g = \frac{1}{T_{g}} \sum_{t=1}^{T_g} x_{gt}$ and $\hat \varGamma$ is the coefficient vector from an instrumental variables (IV) regression of $d_i - \bar d$ on $x_{g(i)t(i)} - \bar x_{g(i)}$ using $x_{g(i)t(i)}$ as an instrument. The IV residual $\tilde{d}_i$ is uncorrelated with $x_{{g}(i) t(i)}$ and the covariance between ${d_i}$ and $\tilde d_i$ is $A_{dd}$, which ensures that the empirical covariance between $y_i = d_i'\alpha + x_{{g}(i) t(i)}'\delta + \varepsilon_i$ and the generalized prediction $\tilde{d_i}'\hat \alpha_{-i}$ is an unbiased estimator of ${\sigma}_{\alpha}^{2}$.
				
		
	\end{example}

	\begin{example}[Random coefficients]\th\label{ex:RC}
		\textit{}
		
		Group memberships are often modeled as influencing slopes in addition to intercepts \citep{kuh1959validity,hildreth1968some,raudenbush2002hierarchical,arellano2011identifying,graham2012identification,graham2016quantile}. Consider the following ``random coefficient'' model:
		\begin{align}\label{random_coeff}
		y_{{g} t} = \alpha_{g} + z_{{g} t}\gamma_{g} + \varepsilon_{{g} t} && ({g}=1,\dots,N, \ t=1,\dots,T_{g} \ge 3).
		\end{align}
		
		An influential example comes from \cite{raudenbush1986hierarchical}, who model student mathematics
		scores as a ``hierarchical'' linear function of socioeconomic status (SES) with school-specific intercepts $(\alpha_{g} \in \R)$ and slopes $(\gamma_{g} \in \R)$. Letting  $\bar{\gamma}=\frac{1}{n}\sum_{{g}=1}^{N} T_{g} \gamma_{{g}}$ for $n = \sum_{{g}=1}^N T_{g}$, the student-weighted variance of slopes can be written:
		\begin{align}
		\sigma_{\gamma}^{2}=\frac{1}{n}\sum_{{g}=1}^N T_{g}\left(\gamma_{{g}}-\bar{\gamma}\right)^{2}.
		\end{align}
		In the notation of the preceding section we can write $y_i = x_i'\beta + \varepsilon_{i}$ and $\sigma_\gamma^2 = \beta'A\beta$ where
		\begin{align}
		x_i = (d_i', d_i' z_{{g} t})', \qquad \beta = (\alpha',\gamma')', \qquad \gamma = (\gamma_1,\dots,\gamma_N)', 
		\qquad A = \mbox{\scriptsize $\begin{bmatrix} 		A_{dd} & 0 \\ 0 & 0  	\end{bmatrix} $}
		\end{align}
		for $y_i$, $\varepsilon_i$, $d_i$, $A_{dd}$, and $\alpha$ as in the preceding example.
		
		\cite{raudenbush1986hierarchical} use a maximum likelihood estimator of $\sigma_{\gamma}^{2}$ predicated upon normality and homoscedastic errors. \cite{swamy1970efficient} considers an estimator of $\sigma^{2}_{\gamma}$ that relies on group-level degrees-of-freedom corrections and is unbiased when the error variance is allowed to vary at the group level, but not with the level of $z_{{g} t}$. By contrast, the leave-out estimator is unbiased under arbitrary patterns of heteroscedasticity.
		
		\noindent \textbf{\emph{Covariance Representation} }
		The leave-out estimator can be represented in terms of the empirical covariance
		\begin{align}
			\hat \sigma_{\gamma}^{2}=\sum_{i=1}^{n} y_i \tilde z_i \tilde d_i'\hat \gamma_{-i}
			\quad \text{where} \quad \tilde d_i = \frac{1}{n}(d_i - \bar d), \quad \tilde z_i = \frac{z_{g(i)t(i)}-\bar z_{g(i)}}{\sum_{t=1}^{T_{g(i)}}( z_{g(i)t}-\bar z_{g(i)} )^2},
		\end{align}
		and $\bar z_{g} = \frac{1}{T_g}\sum_{t=1}^{T_{g}} z_{gt}$. Demeaning $z_{g(i)t(i)}$ at the group level makes $\tilde d_i \tilde z_i$ uncorrelated with $d_i$ and scaling by the group variability in $z_{g(i)t}$ ensures that the covariance between $\tilde d_i \tilde z_i$ and $d_i z_{g(i)t(i)}$ is $A_{dd}$. This implies that the empirical covariance between $y_i = d_i'\alpha +  z_{g(i)t(i)} d_i'\gamma + \varepsilon_i$ and the generalized prediction $\tilde z_i \tilde d_i'\hat \gamma_{-i}$ is an unbiased estimator of $\sigma_{\gamma}^{2}$.
	\end{example}

	\begin{example}[Two-way fixed effects]\th\label{ex:AKM}
		\textit{}
		
		Economists often study settings where units possess two or more group memberships, some of which can change over time. A prominent example comes from \cite{abowd1999high} (henceforth AKM) who propose a panel model of log wage determination that is additive in worker and firm fixed effects. This so-called ``two-way'' fixed effects model takes the form:
		\begin{align}\label{eq:AKM}
		y_{{g} t} =  \alpha_{{g}} + \psi_{j({g},t)} + x_{{g} t}'\delta +  \varepsilon_{{g} t}  && ({g}=1,\dots,N, \ t=1,\dots,T_{g} \ge 2)
		\end{align}
		where the function $j(\cdot,\cdot):\{ 1,\dots,N\}\times\{1,\dots, \max_g T_{g}\} \rightarrow \{ 0,\dots,J\} $ allocates each of $n = \sum_{{g} =1}^N T_{g}$ person-year observations to one of $J+1$ firms. Here $\alpha_{{g}}$ is a ``person effect'', $\psi_{j({g},t)}$ is a ``firm effect'', $x_{{g} t}$ is a time-varying covariate, and $\varepsilon_{{g} t}$ is a time-varying error. In this context, the mean zero assumption on the errors $\varepsilon_{{g} t}$ can be thought of as requiring both the common covariates $x_{gt}$ and the firm assignments $j(\cdot,\cdot)$ to obey a strict exogeneity condition. 
		
		Interest in such models often centers on understanding how much of the variability in log wages is attributable to firms \citep[see, e.g.,][]{card2013workplace,song2015firming}. AKM summarize the firm contribution to wage inequality via the following two parameters:  
		\begin{align}
		\sigma_{\psi}^{2}=\frac{1}{n}\sum_{{g}=1}^{N}\sum_{t=1}^{T_{g}}\left(\psi_{j\left({g},t\right)}-\bar{\psi}\right)^{2}
		\quad \text{and} \quad 
		\sigma_{\alpha,\psi}=\frac{1}{n}\sum_{{g}=1}^{N}\sum_{t=1}^{T_{g}}\left(\psi_{j\left({g},t\right)}-\bar{\psi}\right)\alpha_{{g}}
		\end{align}
		where $\bar \psi = \frac{1}{n}\sum_{{g}=1}^{N} \sum_{t=1}^{T_{g}} \psi_{j({g},t)}$.	The variance component $\sigma_{\psi}^{2}$ measures the contribution of firm wage variability to inequality, while the covariance component $\sigma_{\alpha,\psi}$ measures the additional contribution of systematic sorting of high wage workers to high wage firms.

		To represent this model and the corresponding objects of interest in the notation of the preceding section ($y_i = x_i'\beta + \varepsilon_{i}$, $\sigma_{\psi}^{2}= \beta'A_\psi \beta$, and $\sigma_{\alpha,\psi}= \beta'A_{\alpha,\psi}\beta$), let 
		\begin{align}
		x_i = (d_i',f_i',x_{{g} t}')', \ \beta = (\alpha',\psi',\delta')', \ \alpha = (\alpha_1,\dots,\alpha_N)' + \mathbf{1}_N'\psi_0, \ \psi = (\psi_1\,\dots,\psi_J)' - \mathbf{1}_J'\psi_0,		
		\end{align}
		for $y_i$, $\varepsilon_i$, and $d_i$ as in the preceding examples, $f_i = (\mathbf{1}_{\{j({g},t)=1\}},\dots,\mathbf{1}_{\{j({g},t)=J\}})',$
		\begin{align}
		A_\psi
		&= \mbox{\scriptsize $\begin{bmatrix} 0 & 0 & 0 \\ 0 & A_{ff} & 0 \\ 0 & 0 & 0 \end{bmatrix}$ }
		\quad \text{where} \quad 
		A_{ff}=\frac{1}{n}\sum_{i=1}^{n} (f_{i}-\bar f)(f_{i} - \bar f)', \quad
		\bar f=\frac{1}{n}\sum_{i=1}^{n} f_{i} ,
		\shortintertext{and}
		A_{\alpha,\psi}
		&=\frac{1}{2}  \mbox{\scriptsize $\begin{bmatrix} 0 & A_{df} & 0 \\ A_{df}' & 0 & 0 \\ 0 & 0 & 0\end{bmatrix}$}
		\quad \text{where} \quad
		A_{df}=\frac{1}{n}\sum_{i=1}^{n} (d_{i} - \bar d)(f_{i} - \bar f)'.
		\end{align}
		Computation of the Johnson-Lindenstrauss approximation can be facilitated using the representations $A_\psi = A_f'A_f$ and $A_{\alpha,\psi} = \frac{1}{2}(A_d'A_f + A_f'A_d)$ where
		\begin{align}
			A_f' = \mbox{\scriptsize $ \frac{1}{\sqrt{n}}\begin{bmatrix} 0 & 0 & 0 \\ f_1 - \bar f & \dots & f_n-\bar f \\ 0 & 0 & 0 \end{bmatrix}$} \quad \text{and} \quad A_d' = \mbox{\scriptsize$\frac{1}{\sqrt{n}}\begin{bmatrix}  d_1 - \bar d & \dots & d_n-\bar d \\ 0 & 0 & 0 \\ 0 & 0 & 0 \end{bmatrix}$ }.
		\end{align}
		
		Addition and subtraction of $\psi_0$ in $\beta$ amounts to the normalization, $\psi_0 = 0$, which has no effect on the variance components of interest. As \cite{abowd1999high,abowd2002computing} note, least squares estimation of \eqref{eq:AKM} requires one normalization of the $\psi$ vector within each set of firms connected by worker mobility. 
		For simplicity, we assume all firms are connected so that only a single normalization is required.\footnote{\cite{bonhomme2019distributional} study a closely related model where workers and firms each belong to one of a finite number of types and each pairing of worker and firm type is allowed a different mean wage. These mean wage parameters are shown to be identified
		when each worker type moves between each firm type with positive probability, enabling estimation even when many firms are not connected.}
		

		
		

		\noindent \textbf{\emph{Covariance Representation}} 
		\cite{abowd1999high} estimated $\sigma_{\psi}^{2}$ and $\sigma_{\alpha,\psi}$ using the naive plug-in estimators $\hat{\beta}'A_{\psi}\hat{\beta}$ and $ \hat{\beta}'A_{\alpha,\psi}\hat{\beta}$ which are, in general, biased. \cite{andrews2008high} proposed the ``homoscedasticity-only'' estimators of \eqref{eq:HO}. These estimators are unbiased when the errors $\varepsilon_i$ are independent and have common variance. \cite{bonhomme2019distributional} propose a two-step estimation approach that is consistent in the presence of heteroscedasticity when the support of firm wage effects is restricted to a finite number of values and each firm grows large with the total sample size $n$. Our leave-out estimators, which avoid both the homoscedasticity requirement on the errors and any cardinality restrictions on the support of the firm wage effects, can be written compactly as covariances taking the form
		\begin{align}\label{eq:varest}
		\hat{\sigma}_{\psi}^{2}  =\sum_{i=1}^{n} y_{i} x_{i}' S_{xx}^{-1} A_{\psi} \hat{\beta}_{-i}, 
		\qquad 
		\hat{\sigma}_{\alpha,\psi}  =\sum_{i=1}^{n} y_{i} x_{i}' S_{xx}^{-1} A_{\alpha,\psi} \hat{\beta}_{-i}.
		\end{align}
	Notably, these estimators are unbiased whenever the leave out estimator $\hat \beta_{-i}$ can be computed, regardless of the distribution of firm sizes.
		
		\noindent \textbf{\emph{Special Case: Two time periods}}
		A simpler representation of $\hat{\sigma}_{\psi}^{2}$ is available in the case where only two time periods are available and no common regressors are present ($T_{g} =2$ and $x_{{g} t} =0$ for all ${g},t$). Consider this model in first differences
		\begin{align}
		\label{fd_model}
		\Delta y_{{g}} = \Delta f_{{g}}' \psi + \Delta \varepsilon_{{g}}
		&&
		({g}=1,\dots,N)
		\end{align}
		where $\Delta y_{{g}}=y_{{g} 2}-y_{{g} 1}$, $\Delta \varepsilon_{{g}} = \varepsilon_{{g} 2} - \varepsilon_{{g} 1}$, and 
		$\Delta f_{{g}} = f_{i(g,2)} -  f_{i(g,1)}$. The leave-out estimator of $\sigma_{\psi}^2$ applied to this differenced representation of the model is:
		\begin{align}
		\hat \sigma_{\psi}^2 =\sum_{{g} =1}^N \Delta y_{{g} }  \Delta \tilde f_g'\hat{\psi}_{-{g}} 
		\quad \text{where} \quad
		\Delta \tilde f_g = A_{ff} S_{\Delta f\Delta f}^{-1} \Delta f_g.
		\end{align}		
		Note that the quantities $S_{\Delta f\Delta f}$ and $\hat{\psi}_{-{g}}$ correspond respectively to $S_{xx}$ and $\hat{\beta}_{-i}$ in the first differenced model. 
		
		\begin{rem}\th\label{rem:nocluster}The leave-out representation above reveals that $\hat{\sigma}_{\psi}^{2}$ is not only unbiased under arbitrary heteroscedasticity and design unbalance, but also under arbitrary correlation between $\varepsilon_{{g} 1}$ and $\varepsilon_{{g} 2}$. The same can be shown to hold for $\hat{\sigma}_{\alpha,\psi}$. Furthermore, this representation highlights that $\hat{\sigma}_{\psi}^{2}$ only depends upon observations with $\Delta f_{{g} } \neq 0$ (i.e., firm ``movers'').
		\end{rem}
		
	\end{example}

	\section{Inference on Quadratic Forms of Fixed Rank}\label{sec:InfLow}
	
	While the previous section emphasized variance components where the rank $r$ of $A$ was increasing with sample size, we first study the case where $r$ is fixed. Problems of this nature often arise when testing a few linear restrictions or when conducting inference on linear combinations of the regression coefficients, say $v'\beta$. In the case of two-way fixed effects models of wage determination, the quantity $v'\beta$ might correspond to the difference in mean values of firm effects between male and female workers \citep{card2015bargaining} or to the coefficient from a projection of firm effects onto firm size \citep{bloom2018disappearing}.\fxnote{does bloom actually run such a regression?} A third use case, discussed at length by \cite{cattaneo2017inference}, is where $v'\beta$ corresponds to a linear combination of a few common coefficients in a linear model with high dimensional fixed effects that are regarded as nuisance parameters.

	To characterize the limit distribution of $\hat \theta$ when $r$ is small, we rely on a representation of $\theta$ as a weighted sum of squared linear combinations of the data: $\hat \theta = \sum_{\ell = 1}^{r} \lambda_\ell\left(\hat b_\ell^2-\hat{\mathbb{V}}[\hat b_\ell ]\right)$ where 
	\begin{align}
		\hat b &= \sum_{i=1}^n w_i y_i
		\quad \text{and} \quad 
		\hat{\mathbb{V}}[\hat b] = \sum_{i=1}^n  w_{i} w_{i}' \hat \sigma_{i}^2
	\end{align}
	for $w_i = (w_{i1},\dots,w_{ir})'= Q' S_{xx}^{-1/2} x_i$.  The following theorem characterizes the asymptotic distribution of $\hat \theta$ while providing conditions under which $\hat b$ is asymptotically normal and $\hat{\mathbb{V}}[\hat b]$ is consistent.
	
	
	\begin{thm}\th\label{thm2}				
		If \th\ref{ass:reg} holds, $r$ is fixed, and $\max_i  w_i'w_i = o(1)$, then 
		\begin{enumerate}
			\item $\mathbb{V}[\hat b]^{-1/2}(\hat b - b) \xrightarrow{d} \mathcal{N}\left( 0 , I_r \right)$ where $b = Q' S_{xx}^{1/2} \beta$,
			\item ${\mathbb{V}}[\hat b]\inverse \hat{\mathbb{V}}[\hat b] \xrightarrow{p} I_r$,
			\item $\hat \theta = \sum_{\ell = 1}^{r} \lambda_\ell\left(\hat b_\ell^2-\mathbb{V}[\hat b_\ell]\right) + o_p(\mathbb{V}[\hat \theta]^{1/2})$,
		\end{enumerate}
	\end{thm}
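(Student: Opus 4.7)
I would prove the three claims in sequence, leaning throughout on the bookkeeping identities $\hat b - b = \sum_{i=1}^n w_i \varepsilon_i$, $\sum_{i=1}^n w_i w_i' = I_r$, and $\sum_{i=1}^n B_{ii}\hat\sigma_i^2 = \sum_{\ell=1}^r \lambda_\ell \hat{\mathbb{V}}[\hat b_\ell]$ (which follows from $B_{ii} = \sum_\ell \lambda_\ell w_{i\ell}^2$). The common engine is the hypothesis $\max_i w_i'w_i = o(1)$, which, combined with the bounded-moment conditions of \th\ref{ass:reg}, controls all the diagonal-type contributions that arise.

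\emph{Part 1.} I would apply the Cram\'er--Wold device: for each unit vector $c \in \R^r$, the quantity $(c'\mathbb{V}[\hat b]c)^{-1/2} c'(\hat b - b)$ is a normalized sum of $n$ independent mean-zero terms $(c'w_i)\varepsilon_i$. Because $\sigma_i^{-2} = O(1)$, the lower bound $c'\mathbb{V}[\hat b]c \ge c_0\|c\|^2$ (for some $c_0 > 0$) follows from $\sum_i w_i w_i' = I_r$. Combined with $\max_i (c'w_i)^2 \le \max_i w_i'w_i = o(1)$ and the bounded fourth moments of $\varepsilon_i$, Lyapunov's condition at exponent $4$ is immediate, yielding one-dimensional normality, and hence joint normality via Cram\'er--Wold.

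\emph{Part 2.} Since each $\hat\sigma_i^2$ is unbiased for $\sigma_i^2$ and $\mathbb{V}[\hat b] \succeq c_0 I_r$ is bounded below, it suffices to show $\hat{\mathbb{V}}[\hat b]_{\ell m} - \mathbb{V}[\hat b]_{\ell m} = o_p(1)$ for each entry. I would substitute $\hat\sigma_i^2 = y_i (M\varepsilon)_i / M_{ii}$, where $M = I - X S_{xx}^{-1} X'$, and decompose $\hat{\mathbb{V}}[\hat b]_{\ell m}$ into a linear-in-$\varepsilon$ piece and a mean-corrected quadratic-in-$\varepsilon$ piece. Using the idempotence identity $\sum_{j \ne i} M_{ij}^2 = M_{ii}(1 - M_{ii}) \le 1$, the bound $M_{ii} \ge 1 - c$, and the key inequality $\sum_i (w_{i\ell} w_{im})^2 \le \max_i (w_i' w_i) \cdot \sum_i w_{i\ell}^2 = o(1)$, one verifies that both variance contributions are $o(1)$; Chebyshev then closes the argument.

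\emph{Part 3.} The bookkeeping identity $\sum_i B_{ii} \hat\sigma_i^2 = \sum_\ell \lambda_\ell \hat{\mathbb{V}}[\hat b_\ell]$ gives $\hat\theta = \sum_\ell \lambda_\ell (\hat b_\ell^2 - \hat{\mathbb{V}}[\hat b_\ell])$, so the remainder in the claim equals
\begin{equation*}
-T \;\equiv\; -\sum_{\ell=1}^r \lambda_\ell\bigl(\hat{\mathbb{V}}[\hat b_\ell] - \mathbb{V}[\hat b_\ell]\bigr) \;=\; -\sum_{i=1}^n B_{ii}(\hat\sigma_i^2 - \sigma_i^2).
\end{equation*}
Re-running the Part 2 variance calculation with weights $B_{ii}$ in place of $w_{i\ell}w_{im}$, and using $\sum_i B_{ii}^2 \le (\max_\ell \lambda_\ell^2) \cdot r \cdot \max_i(w_i'w_i) = o\bigl(\sum_\ell \lambda_\ell^2\bigr)$ for fixed $r$, yields $\mathbb{V}[T] = o\bigl(\sum_\ell \lambda_\ell^2\bigr)$. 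To complete the proof, I would write $\hat\theta - \theta = L + \tilde Q - T$ with $L = 2\sum_i (\sum_\ell \lambda_\ell b_\ell w_{i\ell})\varepsilon_i$ (linear in $\varepsilon$) and $\tilde Q = \sum_{i,j}B_{ij}\varepsilon_i\varepsilon_j - \sum_i B_{ii}\sigma_i^2$ (mean-zero quadratic), then lower-bound $\mathbb{V}[\hat\theta]$ using $\sum_{i,j} B_{ij}^2 = \mathrm{trace}(\tilde A^2) = \sum_\ell \lambda_\ell^2$: the off-diagonal degenerate form $\sum_{i \ne j} B_{ij}\varepsilon_i \varepsilon_j$ is exactly uncorrelated (through second \emph{and} third moments) with $L$ and with the diagonal piece of $\tilde Q$, and has variance $\asymp \sum_{i \ne j} B_{ij}^2 = \sum_\ell \lambda_\ell^2 - o\bigl(\sum_\ell \lambda_\ell^2\bigr) \asymp \sum_\ell \lambda_\ell^2$. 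Cauchy--Schwarz then bounds the remaining cross-covariances with $T$ by $o\bigl(\sum_\ell \lambda_\ell^2\bigr)$, producing $\mathbb{V}[\hat\theta] \gtrsim \sum_\ell \lambda_\ell^2$ and hence $T = o_p(\mathbb{V}[\hat\theta]^{1/2})$.

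\emph{Main obstacle.} The delicate step is the Part 3 lower bound on $\mathbb{V}[\hat\theta]$: one must guarantee that, however small the $\lambda_\ell$'s are and however the unknown signal $b$ interacts with higher-order moments of $\varepsilon$, destructive cancellation cannot drive $\mathbb{V}[\hat\theta]$ below the order $\sum_\ell \lambda_\ell^2$. The trick is that the off-diagonal degenerate quadratic form acts as a non-vanishing ``noise floor'' that is orthogonal to every other piece of the decomposition at the level of third moments, so its variance survives as an inescapable lower bound.
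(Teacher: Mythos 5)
Your proposal is correct and follows essentially the same route as the paper: the same spectral bookkeeping ($B_{ii}=\sum_\ell \lambda_\ell w_{i\ell}^2$, $\hat\theta=\sum_\ell\lambda_\ell(\hat b_\ell^2-\hat{\mathbb{V}}[\hat b_\ell])$) reduces all three parts to a Lyapunov/Cram\'er--Wold CLT for $\sum_i w_i\varepsilon_i$ plus the observation that weighted sums $\sum_i c_i(\hat\sigma_i^2-\sigma_i^2)$ are mean zero with variance of order $\sum_i c_i^2=o(1)$ (taking $c_i=w_{i\ell}w_{im}$ for Part 2 and $c_i=B_{ii}$ for Part 3), all driven by $\max_i w_i'w_i=o(1)$. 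The only place you go beyond the paper is in spelling out the lower bound $\mathbb{V}[\hat\theta]\gtrsim\sum_\ell\lambda_\ell^2$ via the exact uncorrelatedness of the off-diagonal degenerate quadratic form with the linear and diagonal pieces; the paper simply asserts the equivalent inequality $\mathbb{V}[\hat\theta]^{-1}\sum_\ell\lambda_\ell^2\le\max_i\sigma_i^{-4}$, and your argument is a valid justification of the same fact.
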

	
	The high-level requirement of this theorem that $\max_i w_i'w_i = o(1)$ is a Lindeberg condition ensuring that no observation is too influential. One can think of $\max_i w_i'w_i$ as measuring the inverse effective sample size available for estimating $b$: when the weights are equal across $i$, the equality $\sum_{i=1}^n w_i w_i' = I_r$ implies that $w_{i\ell}^2 = \frac{1}{n}$. Since $\frac{1}{n} \sum_{i=1}^n w_i'w_i = \frac{r}{n}$, the requirement that  $\max_i w_i'w_i = o(1)$ is implied by a variety of primitive conditions that limit how far a maximum is from the average \cite[see, e.g.,][Appendix A.1]{anatolyev2012inference}. Note that \th\ref{thm2} does not apply to settings where $r$ is proportional to $n$ because $\max_i w_i'w_i \ge \frac{r}{n}$.

	In the special case where $A = vv'$ for some non-random vector $v$, \th\ref{thm2} establishes that the variance estimator $\hat{\mathbb{V}}[\hat \beta] = S_{xx}\inverse \left(\sum_{i=1}^n x_i x_i' \hat \sigma_i^2\right)S_{xx}\inverse$ enables consistent inference on the linear combination $v'\beta$ using the approximation
	\begin{align}
	\frac{v'(\hat \beta - \beta)}{\sqrt{v'\hat{\mathbb{V}}[\hat \beta]v}} \xrightarrow{d} \mathcal{N}(0,1). \label{lincom}
	\end{align}
	To derive this result we assumed that $\max_i P_{ii} \le c$ for some $c<1$, whereas standard Eicker-White variance estimators generally require that $\max_i P_{ii} \rightarrow 0$ and \cite{cattaneo2017inference} establish an asymptotically valid approach to inference in settings where $\max_i P_{ii} \le 1/2$. Thus $\hat{\mathbb{V}}[\hat \beta]$ leads to valid inference under weaker conditions than existing versions of Eicker-White variance estimators.

	\begin{rem}
		\th\ref{thm2} extends classical results on hypothesis testing of a few linear restrictions, say, $H_0 : R\beta=0$, to allow for many regressors and heteroscedasticity. A convenient choice of $A$ for testing purposes is $\frac{1}{r} R' (R S_{xx}\inverse R')\inverse R$ where $r$, the rank of $R \in \R^{r \times k}$, is fixed. Under $H_0$, the asymptotic distribution of $\hat \theta$ is an equally weighted sum of $r$ central $\chi^2$ random variables. This distribution is known up to $\mathbb{V}[\hat b ]$ and a critical value can be found through simulation. For a recent contribution to this literature, see \cite{anatolyev2012inference} who allows for many regressors but considers the special case of homoscedastic errors.
	\end{rem}

	\section{Inference on Quadratic Forms of Growing Rank}\label{sec:dist}

	We now turn to the more challenging problem of conducting inference on $\theta$ when $r$ increases with $n$, as in the examples discussed in Section \ref{sec:examples}. These results also enable tests of many linear restrictions. For example, in a model of gender-specific firm effects of the sort considered by \cite{card2015bargaining}, testing the hypothesis that men and women face identical sets of firm fixed effects entails as many equality restrictions as there are firms.

	\subsection{Limit Distribution}

	In order to describe the result we introduce $\check x_i = \sum_{\ell=1}^n M_{i\ell} \frac{B_{\ell\ell}}{1-P_{\ell \ell}} x_\ell$ where $M_{i\ell} = \mathbf{1}_{\{i=\ell\}} - x_i S_{xx}\inverse x_\ell$. Note that $\check x_i $ gives the residual from a regression of $\frac{B_{ii}}{1-P_{ii}} x_i$ on $x_i$. Therefore, $\check x_i =0$ when the regressor design is balanced. The contribution of $\check x_i$ to the behavior of $\hat \theta$ is through the estimation of $\sum_{i=1}^n B_{ii} \sigma_i^2$, which can be ignored in the case where the rank of $A$ is bounded.  When the rank of $A$ is large, as implied by condition $(ii)$ of \th\ref{thm3} below, this estimation error can resurface in the asymptotic distribution. One can think of the eigenvalue ratio in $(ii)$ as the inverse effective rank of $\tilde A$: when all the eigenvalues are equal $\frac{ \lambda_1^2}{\sum_{\ell=1}^r \lambda_\ell^2} = \frac{1}{r}$.
	\begin{thm}\th\label{thm3}
		Recall that $\tilde x_i = A S_{xx}\inverse x_i$ where $\hat \theta = \sum_{i=1}^n y_i \tilde x_i'\hat \beta_{-i}$. If \th\ref{ass:reg} holds and the following conditions are satisfied
		\begin{align}
		(i) \ \mathbb{V}[\hat \theta]\inverse \max_i \left( (\tilde x_i'\beta)^2 + (\check x_i'\beta)^2 \right) = o(1), \quad (ii) \ \frac{ \lambda_1^2}{\sum_{\ell=1}^r \lambda_\ell^2} = o(1),
		\end{align}
		then $\mathbb{V}[\hat \theta]^{-1/2} (\hat \theta - \theta ) \xrightarrow{d} \mathcal{N}(0,1)$.
	\end{thm}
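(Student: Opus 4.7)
The plan is to decompose $\hat \theta - \theta$ into a linear form and a centered quadratic form in the independent errors $\{\varepsilon_i\}$, establish asymptotic normality of each piece separately under the stated conditions, and combine them via the Cramér--Wold device. Starting from the bias-corrected representation $\hat \theta = \hat \beta' A \hat \beta - \sum_{i=1}^n B_{ii}\hat \sigma_i^2$, I would substitute $\hat \beta - \beta = S_{xx}^{-1}\sum_i x_i\varepsilon_i$ and $(1-P_{ii})\hat \sigma_i^2 = (x_i'\beta + \varepsilon_i)\sum_j M_{ij}\varepsilon_j$, recenter the quadratic pieces using $\sum_i B_{ii}\sigma_i^2 = \mathrm{trace}(A\,\mathbb{V}[\hat \beta])$, and observe that the $\sum_i B_{ii}(\varepsilon_i^2-\sigma_i^2)$ contributions cancel because $M_{ii}=1-P_{ii}$. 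This yields
\begin{align}
\hat \theta - \theta = L + Q, \quad L = \sum_{i=1}^n \bigl(2\tilde x_i'\beta - \check x_i'\beta\bigr)\varepsilon_i, \quad Q = \sum_{i\neq j} C_{ij}\varepsilon_i\varepsilon_j,
\end{align}
where $C_{ij} = \tilde B_{ij} - \frac{B_{ii}}{1-P_{ii}}M_{ij}$ for $i \neq j$ and $\tilde B_{ij} = x_i' S_{xx}^{-1} A S_{xx}^{-1} x_j$. The appearance of $\check x_i$ in $L$ is exactly the residual produced when estimating the bias correction $\sum_i B_{ii}\sigma_i^2$, and the independence of the $\varepsilon_i$ together with $\mathbb{E}[\varepsilon_i]=0$ gives $\mathrm{Cov}(L,Q)=0$, so that $\mathbb{V}[\hat \theta] = \mathbb{V}[L] + \mathbb{V}[Q]$.

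For the linear component I would apply the Lyapunov CLT. Its variance is $\mathbb{V}[L] = \sum_i (2\tilde x_i'\beta - \check x_i'\beta)^2 \sigma_i^2$, Assumption 1(i) yields $\sigma_i^2 = O(1)$ and a uniform fourth moment bound, and condition (i) of the theorem guarantees $\max_i \bigl((\tilde x_i'\beta)^2 + (\check x_i'\beta)^2\bigr) = o(\mathbb{V}[\hat \theta])$. Together these deliver Lyapunov's condition along any subsequence where $\mathbb{V}[L]/\mathbb{V}[\hat \theta]$ converges to some $\kappa_L \in [0,1]$, giving $L/\sqrt{\mathbb{V}[\hat \theta]} \xrightarrow{d} \mathcal{N}(0,\kappa_L)$. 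For the quadratic component I would invoke a CLT for bilinear forms in independent heteroscedastic variables along the lines of \citet{chatterjee2008new} and \citet{soelvsten2017robust}, applied after replacing $C$ by its symmetric part $C^{\mathrm{sym}} = (C+C')/2$ (which leaves $Q$ unchanged). The CLT requires $\|C^{\mathrm{sym}}\|_{\mathrm{op}}^2 / \|C^{\mathrm{sym}}\|_F^2 = o(1)$, and the key input is the factorization $\tilde B = \tilde X \tilde A \tilde X'$ with $\tilde X = X S_{xx}^{-1/2}$ having orthonormal columns, so that $\tilde B$ inherits the non-zero spectrum of $\tilde A$; condition (ii) then gives $\|\tilde B\|_{\mathrm{op}}^2/\|\tilde B\|_F^2 = \lambda_1^2/\sum_\ell \lambda_\ell^2 = o(1)$.

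The two pieces are combined by Cramér--Wold: for arbitrary $(a,b)\in\mathbb{R}^2$, $aL + bQ$ is again a linear plus centered quadratic form in $\{\varepsilon_i\}$ satisfying the same hypotheses, so the above yields a joint Gaussian limit, and Slutsky together with $\mathbb{V}[\hat \theta] = \mathbb{V}[L] + \mathbb{V}[Q]$ delivers $\mathbb{V}[\hat \theta]^{-1/2}(\hat \theta - \theta) \xrightarrow{d} \mathcal{N}(0,1)$ along every convergent subsequence and hence in general. The principal obstacle is ensuring that the operator-to-Frobenius ratio condition transfers from $\tilde B$ to the full symmetric coefficient matrix $C^{\mathrm{sym}}$ of $Q$: the asymmetric correction $N_{ij} = \frac{B_{ii}}{1-P_{ii}} M_{ij}$ must be shown, using $\|M\|_{\mathrm{op}} \le 1$ together with the leverage and fourth-moment bounds in Assumption 1 to control $B_{ii}/(1-P_{ii})$, not to spoil the spectral domination supplied by condition (ii). This is the step where the linear-term control from condition (i) and the spectral control from condition (ii) interact most tightly, and where I expect most of the technical effort to live.
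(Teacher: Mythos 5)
Your proposal is correct and follows essentially the same route as the paper's proof: the identical decomposition $\hat\theta-\theta=\sum_i(2\tilde x_i'\beta-\check x_i'\beta)\varepsilon_i+\sum_{i\neq\ell}C_{i\ell}\varepsilon_i\varepsilon_\ell$, a joint CLT for the linear and quadratic pieces via the Stein's-method result of \cite{soelvsten2017robust} applied to the Cram\'er--Wold combination along convergent subsequences of the variance share, with condition (i) delivering the Lindeberg bound and condition (ii) delivering the spectral ratio bound. The technical obstacle you flag---transferring $\lambda_1^2/\sum_\ell\lambda_\ell^2=o(1)$ from $\tilde A$ to the full kernel $C$---is exactly where the paper's effort lies; it is resolved by the bound $\text{trace}(C^4)\le c_U\,\text{trace}(\tilde A^4)\le c_U\lambda_1^2\,\text{trace}(\tilde A^2)$ together with $V_{\mathcal U}\ge c_L\min_i\sigma_i^4\,\text{trace}(\tilde A^2)$, with constants depending only on $\min_i M_{ii}\ge 1-c>0$ from Assumption 1(ii), just as you anticipate.
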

	The proof of \th\ref{thm3} relies on a variation of Stein's method developed in \cite{soelvsten2017robust} and a representation of $\hat \theta$ as a second order U-statistic, i.e.,
	\begin{align}\label{eq:Ustat}
	\hat \theta = \sum_{i=1}^n \sum_{\ell \neq i} C_{i\ell} y_i y_\ell
	\end{align}
	where $C_{i\ell} = B_{i\ell} - 2\inverse M_{i\ell}\left(M_{ii}\inverse B_{ii} + M_{\ell\ell}\inverse B_{\ell\ell}\right)$ and $B_{i\ell} = x_i' S_{xx}\inverse A S_{xx}\inverse x_\ell$. The proof shows that the ``kernel'' $C_{i \ell}$ varies with $n$ in such a way that $\hat \theta$ is asymptotically normal whether or not $\hat \theta$ is a degenerate U-statistic (i.e., whether or not $\beta$ is zero).

	One representation of the variance appearing in \th\ref{thm3} is
	\begin{align}
	\mathbb{V}[\hat \theta] &= \sum_{i=1}^n \left( 2\tilde x_i'\beta - \check x_i'\beta \right)^2 \sigma_i^2 + 2 \sum_{i=1}^n \sum_{\ell \neq i} C_{i\ell}^2 \sigma_i^2 \sigma_\ell^2.
	\end{align}
	Note that this variance is bounded from below by $\min_i \sigma_i^2 \sum_{i=1}^n ( 2\tilde x_i'\beta)^2 + (\check x_i'\beta)^2$ since $\sum_{i=1}^n \tilde x_i'\beta \check x_i'\beta = 0$. Therefore $(i)$ will be satisfied whenever $\max_i \left( (\tilde x_i'\beta)^2 + (\check x_i'\beta)^2 \right)$ is not too large compared to $\sum_{i=1}^n (\tilde x_i'\beta)^2 + (\check x_i'\beta)^2$. As in \th\ref{thm2}, $(i)$ is implied by a variety of primitive conditions that limit how far a maximum is from the average, but since $(i)$ involves a one dimensional function of $x_i$ it can also be satisfied when $r$ is large. A particularly simple case where $(i)$ is satisfied is when $\beta=0$; further cases are discussed in Section \ref{sec:verify}. 
	
	\begin{rem}\th\label{rem:testbig}
		\th\ref{thm3} can be used to test a large system of linear restrictions of the form $H_0: R\beta=0$ where $r \rightarrow \infty$ is the rank of $R \in \R^{r \times k}$. Under this null hypothesis, choosing $A=\frac{1}{r} R' (R S_{xx}\inverse R')\inverse R$ implies ${\mathbb{V}}[\hat \theta]^{-1/2} \hat \theta \xrightarrow{d} \mathcal{N}(0,1)$ since all the non-zero eigenvalues of $\tilde A$ are equal to $\frac{1}{r}$. The existing literature allows for either heteroscedastic errors and moderately few regressors \citep[][$k^3/n \rightarrow 0$]{donald2003empirical} or homoscedastic errors and many regressors \citep[][$k/n \le c <1$]{anatolyev2012inference}. When coupled with the estimator of ${\mathbb{V}}[\hat \theta]$ presented in the next subsection, this result enables tests with heteroscedastic errors and many regressors.
	\end{rem}
	
	\begin{rem}
		\th\ref{thm3} extends some common results in the literature on many and many weak instruments \cite[see, e.g.,][]{chao2012asymptotic} where the estimators are asymptotically equivalent to quadratic forms. The structure of that setting is such that $\tilde A = I_r/r$ and $r \rightarrow \infty$, in which case condition $(ii)$ of \th\ref{thm3} is automatically satisfied. 
	\end{rem}

	\subsection{Variance Estimation}
	\label{sec:variance_estimation}

	In order to conduct inference based on the normal approximation in \th\ref{thm3} we now propose an estimator of $\mathbb{V}[\hat \theta]$.	The U-statistic representation of $\hat \theta$ in \eqref{eq:Ustat} implies that the variance of $\hat \theta$ is
	\begin{align}
	\mathbb{V}[\hat \theta] = 4\sum_{i=1}^n \left(\sum_{\ell \neq i} C_{i\ell} x_\ell'\beta \right)^2 \sigma_i^2 + 2\sum_{i=1}^n \sum_{\ell \neq i} C_{i\ell}^2 \sigma_i^2 \sigma_\ell^2.
	\end{align}
	Naively replacing $\{x_i'\beta,\sigma_i^2\}_{i=1}^n$ with $\{y_i,\hat \sigma_i^2\}_{i=1}^n$ in the above formula to form a plug-in estimator of $\mathbb{V}[\hat \theta]$ will, in general, lead to invalid inferences as $\hat \sigma_i^2 \hat \sigma_\ell^2$ is a biased estimator of $\sigma_i^2 \sigma_\ell^2$. For this reason, we consider estimators of the error variances that rely on leaving out more than one observation. Since this approach places additional restrictions on the design, Appendix \ref{sec:cons} describes a simple adjustment which leads to conservative inference in settings where these restrictions do not hold.
	
	
	\noindent \textbf{\emph{Sample Splitting}} Our specific proposal is an estimator that exploits two independent unbiased estimators of $x_i'\beta$ that are also independent of $y_i$. We denote these estimators $\widehat{x_i'\beta}_{-i,s} = \sum_{\ell \neq i}^n P_{i\ell,s} y_\ell$ for $s = 1,2$, where $P_{i\ell,s}$ does not (functionally) depend on the $\{y_i\}_{i=1}^n$. To ensure independence between $\widehat{x_i'\beta}_{-i,1}$ and $\widehat{x_i'\beta}_{-i,2}$, we require that $P_{i\ell,1} P_{i\ell,2} =0$ for all $\ell$. Employing these split sample estimators, we create a new set of unbiased estimators for $\sigma_i^2$:
	\begin{align}
	\tilde \sigma_{i}^2 = \left(y_i - \widehat{x_i'\beta}_{-i,1}\right)\left(y_i - \widehat{x_i'\beta}_{-i,2}\right) 
	\quad \text{and} \quad 
	\hat \sigma_{i,-\ell}^2 = \begin{cases}
	y_i(y_i - \widehat{x_i'\beta}_{-i,1}), & \text{if } P_{i\ell,1}= 0, \\
	y_i(y_i - \widehat{x_i'\beta}_{-i,2}), & \text{if } P_{i\ell,1}\neq 0,
	\end{cases}
	\end{align}
	where $\hat \sigma_{i,-\ell}^2 $ is independent of $y_\ell$ and $\tilde \sigma_{i}^2$ is a cross-fit estimator of the form considered in \cite{newey2018cross}. These cross-fit estimators can be used to construct an estimator of $\sigma_i^2 \sigma_\ell^2$ that, under certain design conditions, will be unbiased.
	 Letting $P_{im,-\ell} = P_{im,1}1_{\{P_{i\ell,1}= 0\}} + P_{im,2}1_{\{P_{i\ell,1} \neq 0\}}$ denote the weight observation $m$ receives in $\hat \sigma_{i,-\ell}^2$ and $\tilde C_{i\ell} = C_{i\ell}^2 + 2\sum_{m =1 }^n C_{mi} C_{m\ell} (P_{mi,1} P_{m\ell,2} + P_{mi,2} P_{m\ell,1})$, we define
	\begin{align}
		\widehat{\sigma_i^2 \sigma_\ell^2} &= 
		\begin{cases}
		\hat \sigma_{i,-\ell}^2 \cdot \hat \sigma_{\ell,-i}^2, & \text{if }  P_{im,-\ell}P_{\ell m,-i}=0 \text{ for all } m, \\
		\tilde \sigma_{i}^2 \cdot \hat \sigma_{\ell,-i}^2,  & \text{else if } P_{i\ell,1} + P_{i\ell,2}=0, \\
		\hat \sigma_{i,-\ell}^2 \cdot	\tilde \sigma_{\ell}^2,  & \text{else if } P_{\ell i,1} + P_{\ell i,2}=0, \\
		\hat \sigma_{i,-\ell}^2 \cdot (y_\ell - \bar y)^2 \cdot 1_{\{\tilde C_{i\ell} < 0\}}, & \text{otherwise.}
		\end{cases}
	\end{align}
	The first three cases in the above definition correspond respectively to pairs where (i) $\hat \sigma_{i,-\ell}^2$ and $\hat \sigma_{\ell,-i}^2$ are independent, (ii) $\widehat{x_i'\beta}_{-i,1}$ and $\widehat{x_i'\beta}_{-i,2}$ are independent of $y_\ell$, and (iii) $\widehat{x_\ell'\beta}_{-\ell,1}$ and $\widehat{x_\ell'\beta}_{-\ell,2}$ are independent of $y_i$. When any of these three cases apply, we obtain an unbiased estimator of $\sigma_i^2 \sigma_\ell^2$. For the remaining set of pairs $\mathcal{B} = \{(i,\ell) : P_{im,-\ell}P_{\ell m,-i}\neq 0 \text{ for some } m, \  P_{i\ell,1} + P_{i\ell,2} \neq 0, \ P_{\ell i,1} + P_{\ell i,2} \neq 0 \}$ that comprise the fourth case we rely on an unconditional variance estimator which leads to a biased estimator of $\sigma_i^2 \sigma_\ell^2$ and conservative inference.

	\noindent \textbf{\emph{Design Requirements}}
	Constructing the above split sample estimators places additional requirements on the design matrix $S_{xx}$.  We briefly discuss these requirements in the context of \th\ref{ex:ANOVA,ex:RC,ex:AKM}. In the ANOVA setup of \th\ref{ex:ANOVA}, leave-one-out estimation requires a minimum group size of two, whereas existence of $\{\widehat{x_i'\beta}_{-i,s}\}_{s=1,2}$ requires groups sizes of at least three. Conservative inference can be avoided (i.e., the set $\mathcal{B}$ will be empty) when the minimum group size is at least four. In the random coefficients model of \th\ref{ex:RC}, minimum group sizes of three and five are sufficient to ensure feasibility of leave-one-out estimation and existence of $\{\widehat{x_i'\beta}_{-i,s}\}_{s=1,2}$, respectively. Conservativeness can be avoided with a minimum group size of seven. 
	
	In the first differenced two-way fixed effects model of \th\ref{ex:AKM}, the predictions $\{\widehat{x_i'\beta}_{-i,s}\}_{s=1,2}$ are associated with particular paths in the worker-firm mobility network and independence requires that these paths be edge-disjoint. Menger's theorem \citep{menger1927allgemeinen} implies that $\{\widehat{x_i'\beta}_{-i,s}\}_{s=1,2}$ exists if the design matrix has full rank when any two observations are dropped. Menger's theorem also implies that conservativeness can be avoided if the design matrix has full rank when any three observations are dropped. In our application, we use Dijkstra's algorithm to find the paths that generate $\{\widehat{x_i'\beta}_{-i,s}\}_{s=1,2}$ (see Appendix \ref{sec:samplesplitalg} for further details).

	\noindent \textbf{\emph{Consistency}}
	The following lemma shows that $\widehat{\sigma_i^2 \sigma_\ell^2} $ can be utilized to construct an estimator of $\mathbb{V}[\hat \theta]$ that delivers consistent inference when sufficiently few pairs fall into $\mathcal{B}$ and provides conservative inference otherwise. 
	
	\begin{lem}\th\label{lem:varCS}
		For $s =1,2$, suppose that $\widehat{x_i'\beta}_{-i,s}$ satisfies (unbiasedness) $\sum_{\ell \neq i}^n P_{i\ell,s} x_\ell'\beta=x_i'\beta$, (sample splitting) $P_{i\ell,1} P_{i\ell,2} =0$ for all $\ell$, and (projection property) $\lambda_{\max}(P_sP_s') = O(1)$ where $P_s = (P_{i\ell,s})_{i,\ell}$ is the hat-matrix corresponding to $\widehat{x_i'\beta}_{-i,s}$. Let
		\begin{align}
		\hat{\mathbb{V}}[\hat \theta] &= 4\sum_{i=1}^n \left( \sum_{\ell \neq i} C_{i\ell} y_\ell  \right)^2 \tilde \sigma_{i}^2 - 2 \sum_{i=1}^n \sum_{\ell \neq i} \tilde C_{i\ell} \widehat{\sigma_i^2 \sigma_\ell^2}.
		\end{align}
		\begin{enumerate}
			\item If the conditions of \th\ref{thm3} hold and 
			$\abs{\mathcal{B}} = O(1)$, then 
			$\frac{\hat \theta - \theta}{\hat {\mathbb{V}}[\hat \theta]^{1/2}} \xrightarrow{d} \mathcal{N}(0,1).$
			\item If the conditions of \th\ref{thm3} hold, then $\liminf_{n \rightarrow \infty}\Pr\left( \theta \in \left[ \hat \theta \pm z_{\alpha} \hat{\mathbb{V}}[\hat \theta]^{1/2}  \right] \right) \ge 1-\alpha$ where $z_{\alpha}^2$ denotes the $(1-\alpha)$'th quantile of a central $\chi^2_{1}$ random variable.
		\end{enumerate}
		
	\end{lem}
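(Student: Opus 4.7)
Both parts follow once we show that $\hat{\mathbb{V}}[\hat\theta]$ is close to (or asymptotically dominates) $\mathbb{V}[\hat\theta]$ in probability, because Slutsky's theorem can then be applied to the asymptotic normality of $(\hat\theta - \theta)/\mathbb{V}[\hat\theta]^{1/2}$ established in \th\ref{thm3}. The plan is to decompose $\hat{\mathbb{V}}[\hat\theta]$ into a bias term $\E[\hat{\mathbb{V}}[\hat\theta]]-\mathbb{V}[\hat\theta]$ and a stochastic term $\hat{\mathbb{V}}[\hat\theta]-\E[\hat{\mathbb{V}}[\hat\theta]]$, then control each separately and combine via Chebyshev.

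For the bias, I would first check $\E[\tilde\sigma_i^2]=\sigma_i^2$ and $\E[\hat\sigma_{i,-\ell}^2]=\sigma_i^2$ when $P_{i\ell,s}=0$: the disjoint-support condition $P_{i\ell,1}P_{i\ell,2}=0$ makes $\widehat{x_i'\beta}_{-i,1}$ and $\widehat{x_i'\beta}_{-i,2}$ independent conditional on $y_i$, and each is unbiased for $x_i'\beta$, so the cross term in $(y_i - \widehat{x_i'\beta}_{-i,1})(y_i - \widehat{x_i'\beta}_{-i,2})$ vanishes. The same independence structure shows that for any $(i,\ell)\notin\mathcal{B}$, each of the first three branches of $\widehat{\sigma_i^2\sigma_\ell^2}$ factors cleanly into a product of independent unbiased estimators of $\sigma_i^2$ and $\sigma_\ell^2$. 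The key algebraic step is expanding $\E[(\sum_{\ell\neq i}C_{i\ell}y_\ell)^2\tilde\sigma_i^2]$: because $\tilde\sigma_i^2$ depends on other $y_m$'s through $\widehat{x_i'\beta}_{-i,s}$, the expansion generates an extra covariance term $\sigma_i^2\sum_m 2C_{mi}C_{m\ell}(P_{mi,1}P_{m\ell,2}+P_{mi,2}P_{m\ell,1})$ that is exactly cancelled by the second piece of $\tilde C_{i\ell}$, leaving $\E[\hat{\mathbb{V}}[\hat\theta]]=\mathbb{V}[\hat\theta]$ plus contributions indexed by $\mathcal{B}$.

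For boundary pairs we have $\widehat{\sigma_i^2\sigma_\ell^2}=\hat\sigma_{i,-\ell}^2(y_\ell-\bar y)^2\mathbf{1}_{\{\tilde C_{i\ell}<0\}}$. Expanding the square gives $\E[(y_\ell-\bar y)^2]\ge\sigma_\ell^2$, and since $\tilde C_{i\ell}<0$ on the indicator's support, the boundary contribution to $\E[\hat{\mathbb{V}}[\hat\theta]]$ is at least the corresponding contribution to $\mathbb{V}[\hat\theta]$, so $\E[\hat{\mathbb{V}}[\hat\theta]]\ge\mathbb{V}[\hat\theta]$. Under the hypothesis $|\mathcal{B}|=O(1)$ of case (1), \th\ref{ass:reg} together with the projection property makes the boundary terms uniformly $O(1)$, which is negligible against $\mathbb{V}[\hat\theta]$ since the latter must diverge for the normal limit in \th\ref{thm3} to be non-degenerate.

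The main obstacle is showing $\mathrm{Var}[\hat{\mathbb{V}}[\hat\theta]]=o(\mathbb{V}[\hat\theta]^2)$, because $\hat{\mathbb{V}}[\hat\theta]$ is a degree-four polynomial in the $y_i$ and its variance involves up to eighth-order moment products. I would write $\hat{\mathbb{V}}[\hat\theta]-\E[\hat{\mathbb{V}}[\hat\theta]]$ as a martingale-difference sum with respect to the filtration generated by the $y_i$ taken in order, then bound each term using \th\ref{ass:reg}(i) for error moments, the projection property $\lambda_{\max}(P_sP_s')=O(1)$ to control $\sum_\ell P_{i\ell,s}^2$ uniformly in $i$, and the analogous bound $\sum_\ell C_{i\ell}^2=O(B_{ii})$ inherited from $\max_i P_{ii}\le c<1$. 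Chebyshev then yields $\hat{\mathbb{V}}[\hat\theta]/\mathbb{V}[\hat\theta]\xrightarrow{p}1$ under case (1) and $\hat{\mathbb{V}}[\hat\theta]/\mathbb{V}[\hat\theta]\ge 1-o_p(1)$ under case (2), and a final application of Slutsky to \th\ref{thm3} delivers the stated asymptotic normality and conservative coverage.
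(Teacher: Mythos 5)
Your overall strategy---compute the bias of $\hat{\mathbb{V}}[\hat \theta]$, show the stochastic part is $o_p(\mathbb{V}[\hat \theta])$, and conclude by Slutsky---is the same as the paper's, and you correctly identify the key cancellation: the quartic expansion of the first term produces $\mathbb{V}[\hat \theta] + 2\sum_{i}\sum_{\ell \neq i}\tilde C_{i\ell}\sigma_i^2\sigma_\ell^2$, and the second term removes the overshoot exactly on non-boundary pairs. The genuine gap is in your treatment of the boundary pairs. You assert that the $\mathcal{B}$-indexed bias is ``uniformly $O(1)$'' and that this is harmless because $\mathbb{V}[\hat \theta]$ ``must diverge for the normal limit in Theorem \ref{thm3} to be non-degenerate.'' Neither claim is correct: Theorem \ref{thm3} concerns the self-normalized quantity $\mathbb{V}[\hat \theta]^{-1/2}(\hat \theta - \theta)$ and imposes no lower bound on $\mathbb{V}[\hat \theta]$; in every example in the paper $\mathbb{V}[\hat \theta]$ is of order $\operatorname{trace}(\tilde A^2) = o(1)$. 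An $O(1)$ bias in the variance estimator would therefore be fatal rather than negligible. What is actually required is that each of the $O(1)$ boundary terms be individually $o(\mathbb{V}[\hat \theta])$, which follows from $\max_{i,\ell}\abs{\tilde C_{i\ell}} = O(\lambda_{\max}(C^2))$ together with $\lambda_{\max}(C^2) = o(\operatorname{trace}(C^2))$---this is precisely where condition (ii) of Theorem \ref{thm3} enters---and from boundedness of $((x_\ell - \bar x)'\beta)^2$ via Assumption \ref{ass:reg}(iii). You should also carry the $O(n^{-1}\mathbb{V}[\hat \theta])$ remainder arising from using $\bar y$ in place of $\bar x'\beta$ in the fourth branch of $\widehat{\sigma_i^2\sigma_\ell^2}$.

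On the variance step, your martingale-difference plan is compatible with the paper's route (the paper reduces $\hat{\mathbb{V}}[\hat \theta] - \E[\hat{\mathbb{V}}[\hat \theta]]$ to linear through quartic sums and applies an $\mathcal{L}^1$ convergence lemma proved by exactly such an argument), but the ingredient you name, $\sum_\ell C_{i\ell}^2 = O(B_{ii})$, is not what closes the quartic terms. The crux is the Hadamard-product estimate
\begin{align}
\sum_{i,\ell,m,k}\omega_{i\ell mk}^2 = \operatorname{trace}\bigl( (C^2 \odot C^2)(M_1M_1' \odot M_2M_2') \bigr) \le \lambda_{\max}\bigl( M_1M_1' \odot M_2M_2' \bigr)\operatorname{trace}(C^4),
\end{align}
which is where the projection property $\lambda_{\max}(P_sP_s') = O(1)$ is actually consumed, combined with $\operatorname{trace}(C^4) = o(\mathbb{V}[\hat \theta]^2)$ from Theorem \ref{thm3}(ii). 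Without an estimate of this kind the eighth-moment bookkeeping you allude to does not close.
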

	
	In the formula for $\hat{\mathbb{V}}[\hat \theta]$, the first term can be seen as a plug-in estimator and standard results for quartic forms imply that the expectation of this term is ${\mathbb{V}}[\hat \theta] + 2 \sum_{i=1}^n \sum_{\ell \neq i} \tilde C_{i\ell} {\sigma_i^2 \sigma_\ell^2}$. Hence, the second term is a bias correction which completely removes the bias when $\mathcal{B} = \emptyset$ and leaves a positive bias otherwise. In Appendix \ref{sec:cons} we establish validity of an adjustment to $\hat{\mathbb{V}}[\hat \theta]$ that utilizes an upward biased unconditional variance estimator for observations where it is not possible to construct $\{\widehat{x_i'\beta}_{-i,s}\}_{s=1,2}$. 
	
	\begin{rem}\th\label{rem:std}
		The purpose of the condition $\abs{\mathcal{B}} = O(1)$ in the above lemma is to ensure that the bias of $\hat{\mathbb{V}}[\hat \theta]$ grows small with the sample size. Because the bias of $\hat{\mathbb{V}}[\hat \theta]$ is non-negative, inference based on $\hat{\mathbb{V}}[\hat \theta]$ remains valid even when this condition fails, as stated in the second part of \th\ref{lem:varCS}. In practice, it may be useful for researchers to calculate the fraction of pairs that belong to $\mathcal{B}$ to gauge the extent to which inference might be conservative. Similarly, it may be useful to compute the share of observations where it is not possible to construct $\{\widehat{x_i'\beta}_{-i,s}\}_{s=1,2}$ to investigate whether upward bias in the standard error could lead to power concerns. 
	\end{rem}


	\section{Weakly Identified Quadratic Forms of Growing Rank}\label{sec:weak}
	
	In some settings where $r$ grows with the sample size, condition (ii) of \th\ref{thm3} may not apply. For example in two-way fixed effects models, it is possible that ``bottlenecks'' arise in the mobility network that lead the largest eigenvalues to dominate the others.
	
	 This section provides a theorem which covers the case where some of the squared eigenvalues $\lambda_1^2,\dots,\lambda_r^2$ are large relative to their sum $\sum_{\ell=1}^r \lambda_\ell^2$. To motivate this assumption, note that each eigenvalue of $\tilde A$ measures how strongly $\theta$ depends on a particular linear combination of the elements of $\beta$ relative to the difficulty of estimating that combination (as summarized by $S_{xx}^{-1}$). From \th\ref{lem:cons}, $\text{trace}(\tilde A^2) = \sum_{\ell=1}^r \lambda_\ell^2$ governs the total variability in $\hat \theta$. Therefore, \th\ref{thm4} covers the case where $\theta$ depends strongly on a few linear combinations of $\beta$ that are imprecisely estimated relative to the overall sampling uncertainty in $\hat \theta$. The following assumption formalizes this setting.
		\begin{assumption}\th\label{ass:eig}
			There exist a $c >0$  and a known and fixed $q \in \{1,\dots,r-1\}$ such that 
			\begin{align}
				\frac{ \lambda_{q+1}^2}{\sum_{\ell=1}^r \lambda_\ell^2} = o(1) \quad \text{and} \quad \frac{\lambda_{q}^2}{\sum_{\ell=1}^r \lambda_\ell^2} \ge c \quad \text{for all } n.
			\end{align}
		\end{assumption}
		
		\th\ref{ass:eig} defines $q$ as the number of squared eigenvalues that are large relative to their sum. Equivalently, $q$ indexes the number of nuisance parameters in $b$ that are \emph{weakly identified} relative to their influence on $\theta$ and the uncertainty in $\hat \theta$. The assumption that $q$ is known is motivated by our discussion of Examples \ref{ex:R2}--\ref{ex:AKM} in Section \ref{sec:verify} and the theoretical literature on weak identification, which typically makes an ex-ante distinction between strongly and weakly identified parameters \citep[e.g.,][]{andrews2012estimation}. In Section \ref{sec:getq} we offer some guidance on choosing $q$ in settings where it is unknown.

		\subsection{Limit Distribution}

		Given knowledge of $q$, we can split $\hat \theta$ into a known function of $\hat {\mathsf{b}}_q= (\hat b_1,\dots,\hat b_{q})'$ and $\hat \theta_q$ where $\hat b_1,\dots,\hat b_{q}$ are OLS estimators of the weakly identified nuisance parameters:
		\begin{align}
		\hat {\mathsf{b}}_q & = \sum_{i=1}^n \mathsf{w}_{iq} y_i, & \mathsf{w}_{iq} &= ( w_{i1},\dots, w_{iq})', \\
		\hat \theta_q &= \hat \theta - \sum_{\ell=1}^{q} \lambda_\ell (\hat b_\ell^2 - \hat{\mathbb{V}}[\hat b_\ell] ), & \hat{\mathbb{V}}[\hat b] &= \sum_{i=1}^n w_{i} w_{i}' \hat \sigma_{i}^2.
		\end{align}

		The main difficulty in proving the following \th\nameref{thm4} is to show that the joint distribution of $(\hat{\mathsf{b}}_q',\hat \theta_q)'$ is normal, which we do using the same variation of Stein's method that was employed for \th\ref{thm3}. The high-level conditions involve $\tilde x_{iq}$ and $\check x_{iq}$ which are the parts of $\tilde x_{i}$ and $\check x_{i}$ that pertain to $\hat \theta_q$ and are defined in the proof of \th\ref{thm4}.

	\begin{thm}\th\label{thm4}
		If $\max_i  \mathsf{w}_{iq}'\mathsf{w}_{iq} = o(1)$, $\mathbb{V}[\hat \theta_q]\inverse \max_i \left( (\tilde x_{iq}'\beta)^2 + (\check x_{iq}'\beta)^2 \right) = o(1)$, and \th\ref{ass:reg,ass:eig} hold, then
		\begin{enumerate}
			\item $\mathbb{V}[(\hat{\mathsf{b}}_q',\hat \theta_q)']^{-1/2}
			\left(
			(\hat{\mathsf{b}}_q',\hat \theta_q)'
			- \E[(\hat{\mathsf{b}}_q',\hat \theta_q)']
			\right)
			\xrightarrow{d} \mathcal{N}\left(0, I_{q+1} \right)$
			\item \label{eq:thm1} $\hat \theta = \sum_{\ell = 1}^{q} \lambda_\ell\left(\hat b_\ell^2-\mathbb{V}[\hat b_{\ell}]\right) + \hat \theta_q + o_p(\mathbb{V}[\hat \theta]^{1/2})$
		\end{enumerate}
	\end{thm}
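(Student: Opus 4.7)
The plan is to (a) recognize $\hat\theta_q$ itself as a leave-out estimator for a modified quadratic form, (b) apply Stein's method to the joint vector via Cramér-Wold, and (c) handle part (2) via consistency of plug-in variance estimators for the $q$ weakly identified parameters. First, I would show that $\hat\theta_q$ is the leave-out estimator of the quadratic form $\theta_q = \beta' A_q \beta$ associated with $A_q = A - \sum_{\ell=1}^{q} \lambda_\ell S_{xx}^{1/2} q_\ell q_\ell' S_{xx}^{1/2}$, where $q_\ell$ is the $\ell$-th column of $Q$. The key observation is that each rank-one matrix $S_{xx}^{1/2} q_\ell q_\ell' S_{xx}^{1/2}$ has leave-out bias-correction weights $B_{ii} = w_{i\ell}^2$, so $\hat b_\ell^2 - \hat{\mathbb{V}}[\hat b_\ell]$ is precisely the leave-out estimator of $b_\ell^2$. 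By linearity, $\hat\theta_q = \sum_i y_i \tilde x_{iq}' \hat\beta_{-i}$ with $\tilde x_{iq} = A_q S_{xx}^{-1} x_i$ and $\check x_{iq}$ built from $B_{ii,q} = x_i' S_{xx}^{-1} A_q S_{xx}^{-1} x_i$, inheriting the U-statistic representation \eqref{eq:Ustat} with a kernel $C_{i\ell,q}$ defined from $A_q$.

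Second, I would prove part (1) via the Cramér-Wold device. For arbitrary $(\alpha_0', \alpha_1) \in \R^{q+1}$, the centered linear combination $T_n = \alpha_0' (\hat{\mathsf{b}}_q - \mathsf{b}_q) + \alpha_1 (\hat\theta_q - \theta_q)$ is a polynomial of degree at most two in the independent errors $\{\varepsilon_i\}$, splitting into a linear piece $\sum_i a_i \varepsilon_i$ (pooling $\alpha_0' \mathsf{w}_{iq}$ with the signal-dependent $\tilde x_{iq}'\beta$ and $\check x_{iq}'\beta$ terms) plus a bilinear piece $\alpha_1 \sum_{i \neq \ell} C_{i\ell,q} \varepsilon_i \varepsilon_\ell$. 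I would then invoke the variant of Stein's method from \cite{soelvsten2017robust} used in the proof of Theorem~\ref{thm3}: the condition $\max_i \mathsf{w}_{iq}'\mathsf{w}_{iq} = o(1)$ dominates the $\hat{\mathsf{b}}_q$ contribution; the Lindeberg condition $\mathbb{V}[\hat\theta_q]\inverse \max_i ((\tilde x_{iq}'\beta)^2 + (\check x_{iq}'\beta)^2) = o(1)$ controls the linear-in-signal contribution from $\hat\theta_q$; and \th\ref{ass:eig}, which forces $\lambda_{q+1}^2/\sum_{\ell=1}^r\lambda_\ell^2 = o(1)$, ensures that no squared eigenvalue of $\tilde A_q$ dominates the bilinear piece's variance. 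Cross-covariances between $\hat{\mathsf{b}}_q$ and $\hat\theta_q$ are absorbed into the normalizing matrix $\mathbb{V}[(\hat{\mathsf{b}}_q', \hat\theta_q)']$, and Cramér-Wold completes the proof of part (1).

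For part (2), since $\hat\theta = \sum_{\ell=1}^{q} \lambda_\ell(\hat b_\ell^2 - \hat{\mathbb{V}}[\hat b_\ell]) + \hat\theta_q$ by the definition of $\hat\theta_q$, the claim reduces to $R_n := \sum_{\ell=1}^{q} \lambda_\ell(\hat{\mathbb{V}}[\hat b_\ell] - \mathbb{V}[\hat b_\ell]) = o_p(\mathbb{V}[\hat\theta]^{1/2})$. Under \th\ref{ass:reg} the variances $\mathbb{V}[\hat\sigma_i^2]$ are uniformly bounded, so $\mathbb{V}(\hat{\mathbb{V}}[\hat b_\ell]) \le C\,(\max_i w_{i\ell}^2) \sum_i w_{i\ell}^2 = o(1)$ by the hypothesis $\max_i \mathsf{w}_{iq}'\mathsf{w}_{iq} = o(1)$. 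Chebyshev then gives $\hat{\mathbb{V}}[\hat b_\ell] - \mathbb{V}[\hat b_\ell] = o_p(1)$ for each of the $q$ fixed indices. Because $\lambda_\ell^2 \le \text{trace}(\tilde A^2)$ while $\mathbb{V}[\hat\theta]$ is (up to a constant times $\min_i \sigma_i^4$) bounded below by $\text{trace}(\tilde A^2)$, we have $\lambda_\ell = O(\mathbb{V}[\hat\theta]^{1/2})$, so summing over $\ell \le q$ yields $R_n = o_p(\mathbb{V}[\hat\theta]^{1/2})$.

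The main obstacle is the Stein's-method step: one must show that the joint polynomial form $T_n$ converges to a normal limit despite the fact that condition (ii) of \th\ref{thm3} fails for the original $\tilde A$, which is precisely the setting motivating \th\ref{thm4}. The resolution is that, after peeling off the weakly identified combinations $\hat b_1,\ldots,\hat b_q$, the remaining matrix $\tilde A_q$ satisfies the analogue of that condition and the linear-in-$\beta$ contributions are tamed by the explicit Lindeberg condition on $\tilde x_{iq}$ and $\check x_{iq}$. A delicate subsidiary point is checking that $\mathbb{V}[\hat\theta_q]$ does not vanish faster than $\sum_{\ell \ge q+1}\lambda_\ell^2$ and that the off-block covariances in $\mathbb{V}[(\hat{\mathsf{b}}_q',\hat\theta_q)']$ remain controlled, both of which follow from the decomposition of variances derived in the proof of \th\ref{thm3}.
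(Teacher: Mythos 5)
Your proposal is correct and follows essentially the same route as the paper: the same decomposition of $\hat\theta$ into $\sum_{\ell\le q}\lambda_\ell(\hat b_\ell^2-\hat{\mathbb{V}}[\hat b_\ell])$ plus the U-statistic $\hat\theta_q=\sum_i\sum_{\ell\neq i}C_{i\ell q}y_iy_\ell$ built from $\tilde A_q=\sum_{\ell>q}\lambda_\ell q_\ell q_\ell'$, the same Cram\'er--Wold application of the Stein's-method CLT from \cite{soelvsten2017robust} with the linear pieces of $\hat{\mathsf{b}}_q$ and $\hat\theta_q$ pooled against the bilinear kernel, and the same reduction of part (2) to showing $\sum_{\ell\le q}\lambda_\ell(\hat{\mathbb{V}}[\hat b_\ell]-\mathbb{V}[\hat b_\ell])=\sum_i B_{ii,-q}(\hat\sigma_i^2-\sigma_i^2)=o_p(\mathbb{V}[\hat\theta]^{1/2})$ via the variance calculations already used for \th\ref{thm2}. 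The only cosmetic difference is that the paper argues along convergent subsequences of the weight $\omega_1=\sqrt{V_{\mathcal{S}}/\mathbb{V}[\hat\theta_q]}$ to handle the case where the linear part of $\hat\theta_q$ is asymptotically negligible, a technicality your sketch leaves implicit.
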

	
	\th\ref{thm4} provides an approximation to $\hat \theta$ in terms of a quadratic function of $q$ asymptotically normal random variables and a linear function of one asymptotically normal random variable. Here, the non-centralities $\E[\hat {\mathsf{b}}_q] = (b_1,\dots,b_q)'$ serve as nuisance parameters that influence both $\theta$ and the shape of the limiting distribution of $\hat \theta - \theta$. The next section proposes an approach to dealing with these nuisance parameters that provides asymptotically valid inference on $\theta$ for any value of $q$.

	\subsection{Variance Estimation}
	\label{sec:varq}
	
	In \th\ref{thm4} the relevant variance is $\varSigma_q := \mathbb{V}[(\hat{\mathsf{b}}_q',\hat \theta_q)']$,
	\begin{align}
	\varSigma_q &= \sum_{i=1}^n \begin{bmatrix}
	\mathsf{w}_{iq} \mathsf{w}_{iq}' \sigma_i^2 & 2\mathsf{w}_{iq}\left(\sum_{\ell \neq i} C_{i\ell q} x_\ell'\beta \right) \sigma_i^2 \\
	2 \mathsf{w}_{iq}'\left(\sum_{\ell \neq i} C_{i\ell q} x_\ell'\beta \right) \sigma_i^2  & 4 \left(\sum_{\ell \neq i} C_{i\ell q} x_\ell'\beta \right)^2 \sigma_i^2 + 2 \sum_{\ell \neq i} C_{i\ell q}^2 \sigma_i^2 \sigma_\ell^2
	\end{bmatrix},
	\end{align}
	where $C_{i\ell q}$ is defined in the proof of \th\ref{thm4}. Our estimator of this variance reuses the split sample estimators introduced for \th\ref{thm3}:
	\begin{align}
	\hat\varSigma_q &= \sum_{i=1}^n \begin{bmatrix}
	\mathsf{w}_{iq} \mathsf{w}_{iq}' \hat \sigma_i^2 & 2 \mathsf{w}_{iq}\left(\sum_{\ell \neq i} C_{i\ell q} y_\ell \right) \tilde \sigma_i^2 \\
	2 \mathsf{w}_{iq}'\left(\sum_{\ell \neq i} C_{i\ell q} y_\ell \right) \tilde \sigma_i^2  & 4 \left(\sum_{\ell \neq i} C_{i\ell q} y_\ell \right)^2 \tilde \sigma_i^2 - 2 \sum_{\ell \neq i} \tilde C_{i\ell q}^2 \widetilde{\sigma_i^2 \sigma_\ell^2}
	\end{bmatrix}
	\end{align}
	where $\tilde C_{i\ell q}$ and $\widetilde{\sigma_i^2 \sigma_\ell^2}$ are defined in the proof of the next lemma which shows consistency of this variance estimator.
	
	\begin{lem}\th\label{lem:4}
		For $s =1,2$, suppose that $\widehat{x_i'\beta}_{-i,s}$ satisfies $\sum_{\ell \neq i}^n P_{i\ell,s} x_\ell'\beta=x_i'\beta$, $P_{i\ell,1} P_{i\ell,2} =0$ for all $\ell$, and $\lambda_{\max}(P_sP_s') = O(1)$. If the conditions of \th\ref{thm4} hold and $\abs{\mathcal{B}} = O(1)$, then $\varSigma_q\inverse
		\hat\varSigma_q \xrightarrow{p} I_{q+1}.$
	\end{lem}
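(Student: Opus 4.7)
The strategy is to establish entrywise convergence after normalizing. Define the diagonal matrix $D_n = \mathrm{diag}\bigl(\mathbb{V}[\hat b_1]^{1/2},\ldots,\mathbb{V}[\hat b_q]^{1/2},\mathbb{V}[\hat\theta_q]^{1/2}\bigr)$. I will show that each entry of $D_n^{-1}(\hat\varSigma_q-\varSigma_q)D_n^{-1}$ tends to zero in probability. Since $q+1$ is fixed and $D_n^{-1}\varSigma_q D_n^{-1}$ has a bounded inverse under the conditions of \th\ref{thm4} combined with \th\ref{ass:eig}, this entrywise convergence upgrades to $\varSigma_q^{-1}\hat\varSigma_q \xrightarrow{p} I_{q+1}$ by standard continuous mapping arguments. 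It therefore suffices to treat the three distinct block types in turn.

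For the upper-left $q \times q$ block $\sum_i \mathsf{w}_{iq}\mathsf{w}_{iq}'\hat\sigma_i^2$, the estimator is exactly unbiased for its population analogue, and under $\max_i \mathsf{w}_{iq}'\mathsf{w}_{iq} = o(1)$ its normalized variance vanishes by the same moment calculation that yields \th\ref{thm2}(2), applied only to the fixed-dimensional subvector $\hat{\mathsf{b}}_q$ and thus unaffected by the weak identification structure. For the lower-right scalar entry, the split-sample and cross-fit construction is identical in form to the one analyzed in \th\ref{lem:varCS}, but with $\hat\theta_q$ and the kernel $C_{i\ell q}$ in place of $\hat\theta$ and $C_{i\ell}$. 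Because $\hat\theta_q$ inherits a second-order U-statistic representation with the new kernel, the consistency argument of \th\ref{lem:varCS} transfers verbatim; the residual bias from pairs in $\mathcal{B}$ is $O(1)$ and asymptotically negligible against $\mathbb{V}[\hat\theta_q]$.

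The novel step is the off-diagonal block $2\sum_i \mathsf{w}_{iq}\bigl(\sum_{\ell\neq i}C_{i\ell q}y_\ell\bigr)\tilde\sigma_i^2$. Unbiasedness follows from the sample-splitting condition $P_{i\ell,1}P_{i\ell,2}=0$, which makes $\tilde\sigma_i^2$ independent of $y_\ell$ for $\ell\neq i$, together with $\E[\tilde\sigma_i^2]=\sigma_i^2$. The main task is bounding the normalized variance of this block. I would apply a Hoeffding-style decomposition that separates the contribution of the primary disturbance $\varepsilon_i$ from the cross-fit noise contained in $\tilde\sigma_i^2$, and then bound the two pieces using the projection property $\lambda_{\max}(P_sP_s')=O(1)$ to control higher moments of $\tilde\sigma_i^2$, combined with $\max_i \mathsf{w}_{iq}'\mathsf{w}_{iq}=o(1)$ and the signal-control condition $\mathbb{V}[\hat\theta_q]^{-1}\max_i\bigl((\tilde x_{iq}'\beta)^2+(\check x_{iq}'\beta)^2\bigr)=o(1)$ imported from \th\ref{thm4}. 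I expect this variance bound to be the main obstacle, since the off-diagonal is the only place where the $q$ strongly identified directions interact with the U-statistic kernel of the weakly identified remainder through the cross-fit estimator, and the decomposition used in the proof of \th\ref{lem:varCS} must be augmented to track the $\mathsf{w}_{iq}$ weights without losing the cancellations that are crucial for obtaining a $o_p(1)$ rate after normalization by $D_n$.
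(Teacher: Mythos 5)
Your proposal follows essentially the same route as the paper's proof: the two diagonal blocks are dispatched by recycling the arguments for Theorem \ref{thm2} and Lemma \ref{lem:varCS} with $C_{i\ell q}$ in place of $C_{i\ell}$, and the off-diagonal covariance block is the genuinely new piece, handled exactly as you describe --- unbiasedness from the sample-splitting condition $P_{i\ell,1}P_{i\ell,2}=0$, followed by a decomposition of the centered covariance estimator into linear, quadratic, and cubic sums whose normalized variances are controlled by $\max_i \mathsf{w}_{iq}'\mathsf{w}_{iq}=o(1)$, $\lambda_{\max}(P_sP_s')=O(1)$, and the bound $\lambda_{\max}(C_q^2)=O(\lambda_{q+1}^2)=o(\mathbb{V}[\hat\theta_q])$ supplied by Assumption \ref{ass:eig}. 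The only detail worth making explicit in a full write-up is that last eigenvalue bound, which is where the weak-identification structure enters the off-diagonal variance calculation.
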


	\begin{rem}
		As in the case of variance estimation for \th\ref{thm3}, it may be that the design does not allow for construction of the predictions $\widehat{x_i'\beta}_{-i,1}$ and $\widehat{x_i'\beta}_{-i,2}$ used in $\hat\varSigma_q$. For such cases, Appendix \ref{sec:cons} proposes an adjustment to $\hat\varSigma_q$ which has a positive definite bias and therefore leads to valid (but conservative) inference when coupled with the inference method discussed in the next section.
	\end{rem}

	

	\section{Inference with Nuisance Parameters}\label{sec:variance}
	
	 In this section, we develop a two-sided confidence interval for $\theta$ that delivers asymptotic size control conditional on a choice of $q$. Our proposal involves inverting a minimum distance statistic in $\hat{\mathsf{b}}_q$ and $\hat \theta_q$, which \th\ref{thm4} implies are jointly normally distributed. To avoid the conservatism associated with standard projection methods \cite[e.g.,][]{dufour2001finite}, we seek to adjust the critical value downwards to deliver size control on $\theta$ rather than $\E[(\hat{\mathsf{b}}_q', \hat \theta_q)']$. However, unlike in standard projection problems (e.g., the problem of subvector inference), $\theta$ is a nonlinear function of $\E[\hat{\mathsf{b}}_q]$. To accommodate this complication, we use a critical value proposed by \cite{andrews2016geometric} that depends on the curvature of the problem.
	
	\subsection{Inference With Known $q$} 	


	The confidence interval we consider is based on inversion of a minimum-distance statistic for $(\hat{\mathsf{b}}_q',\hat \theta_q)'$ using the critical value proposed in \cite{andrews2016geometric}. For a specified level of confidence, $1-\alpha$, we consider the interval
	\begin{align}
		\hat C_{\alpha,q}^\theta &= \left[ \min_{(\dot b_1,\dots,\dot b_q,\dot \theta_q)'\in \hat{\mathsf{E}}_{\alpha,q}} \sum_{\ell=1}^{q} \lambda_\ell \dot b_\ell^2 + \dot \theta_q , \max_{(\dot b_1,\dots,\dot b_q,\dot \theta_q)'\in \hat{\mathsf{E}}_{\alpha,q}} \sum_{\ell=1}^{q} \lambda_\ell \dot b_\ell^2 + \dot \theta_q \right]
		\shortintertext{where}
		\hat{\mathsf{E}}_{\alpha,q} &= \left\{ (\mathsf{b}_q',\theta_q)' \in \R^{q+1} : \begin{pmatrix}
		\hat{\mathsf{b}}_q - \mathsf{b}_q \\ \hat \theta_q - \theta_q
		\end{pmatrix}'\hat \varSigma_q \inverse \begin{pmatrix}
		\hat{\mathsf{b}}_q - \mathsf{b}_q \\ \hat \theta_q - \theta_q
		\end{pmatrix} \le z_{\alpha,\hat \kappa_q}^2\right\}.
	\end{align}
	
	The critical value function, $z_{\alpha,\kappa}$, depends on the maximal curvature, $\kappa$, of a certain manifold (exact definitions of $z_{\alpha,\kappa}$ and $\kappa$ are given in Appendix \ref{app:conf}). Heuristically, $\kappa$ can be thought of as summarizing the influence of the nuisance parameter $\E[\hat {\mathsf{b}}_q]$ on the shape of $\hat \theta$'s limiting distribution. Accordingly, $z_{\alpha}^2 := z_{\alpha,0}^2$ is equal to the $(1-\alpha)$'th quantile of a central $\chi^2_{1}$ random variable. As $\kappa \rightarrow \infty$, $z_{\alpha,\kappa}^2$ approaches the $(1-\alpha)$'th quantile of a central $\chi^2_{q+1}$ random variable. This upper limit on $z_{\alpha,\kappa}$ is used in the projection method in its classical form as popularized in econometrics by \cite{dufour2001finite}, while the lower limit $z_{\alpha}$ would yield size control if $\theta$ were linear in $\E[\hat{\mathsf{b}}_q]$.

	When $q=0$, the maximal curvature is zero and $\hat C_{0}^\theta$ simplifies to $[\hat \theta \pm z_{\alpha} \hat{\mathbb{V}}[\hat \theta]^{1/2} ]$. 
	When $q=1$, the maximal curvature is $\hat \kappa_1 = \frac{2\abs{\lambda_1}\hat{\mathbb{V}}[\hat b_{1}]}{\hat{\mathbb{V}}[\hat \theta_1]^{1/2}(1-\hat \rho^2)^{1/2}}$ where $\hat \rho$ is the estimated correlation between $\hat b_{1}$ and $\hat \theta_1$. This curvature measure is intimately related to eigenvalue ratios previously introduced, as $\hat \kappa_1^2$ is approximately equal to $\frac{2\lambda_1^2}{\sum_{\ell=2}^{r} \lambda_\ell^2}$ when the error terms are homoscedastic and $\beta=0$. A closed form expression for the $q=1$ confidence interval is provided in Appendix \ref{app:conf}. When $q>1$, inference relies on solving two quadratic optimization problems that involve $q+1$ unknowns, which can be achieved reliably using standard quadratic programming routines.
	
	The following lemma shows that a consistent variance estimator as proposed in \th\ref{lem:4} suffices for asymptotic validity under the conditions of \th\ref{thm4} and Appendix \ref{sec:cons} establishes validity when only a conservative variance estimator is available.
	\begin{lem}\th\label{lem:inf}
		If $\varSigma_q \inverse \hat \varSigma_q \xrightarrow{p} I_{q+1}$ and the conditions of \th\ref{thm4} hold, then 
		\begin{align}
		\liminf_{n \rightarrow \infty} \Pr\left( \theta \in \hat C_{\alpha,q}^\theta \right) \ge 1-\alpha.
		\end{align}
	\end{lem}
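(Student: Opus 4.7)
The plan is to recast the coverage event $\theta \in \hat C_{\alpha,q}^\theta$ as the statement that an elliptical confidence set for the mean vector $(\mathsf{b}_q',\theta_q)'$ intersects a scalar-defined level set, and then invoke the curvature-adjusted inference result of \cite{andrews2016geometric} in the limiting Gaussian problem.

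First, set $\theta_q := \E[\hat \theta_q]$. Combining unbiasedness of $\hat \theta$ (Lemma \ref{lem:unbiased}) and of $\hat{\mathbb{V}}[\hat b_\ell]$ with the identity $\E[\hat b_\ell^2] = b_\ell^2 + \mathbb{V}[\hat b_\ell]$ applied to the definition of $\hat \theta_q$ gives $\theta_q = \theta - \sum_{\ell=1}^q \lambda_\ell b_\ell^2$, so that $\theta = g(\mathsf{b}_q, \theta_q)$ for $g(m_1,\dots,m_{q+1}) := \sum_{\ell=1}^q \lambda_\ell m_\ell^2 + m_{q+1}$. By continuity of $g$ and the intermediate value theorem, $\theta \in \hat C_{\alpha,q}^\theta$ if and only if $\hat{\mathsf{E}}_{\alpha,q} \cap \mathcal{L}_\theta \neq \emptyset$ where $\mathcal{L}_\theta := \{m \in \R^{q+1} : g(m)=\theta\}$. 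Equivalently, the $\hat\varSigma_q\inverse$-Mahalanobis distance from $(\hat{\mathsf{b}}_q', \hat \theta_q)'$ to $\mathcal{L}_\theta$ is at most $z_{\alpha, \hat \kappa_q}$.

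Second, combine Theorem \ref{thm4} part 1 with the hypothesis $\varSigma_q \inverse \hat \varSigma_q \xrightarrow{p} I_{q+1}$ via Slutsky's theorem to obtain
\begin{align}
\hat\varSigma_q^{-1/2}\left((\hat{\mathsf{b}}_q', \hat \theta_q)' - (\mathsf{b}_q', \theta_q)'\right) \xrightarrow{d} \mathcal{N}\left(0, I_{q+1}\right),
\end{align}
so that, asymptotically, one may treat $(\hat{\mathsf{b}}_q', \hat \theta_q)'$ as a Gaussian draw with mean $(\mathsf{b}_q', \theta_q)'$ and \textit{known} variance $\hat\varSigma_q$. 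In this limiting Gaussian model, the coverage event coincides with the acceptance event of the minimum-distance test of $H_0 : g(\mu) = \theta$ at level $z_{\alpha,\hat\kappa_q}^2$.

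Third, the main result of \cite{andrews2016geometric} is precisely that the critical value $z_{\alpha,\kappa}^2$ bounds, uniformly over the location of the Gaussian mean, the probability that the squared Mahalanobis distance from the Gaussian draw to a smooth manifold of maximal curvature at most $\kappa$ exceeds $z_{\alpha,\kappa}^2$ by at least $\alpha$. Consistency of $\hat\kappa_q$ for its population counterpart $\kappa_q$---which follows from the continuous-mapping theorem applied to the closed-form curvature expression together with $\varSigma_q\inverse \hat\varSigma_q \xrightarrow{p} I_{q+1}$---combined with continuity of $z_{\alpha,\cdot}$ then yields $\liminf_n \Pr(\theta \in \hat C_{\alpha,q}^\theta) \ge 1 - \alpha$. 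The hard part will be ensuring that this asymptotic coverage bound holds uniformly over the weakly-identified nuisance parameter $\mathsf{b}_q$ (which may drift with $n$ under Assumption \ref{ass:eig}); this is precisely what the construction of $z_{\alpha,\kappa}$ in \cite{andrews2016geometric} is designed to provide, since their critical value is defined as a worst-case over all possible locations of the Gaussian mean compatible with the curvature bound.
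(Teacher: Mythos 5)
Your proposal is correct and follows essentially the same route as the paper's proof: identify the coverage event with acceptance of the minimum-distance test of $H_0\colon g(\mathsf{b}_q,\theta_q)=\theta$, establish $\hat\varSigma_q^{-1/2}\bigl((\hat{\mathsf{b}}_q',\hat\theta_q)'-\E[(\hat{\mathsf{b}}_q',\hat\theta_q)']\bigr)\xrightarrow{d}\mathcal{N}(0,I_{q+1})$ from \th\ref{thm4} and the variance consistency hypothesis, and then invoke Theorem 2 of \cite{andrews2016geometric} after checking the smoothness and curvature conditions on the level-set manifold. The only cosmetic difference is that the paper applies the curvature bound directly to the manifold standardized by $\hat\varSigma_q$ (whose maximal curvature is exactly $\hat\kappa_q$), rather than routing through consistency of $\hat\kappa_q$ for a population counterpart.
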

	
	The confidence interval studied in \th\ref{lem:inf} constructs a $q+1$ dimensional ellipsoid $\hat{\mathsf{E}}_{\alpha,q}$ and maps it through the quadratic function $(\dot b_1,\dots,\dot b_q,\dot \theta_q) \mapsto \sum_{\ell=1}^{q} \lambda_\ell \dot b_\ell^2 + \dot \theta_q$. This approach ensures uniform coverage over any possible values of the nuisance parameters $b_1,\dots,b_q$ which are imprecisely estimated relative to overall sampling uncertainty in $\hat \theta$.
	
	\begin{rem}
		An alternative to \th\ref{lem:inf} is to conduct inference using a first-order Taylor expansion of $\sum_{\ell=1}^{q} \lambda_\ell \hat b_\ell^2 + \hat \theta_q$. This so-called ``Delta method'' approach is asymptotically equivalent to using the confidence interval $[\hat \theta \pm z_{\alpha} \hat{\mathbb{V}}[\hat \theta]^{1/2} ]$ studied in Section \ref{sec:dist}. However, the Delta method is not uniformly valid in the presence of nuisance parameters as approximate linearity can fail when $\min_{\ell \le q} b_\ell^2 = O(1)$. Section \ref{sec:verify} introduces a stochastic block model with $q=1$ and characterizes $b_1^2$ as the squared difference in average firm effects across two blocks multiplied by the number of between block movers. Thus the Delta method will potentially undercover unless there are strong systematic differences between the two blocks.
	\end{rem}
%

	\subsection{Choosing $q$}\label{sec:getq}

	The preceding discussion of inference considered a setting where the number of weakly identified parameters was known in advance. In some applications, it may not be clear ex ante what value $q$ takes. In such situations researchers may wish to report confidence intervals for two consecutive values of $q$ (or their union). 
	This heuristic serves to minimize the influence of the specific value of $q$ picked, and both our simulations and empirical application suggest that $\hat C_{\alpha,q}^\theta$ barely varies with $q$ when $\frac{ \lambda_{ q+1}^2}{\sum_{\ell=1}^r \lambda_\ell^2} < \frac{1}{10}$. Consequently, little power is sacrificed by taking the union. 
	
	This observation also suggests a heuristic threshold for choosing $q$; namely, to let $q$ be such that $\frac{ \lambda_{q}^2}{\sum_{\ell=1}^r \lambda_\ell^2} \ge \frac{1}{10}$ and $\frac{ \lambda_{ q+1}^2}{\sum_{\ell=1}^r \lambda_\ell^2} < \frac{1}{10}$, with $q=0$ when $\frac{ \lambda_{1}^2}{\sum_{\ell=1}^r \lambda_\ell^2} < \frac{1}{10}$. 
	A similar threshold rule can be motivated under a slight strengthening of \thref{ass:eig} which allows one to learn $q$ from the data.
	\begin{customass}{2$^\prime$}\th\label{ass:eig2}
		There exist a $c >0$, an $\epsilon >0$, and a fixed $q \in \{1,\dots,r-1\}$ such that 
		\begin{align}
		\frac{ \lambda_{q+1}^2}{\sum_{\ell=1}^r \lambda_\ell^2} = O(r^{-\varepsilon}) \quad \text{and} \quad \frac{\lambda_{q}^2}{\sum_{\ell=1}^r \lambda_\ell^2} \ge c \quad \text{for all } n.
		\end{align}
	\end{customass}
	A threshold based choice of $q$ is the unique $\hat q$ for which
	\begin{align}
	\frac{ \lambda_{\hat q+1}^2}{\sum_{\ell=1}^r \lambda_\ell^2} < c_r 
	\quad \text{and} \quad 
	\frac{\lambda_{\hat q}^2}{\sum_{\ell=1}^r \lambda_\ell^2} \ge c_r
	\quad \text{for some } c_r \rightarrow 0,
	\end{align}
	with $\hat q=0$ when $\frac{ \lambda_{1}^2}{\sum_{\ell=1}^r \lambda_\ell^2} < c_r$.
	Under Assumption 2$^\prime$, $\hat q = q$ in sufficiently large samples provided that $c_r$ is chosen so that $c_r r^{\varepsilon} \rightarrow \infty$. This condition is satisfied when $c_r$ shrinks slowly to zero, e.g., when $c_r \propto 1/\log(r)$.

	\section{Verifying Conditions}\label{sec:verify}
	
	We now revisit the examples of Section 2 and verify the conditions required to apply our theoretical results. Appendix \ref{app:sufficiency} provides further details on these calculations.
	
	\begin{customthm}{1}(Coefficient of determination, continued)
		Recall that $\theta = \sigma_{X\beta}^2 = \beta'A\beta$ where $A = \frac{1}{n}\sum_{i=1}^{n} (x_i-\bar x)(x_i-\bar x)'$ and $\tilde A = \frac{1}{n}(I_k - n S_{xx}^{-1/2} \bar x \bar x' S_{xx}^{-1/2}  )$. Suppose \th\ref{ass:reg} holds.
		
		\noindent \textbf{\emph{Consistency}} Consistency follows from \th\ref{lem:cons} since $\lambda_\ell = \frac{1}{n}$ for $\ell = 1,\dots,r$ where $r = \text{dim}(x_i)-1$. Thus $\text{trace}(\tilde A^2)=r/n^2 \le 1/n= o(1)$.
		
		\noindent \textbf{\emph{Limit Distribution}} If $\text{dim}(x_i)$ is fixed, then $w_i'w_i = P_{ii} - \frac{1}{n}$ and \th\ref{thm2} applies under the standard ``textbook'' condition that $\max_{i} P_{ii} = o(1)$. If $\text{dim}(x_i) \rightarrow \infty$, then \th\ref{thm3} applies if $\mathbb{V}[\hat \theta]\inverse \max_i (\check x_i'\beta )^2 = o(1)$ which follows if, e.g., $\max_{i} \frac{1}{\sqrt{r}}\sum_{\ell=1}^n \abs{M_{i\ell}} = o(1)$ where $M_{i\ell} = \mathbf{1}_{\{i = \ell\}} - x_i' S_{xx}\inverse x_\ell$ (this condition holds in the next two examples). Equality among all eigenvalues excludes the weak identification setting of \th\ref{thm4}. 
		
		\noindent \textbf{\emph{Unbounded Mean Function}} Inspection of the proofs reveal that \th\ref{ass:reg}(iii), $\max_i (x_i'\beta)^2 = O(1)$, can be dropped if the above conditions are strengthened to $\max_{i,\ell} P_{ii} (x_\ell'\beta)^2 = o(1)$ when $\text{dim}(x_i)$ is fixed or $\max_{i,j} \frac{\abs{x_j'\beta}\left( 1+\sum_{\ell=1}^n \abs{M_{i\ell}}\right)}{\sqrt{r}} = o(1)$ when $\text{dim}(x_i) \rightarrow \infty$.
	\end{customthm}
	
	\begin{customthm}{2}(Analysis of covariance, continued) 
		Recall that $\theta = \sigma_{\alpha}^{2}=\frac{1}{n}\sum_{{g}=1}^{N} T_{g} \left(\alpha_{g}-\bar{\alpha}\right)^{2}$ where $y_{gt}= \alpha_g + x_{gt}'\delta + \varepsilon_{gt}$, $g$ index the $N$ groups, and $T_g$ is group size.
		
		\noindent \textbf{\emph{No Common Regressors}} This is a special case of the previous example with $r=N-1$, $P_{ii} = T_{g(i)}\inverse$ and $\check x_i =0$.  Assumption 1(ii),(iii) requires $T_g \ge 2$ and $\max_g \alpha_g^2 = O(1)$. \th\ref{thm2} applies if the number of groups is fixed and $\min_g T_g \rightarrow \infty$, while \th\ref{thm3} applies if the number of groups is large. \th\ref{thm4} cannot apply as all eigenvalues are equal to $\frac{1}{n}$.

		\noindent \textbf{\emph{Common Regressors}} To accommodate common regressors of fixed dimension, assume $\norm{\delta}^2 + \max_{g,t} \norm{x_{gt}}^2 = O(1)$ and that $\frac{1}{n}\sum_{g=1}^N \sum_{t=1}^{T_g} (x_{gt} - \bar x_g)(x_{gt} - \bar x_g)'$ converges to a positive definite limit. This is a standard assumption in basic panel data models \cite[see, e.g.,][Chapter 10]{wooldridge2010econometric}. Allowing such common regressors does not alter the previous conclusions: \th\ref{thm2} applies if $N$ is fixed and $\min_g T_g \rightarrow \infty$ since $w_i'w_i \le P_{ii} = T_{g(i)}\inverse + O(n\inverse)$, \th\ref{thm3} applies if $N \rightarrow \infty$ since $\sum_{\ell=1}^n \abs{M_{i\ell}} = O(1)$, and \th\ref{thm4} cannot apply since $n \lambda_\ell \in [c_1,c_2]$ for $\ell =1,\dots,r$ and some $c_2 \ge c_1 > 0$ not depending on $n$. 
		
		\noindent \textbf{\emph{Unbounded Mean Function}} All conclusions continue to hold if $\max_{g,t} \alpha_g^2 + \norm{x_{gt}}^2 = O(1)$ is replaced with $\frac{\max_{g,t} \alpha_g^2 + \norm{x_{gt}}^2}{ \max \{N,\min_g T_g\}} = o(1)$ and $\sigma_\alpha^2 + \frac{1}{n}\sum_{g=1}^N \sum_{t=1}^{T_g} \norm{x_{gt}}^2 = O(1)$.
	\end{customthm}
		
	\begin{customthm}{3}(Random coefficients, continued)
		For simplicity, consider the \textit{uncentered} second moment $\theta = \frac{1}{n} \sum_{g=1}^N T_g \gamma_g^2$ where $y_{gt}= \alpha_g + z_{gt}'\gamma_g + \varepsilon_{gt}$. Suppose \th\ref{ass:reg} holds and assume that $\max_{g,t} \alpha_g + \gamma_g^2 + z_{gt}^2 =O(1)$ and $\min_g S_{zz,g} \ge c > 0$ where $S_{zz,g}= \sum_{t=1}^{T_g}(z_{gt} - \bar z_{g})^2$. Note that $\min_g S_{zz,g} > 0$ is equivalent to full rank of $S_{xx}$ and $S_{zz,g}$ indexes how precisely $\gamma_g$ can be estimated.
		
		\noindent \textbf{\emph{Consistency}} The $N$ eigenvalues of $\tilde A$ are $\lambda_g = \frac{T_g}{n}S_{zz,g}\inverse$ for $g=1,\dots,N$ where the group indexes are ordered so that $\lambda_1 \ge \dots \ge \lambda_N$. Consistency follows from \th\ref{lem:cons} if $\lambda_1\inverse = n\frac{S_{zz,1}}{T_1}  \rightarrow \infty$. This is automatically satisfied with many groups of bounded size.
		 
		\noindent \textbf{\emph{Limit Distribution}} If $N$ is fixed and $\min_g S_{zz,g} \rightarrow \infty$, then \th\ref{thm2} applies. If $\frac{\sqrt{N}}{T_1} S_{zz,1} \rightarrow \infty$, then \th\ref{thm3} applies. If $\frac{\sqrt{N}}{T_2} S_{zz,2} \rightarrow \infty$, $ \frac{\sqrt{N}}{T_{1}} S_{zz,{1}} = O(1)$, and $S_{zz,1} \rightarrow \infty$, then \th\ref{thm4} applies with $q=1$. In this case, $\gamma_1$ is weakly identified relative to its influence on $\theta$ and the overall variability of $\hat \theta$. This is expressed through the condition $ \frac{\sqrt{N}}{T_{1}} S_{zz,{1}} = O(1)$ where $S_{zz,{1}}$ is the identification strength of $\gamma_1$, $T_1$ provides the influence of $\gamma_1$ on $\theta$ and $1/\sqrt{N}$ indexes the variability of $\hat \theta$.
	\end{customthm}
	
	\begin{customthm}{4}(Two-way fixed effects, continued)
		In this final example, we restrict attention to the first-differenced setting $\Delta y_{{g}} = \Delta f_{{g}}' \psi + \Delta \varepsilon_{{g}}$ with $T_g=2$ and a large number of firms, $J \rightarrow \infty$. Our target parameter is the variance of firm effects $\theta = \sigma_{\psi}^{2} = \frac{1}{n}\sum_{{g}=1}^{N}\sum_{t=1}^{T_{g}} \left(\psi_{j\left({g},t\right)}-\bar{\psi}\right)^{2}$ and we consider \th\ref{ass:reg} satisfied; in particular, $\max_j \abs{\psi_j} = O(1)$. 
		
		\noindent \textbf{\emph{Leverages}} The leverage $P_{gg}$ of observation $g$ is less than one if the origin and destination firms of worker $g$ are connected by a path not involving $g$. Letting $n_g$ denote the number of edges in the shortest such path, one can show that $P_{gg} \le \frac{n_g}{1+n_g}$. Therefore, if $\max_g n_g < 100$ then \th\ref{ass:reg}(ii) is satisfied with $\max P_{gg} \le .99$. In our application we find $\max_g n_g =12$, leading to a somewhat smaller bound on the maximal leverage. The same consideration implies a bound on the model in levels since $P_{i(g,t)i(g,t)} = \frac{1}{2}(1+P_{gg})$.
		
		\noindent \textbf{\emph{Eigenvalues}} The eigenvalues of $\tilde A$ satisfy the equality
		\begin{align}
		\lambda_\ell =\frac{1}{n \dot \lambda_{J+1-\ell}} \qquad \text{for} \quad \ell = 1,\dots,J
		\end{align}
		where $\dot \lambda_1 \ge \dots \ge \dot \lambda_J$ are the non-zero eigenvalues of the matrix $E^{1/2} \mathcal{L} E^{1/2}$. $\mathcal{L}$ is the normalized Laplacian of the employer mobility network and connectedness of the network is equivalent to full rank of $S_{xx}$ (see Appendix \ref{app:sufficiency} for definitions). $E$ is a diagonal matrix of employer specific ``churn rates'', i.e., the number of moves in and out of a firm divided by the total number of employees in the firm. $E$ and $\mathcal{L}$ interact in determining the eigenvalues of $\tilde A$. In \th\ref{ex:RC}, the quantities $\{T_\ell\inverse S_{zz,\ell}\}_{\ell=1}^N$ played a role directly analogous to the churn rates in $E$, so in this example we focus on the role of $\mathcal{L}$ by assuming that the diagonal entries of $E$ are all equal to one.
		
		\noindent \textbf{\emph{Strongly Connected Network}} The employer mobility network is \emph{strongly connected} if $\sqrt{J} \mathcal{C} \rightarrow \infty$ where $\mathcal{C} \in (0,1]$ is Cheeger's constant for the mobility network \cite[see, e.g.,][]{mohar1989isoperimetric,jochmans2016fixed}. Intuitively, $\mathcal{C}$ measures the most severe ``bottleneck'' in the network, where a bottleneck is a set of movers that upon removal from the data splits the mobility network into two disjoint blocks. The severity of the bottleneck is governed by the number of movers removed divided by the smallest number of movers in either of the two disjoint blocks. The inequalities $\dot \lambda_J \ge 1 - \sqrt{1-\mathcal{C}^2}$ \cite[][Theorem 2.3]{chung1997spectral} and ${\lambda_1^2}/{\sum_{\ell=1}^J \lambda_\ell^2} \le 4(\sqrt{J} \dot \lambda_J)^{-2}$ imply that a strongly connected network yields $q=0$, which rules out application of \th\ref{thm4}. Furthermore, a strongly connected network is sufficient (but not necessary) for consistency of $\hat \theta$ as $\sum_{\ell=1}^J \lambda_\ell^2 \le \frac{J}{n} ( \sqrt{n} \dot \lambda_J)^{-2}$.
	
		\noindent \textbf{\emph{Weakly Connected Network}} When $\sqrt{J} \mathcal{C}$ is bounded, the network is \emph{weakly connected} and can contain a sufficiently severe bottleneck that a linear combination of the elements of $\psi$ is estimated imprecisely relative to its influence on $\theta$ and the total uncertainty in $\hat \theta$. The weakly identified linear combination in this case is a difference in average firm effects across the two blocks on either side of the bottleneck, which contributes a $\chi^2$ term to the asymptotic distribution. Below we use a stochastic block model to further illustrate this phenomenon. Our empirical application demonstrates that weakly connected networks can appear in practically relevant settings. 
			
		\noindent \textbf{\textit{Stochastic Block Model}} Consider a stochastic block model of network formation where firms belong to one of two blocks and a set of workers switch firms, possibly by moving between blocks. Workers' mobility decisions are independent: with probability $p_b$ a worker moves between blocks and with probability $1-p_b$ she moves within block. For simplicity, we further assume that the two blocks contain equally many firms and consider a semi-sparse network where $\frac{J\log(J)}{n} + \frac{\log(J)}{np_b} \rightarrow 0$.\footnote{The semi-sparse stochastic block model is routinely employed in the statistical literature on spectral clustering, see, e.g, \cite{sarkar2015role}.} In this model the asymptotic behavior of $\hat \theta$ is governed by $p_b$: the most severe bottleneck is between the two blocks and has a Cheeger's constant proportional to $p_b$. In Appendix  \ref{app:sufficiency}, we use this model to verify the high-level conditions leading to \th\ref{thm3,thm4} and show that \th\ref{thm3} applies when $\sqrt{J} p_b \rightarrow \infty$, while \th\ref{thm4} applies with $q=1$ otherwise. The argument extends to any finite number of blocks, in which case $q$ is the number of blocks minus one. Finally, we show that $\hat \theta$ is consistent even when the network is weakly connected. To establish consistency we only impose $\frac{\log(J)}{np_b} \rightarrow 0$, which requires that the number of movers across the two blocks is large.
	\end{customthm}

	\section{Application}\label{sec:application}

Consider again the problem of estimating variance components in two-way fixed effect models of wage determination. \cite{card2016firms} note that plug-in wage decompositions of the sort introduced by AKM typically attribute $15\%$--$25\%$ of overall wage variance to variability in firm fixed effects. Given the bias and potential sampling
variability associated with plug-in estimates, however, it has been difficult to infer whether firm effects play a differentially important role in certain markets or among particular demographic groups. 

In this section, we use Italian social security records to compute leave-out estimates of 
the AKM wage decomposition and contrast them with estimates based upon the plug-in estimator of \citet{abowd1999high} and the homoscedasticity-corrected estimator of \cite{andrews2008high}. We then investigate whether
the variance components that comprise the AKM decomposition differ across age groups. While it is well known that wage inequality increases with age \citep{mincer1974schooling, lemieux2006increasing}, less is known about the extent to which firm pay premia mediate this phenomenon. Standard wage posting models \citep[e.g., ][]{burdett1998wage} suggest older workers have had more time to climb (and fall off) the job ladder and to receive outside offers \citep{bagger2014tenure}, which may result in more dispersed firm wage premia. But older workers have also had more time to develop professional reputations revealing their relative productivity, which should generate a large increase in the variance of person effects \citep{gibbons1992does,gibbons2005comparative}. The tools developed in this paper allow us to formally study these hypotheses.

\subsection{Sample Construction}

The data used in our analysis come from the Veneto Worker History (VWH) file, which provides the annual earnings 
and days worked associated
with each covered employment spell taking place in the Veneto region of Northeast Italy
over the years 1984-2001. The VWH data have been used in a number of recent studies \citep{card2014rent,bartolucci2018identifying,serafinelli2019good,devicienti2019collective} and are well suited to the analysis of age differences because they provide precise information on dates of birth. These data are also notable for being publicly available, making the costs of replicating our analysis unusually low.\footnote{See \url{http://www.frdb.org/page/data/scheda/inps-data-veneto-workers-histories-vwh/doc_pk/11145} for information on obtaining the VHW.}

Our baseline sample consists of workers with employment spells taking place in the years 1999 and 2001, which provides us with a three year horizon over which to measure job mobility. In Section \ref{sec:bigT_AKM} we analyze a longer unbalanced sample spanning the years 1996--2001 and find that it yields similar results. For each worker-year pair, we retain the unique employment spell yielding the highest earnings in that year. Wages in each year are defined as earnings in the selected spell divided by the spell length in days.  Workers are divided into two groups of roughly equal size according to their year of birth: ``younger'' workers born in the years 1965-1983 (aged 18-34 in 1999) and ``older'' workers born in the years 1937-1964 (aged 35-64 in 1999). Further details on our processing of the VWH records is provided in Appendix \ref{app:data}.



Table \ref{table1} reports the number of person-year observations available among
workers employed by firms in the region's largest connected set, along with the largest connected
set for each age group. Workers are classified as ``movers'' if they switch firms between 1999 and 2001. Comparing the number of movers to half the number of person-year observations reveals that roughly $21\%$ of all workers are movers. The movers share rises to $26\%$ among younger workers while only $16\%$ of older workers are movers, reflecting the tendency of mobility rates to decline with age. The average number of movers per connected
firm ranges from nearly 3 in the pooled sample to roughly 2 in the thinner age-specific samples, suggesting that many firms are 
associated with only a single mover.

Our leave-out estimation strategy requires that each firm effect remain estimable
after removing any single observation. The second panel of Table \ref{table1} enforces this requirement by 
restricting to firms that remain connected when any 
mover is dropped (see Appendix \ref{sec:pruning} for computational details).
Pruning the sample in this way drops roughly half of the firms but less than a third of the movers and eliminates roughly 30\% of all workers regardless of their mobility status. These additional restrictions raise mean wages by roughly 5\% and lower the variance of wages by 5--10\% depending on the sample.  

To assess the potential influence of these sample restrictions on our estimands of interest, we construct a third sample that further requires the firm effects to remain estimable after removing any two observations.\footnote{We thank an anonymous referee for this suggestion.} This ``leave-two-out connected set'' is also of theoretical interest because it provides a setting where the requirements for consistency of the variance estimator of Lemma \ref{lem:varCS} appear to be satisfied. On average, the leave-two-out connected sets have roughly half as many firms and 20\% fewer movers than the corresponding leave-one-out sets, and the average number of movers per firm
ranges from approximately 5.6 in the sample of older workers to 4.3 in the sample of younger workers. Restricting the sample in this way further raises mean wages by 3--4\% but yields negligible changes in variance, except among the sample of older workers, which experiences a nearly 7\% \emph{increase} in variance. We investigate below the extent to which these changes in unconditional variances reflect changes in the variance of underlying firm wage effects.

\subsection{AKM Model and Design Diagnostics}
\label{sec:estimates_AKM}
Consider the following simplified version of the AKM model:
\begin{align}
\label{AKM_appli}
y_{{g} t} =  \alpha_{{g}} + \psi_{j({g},t)} +  \varepsilon_{{g} t}.  && ({g}=1,\dots,N, \ t=1,2)
\end{align}
We fit models of this sort to the VHW data after having pre-adjusted log wages for year effects in a first step. This adjustment is obtained by estimating an augmented version of the above model by OLS that includes a dummy control for the year 2001. Hence, $y_{{g} t}$ gives the log wage in year $t$ minus a year 2001 dummy times its estimated coefficient. This two-step approach simplifies computation without compromising consistency because the year effect is estimated at a $\sqrt{N}$ rate.

The bottom of Table \ref{table1} reports for each sample the maximum leverage $(\max_i P_{ii})$ of any person-year observation (Appendix \ref{sec:Lambda} discusses the computation of these leverages). While our pruning procedure ensures $\max_i P_{ii}<1$, it is noteworthy that $\max_i P_{ii}$ is still quite close to one, indicating that certain person-year observations remain influential on the parameter estimates. This finding highlights the inadequacy of asymptotic approximations that require the dimensionality of regressors to grow slower than the sample size, which would lead the maximum leverage to tend to zero.

The asymptotic results of Section \ref{sec:weak} emphasize the importance of not only the maximal leverage, but the number and severity of any bottlenecks in the mobility network. Figure \ref{fig:net} illustrates the leave-two-out connected set for older workers. Each firm is depicted as a dot, with the size of the dot proportional to the total number of workers employed at the firm over the years 1999 and 2001. Dots are connected when a worker moves between the corresponding pair of firms. The figure highlights the two most severe bottlenecks in this network, which divide the firms into three distinct blocks. Each block's firms have been shaded a distinct color. The blue block consists of only five firms, four of which are quite small, which limits its influence on the asymptotic behavior of our estimator. However, the green block has 51 firms with a non-negligible employment share of $9.5\%$. \th\ref{thm4} and the discussion in Section \ref{sec:verify} therefore suggest that the bottleneck between the green and the larger red block will generate weak identification and asymptotic non-normality, predictions we explore in detail below.

\subsection{Variance Decompositions}
Table \ref{table2} reports the results of applying to our samples three estimators of the AKM variance decomposition: the naive plug-in (PI) estimator $\hat \theta_{\text{PI}}$ originally proposed by AKM, the homoscedasticity-only (HO) estimator $\hat \theta_{\text{HO}}$ of \cite{andrews2008high}, and the leave-out (KSS) estimator $\hat \theta$. The PI estimator finds that the variance of firm effects in the pooled leave-one-out connected set accounts for roughly 20\% of the total variance of wages, while among younger workers firm effect variability is found to account for $31\%$ of overall wage variance. Among older workers, variability in firm effects is estimated to account for only $16\%$ of the variance of wages in the leave-one-out connected set.

Are these age differences driven by biases attributable to estimation error? Applying the HO estimator of \cite{andrews2008high} reduces the estimated variances of firm effects by roughly $18\%$ in the age-pooled sample,  $27\%$ in the sample of younger workers, and $16\%$ in the sample of older workers. However, the KSS estimator yields further, comparably sized, reductions in the estimated firm effect variance relative to the HO estimator, indicating the presence of substantial heteroscedasticity in these samples. For instance, in the pooled leave-one-out sample, the KSS estimator finds a variance of firm effects that accounts for only 13\% of the overall variance of wages, while the HO estimator finds that firm effects account for 16\% of wage variance.

Moreover, while the plug-in estimates suggested that the firm effect variance was greater among older than younger workers, the KSS estimator finds the opposite pattern. The KSS estimator also finds that the pooled variance of firm effects exceeds the corresponding variance in either age-specific sample, a sign that mean firm effects differ by age. We explore this between age group component of firm variability in greater depth below.

A potential concern with analyzing the leave-one-out connected set is that worker and firm behavior in this sample may be non-representative of the broader (just-)connected set. To assess this possibility, we also report estimates for the leave-two-out connected set. Remarkably, the KSS estimator finds negligible differences in the variance of firm effects between the leave-one-out and leave-two-out samples for both the pooled sample and the sample of younger workers. Among older workers the estimated firm effect variance falls by about 11\% in the leave-two-out sample, though we show below that this difference may be attributable to sampling variation. The broad similarity between leave-one-out and leave-two-out KSS estimates is likely attributable to the fact that trimmed firms tend to be small and therefore contribute little to the person-year weighted variance of firm effects that has been the focus of the literature. 

PI estimates of person effect variances are much larger than the corresponding estimates of firm effect variance, accounting for $66\%$--$88\%$ of the total variance of wages depending on the sample. The PI estimator also finds that person effects are much more dispersed among older than younger workers, which is in accord with standard models of human capital accumulation and employer learning. The estimated ratio of older to younger person effect variances in the leave-one-out sample is roughly 2.6. Applying the HO estimator reduces the magnitude of the person effect variance among all age groups, but boosts the ratio of older to younger person effect variances to 3.2. The KSS estimator yields further downward corrections to estimated person effect variances, leading the contribution of person effect variability to range from only $50\%$ to $80\%$ of total wage variance. Proportionally, however, the variability of older workers remains stable at 3.2 times that of younger workers. 

PI estimates of the covariance between worker and firm effects are negative in both age-restricted samples, though not in the pooled sample. When converted to correlations, these figures suggest there is mild negative assortative matching of workers to firms. Applying the HO estimator leads the covariances to change sign in both age-specific samples, while generating a mild increase in the estimated covariance of the pooled sample. In all three samples, however, the HO estimates indicate very small correlations between worker and firm effects. By contrast, the KSS estimator finds a rather strong positive correlation of 0.21 among younger workers, 0.27 among older workers, and 0.28 in the pooled leave-one-out sample, indicating the presence of non-trivial positive assortative matching between workers and firms. While the patterns in the leave-two-out sample are broadly similar,  the KSS correlation estimate among older workers is substantially smaller in the leave-two-out than the leave-one-out sample (0.18 vs 0.27).

Finally, we examine the overall fit of the two-way fixed effects model using the coefficient of determination. 
The PI estimator of $R^2$ suggests the two-way fixed effects model explains more than $95\%$ of wage variation in the pooled sample, $91\%$ in the sample of younger workers, and 97\% in the sample of older workers. The HO estimator of $R^2$ is equivalent to the adjusted $R^2$ measure of \cite{theil1961economic}. The adjusted $R^2$ indicates that the two-way fixed effects model explains roughly $90\%$ of the variance of wages in the pooled sample, which is quite close to the figures reported in \cite{card2013workplace} for the German labor market. Applying the KSS estimator yields very minor changes in estimated explanatory power relative to the HO estimates. Interestingly, a sample size weighted average of the age group specific KSS $R^2$ estimates lies slightly below the pooled KSS estimate of $R^2$, which suggests allowing firm effects to differ by age group fails to appreciably improve the model's fit. We examine this hypothesis more carefully in Section \ref{sec:sorting}.

\subsection{Multiple Time Periods and Serial Correlation}
\label{sec:bigT_AKM}
Thus far, our analysis has relied upon panels with only two time periods. Table \ref{table3} reports KSS estimates of the variance of firm effects in an unbalanced panel spanning the years 1996--2001. To analyze this longer panel, we expand our set of time varying covariates to include unrestricted year effects and a third order polynomial in age normalized to have slope zero at age 40 as discussed in \cite{card2016firms}.\footnote{Pre-adjusting for age has negligible effects on the variance decompositions reported in Table \ref{table2} but is quantitatively more important in this longer panel. Age adjustments are particularly pronounced among younger workers who generally exhibit greater wage growth and tend to move rapidly to higher paying firms.} Allowing up to six wage observations per worker yields a substantially larger estimation sample with roughly three times more person-year observations in the age-pooled leave-one-out connected set than was found in Table \ref{table1}. For older workers, who have especially low mobility rates, allowing more time periods raises the number of person-year observations in the leave-one-out connected set by a factor of roughly 5.7 and more than triples the number of firms. 

While these additional observations will tend to reduce the bias in the plug-in estimator, using longer panels may present two distinct sets of complications. First, the equivalence discussed in \th\ref{rem:nocluster} no longer holds, which implies that leaving a single person-year observation out is unlikely to remove the bias in estimates of the variance of firm effects when the errors are serially correlated. Second, pooling many years of data may change the target parameter if firm or person effects ``drift'' with time. The bottom rows of Table \ref{table3} probe for the importance of serial correlation by leaving out ``clusters'' of observations -- as described in \th\ref{rem:cluster} -- defined successively as all observations within the same worker-firm ``match'' and all observations belonging to the same worker; see Appendix \ref{app:cluster} for computational details. Because worker $g$'s person effect is not estimable when leaving that worker's entire wage history out, we estimate a within-transformed specification that eliminates the person effects in a first step.  

Leaving out the match yields an important reduction in the variance of firm effects relative to leaving out a single person-year observation, indicating the presence of substantial serial correlation within match. By contrast, leaving out the worker turns out to have negligible effects on the estimated variance of firm effects, suggesting that serial correlation across-matches is negligible. As expected, pooling several years of data reduces the bias of the PI estimator: the magnitude of the difference between the PI estimates of the variance of firm effects and the leave-worker-out estimates tends to be smaller than the corresponding difference between the PI and KSS estimates of the variance of firm effects reported in Table \ref{table2}.

Remarkably, the firm effect variance estimates that result from leaving out either the match or worker are nearly identical to the KSS estimates reported in Table \ref{table2} for both the age-pooled samples and the samples of younger workers, suggesting the firm effects are relatively stable over this longer horizon. Among older workers, the leave-cluster-out estimates of the variance of firm effects are higher than those reported in Table \ref{table2}, which is unsurprising given that the number of firms under consideration more than tripled in this longer panel. Reassuringly, however, Table \ref{table3} reveals that the KSS estimates of the variance of firm effects among older workers in the leave-one-out and leave-two-out connected sets are very close to one another. The general stability of the KSS estimates of firm effect variances to alternate panel lengths may be attributable to the relatively placid macroeconomic conditions present in Veneto over this period, see the discussion in \cite{devicienti2019collective}.

Our leave-cluster-out exercises suggest researchers seeking to analyze longer panels may be able to avoid biases stemming from serial correlation by simply collapsing the data to match means in a first step and then analyzing these means using the leave-one-observation-out estimator. This two-step approach should substantially reduce computational time while generating only mild efficiency losses due to equal weighting of matches. In what follows, we revert to our baseline sample with exactly two observations per worker.

\subsection{Sorting and Wage Structure}\label{sec:sorting}

The KSS estimates reported in Table \ref{table2} indicate that older workers exhibit somewhat less variable firm effects and a stronger correlation between person and firm effects than younger workers. These findings might reflect lifecycle differences in the sorting of workers to firms or differences in the structure of firm wage effects across the two age groups. 

Table \ref{table4} explores the sorting channel by projecting the pooled firm effects from the leave-one-out sample onto a constant, an indicator for being an older worker, the log of firm size, and the interaction of the indicator with log firm size. Because these projection coefficients are linear combinations of the estimated firm effects, we use the KSS standard errors proposed in equation \eqref{lincom} and analyzed in \th\ref{thm2}. For comparison, we also report a naive standard error that treats the firm effect estimates as independent observations and computes the usual Eicker-White ``robust'' standard errors. In all cases, the KSS standard error is at least twice the corresponding naive standard error and in one case roughly 24 times larger. In light of the consistency results of \th\ref{thm2}, this finding suggests the standard practice of regressing firm effect estimates on observables in a second step without adjusting the standard errors for correlation across firm effects can yield highly misleading inferences. 

The first column of Table \ref{table4} shows that older workers tend to work at firms with higher average firm effects. Evidently older workers do occupy the upper rungs of the job ladder. The second column shows that this sorting relationship is largely mediated by firm size. An older worker at a firm with a single employee is estimated to have a mean firm wage effect 0.16 log points lower than a younger worker at a firm of the same size, an economically insignificant difference that is also revealed to be statistically insignificant when using the KSS standard error. As firm size grows, older workers begin to enjoy somewhat larger firm wage premia. Evaluated at the median firm size of 12 workers, the predicted gap between older and younger workers rises to 0.54 log points, a gap that we can distinguish from zero at the 5\% level using the KSS standard error but is still quite modest. We conclude that the tendency of older workers to be employed at larger firms is a quantitatively important driver of the firm wage premia they enjoy.

Figure \ref{fig:reg} investigates to what extent the firm wage effects differ between age groups. Using the age-restricted leave-one-out connected sets, we obtain a pair of age group specific firm effect estimates $\{\hat \psi_j^Y, \hat \psi_j^O\}_{j\in\mathcal{J}}$ for the set $\mathcal{J}$ of 8,578 firms present in both samples (see Appendix \ref{app:testOY} for details). Figure \ref{fig:reg} plots the person-year weighted averages of $\hat \psi_j^Y$ and $\hat \psi_j^O$ within each centile bin of $\hat \psi_j^O$. A person-year weighted projection of $\hat \psi_j^Y$ onto $\hat \psi_j^O$ yields a slope of only 0.501. To correct this plug-in slope estimate for attenuation bias, we multiply the unadjusted slope by the ratio of the PI estimate of the person-year weighted variance of $\psi_{j}^O$ to the corresponding KSS estimate of this quantity. Remarkably, this exercise yields a projection slope of 0.987, suggesting that, were it not for the estimation error in $\hat \psi_j^O$, the conditional averages depicted in Figure \ref{fig:reg} would be centered around the dashed 45 degree line. Converting this slope into a correlation using the KSS estimate of the person-year weighted variance of $\psi_{j}^Y$ yields a person-year weighted correlation between the two sets of firm effects of 0.89, which indicates the underlying $(\psi_{j}^Y,\psi_{j}^O)$ pairs are tightly clustered around this 45 degree line. 


Theorem \ref{thm3} allows us to formally test the joint null hypothesis that the two sets of firm effects are actually identical, i.e., that both the slope and $R^2$ from a projection of $\psi_j^Y$ onto $\psi_j^O$ are one. We can state this hypothesis as $ H_0: \psi_j^O=\psi_j^Y$ for all $j\in \mathcal{J}.$ Using the test suggested in \th\ref{rem:testbig} we obtain a realized test statistic of $3.95$ which, when compared to the right tail of a standard normal distribution, yields a p-value on $H_0$ of less than $0.1\%$.
Hence, we can decisively reject the null hypothesis that older and younger workers face exactly the same vectors of firm effects. However, our earlier correlation results suggest that $H_0$ nonetheless provides a fairly accurate approximation to the structure of firm effects, at least among those firms that employ movers of both age groups.

\subsection{Inference}

We now study more carefully the problem of inference on the variance of firm effects. For convenience, the top row of Table \ref{table5} reprints our earlier KSS estimates of the variance of firm effects in each sample. Below each estimate of firm effect variance is a corresponding standard error estimate, computed according to the approach described in \th\ref{lem:varCS}. As noted in \th\ref{rem:std}, these standard errors will be somewhat conservative when there is a large share of observations for which no split sample predictions can be created. In the leave-one-out samples this share varies between 15\% and 22\%, indicating that the standard errors are likely upward biased. In the leave-two-out samples, however, this source of bias is not present as the split sample predictions always exist. The standard errors will also tend to be conservative when there is a large share of observation pairs in the set $\mathcal{B}$, for which there is upward bias in the estimator of the error variance product.  However, for both the leave-one-out and leave-two-out samples, this share varies between only 0.03\% and 0.46\%, suggesting only a small degree of upward bias stems from this source.\fxnote{please check}

The next panel of Table \ref{table5} reports the $95\%$ confidence intervals that arise from setting $q=0$, $q=1$, or $q=2$. While the first interval employs a normal approximation, the latter two allow for weak identification by employing non-standard limiting distributions involving linear combinations of normal and $\chi^2$ random variables. We also report estimates of the curvature parameters $(\kappa_1,\kappa_2)$ used to construct the weak identification robust intervals. In the pooled samples both curvature parameters are estimated to be quite small, indicating that a normal approximation is likely to be accurate. Accordingly, setting $q>0$ has little discernible effect on the resulting confidence intervals in these samples. However, among older workers, particularly in the leave-two-out sample, we find stronger curvature coefficients suggesting weak identification may be empirically relevant. Setting $q>0$ in this sample widens the confidence interval somewhat and also changes its shape: mildly shortening the lower tail of the interval but lengthening the upper tail.

Treating the samples of younger and older workers as independent, the fact that the confidence intervals for the two age group samples overlap implies we cannot reject the null hypothesis that the firm effect variances are identical at the $(1-0.95^2)\times100 = 9.75\%$ level. The significance of the 0.23 log point difference between the leave-one-out and leave-two-out estimates of firm effect variance in the sample of older workers turns out to more difficult to assess. By the Cauchy-Schwartz inequality, the covariance between the leave-one-out and leave-two-out estimators is at most $(0.0026)^2(0.0014)^2=3.64\times10^{-6}$. Hence the standard error on the difference between the two estimators is at least 0.0012, which implies a maximal t-statistic of 1.92. Therefore, even when using a normal approximation, we find rather weak evidence against the null that the leave-one-out and leave-two-out estimands are equal. However, because the leave-one-out standard error estimator is likely upward biased, this finding is somewhat less conclusive than would typically be the case.

Theorem \ref{thm4} suggests two important diagnostics for the asymptotic behavior of our estimator are the Lindeberg statistics $\{ \max_i \mathsf{w}_{is}^2\}_{s=1,2} $ and the top eigenvalue shares $\{\lambda_{s}^2/\sum_{\ell=1}^r \lambda_\ell^2 \}_{s = 1,2,3}$. 
The bottom panel of Table \ref{table5} reports these statistics for each sample.  The top eigenvalue shares are fairly small in the pooled sample and among younger workers. A small top eigenvalue share indicates the estimator does not depend strongly on any particular linear combination of firm effects and hence that a normal distribution should provide a suitable approximation to the estimator's asymptotic behavior (i.e. that $q=0$). Accordingly, we find that the confidence intervals are virtually identical for all values of $q$ in both the pooled samples and the two samples of younger workers. 

Among older workers the top eigenvalue share is 31\% in the leave-one-out sample and 58\% in the leave-two-out sample. The next largest eigenvalue share is, in both cases, less than 5\%, which suggests this is a setting where $q=1$. In line with this view, confidence intervals based upon the $q=1$ and $q=2$ approximations are nearly identical in both samples of older workers. The accuracy of these weak-identification robust confidence intervals hinges on the Lindeberg condition of Theorem \ref{thm4} being satisfied. One can think of the Lindeberg statistic $ \max_i  \mathsf{w}_{is}^2 $ as giving an inverse measure of effective sample size available for estimating the linear combination of firm effects associated with the $s$'th largest eigenvalue. The fact that these statistics are all less than or equal to 0.05 implies an effective sample size of at least 20. We study in the Monte Carlo exercises below whether this effective sample size is sufficient to provide accurate coverage. Reassuringly, the sum of squared eigenvalues is quite small in all six samples considered, indicating that the leave out estimator is consistent also in our weakly identified settings.

\subsection{Monte Carlo Experiments}
\label{sec:MC}

We turn now to studying the finite sample behavior of the leave-out estimator of firm effect variance and its associated confidence intervals under a particular data generating process (DGP). Data were generated from the following first differenced model based upon equation (\ref{fd_model}):	
\begin{align}
\Delta y_{g} = \Delta f_{g}' \hat \psi^{scale} + \Delta \varepsilon_{{g}},  && ({g}=1,\dots,N).
\end{align}
Here $\hat \psi^{scale}$ gives the $J\times1$ vector of OLS firm effect estimates found in the pooled leave-one-out sample, rescaled to match the KSS estimate of firm effect variance for that sample. 
The errors ${\Delta \varepsilon_{{g}}}$ were drawn independently from a normal distribution with variances given by the following model of heteroscedasticity:
\begin{align}
\mathbb{V}[\Delta \varepsilon_{{g}}] = \exp(a_0 + a_1 B_{gg} + a_2 P_{gg} + a_3 \ln L_{g2} + a_4 \ln L_{g1}),
\end{align}
where $L_{gt}$ gives the size of the firm employing worker $g$ in period $t$. To choose the coefficients of this model, we estimated a nonlinear least squares fit to the ${\hat \sigma_g^2}$ in the pooled leave-one-out sample, which yielded the following estimates:
\begin{align}
\hat a_0=-3.3441, \quad \hat a_1=1.3951, \quad \hat a_2=-0.0037, \quad \hat a_3=-0.0012, \quad \hat a_4=-0.0086.
\end{align}
For each sample, we drew from the above DGP 1,000 times while holding firm assignments fixed at their sample values. 

Table \ref{table6} reports the results of this Monte Carlo experiment. In accord with theory, the KSS estimator of firm effect variances is unbiased while the PI and HO estimators are biased upwards. As expected, the KSS standard error estimator exhibits a modest upward bias in the leave-one-out samples ranging from 15\% in the sample of older workers to 44\% among younger workers. In the leave-two-out sample, however, the standard error estimator exhibits biases of only 6\% or less. Unsurprisingly then, the $q=0$ confidence interval over-covers in both the pooled leave-one-out sample and the leave-one-out sample of younger workers. In the corresponding leave-two-out samples, however, coverage is very near its nominal level, both for the normal based $(q=0)$ and the weak identification robust $(q=1)$ intervals. 

In the samples of older workers, the normal distribution provides a poor approximation to the shape of the estimator's sampling distribution, which is to be expected given the large top eigenvalues found in these designs. This non-normality generates substantial under-coverage by the $q=0$ confidence interval in the leave-two-out sample. Applying the weak identification robust interval in the leave-two-out sample of older workers yields coverage very close to nominal levels despite the fact that the effective sample size available for the top eigenvector is only about $20$.

In sum, the Monte Carlo experiments demonstrate that confidence intervals predicated on the assumption that $q=1$ can provide accurate size control in leave-two-out samples when the realized mobility network exhibits a severe bottleneck. We also achieved size control in leave-one-out samples, albeit at the cost of moderate over-coverage. Hence, in applications where statistical power is a first-order consideration, it may be attractive to restrict attention to leave-two-out samples, which tend to yield estimates of variance components very close to those found in leave-one-out samples but with substantially less biased standard errors.

\section{Conclusion}

We propose a new estimator of quadratic forms with applications to several areas of economics. The estimator is finite sample unbiased in the presence of unrestricted heteroscedasticity and can be accurately approximated in very large datasets via random projection methods. Consistency is established under verifiable design requirements in an environment where the number of regressors may grow in proportion to the sample size. The estimator enables tests of linear restrictions of varying dimension under weaker conditions than have been explored in previous work. A new distributional theory highlights the potential for the proposed estimator to exhibit deviations from normality when some linear combinations of coefficients are imprecisely estimated relative to others. 

In an application to Italian worker-firm data, we showed that ignoring heteroscedasticity can substantially bias conclusions about the relative contribution of workers, firms, and worker-firm sorting to wage inequality. Accounting for serial correlation within a worker-firm match was found to be empirically important, while across match correlation appears to be negligible. Consequently, those studying longer panels may wish to collapse their data down to match level means and then apply the leave-observation-out estimator. Alternately, researchers can simply extract and analyze separately balanced panels of length two, which also facilitates analysis of the temporal stability of the firm and person effect variances.


Leave-out standard error estimates for the coefficients of a linear projection of firm effects onto worker and firm observables were found to be several times larger than standard errors that naively treat the estimated firm effects as independent. These results strongly suggest that researchers seeking to identify the observable correlates of high-dimensional fixed effects should consider employing the proposed standard errors, including when studying settings falling outside the traditional worker-firm setup \citep[e.g., ][]{finkelstein2016sources,chetty2018impacts}. Stratifying our analysis by birth cohort, we formally rejected the null hypothesis that older and younger workers face identical vectors of firm effects but found that the two sets of firm effects were highly correlated. Corresponding techniques can be used to study multivariate models.


A Monte Carlo analysis demonstrated that bottlenecks in the worker-firm mobility network can generate quantitatively important deviations from normality. The proposed inference procedure captured these deviations accurately with a weak identification robust confidence interval. In cases where the mobility network was strongly connected, accurate inferences were obtained with a normal approximation. Our results suggest that in typical worker-firm applications, the normal approximation is likely to suffice. However, when studying small areas, or sub-populations with limited mobility, accounting for weak identification can be quantitatively important.

\bibliographystyle{chicago}
\newpage
\bibliography{paper}
\newpage

\begin{figure}[H]
	\begin{center}
	\caption{Realized Mobility Network: Older workers}
	\label{fig:net}
	\includegraphics[width=\textwidth]{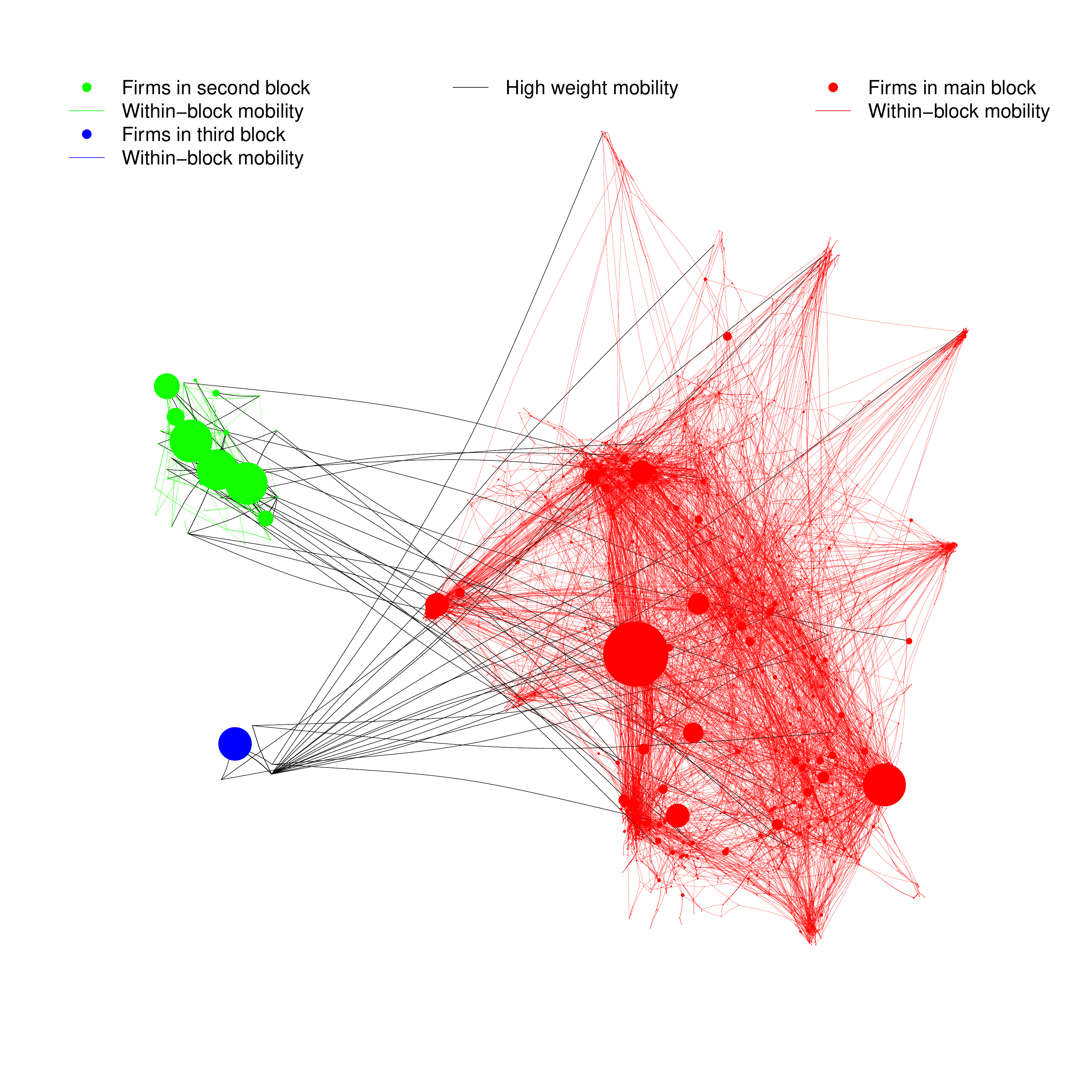}
	\end{center}
		{\footnotesize \underline{Note:} This figure provides a visualization of the design matrix $S_{xx}$ for the leave-two-out sample of older workers (see Table \ref{table1} for reference). The graph is plotted in the statistical software R using the \textit{igraph} package and the large-scale graph layout (DrL) using the option to concentrate firms from the same blocks. High weight mobility refers to observations that have $w_{i1}^2$ or $w_{i2}^2$ above $1/500$ and these observations form the bottlenecks between the three blocks. 
		}
\end{figure}

\addcontentsline{toc}{section}{Tables and Figures}

\newpage

\begin{figure}[H]
\begin{center}
	\caption{Do Firm Effects Differ Across Age Groups?}
	\label{fig:reg}
	{\includegraphics[width = \textwidth]{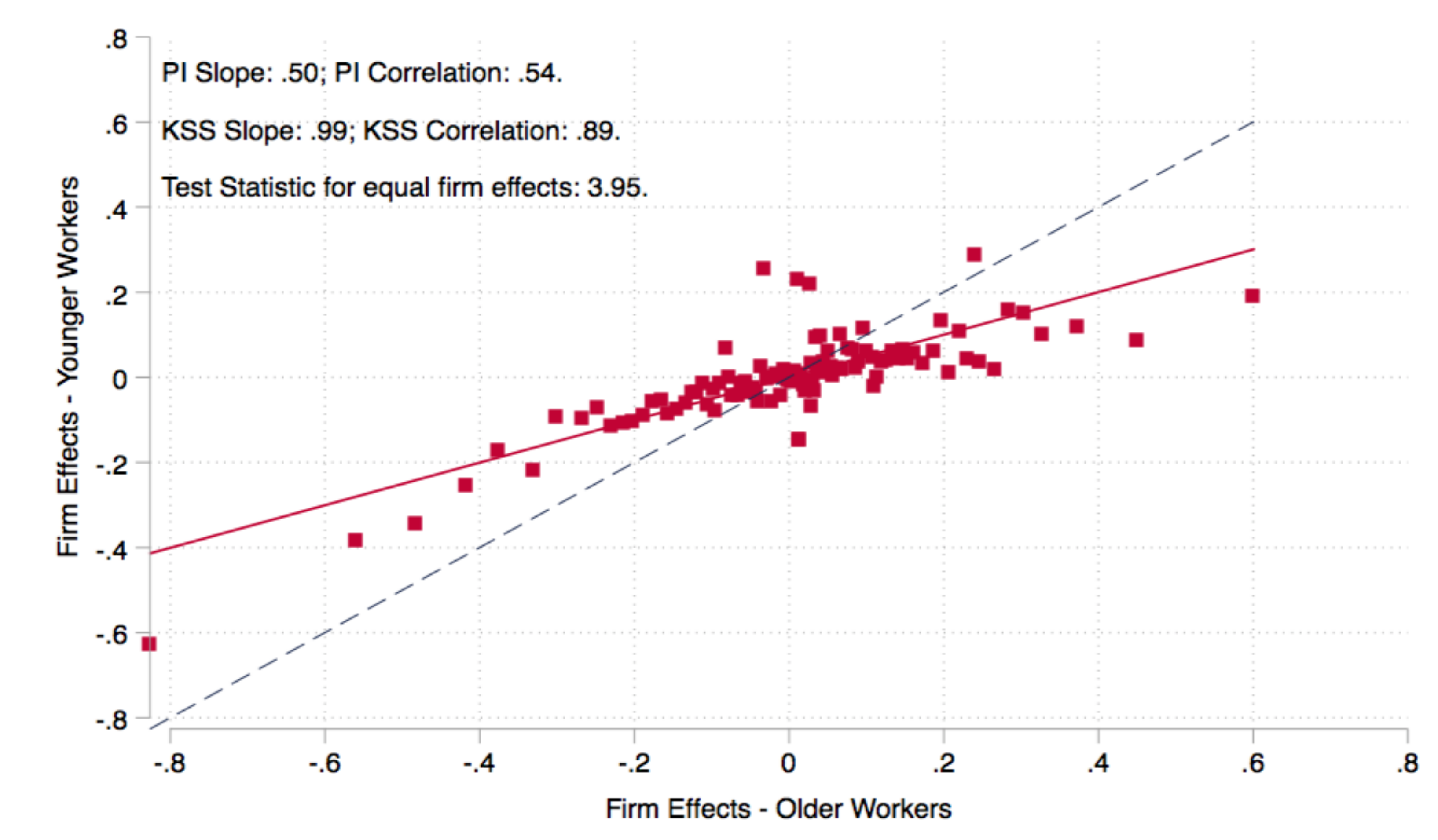}} \\ 
	\end{center}
		{\footnotesize \underline{Note:} This figure plots the mean of the estimated firm effects for younger workers ($\hat \psi_j^Y$) by centiles of the estimated firm effects for older workers ($\hat \psi_j^O$) in the sample of 8,578 firms for which both sets of effects are leave-one-out identified. Both sets of firm effects are demeaned within this estimation sample. ``PI slope'' gives the coefficient from a person-year weighted projection of $\hat \psi_j^Y$ onto $\hat \psi_j^O$. ``KSS slope'' adjusts for attenuation bias by multiplying the PI slope by the ratio of the plug-in estimate of the person-year weighted variance of $ \psi_j^O$ to the KSS adjusted estimate of the same quantity. ``PI correlation" gives the person-year weighted sample correlation between $\hat \psi_j^O$ and $\hat \psi_j^Y$ while ``KSS correlation" adjusts this correlation for sampling error in both $\hat \psi_j^O$ and $\hat \psi_j^Y$ using leave out estimates of the relevant variances. ``Test statistic'' refers to the realization of ${\hat \theta_{H_0}}/{\sqrt{\hat{\mathbb{V}}[\hat \theta_{H_0}]}}$ where $\hat \theta_{H_0}$ is the quadratic form associated with the null hypothesis that the firm effects are equal across age groups, see \th\ref{rem:testbig} and Appendix \ref{app:testOY} for details. From \th\ref{thm3}, ${\hat \theta_{H_0}}/{\sqrt{{\mathbb{V}}[\hat \theta_{H_0}]}}$ converges to a $\mathcal{N}(0,1)$ under the null hypothesis that $\psi_j^O=\psi_j^Y$ for all 8,578 firms. 
		}
\end{figure}

\includepdf[scale = 1, addtolist = {1,table,Table 1: Summary Statistics,table1}]{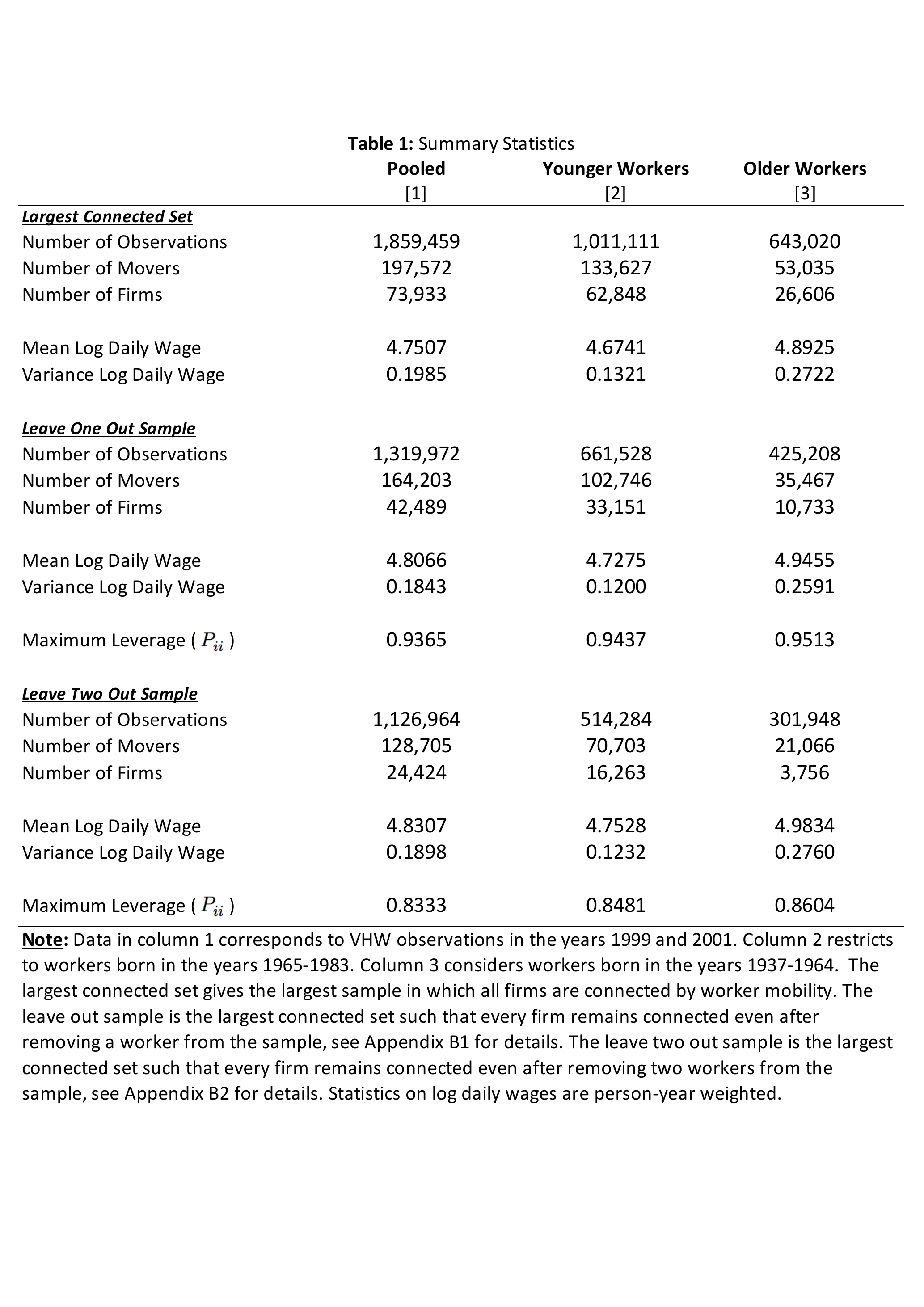}
\includepdf[scale = 1, landscape=true, addtolist = {1,table,Table 2: Variance Decomposition,table2}]{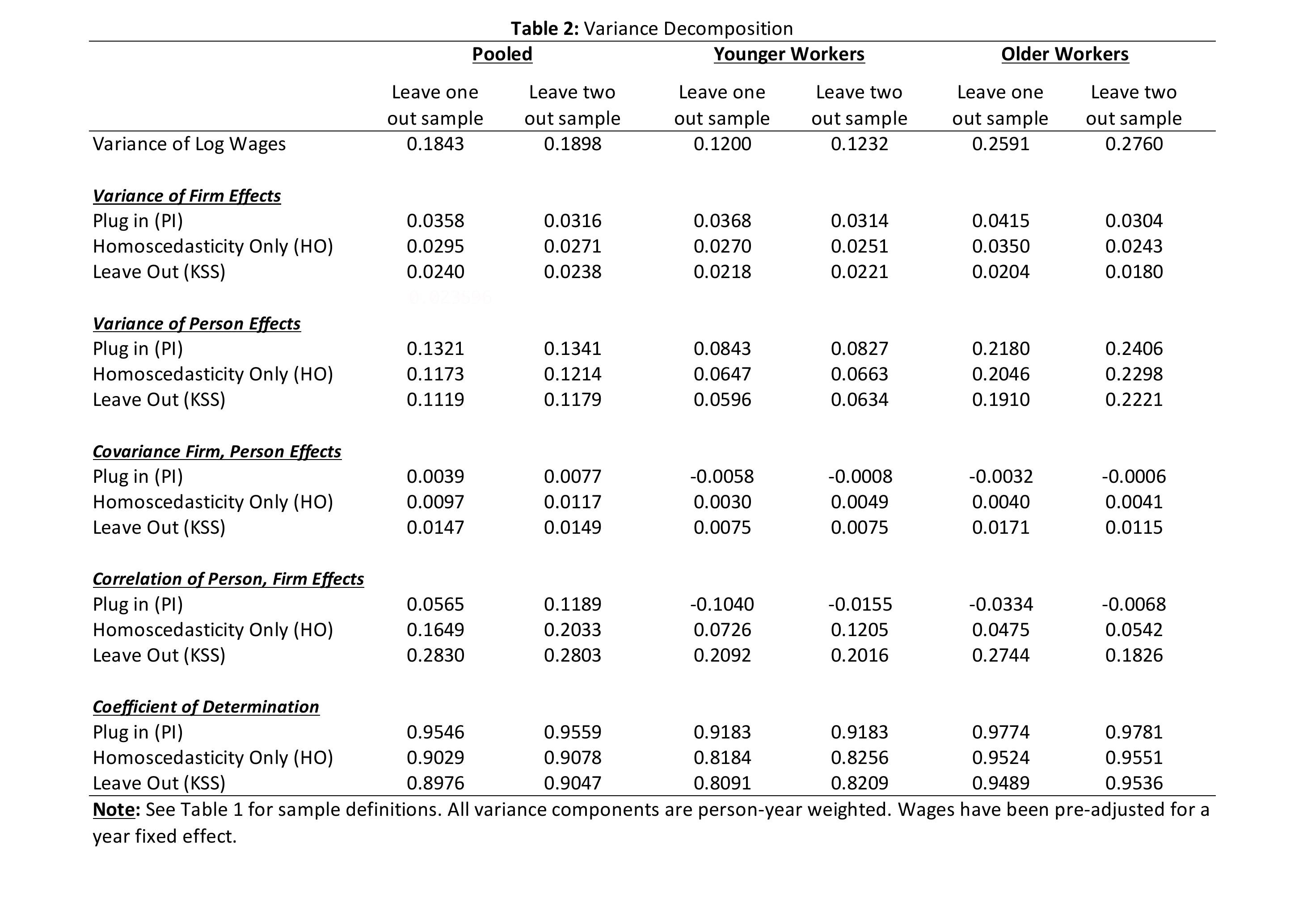}
\includepdf[scale = 1, landscape=true, addtolist = {1,table,Table 3: Variance Decomposition under different leave-out strategies,table3}]{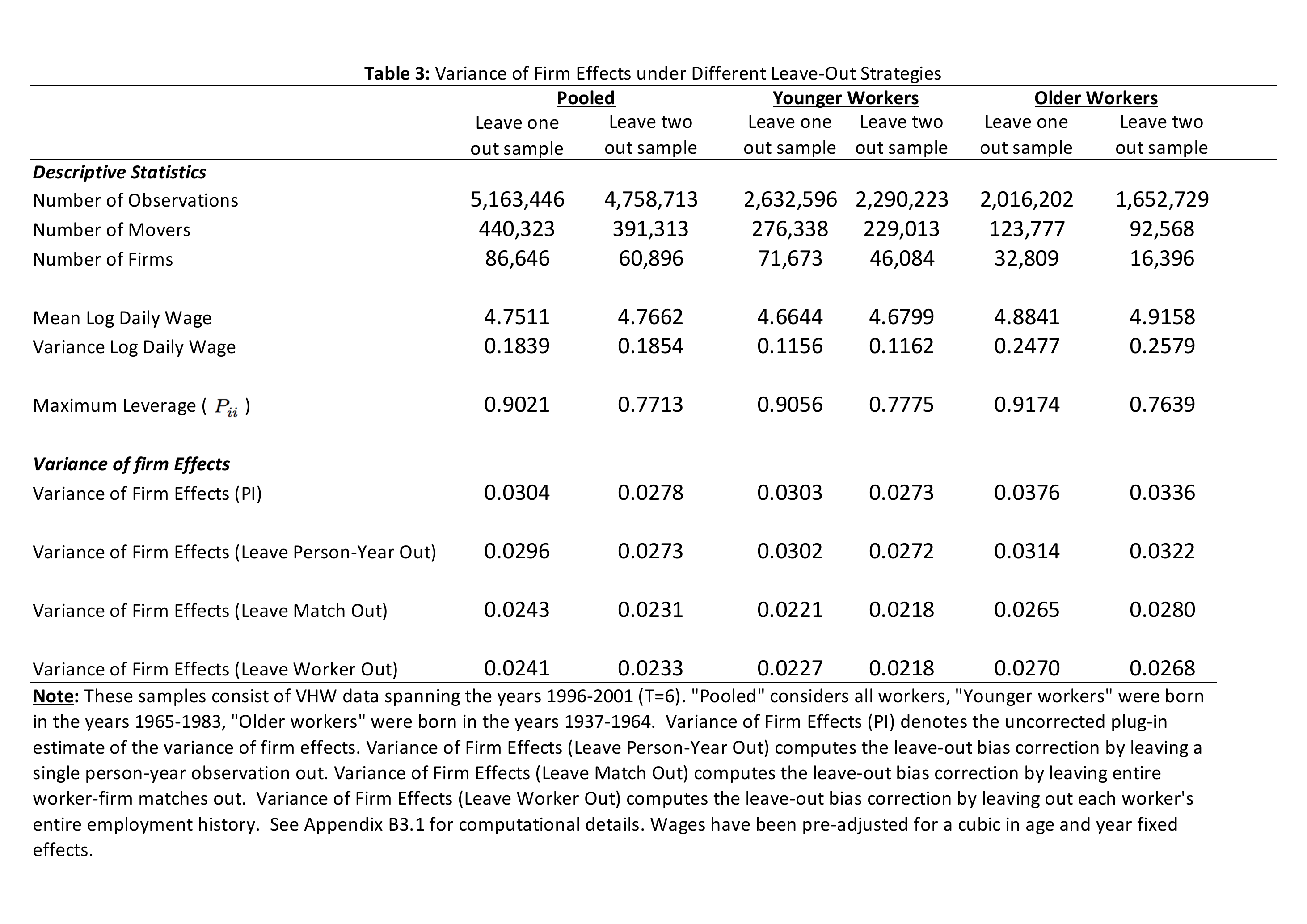}
\includepdf[scale = 1, landscape=true, addtolist = {1,table,Table 4: Projecting Firm Effects on Covariates,table4}]{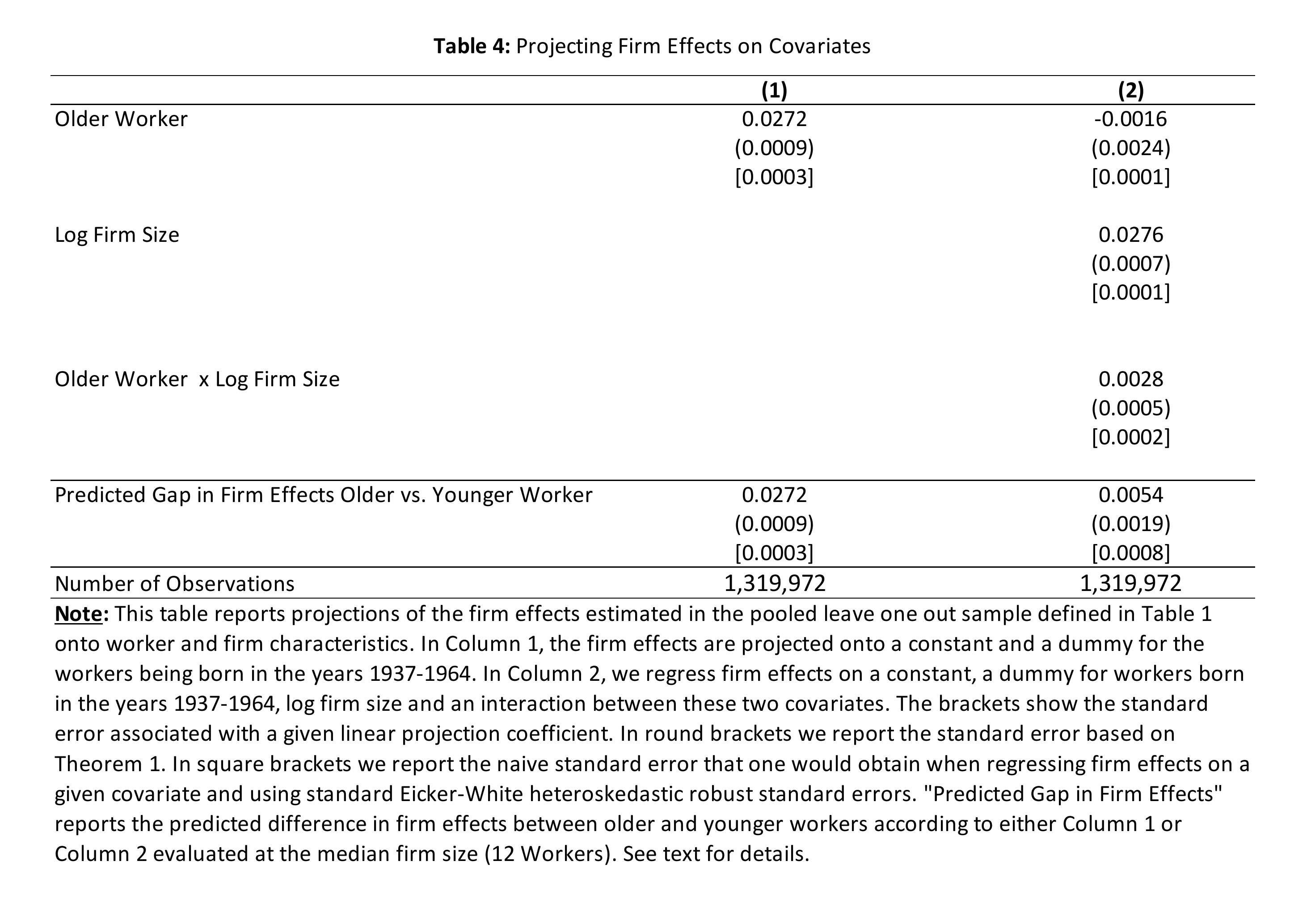}
\includepdf[scale = 1, landscape=true, addtolist = {1,table,Table 5: Inference on the Variance of Firm Effects,table5}]{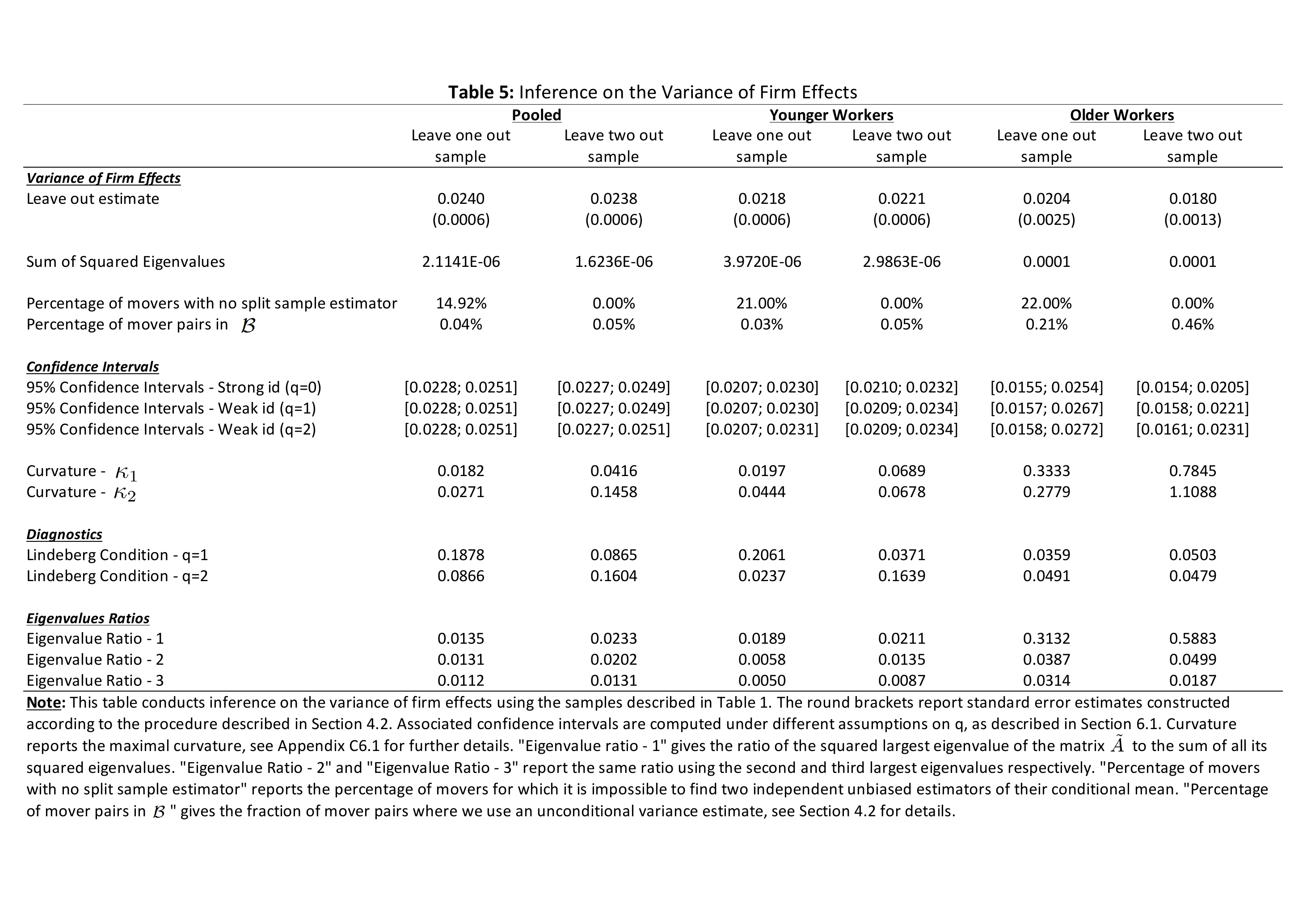}
\includepdf[scale = 1, landscape=true, addtolist = {1,table,Table 6: Montecarlo Results,table6}]{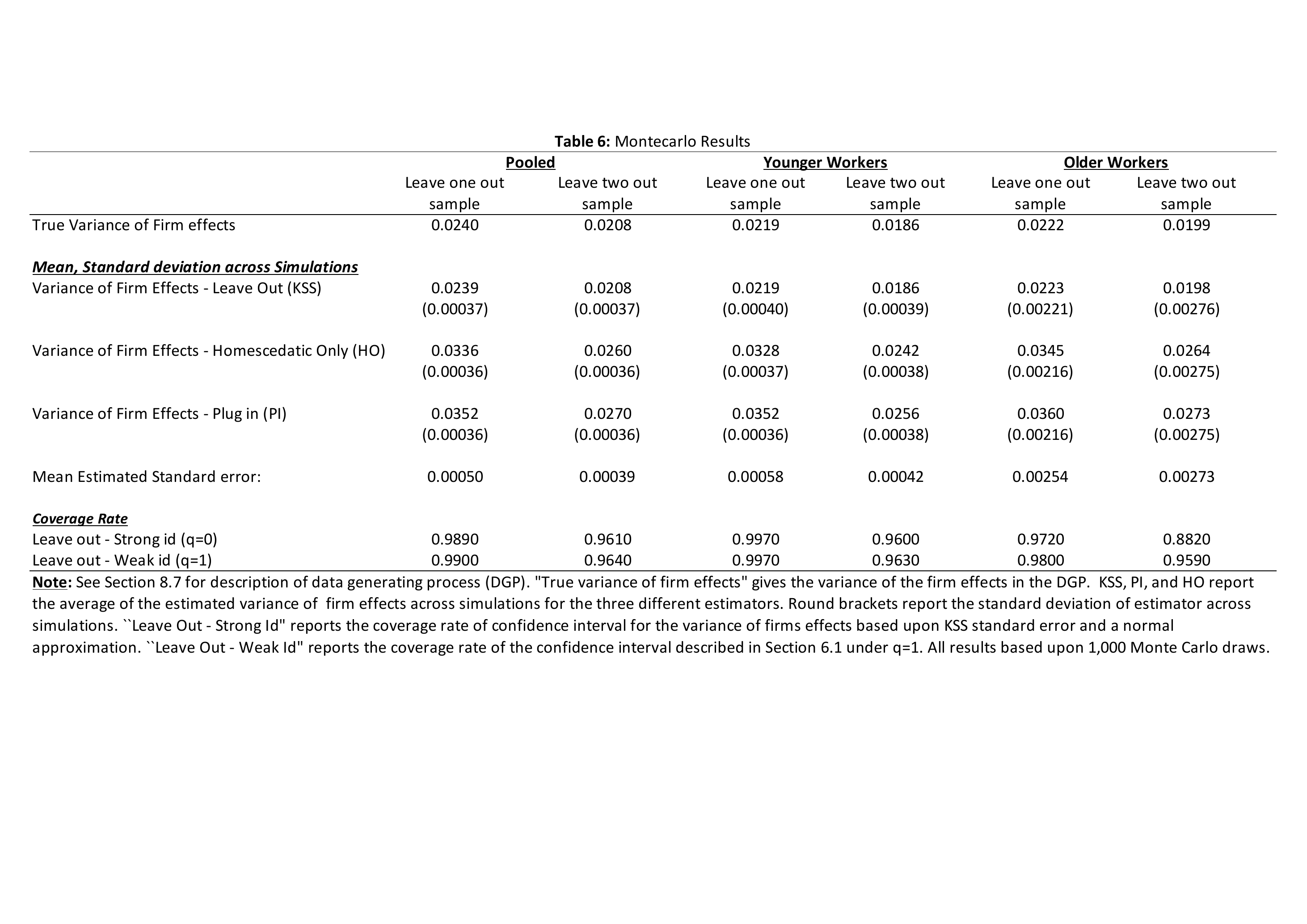}


\begin{appendices}
\section{Data}

This Appendix describes construction of the data used in the application of Section \ref{sec:application}. 

\subsection{Veneto Workers History}\label{app:data}
Our data come from the Veneto Workers History (VWH) file, which provides social security based earnings records on annual job spells for all workers employed in the Italian region of Veneto at any point between the years 1975 and 2001.  Each job-year spell in the VWH lists a start date, an end date, the number of days worked that year, and the total wage compensation received by the employee in that year. The earnings records are not top-coded. We also observe the gender of each worker and several geographic variables indicating the location of each employer. See \cite*{card2014rent} and \cite{serafinelli2019good} for additional discussion and analysis of the VWH.

We consider data from the years 1984--2001 as prior to that information on days worked tend to be of low quality. To construct the person-year panel used in our analysis, we follow the sample selection procedures described in \cite*{card2013workplace}. First, we drop employment spells in which the worker's age lies outside the range 18--64. The average worker in this sample has 1.21 jobs per year. To generate unique worker-firm assignments in each year, we restrict attention to spells associated with ``dominant jobs'' where the worker earned the most in each corresponding year.  From this person-year file, we then exclude workers that (i) report a daily wage less than 5 real euros or have zero days worked (1.5\% of remaining person-year observations) (ii) report a log daily wage change one year to the next that is greater than 1 in absolute value (6\%) (iii) are employed in the public sector (10\%) or (iv) have more than 10 jobs in any year or that have gender missing (0.1\%).

\section{Computation}\label{app:computation}
\renewcommand\thefigure{\thesection.\arabic{figure}}    
\setcounter{figure}{0}   

This Appendix describes the key computational aspects of the leave-out estimator $\hat \theta$, with an emphasis on the application to two-way fixed effects models with two time periods discussed in \th\ref{ex:AKM} and Section \ref{sec:application}.

\subsection{Leave-One-Out Connected Set}\label{sec:pruning}

Existence of $\hat \theta$ requires $P_{ii}<1$ (see \th\ref{lem:unbiased}) and the following describes an algorithm which prunes the data to ensure that $P_{ii}<1$. In the two-way fixed effects model of Section \ref{sec:estimates_AKM}, this condition requires that the bipartite network formed by worker-firm links remains connected when any one worker is removed. This boils down to finding workers that constitute cut vertices or \textit{articulation points} in the corresponding bipartite network. 

The algorithm below takes as input a connected bipartite network $\mathcal{G}$ where workers and firms are vertices. Edges between two vertices correspond to the realization of a match between a worker and a firm \cite[see][for discussion]{jochmans2016fixed,bonhomme2017econometric}. In practice, one typically starts with a $\mathcal{G}$ corresponding to the \textit{largest connected component} of a given bipartite network \cite[see, e.g.,][]{card2013workplace}. The output of the algorithm is a subset of $\mathcal{G}$ where removal of any given worker does not break the connectivity of the associated graph.

The algorithm relies on existing functions that efficiently finds articulation points and largest connected components. In MATLAB such functions are available in the \textit{Boost Graph Library} and in R they are available in the \textit{igraph} package.

%

 \begin{algorithm}
   \caption{Leave-One-Out Connected Set}
    \begin{algorithmic}[1]
      \Function{PruningNetwork}{$\mathcal{G}$}\Comment{$\mathcal{G}\equiv$ Connected bipartite network of firms and workers}
		\State Construct $\mathcal{G}_1$ from $\mathcal{G}$ by deleting all workers that are articulation points in $\mathcal{G}$
		\State Let $\mathcal{G}$ be the largest connected component of $\mathcal{G}_1$
		\State Return $\mathcal{G}$
       \EndFunction

\end{algorithmic}
\end{algorithm}

The algorithm typically completes in less than a minute for datasets of the size considered in our application. Furthermore, the vast majority of firms removed using this algorithm are only associated with one mover.

\subsection{Leave-Two-Out Connected Set}\label{sec:pruning2}

We also introduced a leave-two-out connected set, which is a subset of the original data such that removal of any \textit{two} workers does not break the connectedness of the bipartite network formed by worker-firm links. The following algorithm proceeds by applying the idea in Algorithm 1 to each of the networks constructed by dropping one worker. A crucial difference from Algorithm 1 is that \textit{two} workers who do not break connectedness in the input network may break connectedness when other workers have been removed. For this reason, the algorithm runs in an iterative fashion until it fails to remove any additional workers.



 \begin{algorithm}
	\caption{Leave-Two-Out Connected Set}
	\begin{algorithmic}[1]
		\Function{PruningNetwork2}{$\mathcal{G}$}\quad \Comment{$\mathcal{G}\equiv$ Leave-one-out connected bipartite network of firms and workers}
		\State $a=1$
		\While {$a>0$}
		\State $\mathcal{G}^{del}=\emptyset$
		\For {$g=1,\dots, N$}
		\State Construct $\mathcal{G}_1$ from $\mathcal{G}$ by deleting worker $g$
		\State Add all workers that are articulation points in $\mathcal{G}_1$ to $\mathcal{G}^{del}$
		\EndFor
		\State $a=  \abs{\mathcal{G}^{del}}$ 
		\If{$a>0$}
		\State Construct $\mathcal{G}_1$ from $\mathcal{G}$ by deleting all workers in $\mathcal{G}^{del}$
		\State Let $\mathcal{G}_2$ be the largest connected component of $\mathcal{G}_1$
		\State Let $\mathcal{G}$ be the output of applying Algorithm 1 to $\mathcal{G}_2$
		\EndIf
		\EndWhile
		\State Return $\mathcal{G}$
		\EndFunction
		
	\end{algorithmic}
\end{algorithm}

%

\subsection{Computing $\hat \theta$}\label{sec:Lambda}

Our proposed leave-out estimator is a function of the $2n$ quadratic forms
\begin{align}
	P_{ii} = x_i' S_{xx}\inverse x_i \qquad B_{ii} = x_i'S_{xx}\inverse A S_{xx}\inverse x_i \qquad \text{for } i=1,...,n.
\end{align}
The estimates reported in Section \ref{sec:application} of the paper rely on exact computation of these quantities. In our application, $k$ is on the order of hundreds of thousands, making it infeasible to compute $S_{xx}\inverse$ directly. To circumvent this obstacle, we instead compute the $k$-dimensional vector $z_{i,exact}=S_{xx}\inverse x_i$ separately for each $i=1,..,n$. That is, we solve separately for each column of $Z_{exact}$ in the system 
\begin{align}
\underset{k \times k}{S_{xx}} \underset{k \times n}{Z_{exact}}   = \underset{k \times n}{X'}.
\end{align}
We then form $P_{ii} = x_{i}' z_{i,exact}$ and $B_{ii} = z_{i,exact}' A z_{i,exact}$. The solution $z_{i,exact}$ is computed via MATLAB's preconditioned conjugate gradient routine \textit{pcg}. In computing this solution, we utilize the preconditioner developed by \cite{koutis2011combinatorial}, which is optimized for diagonally dominant design matrices $S_{xx}$. These column-specific calculations are parallelized across different cores using MATLAB's \textit{parfor} command. 

\subsubsection{Leaving a Cluster Out}\label{app:cluster}

Table \ref{table3} applies the leave-cluster-out estimator introduced in \th\ref{rem:cluster} to estimate the variance of firm effects with more than two time periods and potential serial correlation. The estimator takes the form $\hat \theta_{cluster} = \sum_{i=1}^{n}y_{i}\tilde{x}_{i}'\hat{\beta}_{-c(i)}$ where $\hat{\beta}_{-c(i)}$ is the OLS estimator obtained after leaving out all observations in the cluster to which observation $i$ belongs. A representation of $\hat \theta_{cluster}$ that is useful for computation takes the observations in the $c$-th cluster and collect their outcomes in $y_c$ and their regressors in $X_c$. The leave-cluster-out estimator is then
\begin{align}\label{eq:cluster}
\hat \theta_{cluster} = \hat \beta' A \hat \beta - \sum_{c=1}^C y_c' B_{c} (I-P_{c})\inverse (y_c - X_c\hat{\beta}),
\end{align}
where $C$ denotes the total number of clusters, $P_c = X_c  S_{xx}\inverse X_c'$, and $B_c = X_c  S_{xx}\inverse A S_{xx}\inverse X_c'$. Since the entries of $P_c$ and $B_{c}$ are of the form $P_{i\ell} = x_i' S_{xx}\inverse x_\ell$ and $B_{i\ell} = x_i' S_{xx}\inverse A S_{xx}\inverse x_\ell$, computation can proceed in a similar fashion as described earlier for the leave-one-out estimator.

When defining the cluster as a worker-firm match, Table \ref{table3} applies $\hat \theta_{cluster}$ to the two-way fixed effects model in \eqref{eq:AKM}. When defining the cluster as a worker, the individual effects can not be estimated after leaving a cluster out. Table \ref{table3} therefore applies $\hat \theta_{cluster}$ after demeaning at the individual level. This transformation removes the individual effects so that the resulting model can be estimated after leaving a cluster out.

\subsubsection{Johnson-Lindenstrauss Approximation}
When $n$ is on the order of hundreds of millions and $k$ is on the order of tens of millions, the exact algorithm may no longer be tractable. The JLA simplifies computation of $P_{ii}$ considerably by only requiring the solution of $p$ systems of $k$ linear equations. That is, one need only solve for the columns of $Z_{JLA}$ in the system
\begin{align}
\underset{k \times k}{S_{xx}} \underset{k\times p}{Z_{JLA}}   = \underset{k \times p}{(R_PX)'},
\end{align} 
which reduces computation time dramatically when $p$ is small relative to $n$. 

To compute $B_{ii}$, it is necessary to solve linear systems involving both $A_1$ and $A_2$, leading to $2p$ systems of equations when $A_1 \neq A_2$. However, for variance decompositions like the ones considered in Section \ref{sec:estimates_AKM}, the same $2p$ systems can be reused for all three variance components, leading to a total of $3p$ systems of equations for the full variance decomposition. This is so because the three variance components use the matrices $A_{\psi} = A_f'A_f$, $A_{\alpha,\psi} = \frac{1}{2}(A_d'A_f + A_f'A_d)$, and $A_{\alpha} = A_d'A_d$ where
\begin{align}
A_f' = \mbox{\scriptsize $ \frac{1}{\sqrt{n}}\begin{bmatrix} 0 & 0 & 0 \\ f_1 - \bar f & \dots & f_n-\bar f \\ 0 & 0 & 0 \end{bmatrix}$} \quad \text{and} \quad 
A_d' = \mbox{\scriptsize$\frac{1}{\sqrt{n}}\begin{bmatrix}  d_1 - \bar d & \dots & d_n-\bar d \\ 0 & 0 & 0 \\ 0 & 0 & 0 \end{bmatrix}$ }.
\end{align}
Based on these insights, Algorithm 3 below takes as inputs $X$, $A_{f}$, $A_{d}$, and $p$, and returns $\hat P_{ii}$ and three different $\hat B_{ii}$'s which are ultimately used to construct the corresponding variance component $\hat \theta_{JLA}$ as defined in Section \ref{sec:comp}.

\begin{algorithm}
	\caption{Johnson-Lindenstrauss Approximation for Two-Way Fixed Effects Models}
	\begin{algorithmic}[1]
		\Function{JLA}{$X$,$A_{f}$,$A_d$,$p$}
		\State Generate $R_B, R_P \in \R^{p \times n}$, where $(R_B,R_P)$ are composed of mutually independent Rademacher entries
		\State Compute $(R_PX)'$, $(R_BA_{f})'$, $(R_BA_d)' \in \R^{k \times p}$
		\For {$\kappa=1,\dots,p$}
		\State Let $r_{\kappa,0}$, $r_{\kappa,1},$ $r_{\kappa,2} \in \R^{k}$ be the $\kappa$-th columns of $(R_PX)'$, $(R_BA_{f})'$, $(R_BA_{d})'$
		\State Let $z_{\kappa,\ell} \in \R^{k}$ be the solution to $S_{xx} z = r_{\kappa,\ell}$ for $\ell =0,1,2$
		\EndFor 
		\State Construct $Z_\ell=(z_{1,\ell}, \hdots, z_{p,\ell}) \in \R^{k \times p}$ for $\ell =0,1,2$
		\State Construct $\hat P_{ii} = \frac{1}{p} \norm*{Z_0'x_i}^2$, $\hat B_{ii,\psi} = \frac{1}{p} \norm*{Z_1'x_i}^2$, $\hat B_{ii,\alpha} = \frac{1}{p} \norm*{Z_2'x_i}^2$, $\hat B_{ii,\alpha\psi} = \frac{1}{p} (Z_1'x_i)'(Z_2'x_i)$ for $i=1,\dots,n$
		\State Return $\{\hat P_{ii},\hat B_{ii,\psi},\hat B_{ii,\alpha},\hat B_{ii,\alpha\psi}\}_{i=1}^n$
		\EndFunction
	\end{algorithmic}
\end{algorithm}


\subsubsection{Performance of the JLA} 
\label{sec:quality_of_approx}
Figure \ref{fig:increasing_T} evaluates the performance of the Johnson-Lindenstrauss approximation across 4 VWH samples that correspond to different (overlapping) time intervals (2000--2001; 1999--2001; 1998--2001; 1997--2001). The $x$-axis in Figure \ref{fig:increasing_T} reports the total number of person and firm effects associated with a particular sample. 

Figure \ref{fig:increasing_T} shows that the computation time for exact computation of $(B_{ii},P_{ii})$ increases rapidly as the number of parameters of the underlying AKM model grow; in the largest dataset considered -- which involves more than a million worker and firm effects --  exact computation takes about 8 hours. Computation of JLA complete in markedly shorter time: in the largest dataset considered computation time is less than 5 minutes when $p=500$ and slightly over 6 minutes when $p=2500$. Notably, the JLA delivers estimates of the variance of firm effects almost identical to those computed via the exact method, with the quality of the approximation increasing for larger $p$. For instance, in the largest dataset, the exact estimate of variance of firm effects is 0.028883. By comparison, the JLA estimate equals 0.028765 when $p=500$ and 0.0289022 when $p=2500$. 

In summary: for a sample with more than a million worker and firm effects, the JLA cuts computation time by a factor of 100 while introducing an approximation error of roughly $10^{-4}$.

\subsubsection{Scaling to Very Large Datasets}
\label{sec:quality_of_approx_CHK}
We now study how the JLA scales to much larger datasets of the dimension considered by \cite{card2013workplace} who fit models involving tens of millions of worker and firm effects to German social security records. To study the computational burden of a model of this scale, we rely on a synthetic dataset constructed from our original leave-one-out sample analyzed in Column 1 of Table \ref{table2}, i.e., the pooled Veneto sample comprised of wage observations from the years 1999 and 2001. We scale the data by creating replicas of this base sample. To connect the replicas, we draw at random 10\% of the movers and randomly exchange their period 1 firm assignments across replicas. By construction, this permutation maintains each (replicated) firm's size while ensuring leave-one-out connectedness of the resulting network.

Wage observations are drawn from a variant of the DGP described in Section \ref{sec:MC} adapted to the levels formulation of the model. Specifically, each worker's wage is the sum of a rescaled person effect, a rescaled firm effect, and an error drawn independently in each period from a normal with variance $\frac{1}{2}\exp(\hat a_0 + \hat a_1 B_{gg} + \hat a_2 P_{gg} + \hat a_3 \ln L_{g2} + \hat a_4 \ln L_{g1})$.  As highlighted by Figure \ref{fig:increasing_T}, computing the exact estimator in these datasets would be extremely costly. Drawing from a stable DGP allows us to instead benchmark the JLA estimator against the true value of the variance of firm effects.


Figure \ref{fig:CHK_data} displays the results. When setting $p=250$, the JLA delivers a variance of firm effects remarkably close to the true variance of firm effects defined by our DGP. As expected, the distance between our approximation and the true variance component decreases with the sample size for a fixed $p$. Remarkably, we are able to compute the AKM variance decomposition in a dataset with approximately 15 million person and year effects in only 35 minutes. Increasing the number of simulated draws in the JLA to $p=500$ delivers estimates of the variance of firm effects nearly indistinguishable from the true value. This is achieved in approximately one hour in the largest simulated dataset considered. The results of this exercise strongly suggest the leave-out estimator can be scaled to extremely large datasets involving the universe of administrative wage records in large countries such as Germany or the United States.



\newpage
\thispagestyle{empty}
\begin{figure}[H]
\caption{Performance of the JLA Algorithm} 
\vspace{-15pt}
\label{fig:increasing_T}
\begin{center}
\subfloat[{Computation Time}]{\includegraphics[width = .8\columnwidth]{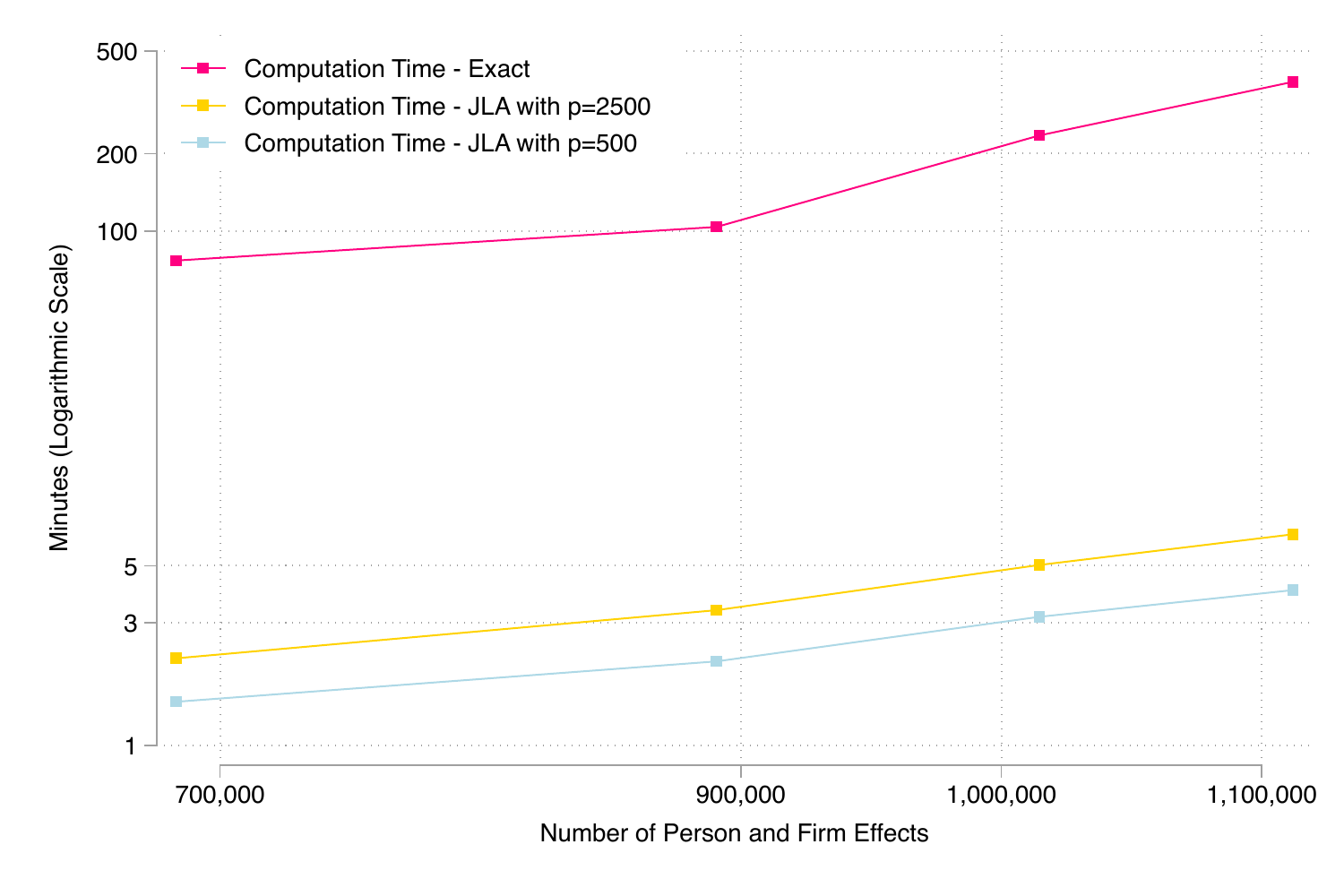}} \\ 
\vspace{-15pt}
\subfloat[{Quality of the Approximation}]{\includegraphics[width = .8\columnwidth]{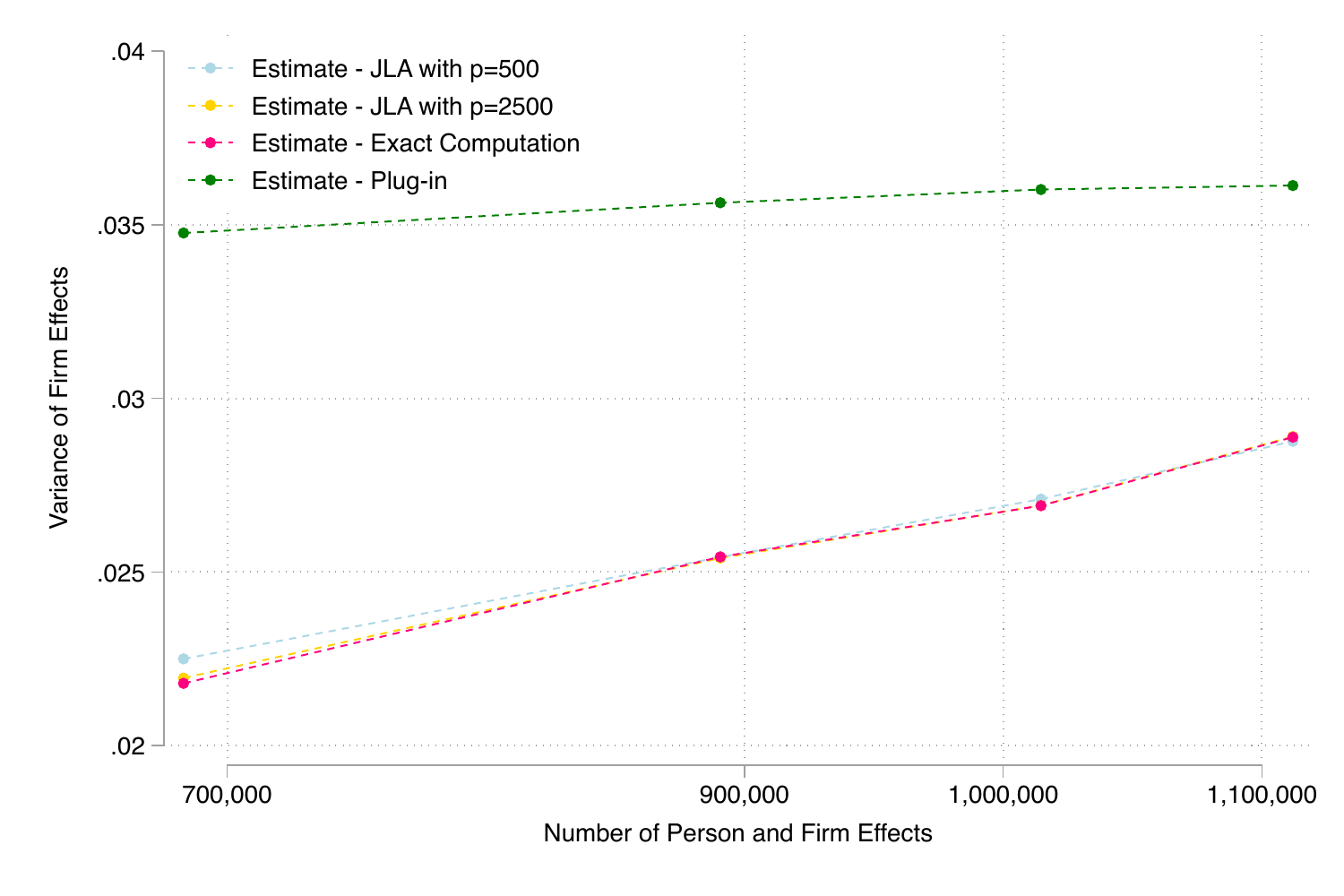}}
\end{center}
\vspace{-25pt}
\begin{singlespace}
	{\footnotesize \underline{Note:} Both panels consider 4 different samples of increasing length. The four samples contain data from the years 2000--2001, 1999--2001, 1998--2001, and 1997--2001, respectively. The $x$-axis reports the number of person and firm effects in each sample. 
		Panel (a) shows the time to compute the KSS estimate when relying on either exact computation of $\{B_{ii},P_{ii}\}_{i=1}^n$ or the Johnson-Lindenstrauss approximation (JLA) of these numbers using a $p$ of either 500 or 2500. Panel (b) shows the resulting estimates and the plug-in estimate. Computations performed on a 32 core machine with 256 GB of dedicated memory. Source: VWH dataset.
	}
\end{singlespace}
\end{figure}
\newpage
\thispagestyle{empty}
\begin{figure}[H]
\caption{Scaling to Very Large Datasets}
\vspace{-15pt}
\label{fig:CHK_data}
\begin{center}
\subfloat[{Computation Time}]{\includegraphics[width = .8\columnwidth]{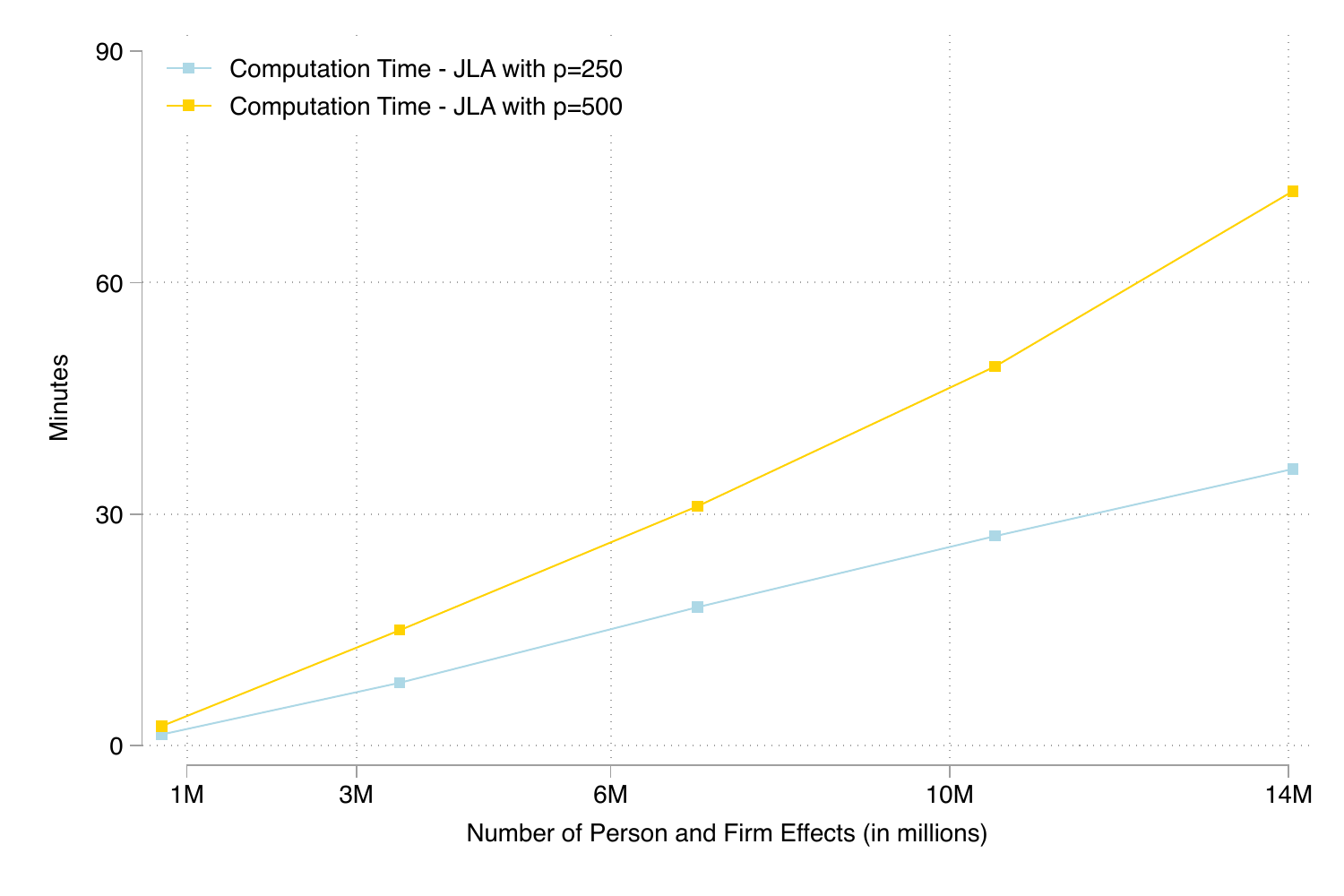}}\\
\vspace{-15pt}
\subfloat[{Quality of the Approximation}]{\includegraphics[width = .8\columnwidth]{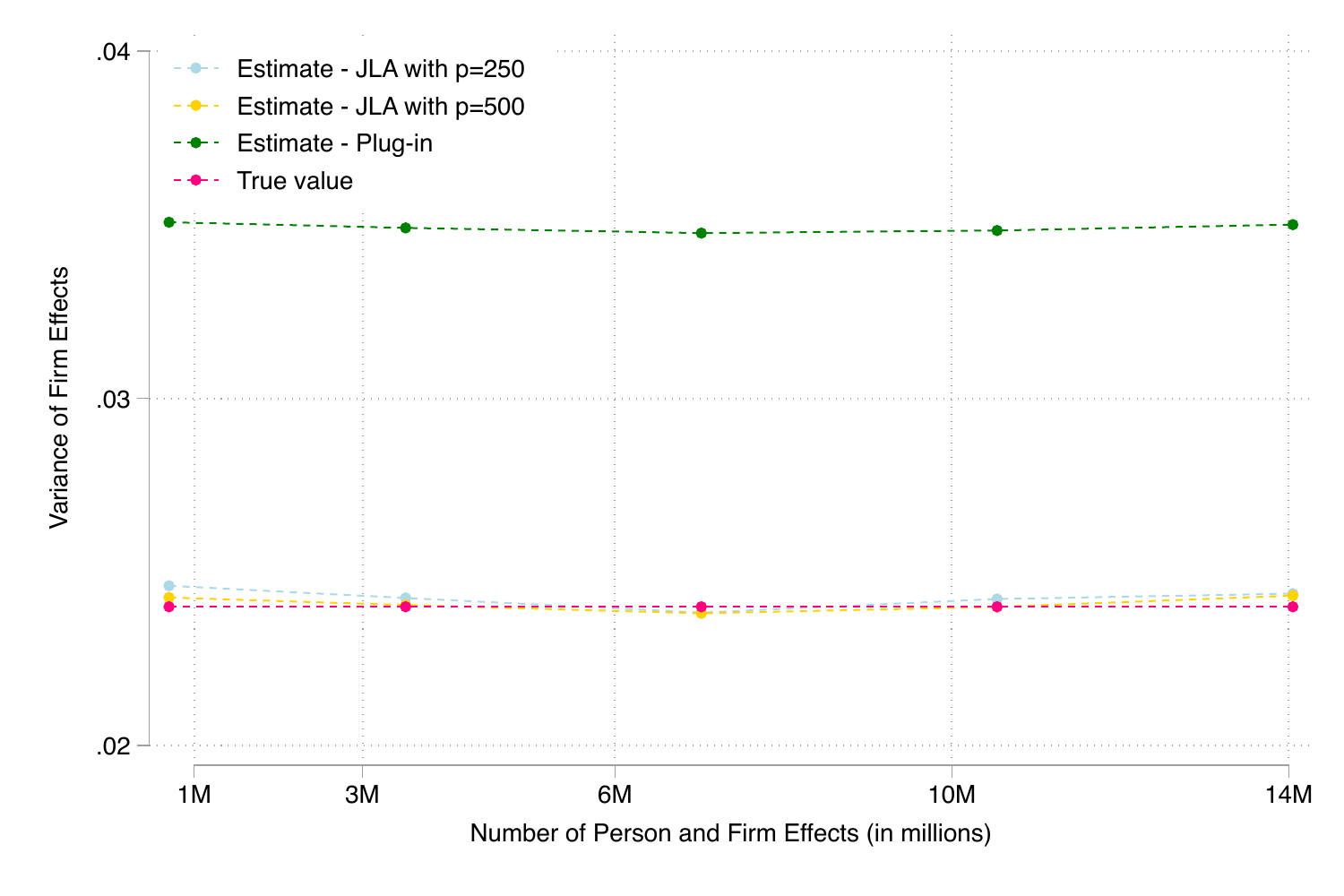}}
\end{center}
\vspace{-25pt}
\begin{singlespace}
	{\footnotesize \underline{Note:} Both panels consider synthetic datasets created from the pooled Veneto data in column 1 of Table \ref{table2} with $T=2$. It considers $\{1,5,10,15,20\}$ replicas of this sample while generating random links across replicas such that firm size and $T$ are kept fixed. Outcomes are generated from a DGP of the sort considered in Table \ref{table6}. The $x$-axis reports the number of person and firm effects in each sample.
		Panel (a) shows the time to compute the Johnson-Lindenstrauss approximation $\hat \theta_{JLA}$ using a $p$ of either 250 or 500. Panel (b) shows the resulting estimates, the plug-in estimate, and the true value of the variance of firm effects for the DGP. Computations performed on a 32 core machine with 256 GB of dedicated memory. Source: VWH dataset.
%
	}
\end{singlespace}
\end{figure}

\subsection{Split Sample Estimators}
\label{sec:samplesplitalg} 
Sections \ref{sec:variance_estimation} and \ref{sec:varq} proposed standard error estimators predicated on being able to
construct independent split sample estimators $\widehat{x_i'\beta}_{-i,1}$ and $\widehat{x_i'\beta}_{-i,2}$. This section 
describes an algorithm for construction of these split sample estimators in the two-way fixed effects model of \th\ref{ex:AKM}. We restrict attention
to the case with $T_{g}=2$ and consider the model in first differences: $\Delta y_{g}=\Delta f_{g}'\psi + \Delta \varepsilon_{g}$ for $g=1,\dots,N$. When worker $g$ moves from firm $j$ to $j'$, we can estimate $\Delta f_{g}'\psi = \psi_{j'} - \psi_{j}$ without bias using OLS on any sub-sample where firms $j$ and $j'$ are connected, i.e., on any sample where there exist a path between firm $j$ and $j'$. To construct two disjoint sub-samples where firms $j$ and $j'$ are connected we therefore use an algorithm to find disjoint paths between these firms and distribute them into two sub-samples which will be denoted $\mathcal{S}_1$ and $\mathcal{S}_2$. 
Because it can be computationally prohibitive to characterize all possible paths, we use a version of Dijkstra's algorithm to find many short paths.\footnote{The algorithm presented below keeps running until it cannot find any additional paths. In our empirical implementation we stop the algorithm when it fails to find any new paths or as soon as one of the two sub-samples reach a size of at least 100 workers. We found that increasing this cap on the sub-sample size has virtually no effect on the estimated confidence intervals, but tends to increase computation time substantially.} 

Our algorithm is based on a network where firms are vertices and two firms are connected by an edge if one or more workers moved between them. This view of the network is the same as the one taken in Section \ref{sec:verify}, but different from the one used in Sections \ref{sec:pruning} and \ref{sec:pruning2} where both firms and workers were viewed as vertices. We use the adjacency matrix $\mathcal{A}$ to characterize the network in this section. To build the sub-samples $\mathcal{S}_1$ and $\mathcal{S}_2$, the algorithm successively drops workers from the network, so $\mathcal{A}_{-\mathcal{S}}$ will denote the adjacency matrix after dropping all workers in the set $\mathcal{S}$.

Given a network characterized by $\mathcal{A}$ and two connected firms $j$ and $j'$ in the network, we let $\dot P_{jj'}(\mathcal{A})$ denote the shortest path between them.\footnote{Many statistical software packages provide functions that can find shortest paths. In R they are available in the \textit{igraph} package while in MATLAB a package that builds on the work of \cite{yen1971finding} is available at \url{https://www.mathworks.com/matlabcentral/fileexchange/35397-k-shortest-paths-in-a-graph-represented-by-a-sparse-matrix-yen-s-algorithm?focused=3779015&tab=function}.} If $j$ and $j'$ are not connected $\dot P_{jj'}(\mathcal{A})$ is empty. Each edge in the path $\dot P_{jj'}(\mathcal{A})$ may have more than one worker associated with it. For each edge in $\dot P_{jj'}(\mathcal{A})$ the first step of the algorithm picks at random a single worker associated with that edge and places them in $\mathcal{S}_1$, while later steps place all workers associated with the shortest path in one of $\mathcal{S}_1$ and $\mathcal{S}_2$. This special first step ensures that the algorithm finds two independent unbiased estimators of $\Delta f_{g}'\psi $ whenever the network $\mathcal{A}$ is leave-two-out connected.

For a given worker $g$ with firm assignments $j=j(g,1),j'=j(g,2)$ and a leave-two-out connected network $\mathcal{A}$ the algorithm returns the $\{P_{g\ell,1}, P_{g\ell,2}\}_{\ell=1}^N$ introduced in Section \ref{sec:variance_estimation}. Specifically, $\widehat{\Delta f_{g}'\psi}_{-g,1} = \sum_{\ell=1}^N P_{g\ell,1} \Delta y_\ell$ and $\widehat{\Delta f_{g}'\psi}_{-g,2} = \sum_{\ell=1}^N P_{g\ell,2} \Delta y_\ell$ are independent unbiased estimators of $\Delta f_{g}'\psi$ that are also independent of $\Delta y_{g}$. If $\mathcal{A}$ is only leave-one-out connected then the algorithm may only find one path connecting $j$ and $j'$. When this happens the algorithm sets $P_{g\ell,2}=0$ for all $\ell$ as required in the formulation of the conservative standard errors proposed in Appendix \ref{sec:cons}.

 \begin{algorithm}[H]
	\caption{Split Sample Estimator for Inference}
	\begin{algorithmic}[1]
		\Function{splitsampleestimator}{$g,j,j',\mathcal{A}$}\Comment{$\mathcal{A}\equiv$ Leave-one-out connected network}
		\State Let $\mathcal{S}_{1}=\emptyset$ and  $\mathcal{S}_{2}=\emptyset$
		\State For each edge in $\dot P_{jj'}(\mathcal{A}_{-g})$, pick at random one worker from $\mathcal{A}_{-g}$ who is associated with that edge and add that worker to $\mathcal{S}_{1}$
		\State Add to $\mathcal{S}_{2}$ all workers from $\mathcal{A}_{-\{g,\mathcal{S}_{1}\}}$ who are associated with an edge in $\dot P_{jj'}(\mathcal{A}_{-\{g,\mathcal{S}_{1}\}})$ 
		\State Add to $\mathcal{S}_{1}$ all workers from $\mathcal{A}_{-\{g,\mathcal{S}_{1},\mathcal{S}_{2}\}}$ who are associated with an edge in $\dot P_{jj'}(\mathcal{A}_{-g})$
		\State Let $stop= 1\{\dot P_{jj'}(\mathcal{A}_{-\{g,\mathcal{S}_{1},\mathcal{S}_{2}\}})=\emptyset\}$ and $s=1$
		\While{$stop<1$}
		\State Add to $\mathcal{S}_{s}$ all workers from $\mathcal{A}_{-\{g,\mathcal{S}_{1},\mathcal{S}_{2}\}}$ who are associated with an edge in $\dot P_{jj'}(\mathcal{A}_{-\{g,\mathcal{S}_{1},\mathcal{S}_{2}\}})$ 
		\State Let $stop= 1\{\dot P_{jj'}(\mathcal{A}_{-\{g,\mathcal{S}_{1},\mathcal{S}_{2}\}})=\emptyset\}$ and update $s$ to  $1 + 1\{s=1\}$
		\EndWhile
		\State For $s = 1,2$ and $\ell =1,\dots,N$, let $P_{g\ell,s}=1\{\ell \in \mathcal{S}_{s}\} \Delta f_{\ell}' (\sum_{m \in \mathcal{S}_{s}} \Delta f_{m}\Delta f_{m}')^{\dagger} \Delta f_{g}$
		\State Return $\{P_{g\ell,1}, P_{g\ell,2}\}_{\ell=1}^N$
		\EndFunction
		
	\end{algorithmic}
\end{algorithm}

In line 5, all workers associated with the shortest path in line 3 are added to $\mathcal{S}_{1}$ if they were not added to $\mathcal{S}_{2}$ in line 4. This step ensures that all workers associated with $\dot P_{jj'}(\mathcal{A}_{-g})$ are used in the predictions.
In line 11, $P_{g\ell,s}$ is constructed as the weight observation $\ell$ receives in the prediction $\Delta f_{g}'\hat \psi_{s}$ where $\hat \psi_{s}$ is the OLS estimator of $\psi$ based on the sub-sample $\mathcal{S}_{s}$.

\subsection{Test of Equal Firm Effects}\label{app:testOY}
This section describes computation and interpretation of the test of the hypothesis that firm effects for ``younger" workers are equal to firm effects for the ``older" workers which applies \th\ref{rem:testbig} of the main text.

The hypothesis of interest corresponds to a restricted and unrestricted model which when written in matrix notation are
\begin{align}
\Delta y & = \Delta F\psi + \Delta \varepsilon \label{restricted} \\
\Delta y &= \Delta F_{O}\psi^{O}+\Delta F_{Y}\psi^{Y} + \Delta F_{3} \psi_{3} + \Delta \varepsilon = X\beta + \Delta \varepsilon \label{unrestricted}
\end{align}
where $\Delta y$ and $\Delta F$ collects the first differences $\Delta y_g$ and $\Delta f_g$ across $g$. $\Delta F_{O}$ represents $\Delta F$ for ``doubly connected'' firms present in each age group's leave-one-out connected set interacted with a dummy for whether the worker is ``old"; $\Delta F_{Y}$ represents $\Delta F$ for doubly connected firms interacted with a dummy for young; $\Delta F_{3}$ represents $\Delta F$ for firms that are associated with either younger movers or older movers but not both. Finally, we let $X=(\Delta F_{O}, \Delta F_{Y}, \Delta F_{3})$, $\beta=(\psi^{O\prime}, \psi^{Y\prime}, \psi_{3}')'$, and $\psi = (\psi^{O\prime},\psi_3')'$. 

The hypothesis in question is $\psi^{O} - \psi^{Y} =0$ or equivalently $R\beta = 0$ for $R=[I_{r},-I_{r},0]$ and $r = \abs{\mathcal{J}} = \text{dim}(\psi^O)$. Thus we can create the numerator of our test statistic by applying  \th\ref{rem:testbig} to \eqref{unrestricted} yielding
\begin{align}
\label{myquadratic}
\hat{\theta}=\hat{\beta}'A\hat{\beta}-\sum_{g=1}^{N}B_{gg}\hat{\sigma}_{g}^2
\end{align}
where $A=\frac{1}{r}R'(R S_{xx}^{-1}R')^{-1}R$; $B_{gg}$ and $\hat{\sigma}^2_{g}$ are defined as in Section \ref{sec:unbiased}.


Two insights help to simplify computation. First, since $\Delta F_{O}'\Delta F_{Y}=0$, $\Delta F_{O}'\Delta F_{3}=0$ and $\Delta F_{Y}'\Delta F_{3}=0$, we can estimate equation \eqref{unrestricted} via two separate regressions, one on the leave-one-out connected set for younger workers and the other on the leave-one-out connected set for older workers. We normalize the firm effects so that the same firm is dropped in both leave-one-out samples.

Second, we note that $\hat \beta'A \hat \beta = y'By$ where
\begin{align}
\label{condition_cck}
B = XS_{xx}^{-1}AS_{xx}^{-1}X'=\dfrac{P_{X}-P_{\Delta F}}{r},
\end{align}
$P_{X}=XS_{xx}^{-1}X'$, and $P_{\Delta F}=\Delta F(\Delta F'\Delta F)^{-1}\Delta F'$. Equation \eqref{condition_cck} therefore implies that $B_{ii}$ in \eqref{myquadratic} is simply a scaled difference between two statistical leverages: the first one obtained in the unrestricted model \eqref{unrestricted}, say $P_{X,gg}$, and the other on the restricted model of \eqref{restricted}, say $P_{\Delta F,gg}$. Section \ref{sec:Lambda} describes how to efficiently compute these statistical leverages. To conduct inference on the quadratic form in \eqref{myquadratic} we apply the routine described in Section \ref{sec:variance_estimation}.


\section{Proofs}

	This Appendix contains all technical details and proofs that where left out of the paper. The material is primarily presented in the order it appears in the paper and under the same headings.

	\subsection{Unbiased Estimation of Variance Components}
	
	\subsubsection{Estimator}
		
	\begin{lemma}
		It follows from the Sherman-Morrison-Woodbury formula that the two representations of $\hat \theta$ given in \eqref{eq:estimator} and \eqref{eq:cov} are numerically identical, i.e., that $\hat{\beta}'A\hat{\beta}-\sum_{i=1}^{n} B_{ii}\hat{\sigma}_{i}^{2} = \sum_{i=1}^{n}y_{i}\tilde{x}_{i}'\hat{\beta}_{-i}$ whenever $S_{xx}$ has full rank and $\max_i P_{ii} <1$.
	\end{lemma}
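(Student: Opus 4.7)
The plan is to exploit the Sherman-Morrison-Woodbury identity exactly as flagged in the excerpt and verify the decomposition term by term. First I would write $\hat\beta_{-i} = (S_{xx} - x_i x_i')^{-1}\sum_{\ell\neq i} x_\ell y_\ell$ and apply Sherman-Morrison-Woodbury to obtain
\begin{align}
(S_{xx} - x_i x_i')^{-1} = S_{xx}^{-1} + \frac{S_{xx}^{-1} x_i x_i' S_{xx}^{-1}}{1 - P_{ii}},
\end{align}
which is valid because $S_{xx}$ has full rank and $P_{ii} < 1$ by hypothesis. Plugging this expansion into $y_i \tilde x_i' \hat\beta_{-i}$ splits the quantity into two pieces, as displayed in the excerpt.

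Next I would simplify the first piece using $\sum_{\ell\neq i} x_\ell y_\ell = S_{xx}\hat\beta - x_i y_i$ together with the identities $\tilde x_i' = x_i' S_{xx}^{-1} A$ and $\tilde x_i' S_{xx}^{-1} x_i = x_i' S_{xx}^{-1} A S_{xx}^{-1} x_i = B_{ii}$. This yields $y_i \tilde x_i' S_{xx}^{-1}\sum_{\ell\neq i} x_\ell y_\ell = y_i \tilde x_i' \hat\beta - B_{ii} y_i^2$. For the second piece, the same identity $\tilde x_i' S_{xx}^{-1} x_i = B_{ii}$ pulls out the scalar $B_{ii}$, leaving $B_{ii} y_i$ times $\frac{x_i' S_{xx}^{-1}}{1-P_{ii}} \sum_{\ell\neq i} x_\ell y_\ell$. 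Applying Sherman-Morrison-Woodbury once more (or just directly computing $x_i'(S_{xx}-x_ix_i')^{-1}\sum_{\ell\neq i}x_\ell y_\ell$) shows that this equals $x_i'\hat\beta_{-i}$, giving the second piece as $B_{ii} y_i x_i' \hat\beta_{-i}$.

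Combining the two pieces produces the per-observation identity
\begin{align}
y_i \tilde x_i' \hat\beta_{-i} = y_i \tilde x_i'\hat\beta - B_{ii}\, y_i\bigl(y_i - x_i'\hat\beta_{-i}\bigr) = y_i \tilde x_i'\hat\beta - B_{ii}\hat\sigma_i^2,
\end{align}
using the definition $\hat\sigma_i^2 = y_i(y_i - x_i'\hat\beta_{-i})$ from \eqref{eq:sigmahat}. Finally I would sum over $i$ and note that $\sum_i y_i \tilde x_i' = \sum_i y_i x_i' S_{xx}^{-1} A = \hat\beta' A$, so that $\sum_i y_i \tilde x_i'\hat\beta = \hat\beta' A \hat\beta$. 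This gives the desired equality $\sum_i y_i \tilde x_i'\hat\beta_{-i} = \hat\beta' A \hat\beta - \sum_i B_{ii}\hat\sigma_i^2$.

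There is no real obstacle here; the argument is pure linear algebra once Sherman-Morrison-Woodbury is invoked. The only thing to be careful about is the bookkeeping around the rank-one update, in particular recognizing that the factor $\frac{1}{1-P_{ii}}$ is exactly what converts $x_i'S_{xx}^{-1}$ applied to $\sum_{\ell\neq i} x_\ell y_\ell$ into $x_i'\hat\beta_{-i}$, so that the ``extra'' term collapses cleanly into $B_{ii}\hat\sigma_i^2$ rather than leaving residual $P_{ii}$-dependent pieces.
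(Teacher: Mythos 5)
Your proof is correct and follows essentially the same route as the paper's: a single Sherman--Morrison--Woodbury expansion of $(S_{xx}-x_ix_i')^{-1}$, the split of $y_i\tilde x_i'\hat\beta_{-i}$ into the two pieces $y_i\tilde x_i'\hat\beta - B_{ii}y_i^2$ and $B_{ii}y_ix_i'\hat\beta_{-i}$ via $\tilde x_i'S_{xx}^{-1}x_i=B_{ii}$, and the recombination into $y_i\tilde x_i'\hat\beta - B_{ii}\hat\sigma_i^2$ before summing over $i$. No gaps.
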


	\begin{proof}
		The Sherman-Morrison-Woodbury formula states that if $S_{xx}$ has full rank and $P_{ii} <1$, then
		\begin{align}
			S_{xx}\inverse + \frac{S_{xx}\inverse x_i x_i' S_{xx}\inverse}{1-x_i'S_{xx}\inverse x_i} = \left(S_{xx} - x_ix_i'\right)\inverse.
		\end{align}
		Furthermore, we have that $\tilde x_i'S_{xx}\inverse x_i = x_i S_{xx}\inverse A S_{xx}\inverse x_i= B_{ii}$ so
		\begin{align}
			y_i \tilde x_i'\hat \beta_{-i} &= y_i \tilde x_i'\left(S_{xx} - x_ix_i'\right)\inverse \sum_{\ell\neq i}x_{\ell}y_{\ell} 
			=
			y_i \tilde x_i' S_{xx}\inverse \sum_{\ell\neq i}x_{\ell}y_{\ell} +  \frac{y_i \tilde x_i' S_{xx} \inverse x_i x_i' S_{xx}\inverse}{1 - x_i' S_{xx} \inverse x_i} \sum_{\ell\neq i}x_{\ell}y_{\ell} \\
			&= y_i \tilde x_i'\hat \beta - B_{ii} y_i^2 + y_i B_{ii} \underbrace{x_i' \frac{S_{xx}\inverse}{1 - x_i' S_{xx} \inverse x_i} \sum_{\ell\neq i}x_{\ell}y_{\ell}}_{=x_i'\hat \beta_{-i}}
			= y_i \tilde x_i'\hat \beta - B_{ii} y_i(y_i - x_i'\hat \beta_{-i})
		\end{align}
		where the last expression equals $y_i \tilde x_i'\hat \beta - B_{ii} \hat \sigma_i^2$. This finishes the proof since $\hat{\beta}'A\hat{\beta} = \sum_{i=1}^{n}y_{i}\tilde{x}_{i}'\hat{\beta}$. In the above the Sherman-Morrison-Woodbury formula was also used to establish that
		\begin{align}
			x_i'\hat \beta_{-i} &= x_i'\left(S_{xx} - x_ix_i'\right)\inverse \sum_{\ell\neq i}x_{\ell}y_{\ell} 
			= x_i' \frac{S_{xx}\inverse}{1 - x_i' S_{xx} \inverse x_i} \sum_{\ell\neq i}x_{\ell}y_{\ell},
		\end{align}
		and from this it follows that $y_{i}-x_{i}'\hat{\beta}_{-i}=\dfrac{y_i - x_i'\hat \beta}{1-P_{ii}}$ as claimed in the paper.
	\end{proof}

	\subsubsection{Large Scale Computation}
	All discussions of the computational aspects are collected in Appendix \ref{app:computation}.
	
	\subsubsection{Relation To Existing Approaches}\label{app:rel}
		Next we verify that the bias of $\hat \theta_{\text{HO}}$ is a function of the covariation between $\sigma_i^2$ and $(B_{ii},P_{ii})$.
		\begin{lemma}\th\label{lem:HObias}
		The bias of $\hat \theta_{\text{HO}}$ is $\sigma_{nB_{ii},\sigma_i^2} +S_B \frac{n}{n-k} \sigma_{P_{ii},\sigma_i^2}$ where
			\begin{align}
				\sigma_{nB_{ii},\sigma_i^2} = \sum_{i=1}^n B_{ii}(\sigma_i^2 - \bar \sigma^2), \ \ \bar \sigma^2 = \frac{1}{n} \sum_{i=1}^n \sigma_i^2, \ \ S_B = \sum_{i=1}^n B_{ii}, \ \
				\sigma_{P_{ii},\sigma_i^2} = \frac{1}{n}\sum_{i=1}^n P_{ii} (\sigma_i^2 - \bar \sigma^2).
			\end{align}
		\end{lemma}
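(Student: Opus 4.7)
The plan is a direct expectation calculation followed by algebraic rearrangement. First I would compute $\E[\hat\theta_{\text{HO}}] - \theta$ term by term. From the bias formula \eqref{eq:bias} already established in the paper, $\E[\hat\beta'A\hat\beta] = \theta + \sum_{i=1}^n B_{ii}\sigma_i^2$. For the second term, I would use the standard identity $\E[\hat\sigma^2_{\text{HO}}] = \frac{1}{n-k}\sum_{i=1}^n(1-P_{ii})\sigma_i^2$, which follows because the OLS residual vector equals $M\varepsilon$ for $M = I_n - X S_{xx}\inverse X'$, so that $\E[\varepsilon'M\varepsilon] = \text{trace}(M\,\text{diag}(\sigma_i^2)) = \sum_{i=1}^n(1-P_{ii})\sigma_i^2$. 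Collecting these gives
\begin{align}
\E[\hat\theta_{\text{HO}}] - \theta = \sum_{i=1}^n B_{ii}\sigma_i^2 - \frac{S_B}{n-k}\sum_{i=1}^n(1-P_{ii})\sigma_i^2.
\end{align}

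Next I would rewrite this expression in the announced centered form. Noting the accounting identity $\sum_{i=1}^n(1-P_{ii}) = n - k$ (since $\text{trace}(X S_{xx}\inverse X') = k$), I can substitute $\sigma_i^2 = (\sigma_i^2 - \bar\sigma^2) + \bar\sigma^2$ inside $\sum_{i=1}^n(1-P_{ii})\sigma_i^2$ to obtain
\begin{align}
\sum_{i=1}^n(1-P_{ii})\sigma_i^2 = (n-k)\bar\sigma^2 - \sum_{i=1}^n P_{ii}(\sigma_i^2-\bar\sigma^2) = (n-k)\bar\sigma^2 - n\sigma_{P_{ii},\sigma_i^2}.
\end{align}
Plugging back in and using the analogous decomposition $\sum_{i=1}^n B_{ii}\sigma_i^2 = S_B\bar\sigma^2 + \sigma_{nB_{ii},\sigma_i^2}$ yields
\begin{align}
\E[\hat\theta_{\text{HO}}] - \theta = S_B\bar\sigma^2 + \sigma_{nB_{ii},\sigma_i^2} - S_B\bar\sigma^2 + S_B\tfrac{n}{n-k}\sigma_{P_{ii},\sigma_i^2} = \sigma_{nB_{ii},\sigma_i^2} + S_B\tfrac{n}{n-k}\sigma_{P_{ii},\sigma_i^2},
\end{align}
which is the claim.

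There is no real obstacle here; the only subtlety is bookkeeping to confirm that adding and subtracting $\bar\sigma^2$ in two different sums produces exactly the centered covariances $\sigma_{nB_{ii},\sigma_i^2}$ and $\sigma_{P_{ii},\sigma_i^2}$ with the stated coefficients. The identity $\text{trace}(M) = n-k$ is what couples the two decompositions and makes the cross-term $S_B\bar\sigma^2$ cancel.
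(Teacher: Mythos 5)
Your proof is correct and follows essentially the same route as the paper's: both compute $\E[\hat\theta_{\text{HO}}]-\theta = \sum_{i=1}^n B_{ii}\sigma_i^2 - \frac{S_B}{n-k}\sum_{i=1}^n M_{ii}\sigma_i^2$ and then center each sum at $\bar\sigma^2$, using $\text{trace}(M)=n-k$ to cancel the $S_B\bar\sigma^2$ cross-terms. The only cosmetic difference is that the paper centers both sums simultaneously in one display rather than decomposing them separately.
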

	
		\begin{proof}
			Since $\hat{\sigma}^{2} = \frac{1}{n-k} \sum_{i=1}^n (y_i - x_i' \hat \beta)^2 = \frac{1}{n-k} \sum_{i=1}^n \sum_{\ell =1}^n M_{i\ell} \varepsilon_i \varepsilon_\ell$ we get that
			\begin{align}
				\E[\hat \theta_{\text{HO}}] - \theta &= \sum_{i=1}^n B_{ii} \sigma_i^2 - \left(\sum_{i=1}^n B_{ii}\right)\frac{1}{n-k} \sum_{i=1}^n M_{ii} \sigma_i^2  \\
				&=\sum_{i=1}^n B_{ii} (\sigma_i^2 - \bar \sigma^2) - S_B \frac{1}{n-k} \sum_{i=1}^n M_{ii} (\sigma_i^2-\bar \sigma^2) \\
				&=\sigma_{nB_{ii},\sigma_i^2} + S_B \frac{n}{n-k} \sigma_{P_{ii},\sigma_i^2}. \qedhere
			\end{align}
		\end{proof}
	
		\subsubsection*{Comparison to Jackknife Estimators}\label{app:JK}
		
		This subsection compares the leave-out estimator $\hat \theta$ to estimators predicated on jackknife bias corrections. We start by introducing some of the high-level assumptions that are typically used to motivate jackknife estimators. We then consider some variants of Examples 2 and 3 where these high-level conditions fail to hold and establish that the jackknife estimators have first order biases while the leave-out estimator retains consistency. 
		
		\noindent \textbf{\textit{High-level Conditions}}
		Jackknife bias corrections are typically motivated by the high-level assumption that the bias of a plug-in estimator $\hat \theta_\text{PI}$ shrinks with the sample size in a known way and that the bias of $\frac{1}{n} \sum_{i=1}^n \hat \theta_{\text{PI},-i}$ depends on sample size in an identical way, i.e.,
		\begin{align}\label{eq:jk1}
		\E[\hat \theta_{\text{PI}}] = \theta + \frac{\text{D}_1}{n} + \frac{\text{D}_2}{n^2}, 
		\quad 
		\E\left[ \frac{1}{n} \sum_{i=1}^n \hat \theta_{\text{PI},-i}\right] = \theta + \frac{\text{D}_1}{n-1} + \frac{\text{D}_2}{(n-1)^2}
		\quad \text{for some } \text{D}_1,\text{D}_2.
		\end{align}
		Under \eqref{eq:jk1}, the jackknife estimator $\hat \theta_\text{JK} = n\hat \theta_{\text{PI}} - \frac{n-1}{n} \sum_{i=1}^n \hat \theta_{\text{PI},-i}$ has a bias of $-\frac{\text{D}_2}{n(n-1)}$. 
		
		For some long panel settings the bias in $\hat \theta_{\text{PI}}$ is shrinking in the number of time periods $T$ such that
		\begin{align}
		\E[\hat \theta_{\text{PI}}] = \theta + \frac{\dot{\text{D}}_1}{T} + \frac{\dot{\text{D}}_2}{T^2} \quad \text{for some } \dot{\text{D}}_1,\dot{\text{D}}_2.
		\end{align}
		In such settings, it may be that the biases of $\frac{1}{T} \sum_{t=1}^T \hat \theta_{\text{PI},-t}$ and $\frac{1}{2}(\hat \theta_{\text{PI},1} + \hat \theta_{\text{PI},2})$ depend on $T$ in an identical way, i.e.,
		\begin{align}
		\E\left[ \frac{1}{T} \sum_{t=1}^T \hat \theta_{\text{PI},-t} \right] =\theta + \frac{\dot{\text{D}}_1}{T-1} + \frac{\dot{\text{D}}_2}{(T-1)^2}
		\quad \text{and} \quad
		\E\left[ \frac{1}{2}(\hat \theta_{\text{PI},1} + \hat \theta_{\text{PI},2})\right] = \theta + \frac{2\dot{\text{D}}_1}{T} + \frac{4\dot{\text{D}}_2}{T^2}.
		\end{align}
		From here it follows that the panel jackknife estimator $\hat \theta_\text{PJK} = T\hat \theta_{\text{PI}} - \frac{T-1}{T} \sum_{t=1}^T \hat \theta_{\text{PI},-t}$ has a bias of $-\frac{\dot{\text{D}}_2}{T(T-1)}$ and that the split panel jackknife estimator $\hat \theta_\text{SPJK} = 2\hat \theta_{\text{PI}} - \frac{1}{2}(\hat \theta_{\text{PI},1} + \hat \theta_{\text{PI},2})$ has a bias of $-\frac{2\dot{\text{D}}_2}{T^2}$, both of which shrink faster to zero than $\frac{\dot{\text{D}}_1}{T}$ if $T \rightarrow \infty$. Typical sufficient conditions for bias-representations of this kind to hold (to second order) are that (i) $T \rightarrow \infty$, (ii) the design is stationary over time, and (iii) that $\hat \theta_{\text{PI}}$ is asymptotically linear \cite[see, e.g.,][]{hahn2004jackknife,dhaene2015split}. Below we illustrate that jackknife corrections can be inconsistent in Examples 2 and 3 when (i) and/or (ii) do not hold. Finally we note that $\hat \theta_{\text{PI}}$ (a quadratic function) need not be asymptotically linear as is evident from the non-normal asymptotic distribution of $\hat \theta$ derived in \th\ref{thm4} of this paper.
		
		\subsubsection*{Examples of Jackknife Failure}
		
		\begin{customthm}{2}[Special case]
			Consider the model
			\begin{align}
			y_{{g} t} =\alpha_{{g}} + \varepsilon_{{g} t} && ({g}=1,\dots,N, \ t = 1,\dots,T \ge 2),
			\end{align}
			where $\sigma^2_{gt} = \sigma^2$ and suppose the parameter of interest is $\theta = \frac{1}{N} \sum_{g=1}^N \alpha_g^2.$ For $T$ even, we have the following bias calculations:
			\begin{align}
			\E[\hat \theta_{\text{PI}}] &= \theta + \frac{\sigma^2}{T}, &
			\E\left[ \frac{1}{n} \sum_{i=1}^n \hat \theta_{\text{PI},-i}\right] &= \theta + \frac{\sigma^2}{T} + \frac{\sigma^2}{n(T-1)}, \\
			\E\left[ \frac{1}{T} \sum_{t=1}^T \hat \theta_{\text{PI},-t} \right] &=\theta + \frac{\sigma^2}{T-1}, &
			\E\left[ \frac{1}{2}(\hat \theta_{\text{PI},1} + \hat \theta_{\text{PI},2})\right] &= \theta + \frac{2\sigma^2}{T}.
			\end{align}
			The jackknife estimator $\hat \theta_{JK}$ has a first order bias of $-\frac{\sigma^2}{T(T-1)}$, which when $T=2$ is as large as that of $\hat \theta_{\text{PI}}$ but of opposite sign. By contrast, both of the panel jackknife estimators, $\hat \theta_{PJK}$ and the leave-out estimator are exactly unbiased and consistent as $n \rightarrow \infty$ when $T$ is fixed.
		\end{customthm}
		
		This example shows that the jackknife estimator can fail when applied to a setting where the number of regressors is large relative to sample size. Here the number of regressors is $N$ and the sample size is $NT$, yielding a ratio of $1/T$ and we see that $1/T \rightarrow 0$ is necessary for consistency of $\hat \theta_\text{JK}$. While the panel jackknife corrections appear to handle the presence of many regressors, this property disappears in the next example which adds the ``random coefficients'' of Example 3.
		
		\begin{customthm}{3}[Special case]
			Consider the model
			\begin{align}
			y_{{g} t} =\alpha_{{g}} + x_{gt}\delta_g + \varepsilon_{{g} t} && ({g}=1,\dots,N, \ t = 1,\dots,T \ge 3)
			\end{align}
			where $\sigma^2_{gt} = \sigma^2$ and $\theta = \frac{1}{N} \sum_{g=1}^N \delta_g^2$.
			
			An analytically convenient example arises when the regressor design is ``balanced'' across groups as follows:
			\begin{align}
			(x_{g 1},x_{g 2},\dots,x_{g T}) = (x_{1},x_{2},\dots,x_{T}),
			\end{align}
			where $x_1,x_2,x_3$ take distinct values and $\sum_{t=1}^T x_t=0$. The leave-out estimator is unbiased and consistent for any $T \ge 3$, whereas for even $T \ge 4$ we have the following bias calculations:
			\begin{align}
			\E[\hat \theta_{\text{PI}}] &= \theta + \frac{\sigma^2}{\sum_{t=1}^T x_t^2}, \\
			\E\left[ \frac{1}{T} \sum_{t=1}^T \hat \theta_{\text{PI},-t} \right] &=\theta + \frac{\sigma^2}{T} \sum_{t=1}^T \frac{1}{\sum_{s \neq t} ( x_s - \bar x_{-t})^2} , \\
			\E\left[ \frac{1}{2}(\hat \theta_{\text{PI},1} + \hat \theta_{\text{PI},2})\right] &= \theta + \frac{\sigma^2}{2\sum_{t=1}^{T/2}(x_t-\bar x_1)^2} + \frac{\sigma^2}{2\sum_{t=T/2+1}^{T}(x_t-\bar x_2)^2},
			\end{align}
			where $\bar x_{-t} = \frac{1}{T-1} \sum_{s \neq t} x_s$, $\bar x_1 = \frac{2}{T} \sum_{t=1}^{T/2} x_t$, and $\bar x_2 = \frac{2}{T} \sum_{t=T/2+1}^{T} x_t$. 
			
			The calculations above reveal that non-stationarity in either the level or variability of $x_t$ over time can lead to a negative bias in panel jackknife approaches, e.g.,
			\begin{align}
			\E\left[\hat \theta_\text{SPJK}\right] - \theta \le \frac{2\sigma^2}{\sum_{t=1}^T x_t^2} -\frac{\sigma^2}{2\sum_{t=1}^{T/2}x_t^2} - \frac{\sigma^2}{2\sum_{t=T/2+1}^{T}x_t^2}  \le 0
			\end{align}
			where the first inequality is strict if $\bar x_1 \neq \bar x_2$ and the second if $\sum_{t=1}^{T/2}x_t^2 \neq \sum_{t=T/2+1}^T x_t^2$. In fact, the following example
			\begin{align}
			(x_{1},x_{2},\dots,x_{T}) = (-1,2,0,\dots,0,-1)
			\end{align}
			renders the panel jackknife corrections inconsistent for small or large $T$:
			\begin{align}
			\E[\hat \theta_\text{PJK} ] &= \theta - \frac{7/5}{6}\sigma^2 + O\left(\frac{1}{T}\right)
			\quad \text{and} \quad 
			\E[\hat \theta_\text{SPJK} ] = \theta - \frac{8/5}{6}\sigma^2 + O\left(\frac{1}{T}\right).
			\end{align}
			Inconsistency results here from biases of first order that are negative and larger in magnitude than the original bias of $\hat \theta_{\text{PI}}$ (which is $\frac{\sigma^2}{6}$).

		\end{customthm}

		\noindent \textbf{\textit{Computations}}
		For this special case of example 2 we have that $A = \frac{I_N}{N}$ and $S_{xx} = T I_N$ so that $\tilde A = \frac{I_N}{NT}$ and $\text{trace}(\tilde A^2) = \frac{1}{NT^2} = o(1)$ which implies consistency of $\hat \theta$. Similarly we have that the bias of $\tilde \theta$ is
		\begin{align}
		&\frac{1}{n} \sum_{g=1}^N T_g \mathbb{V}[\hat \alpha_g]=\frac{1}{n} \sum_{g=1}^N \sigma^2 = \frac{\sigma^2}{T} \qquad \text{where } \hat \alpha_g = \frac{1}{T_g} \sum_{t=1}^{T_g} y_{gt}.
		\end{align}
		The same types of calculations lead to the other biases reported in the paper.
		
		For this special case of example 3 we have that $A = \begin{bmatrix} 0 & 0 \\ 0 & \frac{I_N}{N}\end{bmatrix}$ and $S_{xx} = \begin{bmatrix} T I_N & 0 \\ 0 & I_N \sum_{t=1}^T x_t^2 \end{bmatrix}$ which implies that $\text{trace}(\tilde A^2) = \frac{1}{N\left(\sum_{t=1}^T x_t^2\right)^2} = o(1)$ and therefore consistency of $\hat \theta$. Similarly we have that the bias of $\tilde \theta$ is
		\begin{align}
		&\frac{1}{n} \sum_{g=1}^N T_g \mathbb{V}[\hat \delta_g] = \frac{\sigma^2}{\sum_{t=1}^T x_t^2} \qquad \text{where } \hat \delta_g = \frac{\sum_{t=1}^{T_g} x_t y_{gt}}{\sum_{t=1}^T x_t^2}.
		\end{align}
		The same types of calculations lead to the other biases reported above. Now for the numerical example $(x_{1},x_{2},\dots,x_{T}) = (-1,2,0,\dots,0,-1)$ we have $\sum_{t=1}^T x_t^2 = 6$, $\sum_{t=T/2+1}^{T}(x_t-\bar x_2)^2  = 1 - \frac{2}{T},$ $\sum_{t=1}^{T/2}(x_t-\bar x_1)^2 = 2\sum_{t=1}^{T/2}x_t^2-T \bar x_1^2 = 5 - \frac{2}{T},$ and
		\begin{align}
		\sum_{s \neq t} ( x_s - \bar x_{-t})^2 &= \begin{cases}
		2 - \frac{4}{T-1} & \text{if } t =2, \\
		5 - \frac{1}{T-1} & \text{if } t \in \{1,T\}, \\
		6 & \text{otherwise},
		\end{cases}
		\end{align}
		Thus
		\begin{align}
		\E[\hat \theta_\text{PJK} ] - \theta &= \frac{T\sigma^2}{\sum_{t=1}^T x_t^2} - \sigma^2\frac{(T-1)}{T} \sum_{t=1}^T \frac{1}{\sum_{s \neq t} ( x_s - \bar x_{-t})^2} \\
		&=\sigma^2\frac{T}{6} - \sigma^2\frac{T-1}{T}\left(\frac{2}{5 - \frac{1}{T-1}} + \frac{1}{2 - \frac{4}{T-1}} + \frac{T-3}{6}\right)  \\
		&=\sigma^2\left(\frac{2}{3} - \frac{4}{6T} - \frac{T-1}{T}\frac{2}{5 - \frac{1}{T-1}} - \frac{T-1}{T}\frac{1}{2 - \frac{4}{T-1}} \right) 
		=  - \frac{7}{30}\sigma^2 + O\left(\frac{1}{T}\right) \\
		\quad \text{and} \quad 
		\E[\hat \theta_\text{SPJK} ] - \theta &= \frac{2\sigma^2}{\sum_{t=1}^T x_t^2} - \frac{\sigma^2}{2\sum_{t=1}^{T/2}(x_t-\bar x_1)^2} + \frac{\sigma^2}{2\sum_{t=T/2+1}^{T}(x_t-\bar x_2)^2} \\
		&= \sigma^2\left( \frac{1}{3} - \frac{1}{10 - \frac{4}{T}} - \frac{1}{2 - \frac{4}{T}} \right) =  - \frac{8}{30}\sigma^2 + O\left(\frac{1}{T}\right).
		\end{align}

	\subsubsection{Finite Sample Properties}
	
	Here we provide a restatement and proof of \th\ref{lem:unbiased,lem:fin} together with a characterization of the finite sample distribution of $\hat \theta$ which was excluded from the main text.
	
		\begin{lemma}\th\label{lem:finS}
			Recall that $\theta^* = \hat \beta'A\hat \beta - \sum_{i=1}^n B_{ii}\sigma_i^2$.
			\begin{enumerate}
				\item If $\max_i P_{ii} <1$, then $\E[\hat \theta] = \theta$.
				
				\item Unbiased estimators of $\theta = \beta'A\beta$ exist for all $A$ if and only if $\max_{i} P_{ii} < 1$.
				
				\item  If $\varepsilon_{i} \sim \mathcal{N}(0, \sigma_i^2)$, then $\theta^* = \sum_{\ell = 1}^{r} \lambda_\ell\left(\hat b_\ell^2-\mathbb{V}[\hat b_\ell]\right)$ and $\hat b \sim \mathcal{N}\left( b, \mathbb{V}[\hat b] \right)$.
				
				\item If $\max_i P_{ii} < 1$ and $\varepsilon_{i} \sim \mathcal{N}(0, \sigma_i^2)$, then $\hat \theta = \sum_{\ell=1}^{r_C} \lambda_\ell\left( C \right)\left(\hat y_\ell^2-V_{\ell\ell}\right)$ where $\hat y \sim \mathcal{N}\left( \mu , V \right)$, $\mu = Q'_{C} X\beta$, $V = Q_C' \varSigma Q_C$, $C = (C_{i\ell})_{i,\ell}$, $\varSigma = \text{diag}(\sigma_1^2,\dots,\sigma_n^2)$, and $C = Q_C D_C Q_C'$ is a spectral decomposition of $C$ such that $D_C = \text{diag}(\lambda_{1}(C),\dots,\lambda_{r_C}(C)$ and $r_C$ is the rank of $C$.
			\end{enumerate}
	\end{lemma}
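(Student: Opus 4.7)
I would attack the four parts in order, leveraging the two algebraic identities that do most of the work: the leave-out covariance representation $\hat \theta = \sum_{i=1}^n y_i \tilde x_i'\hat \beta_{-i}$ together with $\sum_i x_i \tilde x_i' = A$, and the U-statistic representation $\hat \theta = \sum_{i=1}^n \sum_{\ell \neq i} C_{i\ell} y_i y_\ell$ with $C_{i\ell}$ as defined after \eqref{eq:Ustat}.

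For part 1, I would use the leave-out representation together with independence of $y_i$ and $\hat \beta_{-i}$ (valid whenever $\hat \beta_{-i}$ exists, i.e., $P_{ii}<1$). Then $\E[y_i \tilde x_i' \hat \beta_{-i}] = (x_i'\beta)(\tilde x_i'\beta)$, and summing gives $\beta' \bigl(\sum_i x_i \tilde x_i'\bigr)\beta = \beta' S_{xx} S_{xx}\inverse A \beta = \theta$. For the ``if'' direction of part 2, part 1 supplies the estimator. For the ``only if'' direction, I would restrict attention to the natural class of quadratic unbiased estimators $y'Ly$; matching coefficients in $\beta$ and in $\sigma_i^2$ shows unbiasedness for all variance profiles requires $X'LX = A$ and $L_{ii}=0$ for all $i$. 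If $P_{ii}=1$ then $e_i$ lies in the column space of $X$, so $Xc = e_i$ for some $c$; taking $A = cc'$ forces $0 = L_{ii} = e_i' L e_i = c'X'LXc = c'Ac = (c'c)^2 > 0$, a contradiction.

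For part 3, I would use the spectral decomposition $\tilde A = QDQ'$ to write $\hat \beta'A\hat \beta = \hat b'D\hat b = \sum_\ell \lambda_\ell \hat b_\ell^2$ where $\hat b = Q'S_{xx}^{1/2}\hat\beta$ is Gaussian with mean $b$ under $\varepsilon_i \sim \mathcal{N}(0,\sigma_i^2)$. Then $\sum_i B_{ii}\sigma_i^2 = \mathrm{trace}(A \mathbb{V}[\hat \beta]) = \mathrm{trace}(D\, Q'S_{xx}^{1/2}\mathbb{V}[\hat \beta]S_{xx}^{1/2}Q) = \sum_\ell \lambda_\ell \mathbb{V}[\hat b_\ell]$, which yields the claimed representation of $\theta^*$ by subtraction.

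Part 4 is where the new work lies. Writing $\hat \theta = y'C y$ with $C_{ii}=0$ by construction, I would first verify the key identity $X'CX = A$. The term $\sum_{i,\ell} B_{i\ell} x_i x_\ell' = X'\bigl(X S_{xx}\inverse A S_{xx}\inverse X'\bigr)X$ simplifies to $A$; the remaining diagonal and off-diagonal pieces of $C$ can be collected using the annihilation identity $\sum_\ell M_{i\ell}x_\ell' = 0$ (since $MX=0$), which converts $\sum_{\ell \neq i} M_{i\ell}x_\ell'$ into $-M_{ii}x_i'$ and makes the $M_{ii}\inverse B_{ii}$ corrections cancel the diagonal contributions, leaving $X'CX = A$. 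Taking a spectral decomposition $C = Q_C D_C Q_C'$ and setting $\hat y = Q_C'y$, normality of $\varepsilon$ yields $\hat y \sim \mathcal{N}(\mu, V)$ with $\mu = Q_C'X\beta$ and $V = Q_C'\varSigma Q_C$. Finally $\hat \theta = \hat y'D_C \hat y = \sum_\ell \lambda_\ell(C)\hat y_\ell^2$, and I would finish by observing $\sum_\ell \lambda_\ell(C) V_{\ell\ell} = \mathrm{trace}(C\varSigma) = \sum_i C_{ii}\sigma_i^2 = 0$, which allows the free subtraction delivering the stated expression. The main obstacle is the $X'CX = A$ verification: the definition of $C_{i\ell}$ packs three pieces together, and the bookkeeping around the diagonal correction through $MX=0$ must be handled carefully so that the pieces coming from $B$, from the $M_{ii}\inverse B_{ii}x_i$ terms, and from the $M_{\ell\ell}\inverse B_{\ell\ell}x_\ell$ terms align.
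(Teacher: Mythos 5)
Your proposal is correct and follows essentially the same route as the paper's proof: uncorrelated mean-zero errors for part 1, a quadratic-estimator matching argument with a witness matrix for part 2 (you use $A=cc'$ with $Xc=e_i$ where the paper uses $A=S_{xx}$; both are valid and rest on the same observation that $P_{ii}=1$ forces a nonzero diagonal entry), and identical spectral-decomposition arguments plus $\mathrm{trace}(C\varSigma)=0$ for parts 3 and 4. The only substantive remark is that the $X'CX=A$ verification you place in part 4 is not actually needed there — it is the content of part 1, which the paper establishes via exactly the $MX=0$ bookkeeping you describe — since part 4 only requires the representation $\hat\theta=y'Cy$ with $C_{ii}=0$.
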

	
	\begin{proof}
		First note that $\hat \beta' A \hat \beta =\sum_{i=1}^n \sum_{\ell=1}^n B_{i\ell}y_{i} y_\ell$ and $\hat \sigma_i^2 = y_i(y_i - x_i' \hat \beta_{-i}) = y_i M_{ii}\inverse \sum_{\ell=1}^n M_{i\ell} y_\ell$, so
		\begin{align}
			\hat \theta 
			&= \sum_{i=1}^n \sum_{\ell=1}^n B_{i\ell}y_{i} y_\ell - B_{ii} M_{ii}\inverse M_{i\ell} y_i y_\ell \\
			&= \sum_{i=1}^n \sum_{\ell=1}^n \left(B_{i\ell}- 2 \inverse M_{i\ell}\left( B_{ii} M_{ii}\inverse + B_{\ell\ell} M_{\ell\ell}\inverse \right) \right)y_i y_\ell 
			= \sum_{i=1}^n \sum_{\ell \neq i} C_{i\ell} y_{i} y_\ell.
		\end{align}
		The errors are mean zero and uncorrelated across observations, so
		\begin{align}
			\E[\hat \theta] = \sum_{i=1}^n \sum_{\ell \neq i} C_{i\ell} x_{i}'\beta x_\ell'\beta = \sum_{i=1}^n \sum_{\ell=1}^n B_{i\ell}x_{i}'\beta x_\ell'\beta - B_{ii} M_{ii}\inverse M_{i\ell} x_{i}'\beta x_\ell'\beta = \theta,
		\end{align}
		since $\sum_{i=1}^n \sum_{\ell=1}^n B_{i\ell}x_{i} x_\ell'=A$ and $\sum_{\ell=1}^n M_{i\ell} x_\ell =0$. This shows the first claim of the lemma.
		
		It suffices to show that no unbiased estimator of $\beta' S_{xx} \beta$ exist when $\max_i P_{ii}=1$. Any potential unbiased estimator must have the representation $y' D y + U$ where $\E[U]=0$ and $D = (D_{i\ell})_{i,\ell}$ satisfies (i) $D_{ii}=0$ for all $i$ and (ii) $X'DX=S_{xx}$ for $X = (x_1,\dots,x_n)'$. (ii) implies that $D$ must be $D = I + P\tilde D M + M\tilde DP + M\tilde DM$ for some $\tilde D$ where $P=(P_{i\ell})_{i,\ell}$ and $M=(M_{i\ell})_{i,\ell}$. If the exist a $i$ with $P_{ii}=1$, then $\sum_{\ell =1}^n P_{i\ell}^2 = P_{ii}$ yields $M_{i\ell}=0$ for all $\ell$ which implies that $D_{ii}$ must equal $1$ to satisfy (ii). However, this makes it impossible to satisfy (i). This shows the second claim.
		
		Recall the spectral decomposition $\tilde A=Q D Q'$ and definition of $\hat b = Q' S_{xx}^{1/2} \hat \beta$ which satisfies that $\hat b \sim \mathcal{N}(b,\mathbb{V}[\hat b])$ when $\varepsilon_{i} \sim \mathcal{N}(0, \sigma_i^2)$.
		We have that $\theta^* =  \sum_{\ell =1}^r \lambda_\ell \left( \hat b_\ell^2 - \mathbb{V}[\hat b_\ell]\right)$ since
		\begin{align}
			\hat \beta' A \hat \beta &= \hat \beta' S_{xx}^{1/2} \tilde A S_{xx}^{1/2} \hat \beta = \hat b' D \hat b = \sum_{\ell =1}^r \lambda_\ell \hat b_\ell^2,
			\shortintertext{and}
			\sum_{i=1}^n B_{ii} \sigma_i^2 &= \text{trace}(B \varSigma) = \text{trace}(A\mathbb{V}[\hat \beta]) = \text{trace}(D\mathbb{V}[\hat b]) = \sum_{\ell =1}^r \lambda_\ell\mathbb{V}[\hat b_\ell].
		\end{align}
		where $B = (B_{i\ell})_{i,\ell}$. This shows the third claim.
		
		The matrix $C$ is is well-defined as $\max_i P_{ii}<1$. Define $\hat y = Q_{C}' (y_1,\dots,y_n)'$ which satisfies that $\hat y \sim \mathcal{N}(\mu,V)$ when $\varepsilon_{i} \sim \mathcal{N}(0, \sigma_i^2)$. Furthermore,
		\begin{align}
			\hat \theta &=  y'  C  y = \hat y' D_C \hat y = \sum_{\ell=1}^{r_C} \lambda_{\ell}(C) \hat y_\ell^2,
		\end{align}
		and $C_{ii}=0$ for all $i$, so that $\sum_{\ell} \lambda_\ell\left( C \right)V_{\ell\ell}= \text{trace}(C\varSigma) =0$. This shows the last claim.
	\end{proof}

	\subsubsection{Consistency}
	
	The next result provides a restatement and proof of \th\ref{lem:cons}.
	
		\begin{lemma}\th\label{lem:consS}
			If \th\ref{ass:reg} and one of the following conditions hold, then $\hat{\theta} - \theta \overset{p}{\rightarrow} 0$.
			\begin{enumerate}
				\item[(i)] $A$ is positive semi-definite, $\theta = \beta'A\beta= O(1)$, and $\text{trace}(\tilde A^2) = \sum_{\ell =1}^r \lambda_{\ell}^2= o(1)$.
				
				\item[(ii)] $A=\frac{1}{2}( A_1'A_2 + A_2'A_1)$ where $\theta_1 = \beta'A_1'A_1\beta,$ $\theta_2 = \beta'A_2'A_2\beta$ satisfy (i).
			\end{enumerate}
			
		\end{lemma}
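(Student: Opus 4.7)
The plan is to invoke the unbiasedness result of \th\ref{lem:unbiased} so that Chebyshev's inequality reduces the task to showing $\mathbb{V}[\hat \theta] \to 0$. Starting from the U-statistic representation $\hat \theta = \sum_{i}\sum_{\ell \neq i} C_{i\ell} y_i y_\ell$ established in the proof of unbiasedness, with $C_{i\ell} = B_{i\ell} - \tfrac{1}{2} M_{i\ell}(M_{ii}^{-1} B_{ii} + M_{\ell\ell}^{-1} B_{\ell\ell})$, I would expand using independence and mean-zero of the errors to obtain
\begin{align}
\mathbb{V}[\hat\theta] \;=\; 4\sum_{i} v_i^2 \sigma_i^2 \;+\; 2\sum_{i}\sum_{\ell \neq i} C_{i\ell}^2 \sigma_i^2 \sigma_\ell^2, \qquad v_i := \sum_{\ell \neq i} C_{i\ell} x_\ell'\beta.
\end{align}
Assumption 1(i) provides $\sigma_i^2 = O(1)$, so the task is to control the pure sums.

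Under condition (i), I would first derive the algebraic identity $v_i = \tilde x_i'\beta - \tfrac{1}{2}\check x_i'\beta$. Two ingredients suffice: using $\sum_\ell x_\ell x_\ell' = S_{xx}$ gives $\sum_\ell B_{i\ell} x_\ell'\beta = \tilde x_i'\beta$, while $MX=0$ yields $\sum_{\ell\neq i} M_{i\ell} x_\ell'\beta = -M_{ii} x_i'\beta$; plugging these into the definition of $C_{i\ell}$ and recognising $\check x_i = \sum_\ell M_{i\ell} M_{\ell\ell}^{-1} B_{\ell\ell} x_\ell$ yields the identity. Next I would bound $\sum_i (\tilde x_i'\beta)^2 = \beta' A S_{xx}^{-1} A \beta \le \lambda_{\max}(\tilde A)\,\theta \le \sqrt{\mathrm{trace}(\tilde A^2)}\,\theta = o(1)\cdot O(1)$, where the middle inequality uses $\tilde A^2 \preceq \lambda_{\max}(\tilde A)\tilde A$ from positive semi-definiteness of $\tilde A$. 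For the other piece, writing $\check x_i'\beta = (Mh)_i$ with $h_\ell = M_{\ell\ell}^{-1} B_{\ell\ell} x_\ell'\beta$, idempotence gives $\sum_i (\check x_i'\beta)^2 = h'Mh \le h'h \le (1-c)^{-2}\max_\ell (x_\ell'\beta)^2 \sum_\ell B_{\ell\ell}^2 = O(1)\cdot o(1)$, using $\sum_\ell B_{\ell\ell}^2 \le \sum_{i,\ell}B_{i\ell}^2 = \mathrm{trace}(\tilde A^2)$.

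For the quadratic piece I would use $C_{i\ell}^2 \lesssim B_{i\ell}^2 + M_{i\ell}^2(B_{ii}^2 + B_{\ell\ell}^2)$ (absorbing $M_{ii}^{-1} \le (1-c)^{-1}$ into the constant), then $\sum_{i,\ell} B_{i\ell}^2 = \mathrm{trace}(\tilde A^2) = o(1)$ and $\sum_{i,\ell} M_{i\ell}^2 B_{ii}^2 = \sum_i B_{ii}^2 M_{ii} \le \sum_i B_{ii}^2 \le \mathrm{trace}(\tilde A^2) = o(1)$ using once more the idempotence $\sum_\ell M_{i\ell}^2 = M_{ii}$. Combined with the uniform bound on $\sigma_i^2 \sigma_\ell^2$, this yields $\mathbb{V}[\hat\theta] = o(1)$.

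For condition (ii) I would reduce to condition (i) by polarization: set $A^\pm = \tfrac{1}{2}(A_1 \pm A_2)'(A_1 \pm A_2)$, so that $A = \tfrac{1}{2}(A^+ - A^-)$ and $\hat\theta = \tfrac{1}{2}(\hat\theta^+ - \hat\theta^-)$ where $\hat\theta^\pm$ are the leave-out estimators for the positive semi-definite quadratic forms $\theta^\pm = \beta' A^\pm \beta \le \theta_1 + \theta_2 = O(1)$. The remaining task is $\mathrm{trace}((\tilde A^\pm)^2) = o(1)$: using $\tilde A^\pm = \tfrac{1}{2}(\tilde A_1 + \tilde A_2) \pm \tilde A$ with $\tilde A_j := S_{xx}^{-1/2}A_j'A_j S_{xx}^{-1/2}$, it suffices to bound $\mathrm{trace}(\tilde A_j^2) = o(1)$ (given) and $\mathrm{trace}(\tilde A^2) \le \mathrm{trace}(\tilde A_1 \tilde A_2) \le \sqrt{\mathrm{trace}(\tilde A_1^2)\mathrm{trace}(\tilde A_2^2)} = o(1)$ by the Cauchy--Schwarz trace inequality for positive semi-definite matrices. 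Applying part (i) to $\hat\theta^\pm$ delivers the conclusion.

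The main obstacles I anticipate are twofold: cleanly establishing the identity $v_i = \tilde x_i'\beta - \tfrac{1}{2}\check x_i'\beta$, which requires careful tracking of the diagonal terms that collapse through the two cancellations above, and executing the polarization for (ii) in a way that all three ``cross'' traces are handled by a single Cauchy--Schwarz bound rather than a more delicate direct computation. The remaining steps are routine bookkeeping with matrix norms.
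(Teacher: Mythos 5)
Your part (i) argument is essentially the paper's. You use the exact variance formula for the U-statistic representation $\hat\theta=\sum_i\sum_{\ell\neq i}C_{i\ell}y_iy_\ell$, the bound $\sum_i(\tilde x_i'\beta)^2=\beta'AS_{xx}^{-1}A\beta\le\lambda_{\max}(\tilde A)\,\theta\le\text{trace}(\tilde A^2)^{1/2}\theta$ for the linear-in-$\varepsilon$ part, and idempotence of $M$ to reduce everything else to $\text{trace}(\tilde A^2)$ and $\sum_i B_{ii}^2$ — all of which mirrors the paper's proof; the only difference is that the paper splits $\hat\theta-\theta$ into three separately bounded mean-zero terms rather than passing through the (correct) identity $v_i=\tilde x_i'\beta-\tfrac12\check x_i'\beta$. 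For part (ii) you genuinely diverge: the paper keeps the non-definite $A$ and reruns the part (i) variance bounds after replacing the psd step with $\beta'AS_{xx}^{-1}A\beta\le\tfrac12(\theta_1\lambda_{\max}(\tilde A_2)+\theta_2\lambda_{\max}(\tilde A_1))$, whereas you polarize $A=\tfrac12(A^+-A^-)$ and invoke part (i) twice, using linearity of $\hat\theta$ in $A$. Your route is more modular and avoids redoing any variance algebra, but it buys nothing for free: verifying $\text{trace}((\tilde A^\pm)^2)=o(1)$ still forces you to control the cross term $\text{trace}(\tilde A^2)$, i.e.\ you need the same two inequalities the paper uses.

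One step is wrong as written: the intermediate inequality $\text{trace}(\tilde A^2)\le\text{trace}(\tilde A_1\tilde A_2)$ does not hold in general. Writing $G_j=A_jS_{xx}^{-1/2}$, you have $\text{trace}(\tilde A_1\tilde A_2)=\text{trace}(G_1'G_1\,G_2'G_2)$ (the $k\times k$ Gram matrices), whereas expanding $\tilde A=\tfrac12(G_1'G_2+G_2'G_1)$ gives $\text{trace}(\tilde A^2)=\tfrac12\text{trace}((G_1'G_2)^2)+\tfrac12\text{trace}(G_1G_1'\,G_2G_2')$, which involves the $n\times n$ Gram matrices; the two can be ordered either way (e.g.\ for $1\times 2$ rows $G_1=(1,0)$, $G_2=(0,1)$ one gets $\text{trace}(\tilde A^2)=\tfrac12$ but $\text{trace}(\tilde A_1\tilde A_2)=0$). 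The endpoint you need, $\text{trace}(\tilde A^2)\le\text{trace}(\tilde A_1^2)^{1/2}\text{trace}(\tilde A_2^2)^{1/2}$, is nevertheless true and is exactly the inequality the paper asserts: bound $\abs{\text{trace}((G_1'G_2)^2)}\le\text{trace}(G_1'G_2G_2'G_1)=\text{trace}(G_1G_1'\,G_2G_2')$, then apply Cauchy--Schwarz in the Frobenius inner product to the psd matrices $G_1G_1'$ and $G_2G_2'$, using $\text{trace}((G_jG_j')^2)=\text{trace}(\tilde A_j^2)$. With that one substitution your argument goes through.
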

	
		\begin{proof}

			Suppose that $A$ is positive semi-definite. The difference between $\hat \theta$ and $\theta$ is
			\begin{align}
			\hat \theta -\theta &= 2 \sum_{i=1}^n \sum_{\ell =1}^n B_{i\ell} x_{\ell}'\beta \varepsilon_i + \sum_{i=1}^n \sum_{\ell \neq i} B_{i\ell} \varepsilon_{i} \varepsilon_\ell + \sum_{i=1}^n B_{ii}(\varepsilon_i^2 - \hat \sigma_i^2),
			\end{align}
			and each term has mean zero so we show that their variances are small in large samples. The variance of the first term is
			\begin{align}
			4 \sum_{i=1}^n \left(\sum_{\ell =1}^n B_{i\ell} x_{\ell}'\beta\right)^2 \sigma_i^2 
			&\le 4\max_i \sigma_i^2 \beta' X' B^2 X \beta = 4\max_i \sigma_i^2 \beta' A S_{xx}\inverse A \beta 
			\le 4\max_i \sigma_i^2 \theta \lambda_1= o(1)
			\end{align}
			where $B = (B_{i\ell})_{i,\ell}$, the last inequality follows from positive semi-definiteness of $A$, and the last equality follows from $\theta = O(1)$ and $\lambda_1 \le \text{trace}(\tilde A^2)^{1/2} = o(1)$. The variance of the second term is 
			\begin{align}
			2\sum_{i=1}^n \sum_{\ell \neq i} B_{i\ell}^2 \sigma^2_{i} \sigma^2_\ell \le 2\max_i \sigma_i^4 \sum_{i=1}^n \sum_{\ell =1}^n B_{i\ell}^2 = 2\max_i \sigma_i^4 \text{trace}(\tilde A^2) = o(1).
			\end{align}
			Finally, the variance of the third term is
			\begin{align}
			& \sum_{i=1}^n \left(\sum_{\ell = 1}^n M_{ll}\inverse B_{\ell \ell}  M_{i\ell} x_\ell'\beta \right)^2 \sigma_i^2 + 2\sum_{i=1}^n \sum_{\ell \neq i} M_{ii}^{-2}B_{ii}^2 M_{i\ell}^2 \sigma^2_{i} \sigma^2_\ell \\
			\le & \frac{1}{c^2} \max_i \sigma_i^2 \max_i (x_i'\beta)^2 \sum_{i=1}^n B_{ii}^2+ \frac{2}{c}\max_i \sigma_i^4 \sum_{i=1}^n B_{ii}^2 = o(1)
			\end{align}
			where $\min_i M_{ii} \ge c >0$ and $\sum_{i=1}^n B_{ii}^2 \le \text{trace}(\tilde A^2) = o(1)$. This shows the first claim of the lemma.
			
			When $A$ is non-definite, we write $A = \frac{1}{2}\left( A_1'A_2 + A_2'A_1\right)$ and note that
			\begin{align}
			\beta' A S_{xx}\inverse A \beta  \le \frac{1}{2}\left( \varTheta_1 \lambda_{\max}(\tilde A_2) + \varTheta_2 \lambda_{\max}(\tilde A_1)  \right) \quad \text{and} \quad  
			\text{trace}(\tilde A^2) \le \text{trace}(\tilde A_1^2)^{1/2}\text{trace}(\tilde A_2^2)^{1/2}
			\end{align}
			where $\tilde A_\ell = S_{xx}^{-1/2} A_k'A_k S_{xx}^{-1/2}$ for $\ell=1,2$ and $\lambda_{\max}(\tilde A_2)$ is the largest eigenvalue of $\tilde A_2$. Thus consistency of $\hat \theta$ follows from $\varTheta_1 = O(1)$, $\varTheta_2 = O(1)$, $\text{trace}(\tilde A_1^2)=o(1)$, and $\text{trace}(\tilde A_2^2) = o(1)$.
		\end{proof}
	
	The next result provides a restatement and proof of \th\ref{lem:JLA}.
	
	\begin{lemma}\th\label{lem:JLAS}
			If \th\ref{ass:reg}, $n/p^4 = o(1)$, $\mathbb{V}[\hat  \theta]\inverse = O(n)$, and one of the following conditions hold, then $\mathbb{V}[\hat  \theta]^{-1/2}{(\hat \theta_{JLA} - \hat \theta - \mathrm{B}_p)}{} = o_p(1)$ where $\abs{\mathrm{B}_p} \le \frac{1}{p} \sum_{i=1}^n P_{ii}^2 \abs{B_{ii}} \sigma_i^2$.
		
		\begin{enumerate}
			\item[(i)] $A$ is positive semi-definite and $\E[\hat \beta'A\hat \beta] - \theta = O(1)$.
			
			\item[(ii)] $A=\frac{1}{2}( A_1'A_2 + A_2'A_1)$ where $\theta_1 = \beta'A_1'A_1\beta,$ $\theta_2 = \beta'A_2'A_2\beta$ satisfy (i) and $\frac{\mathbb{V}[\hat  \theta_1]\mathbb{V}[\hat  \theta_2]}{n\mathbb{V}[\hat  \theta]^2} = O(1)$.
		\end{enumerate}
	\end{lemma}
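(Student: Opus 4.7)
The plan is to decompose $\hat\theta_{JLA}-\hat\theta$ into a Johnson--Lindenstrauss fluctuation driven by $R_B$ and a piece driven by $R_P$ through $\hat P_{ii}$, identify the bias $B_p$ as what the correction factor $1-p\inverse(3\hat P_{ii}^3+\hat P_{ii}^2)/(1-\hat P_{ii})$ fails to remove in the Taylor expansion of $(1-\hat P_{ii})\inverse$ around $P_{ii}$, and show that the remaining stochastic residuals are of smaller order than $\mathbb{V}[\hat\theta]^{1/2}$. Throughout, I condition on the data $\{y_i,x_i\}_{i=1}^n$ to control the Rademacher randomness, exploit the mutual independence of $R_B$ and $R_P$, and invoke Assumption 1 -- in particular $\max_i P_{ii}\le c<1$ and $\max_i(\mathbb{E}[\varepsilon_i^4]+\sigma_i^{-2})=O(1)$ -- to bound moments of $\hat\sigma_i^2$ and $\hat\sigma^2_{i,JLA}$.

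First I write
\[
\hat\theta_{JLA}-\hat\theta \;=\; \underbrace{\sum_{i=1}^n (B_{ii}-\hat B_{ii})\,\hat\sigma^2_{i,JLA}}_{T_1} \;+\; \underbrace{\sum_{i=1}^n B_{ii}\,(\hat\sigma_i^2-\hat\sigma^2_{i,JLA})}_{T_2}.
\]
For $T_1$, set $u_{i,s}:=A_sS_{xx}\inverse x_i$, so $\hat B_{ii}=p\inverse u_{i,1}'R_B'R_B u_{i,2}$ and $T_1=-p\inverse\,\mathrm{trace}((R_B'R_B-pI_n)M)$ with $M:=\sum_i\hat\sigma^2_{i,JLA}\,u_{i,2}u_{i,1}'$. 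Because the rows of $R_B$ are independent Rademacher vectors, a direct fourth-moment computation for Rademacher quadratic forms yields $\mathbb{V}[T_1\mid R_P,\text{data}]\le Cp\inverse\|M\|_F^2 = Cp\inverse\sum_{i,j}\hat\sigma^2_{i,JLA}\hat\sigma^2_{j,JLA}B_{ij,1}B_{ij,2}$ where $B_{ij,s}=x_i'S_{xx}\inverse A_s'A_s S_{xx}\inverse x_j$. Under Assumption 1 and using $\min_i M_{ii}\ge 1-c$ to relate $C_{i\ell}^2$ to $B_{i\ell}^2$ for $i\ne\ell$, the expectation of $\|M\|_F^2$ is of the same order as the dominant cross-term $\sum_{i\ne\ell}C_{i\ell}^2\sigma_i^2\sigma_\ell^2$ in $\mathbb{V}[\hat\theta]$, so $\mathbb{V}[T_1]=O(\mathbb{V}[\hat\theta]/p)=o(\mathbb{V}[\hat\theta])$ since $p\to\infty$ under $n/p^4=o(1)$.

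For $T_2$, expand $g(\hat P_{ii}):=(1-\hat P_{ii})\inverse - p\inverse(3\hat P_{ii}^3+\hat P_{ii}^2)(1-\hat P_{ii})^{-2}$ in $\Delta_i:=\hat P_{ii}-P_{ii}$ around $P_{ii}$. Since $p\hat P_{ii}$ is a sum of $p$ i.i.d.\ Rademacher quadratic forms with mean $P_{ii}$ and per-summand variance bounded by $2P_{ii}^2$, standard moment computations give $\mathbb{E}[\Delta_i^{2k}\mid\text{data}]=O(p^{-k})$ and $\mathbb{E}[\Delta_i^{2k+1}\mid\text{data}]=O(p^{-k-1})$. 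Collecting terms in the Taylor expansion shows that the correction factor annihilates the $O(p\inverse)$ contribution $\mathbb{E}[\Delta_i^2]/(1-P_{ii})^3$ only up to a residual of size $P_{ii}^2/(p(1-P_{ii}))$ that is not sensitive to the off-diagonal entries of $P$, leaving
\[
\mathbb{E}[\hat\sigma_i^2-\hat\sigma^2_{i,JLA}\mid\text{data}] \;=\; \frac{P_{ii}^2}{p}\,\hat\sigma_i^2 + r_{i,n}, \qquad \max_i|r_{i,n}|=O_p(p^{-2}).
\]
Summing against $B_{ii}$ and passing to full expectations identifies the deterministic bias $B_p := p\inverse\sum_i B_{ii}P_{ii}^2\sigma_i^2$, which satisfies $|B_p|\le p\inverse\sum_i P_{ii}^2|B_{ii}|\sigma_i^2$. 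The residual $T_2-B_p$ splits into (a) a data-driven fluctuation $p\inverse\sum_i B_{ii}P_{ii}^2(\hat\sigma_i^2-\sigma_i^2)$, whose variance is bounded using Assumption 1 and the structure of $\sum_i B_{ii}^2$; (b) a conditional-mean-zero JL fluctuation from $\Delta_i$ and $\Delta_i^2-\mathbb{E}[\Delta_i^2\mid\text{data}]$, whose variance is handled by the same argument as in $T_1$; and (c) a uniform Taylor remainder of order $O_p(p^{-2})$, contributing $O_p(\sqrt n\,p^{-2})=o_p(1)$ to $\mathbb{V}[\hat\theta]^{-1/2}(T_2-B_p)$ precisely under $n/p^4=o(1)$ combined with $\mathbb{V}[\hat\theta]\inverse=O(n)$.

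The main obstacle is the delicate bookkeeping in this Taylor expansion: because the correction factor cancels only the leading cumulant contribution, every higher-order term must be pinned down at order $p^{-2}$ \emph{uniformly in $i$}, and the binding rate $n/p^4=o(1)$ arises from matching this $p^{-2}$ residual against $\sqrt{\mathbb{V}[\hat\theta]}\ge c/\sqrt n$. Case (ii) then reduces to case (i) by writing $A=\tfrac12(A_1'A_2+A_2'A_1)$ and analyzing the three variance components defined by $A_1'A_1$, $A_2'A_2$, and the cross product separately; the cross piece is controlled by Cauchy--Schwarz, with the additional hypothesis $\mathbb{V}[\hat\theta_1]\mathbb{V}[\hat\theta_2]/(n\mathbb{V}[\hat\theta]^2)=O(1)$ ensuring that both the bias bound and the residual rate for the cross piece aggregate in terms of $\theta_1$ and $\theta_2$ without deteriorating relative to $\mathbb{V}[\hat\theta]^{1/2}$.
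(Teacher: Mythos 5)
Your overall architecture matches the paper's: the same split of $\hat\theta_{JLA}-\hat\theta$ into an $R_B$-driven term (conditionally mean zero, variance $O(\mathbb{V}[\hat\theta]/p)$) and an $R_P$-driven term handled by Taylor expansion in $\hat P_{ii}-P_{ii}$, the same location where $n/p^4=o(1)$ binds (the uniform $O(p^{-2})$ remainder against $\mathbb{V}[\hat\theta]^{1/2}\ge c n^{-1/2}$), and the same reduction of case (ii) to case (i) via Cauchy--Schwarz. However, there is a genuine gap in your identification of the bias. You assert that the residual left after the correction factor is $\frac{P_{ii}^2}{p}\hat\sigma_i^2$ and is ``not sensitive to the off-diagonal entries of $P$.'' This is false for Rademacher projections. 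Writing $\hat a_i=(\hat P_{ii}-P_{ii})/(1-P_{ii})$, the exact variance is
\begin{align}
\mathbb{V}[\hat a_i] = \frac{2}{p}\,\frac{P_{ii}^2-\sum_{\ell=1}^n P_{i\ell}^4}{(1-P_{ii})^2},
\end{align}
where the term $-\sum_{\ell}P_{i\ell}^4$ appears precisely because the fourth moment of a Rademacher variable is $1$ rather than $3$. After subtracting the designed correction $\frac{1}{p}\frac{3P_{ii}^3+P_{ii}^2}{1-P_{ii}}$, the residual bias is $\mathrm{B}_p=\frac{1}{p}\sum_i B_{ii}\sigma_i^2\,\frac{2\sum_{\ell\neq i}P_{i\ell}^4-P_{ii}^2(1-P_{ii})^2}{(1-P_{ii})^2}$, which depends on the off-diagonal $P_{i\ell}$ and can even have the opposite sign from your expression (your formula is recovered only at the extreme $\sum_{\ell\neq i}P_{i\ell}^4=P_{ii}^2(1-P_{ii})^2$; when the off-diagonal fourth powers vanish the true per-observation bias is $-P_{ii}^2\hat\sigma_i^2/p$, not $+P_{ii}^2\hat\sigma_i^2/p$).

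This discrepancy matters for the conclusion, not just for bookkeeping. The gap between your $B_p$ and the true one is itself of order $\frac{1}{p}\sum_i|B_{ii}|P_{ii}^2\sigma_i^2$, i.e., $O(1/p)$ under condition (i). The maintained rate $n/p^4=o(1)$ only forces $p\gg n^{1/4}$, so $\sqrt{n}/p$ need not vanish (take $p=n^{1/4}\log n$), and hence an $O(1/p)$ error in the centering constant is not $o_p(\mathbb{V}[\hat\theta]^{1/2})$. With your $B_p$ the displayed convergence would generally fail; this is exactly why the lemma keeps $\mathrm{B}_p$ explicit rather than absorbing it, and why the paper computes the Rademacher fourth moment exactly. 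The fix is mechanical---replace the ``variance bounded by $2P_{ii}^2$'' step with the exact moment computation---but it is essential to the statement as given. Your stated bound $|\mathrm{B}_p|\le\frac{1}{p}\sum_i P_{ii}^2|B_{ii}|\sigma_i^2$ does still hold for the corrected $\mathrm{B}_p$, via $0\le\sum_{\ell\neq i}P_{i\ell}^4\le\bigl(\sum_{\ell\neq i}P_{i\ell}^2\bigr)^2=P_{ii}^2(1-P_{ii})^2$.
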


	\begin{proof}
		Define $\mathrm{B}_p = \frac{1}{p} \sum_{i=1}^n B_{ii} \sigma_i^2  \frac{2\sum_{\ell \neq i}^n P_{i\ell}^4 - P_{ii}^2(1-P_{ii})^2}{(1-P_{ii})^2}$. Letting $(\hat \theta_{JLA} - \hat \theta)_2$ be a second order approximation of $\hat \theta_{JLA} - \hat \theta$, we first show that $\E\left[(\hat \theta_{JLA} - \hat \theta)_2\right] = \mathrm{B}_p $ and $\frac{\mathbb{V}[(\hat \theta_{JLA} - \hat \theta)_2]}{\mathbb{V}[\hat  \theta]} = O(\frac{1}{p})$. Then we finish the proof of the first claim by showing that the approximation error is ignorable. The bias bound follows immediately from the equality $\sum_{\ell \neq i}^n P_{i\ell}^2 = P_{ii}(1-P_{ii})$ which leads to $0 \le \sum_{\ell \neq i}^n P_{i\ell}^4 \le P_{ii}^2(1-P_{ii})^2$.

		We have $\hat \theta_{JLA} - \hat \theta = (\hat \theta_{JLA} - \hat \theta)_2 + AE_2$ where 
		\begin{align}
			(\hat \theta_{JLA} - \hat \theta)_2 &= \sum_{i=1}^n \hat \sigma_i^2\left(  B_{ii} -  \hat B_{ii}  - \hat B_{ii} \hat a_i - \hat B_{ii} \left(\hat a_i^2 - \frac{1}{p}\frac{3P_{ii}^3 + P_{ii}^2}{1-P_{ii}} \right) \right)
		\end{align}
		for $\hat a_i = \frac{\hat P_{ii} - P_{ii}}{1-P_{ii}}$ and approximation error
		\begin{align}
			AE_2 = \sum_{i=1}^n \hat \sigma_i^2 \hat B_{ii}\left(\frac{1}{p}\frac{3 \hat P_{ii}^2 + \hat P_{ii}^2 -(3P_{ii}^2 + P_{ii}^2)(1+\hat a_i)^2}{(1+\hat a_i)^2(1-P_{ii})} -\frac{\hat a_i^3}{1+\hat a_i} \right).
		\end{align}
		For the mean calculation involving $(\hat \theta_{JLA} - \hat \theta)_2$ we use independence between $\hat B_{ii}$, $\hat P_{ii}$, and $\hat \sigma_i^2$, unbiasedness of $\hat B_{ii}$, $\hat P_{ii}$, and $\hat \sigma_i^2$, and the variance formula
		\begin{align}
			\mathbb{V}[\hat a_i] &= \frac{2}{p} \frac{P_{ii}^2 - \sum_{\ell = 1}^n P_{i\ell}^4}{(1-P_{ii})^2} = \frac{1}{p}\frac{3P_{ii}^3 + P_{ii}^2}{1-P_{ii}} + \frac{P_{ii}^2(1-P_{ii})^2 - 2\sum_{\ell \neq i}^n P_{i\ell}^4}{p(1-P_{ii})^2}.
		\end{align}
		Taken together this implies that
		\begin{align}
			\E\left[(\hat \theta_{JLA} - \hat \theta)_2\right] = -\sum_{i=1}^n B_{ii} \sigma_i^2 \left( \mathbb{V}[\hat \alpha_i] - \frac{1}{p}\frac{3P_{ii}^3 + P_{ii}^2}{1-P_{ii}} \right) = \mathrm{B}_p.
		\end{align}
		For the variance calculation we proceed term by term. We have for $y=(y_i,\dots,y_n)'$ that
		{\footnotesize
		\begin{align}
			\mathbb{V}\left[ \sum_{i=1}^n \hat \sigma_i^2 (B_{ii} -  \hat B_{ii})\right] &= \E\left[ \mathbb{V}\left[\sum_{i=1}^n \hat \sigma_i^2 \hat B_{ii} \mid y\right]  \right] \le \frac{2}{p} \sum_{i=1}^n \sum_{\ell=1}^n B_{i\ell}^2  \E\left[ \hat \sigma_i^2 \hat \sigma_{\ell}^2  \right]
			= O\left( \frac{\text{trace}(\tilde A^2)}{p}  \right), \\
			\mathbb{V}\left[ \sum_{i=1}^n \hat \sigma_i^2 \hat B_{ii} \hat a_i \right] &= \E\left[ \mathbb{V}\left[\sum_{i=1}^n \hat \sigma_i^2 \hat B_{ii} \hat a_i \mid y,R_B\right]  \right] \le \frac{2}{p} \sum_{i=1}^n \sum_{\ell=1}^n P_{i\ell}^2 \tfrac{\E\left[\hat B_{ii} \hat B_{\ell \ell}\right]\E\left[ \hat \sigma_i^2 \hat \sigma_{\ell}^2  \right]}{(1-P_{ii})(1-P_{\ell \ell})}   \\
			&= O\left( \frac{\text{trace}(\tilde A^2)}{p} + \frac{\text{trace}(\tilde A_1^2)^{1/2}\text{trace}(\tilde A_2^2)^{1/2}}{p^2}  \right) 
			\shortintertext{where $\tilde A_\ell = S_{xx}^{-1/2} A_\ell'A_\ell S_{xx}^{-1/2}$ for $\ell=1,2$,}
			\mathbb{V}\left[ \sum_{i=1}^n \hat \sigma_i^2 \hat B_{ii} \left(\hat a_i^2 - \mathbb{V}[\hat a_i]\right) \right] &= \sum_{i=1}^n \sum_{\ell=1}^n \E\left[\hat B_{ii} \hat B_{\ell \ell}\right]\E\left[ \hat \sigma_i^2 \hat \sigma_{\ell}^2  \right] Cov\left(\hat a_i^2,\hat a_\ell^2 \right)\\
			&= O\left( \frac{\text{trace}(\tilde A^2)}{p^2} + \frac{\text{trace}(\tilde A_1^2)^{1/2}\text{trace}(\tilde A_2^2)^{1/2}}{p^3}  \right)  \\
			\mathbb{V}\bigg[ \sum_{i=1}^n \hat \sigma_i^2 \left( \hat B_{ii} - B_{ii} \right) &\frac{2\sum_{\ell \neq i}^n P_{i\ell}^4 - P_{ii}^2(1-P_{ii})^2}{p(1-P_{ii})^2}  \bigg] = O\left( \frac{\text{trace}(\tilde A^2)}{p^3}  \right) \\
			\mathbb{V}\bigg[ \sum_{i=1}^n B_{ii} \left(\hat \sigma_i^2 - \sigma_i^2 \right) &\frac{2\sum_{\ell \neq i}^n P_{i\ell}^4 - P_{ii}^2(1-P_{ii})^2}{p(1-P_{ii})^2}  \bigg] = O\left( \frac{\mathbb{V}[\hat \theta]}{p^2}  \right)
		\end{align}
		}%
		From this it follows that $\mathbb{V}[\hat  \theta]^{-1/2}\left( (\hat \theta_{JLA} - \hat \theta)_2 - \mathrm{B}_p  \right) = o_p(1)$ since $\text{trace}(\tilde A^2) = O(\mathbb{V}[\hat \theta])$ and $ \frac{\mathbb{V}[\hat \varTheta_1]\mathbb{V}[\hat \varTheta_2]}{p^4 \mathbb{V}[\hat \theta]^2}=o(1)$. 
		
		We now treat the approximation error while utilizing that $\E[\hat a_i^3] = O\left(\frac{1}{p^2}\right)$, $\E[\hat a_i^4] = O\left(\frac{1}{p^2}\right)$, and $\max_i \abs{\hat a_i} = o_p( \log(n)/\sqrt{p} )$ which follows from \cite[][Theorem 1.1 and its proof]{achlioptas2001database}. Proceeding term by term, we list the conclusions
		{\footnotesize
		\begin{align}
			\sum_{i=1}^n \hat \sigma_i^2 \hat B_{ii} \hat a_i^3 + \sum_{i=1}^n \hat \sigma_i^2 \hat B_{ii} \hat a_i^4 &= O_p\left( \frac{\E[\hat \varTheta_{1,\text{PI}} - \varTheta_{1}] + \E[\hat \varTheta_{2,\text{PI}} - \varTheta_{2}]}{p^2} \right) \\
			\sum_{i=1}^n \hat \sigma_i^2 \hat B_{ii} \frac{\hat a_i^5}{1+\hat a_i} &= O_p\left( \frac{\log(n)}{\sqrt{p}} \frac{\E[\hat \varTheta_{1,\text{PI}} - \varTheta_{1}] + \E[\hat \varTheta_{2,\text{PI}} - \varTheta_{2}]}{p^2} \right) \\
			\frac{1}{p}\sum_{i=1}^n \hat \sigma_i^2 \hat B_{ii} \frac{3 \hat P_{ii}^2 + \hat P_{ii}^2 -(3P_{ii}^2 + P_{ii}^2)(1+\hat a_i)^2}{(1+\hat a_i)^2(1-P_{ii})} &= O_p\left( \left(1+\frac{\log(n)}{\sqrt{p}} \right) \frac{\E[\hat \varTheta_{1,\text{PI}} - \varTheta_{1}] + \E[\hat \varTheta_{2,\text{PI}} - \varTheta_{2}]}{p^2} \right)
		\end{align}
		}%
		which finishes the proof.
	\end{proof}

	\subsection{Examples}
	All mathematical discussions of the examples are collected in Appendix \ref{app:sufficiency}.
		
	\subsection{Quadratic Forms of Fixed Rank}\label{app:dist1}
		The next result provides a restatement and proof of \th\ref{thm2}.
		
		\begin{theorem}\th\label{thm2S}				
			If \th\ref{ass:reg} holds, $r$ is fixed, and $\max_i  w_i'w_i = o(1)$, then 
			\begin{enumerate}
				\item $\mathbb{V}[\hat b]^{-1/2}(\hat b - b) \xrightarrow{d} \mathcal{N}\left( 0 , I_r \right)$ where $b = Q' S_{xx}^{1/2} \beta$,
				\item ${\mathbb{V}}[\hat b]\inverse \hat{\mathbb{V}}[\hat b] \xrightarrow{p} I_r$,
				\item $\hat \theta = \sum_{\ell = 1}^{r} \lambda_\ell\left(\hat b_\ell^2-\mathbb{V}[\hat b_\ell]\right) + o_p(\mathbb{V}[\hat \theta]^{1/2})$,
			\end{enumerate}
		\end{theorem}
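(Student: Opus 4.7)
The plan is to establish the three claims in sequence, exploiting the clean algebraic identity $\hat \theta = \sum_{\ell=1}^{r} \lambda_\ell(\hat b_\ell^2 - \hat{\mathbb{V}}[\hat b_\ell])$ (which follows from $\hat \beta'A\hat \beta = \sum_\ell \lambda_\ell \hat b_\ell^2$ together with $B_{ii} = \sum_\ell \lambda_\ell w_{i\ell}^2$, so that $\sum_i B_{ii}\hat \sigma_i^2 = \sum_\ell \lambda_\ell \hat{\mathbb{V}}[\hat b_\ell]$). This representation reduces Part 3 to showing consistency of $\hat{\mathbb{V}}[\hat b]$ up to a sufficient rate, which makes Part 2 the central analytic task.

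\emph{Part 1.} I would apply the Cramér--Wold device and Lindeberg's CLT to the centered sum $\hat b - b = \sum_i w_i \varepsilon_i$. Because $\sum_i w_i w_i' = Q'S_{xx}^{-1/2}S_{xx}S_{xx}^{-1/2}Q = I_r$ and $\min_i \sigma_i^2$ is bounded away from zero by \th\ref{ass:reg}(i), the covariance $\mathbb{V}[\hat b]$ is bounded above and away from singularity, so its inverse square root is $O(1)$. For any unit vector $v$, the Lindeberg condition on $\{(v'w_i)\varepsilon_i\}_i$ follows from $(v'w_i)^2 \le w_i'w_i$, bounded fourth moments of $\varepsilon_i$, and the hypothesis $\max_i w_i'w_i = o(1)$ via a standard truncation estimate.

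\emph{Part 2.} Since $\mathbb{V}[\hat b]$ is bounded above and away from zero, it suffices to prove $\hat{\mathbb{V}}[\hat b] - \mathbb{V}[\hat b] = o_p(1)$ entry-wise. The $(\ell,m)$ entry equals $\sum_i w_{i\ell}w_{im}(\hat\sigma_i^2 - \sigma_i^2)$, where $\hat\sigma_i^2 - \sigma_i^2$ decomposes, using $\hat\sigma_i^2 = M_{ii}^{-1}y_i\sum_j M_{ij}\varepsilon_j$ and $\sum_j M_{ij}x_j'\beta = 0$, into a linear-in-$\varepsilon$ piece $M_{ii}^{-1}(x_i'\beta)\sum_j M_{ij}\varepsilon_j$, an off-diagonal bilinear piece $M_{ii}^{-1}\sum_{j \neq i}M_{ij}\varepsilon_i\varepsilon_j$, and the centered square $\varepsilon_i^2 - \sigma_i^2$. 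Using \th\ref{ass:reg}(i)--(iii) together with the idempotency bound $\sum_j M_{ij}^2 = M_{ii} \le 1$ and $\min_i M_{ii} \ge 1-c$, each of the three contributions to the variance of the entry reduces to a constant multiple of $\sum_i(w_{i\ell}w_{im})^2 \le \max_i w_i'w_i \cdot \sum_i w_{i\ell}^2 = \max_i w_i'w_i = o(1)$; Markov's inequality then yields the claim.

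\emph{Part 3 and main obstacle.} The identity displayed above gives $\hat\theta - \sum_\ell \lambda_\ell(\hat b_\ell^2 - \mathbb{V}[\hat b_\ell]) = \sum_\ell \lambda_\ell(\mathbb{V}[\hat b_\ell] - \hat{\mathbb{V}}[\hat b_\ell])$, which is $o_p(1)$ by Part 2 since $r$ is fixed and the $\lambda_\ell$ are bounded. To upgrade $o_p(1)$ to $o_p(\mathbb{V}[\hat\theta]^{1/2})$, I would track the variance rate from Part 2 more carefully to obtain $\hat{\mathbb{V}}[\hat b_\ell] - \mathbb{V}[\hat b_\ell] = O_p((\max_i w_i'w_i)^{1/2})$, then lower-bound $\mathbb{V}[\hat\theta]^{1/2}$ by exploiting that $\mathbb{V}[\hat b_\ell] \asymp 1$ under \th\ref{ass:reg}(i) so that the quadratic term $\sum_\ell \lambda_\ell \hat b_\ell^2$ contributes at constant order to $\mathbb{V}[\hat\theta]$ whenever any $\lambda_\ell \neq 0$. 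The most delicate step is the variance bound for the off-diagonal bilinear piece in Part 2, which requires the idempotency identity and the uniform leverage bound; once this is in hand, converting the $o_p(1)$ remainder into an $o_p(\mathbb{V}[\hat\theta]^{1/2})$ statement is the secondary technical hurdle.
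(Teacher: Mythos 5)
Your Parts 1 and 2 follow essentially the same route as the paper's own proof: Cram\'er--Wold plus a fourth-moment CLT for $\hat b$ (the paper uses Lyapounov's condition, which your bounded-fourth-moment setup delivers directly), and for $\hat{\mathbb{V}}[\hat b]$ the decomposition of $\hat\sigma_i^2-\sigma_i^2$ into a linear-in-$\varepsilon$ piece, an off-diagonal bilinear piece, and the centered square $\varepsilon_i^2-\sigma_i^2$, with each variance controlled through the fourth power of the weights and hence by $\max_i w_i'w_i$. The identity $\hat\theta=\sum_\ell\lambda_\ell(\hat b_\ell^2-\hat{\mathbb{V}}[\hat b_\ell])$, which rests on $B_{ii}=w_i'Dw_i=\sum_\ell\lambda_\ell w_{i\ell}^2$, is also the paper's starting point for Part 3.

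The genuine gap is in your final scaling step for Part 3. You claim that because $\mathbb{V}[\hat b_\ell]\asymp 1$, the term $\sum_\ell\lambda_\ell\hat b_\ell^2$ contributes at constant order to $\mathbb{V}[\hat\theta]$ whenever some $\lambda_\ell\neq 0$. That is false: the contribution of $\lambda_\ell\hat b_\ell^2$ to $\mathbb{V}[\hat\theta]$ is of order $\lambda_\ell^2$, and the eigenvalues of $\tilde A$ typically vanish --- in Example 1 of the paper $\lambda_\ell=1/n$ for every $\ell$, so $\mathbb{V}[\hat\theta]\asymp r/n^2\rightarrow 0$. With no constant lower bound on $\mathbb{V}[\hat\theta]$ available, your upgrade from $o_p(1)$ to $o_p(\mathbb{V}[\hat\theta]^{1/2})$ does not go through as stated. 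The repair, which is the paper's actual argument, is to keep the eigenvalue scale on both sides of the ratio: the remainder equals $-\sum_iB_{ii}(\hat\sigma_i^2-\sigma_i^2)$, whose variance is bounded by a constant times $\sum_iB_{ii}^2\le\max_iw_i'w_i\cdot\sum_{\ell=1}^r\lambda_\ell^2$ (Cauchy--Schwarz applied to $B_{ii}=\sum_\ell\lambda_\ell w_{i\ell}^2$ together with $\sum_iw_{i\ell}^2=1$), while $\mathbb{V}[\hat\theta]\ge 2\sum_i\sum_{\ell\neq i}C_{i\ell}^2\sigma_i^2\sigma_\ell^2\ge c\,\min_i\sigma_i^4\,\sum_{\ell=1}^r\lambda_\ell^2$. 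The factor $\sum_\ell\lambda_\ell^2$ cancels and the ratio is $O(\max_iw_i'w_i)=o(1)$ regardless of the absolute scale of the $\lambda_\ell$. Your intermediate bound $\hat{\mathbb{V}}[\hat b_\ell]-\mathbb{V}[\hat b_\ell]=O_p((\max_iw_i'w_i)^{1/2})$ is correct and would also close the argument once multiplied by $\lambda_\ell$ and compared against $\mathbb{V}[\hat\theta]^{1/2}\ge c'\max_\ell\abs{\lambda_\ell}$; the error is purely in where you placed the eigenvalue scale, not in the feasibility of the approach.
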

		
		\begin{proof}
			The proof has two steps: First, we write $\hat \theta$ as $\sum_{\ell = 1}^{r} \lambda_\ell\left(\hat b_\ell^2-\mathbb{V}[\hat b_\ell]\right)$ plus an approximation error which is of smaller order than $\mathbb{V}[\hat \theta]$. This argument establishes the last two claims of the lemma. Second, we use Lyapounov's CLT to show that $\hat b \in \R^r$ is jointly asymptotically normal.
			
			\noindent \textbf{\textit{Decomposition and Approximation}}
				From the proof of \th\ref{lem:fin} it follows that 
				\begin{align}
					\hat \theta &= \sum_{\ell = 1}^{r} \lambda_\ell\left(\hat b_\ell^2-\mathbb{V}[\hat b_\ell]\right) + \sum_{i=1}^n B_{ii} (\sigma_i^2 -\hat \sigma_i^2)
				\end{align}
				where we now show that the mean zero random variable $\sum_{i=1}^n B_{ii} (\sigma_i^2 -\hat \sigma_i^2)$ is $o_p( \mathbb{V}[\hat \theta]^{1/2} )$.
				
				We have
				\begin{align}
					\sum_{i=1}^n B_{ii} (\hat \sigma_i^2 - \sigma_i^2)
					&= \sum_{i=1}^n B_{ii} \sum_{\ell =1}^n M_{ii}\inverse x_i'\beta M_{i\ell} \varepsilon_\ell
					+ \sum_{i=1}^n B_{ii} (\varepsilon_i^2 -\sigma^2_i)
					 + \sum_{i=1}^n B_{ii} \sum_{\ell \neq i} M_{ii}\inverse M_{i\ell} \varepsilon_i \varepsilon_\ell. \label{eq:term3}
				\end{align}
				The variances of these three terms are
				{\small
				\begin{align}
					\sum_{\ell=1}^n \sigma_\ell^2 \left( \sum_{i=1}^n M_{i\ell} B_{ii} M_{ii}\inverse x_i'\beta \right)^2 
					\le \max_i \sigma_i^2 \sum_{i=1}^n B_{ii}^2 M_{ii}^{-2} (x_i'\beta)^2 
					&\le \max_i \sigma_i^2 \max_i (x_i'\beta)^2 M_{ii}^{-2}  \times \sum_{i=1}^n B_{ii}^2, \\
					\sum_{i=1}^n B_{ii}^2 \mathbb{V}[\varepsilon_{i}^2]  &\le \max_i \E[\varepsilon_{i}^4]  \times \sum_{i=1}^n B_{ii}^2, \\
					\sum_{i=1}^n \sum_{\ell \neq i} \left( B_{ii}^2 M_{ii}^{-2} + B_{ii} M_{ii} \inverse B_{\ell\ell} M_{\ell\ell} \inverse \right) M_{i\ell}^2 \sigma_i^2 \sigma_\ell^2
					&\le 2 \max_i \sigma_i^4 M_{ii}^{-2} \times \sum_{i=1}^n B_{ii}^2.
				\end{align}
				}%
				Furthermore, we have that 
				\begin{align}
					\mathbb{V}[\hat \theta] \inverse \sum_{i=1}^n B_{ii}^2 \le \max_i w_i'w_i \mathbb{V}[\hat \theta] \inverse \sum_{l=1}^r \lambda_{l}^2 (\tilde A) \le \max_i w_i'w_i \max_i \sigma_i^{-4} = o(1),
				\end{align}
				so each of the three variances are of smaller order than $\mathbb{V}[\hat \theta]$. 
				
				For the second claim it suffices to show that $\delta(v) := \frac{\hat{\mathbb{V}}[v'\hat b] - \mathbb{V}[v'\hat b]}{\mathbb{V}[v'\hat b]} = o_p(1)$ for all nonrandom $v \in \R^{r}$ with $v'v=1$. Let $v \in \R^{r}$ be nonrandom with $v'v=1$. As above we have that $\delta(v) = \sum_{i=1}^n w_i(v) (\hat \sigma_i^2 - \sigma_i^2)$ is a mean zero variable which is $o_p(1)$ if $\sum_{i=1}^n w_i(v)^4 = o(1)$ where $w_i(v) = \frac{(v'w_i)^2}{\sum_{i=1}^n \sigma_i^2 (v'w_i)^2}.$ But this follows from
				\begin{align}
				\sum_{i=1}^n w_i(v)^4 \le \max_{i} \sigma_{i}^{-4} \max_{i} w_i'w_i = o(1)
				\end{align}
				where the inequality is implied by $\max_i  w_i'w_i = o(1)$, $v'v=1$, and $\sum_{i=1}^n w_i w_i' = I_r$.

				\noindent \textbf{\textit{Asymptotic Normality}}		
				Next we show that all linear combinations of $\hat b$ are asymptotically normal. Let $v \in \R^r$ be a non-random vector with $v'v=1$. 
				Lyapunov's CLT implies that $\mathbb{V}[v'\hat b]^{-1/2}v'(\hat b - b) \xrightarrow{d}N(0,1)$ if
				\begin{align}\label{eq:CLT}
					\mathbb{V}[v'\hat b]^{-2}\sum_{i=1}^n \E[\varepsilon_{i}^4] (v'Q'S_{xx}^{-1/2} x_i)^4 = \mathbb{V}[v'\tilde \beta]^{-2}\sum_{i=1}^n \E[\varepsilon_{i}^4] (v'w_i)^4 = o(1).
				\end{align}
				We have that $\max_i w_i'w_i = o(1)$ implies \eqref{eq:CLT} since $\max_{i} (v'w_i)^2 \le \max_{i} w_i'w_i$ and
				\begin{align}
					\sum_{i=1}^n (v'w_i)^2=1, \qquad \mathbb{V}[v'\tilde \beta]\inverse &\le \max_i \sigma_i^{-2} = O(1),  \qquad \max_i \E[\varepsilon_{i}^4] = O(1),
				\end{align}
				by definition of $w_i$ and \th\ref{ass:reg}.
			\end{proof}

		\subsection{Quadratic Forms of Growing Rank}\label{app:dist2}
		
		This appendix provides restatements and proofs of \th\ref{thm3,thm4}. The proofs relies on an auxiliary lemma which extends a central limit theorem given in \cite{soelvsten2017robust}.

		\subsubsection{A Central Limit Theorem}
	
		The proofs of \th\ref{thm3} and \th\ref{thm4} is based on the following lemma. Let $\{v_{n,i}\}_{i,n}$ be a triangular array of row-wise independent random variables with $\E[v_{n,i}]=0$ and $\mathbb{V}[v_{n,i}] = \sigma_{n,i}^2$, let $\{\dot w_{n,i}\}_{i,n}$ be a triangular array of non-random weights that satisfy $\sum_{i=1}^n \dot w_{n,i}^2 \sigma_{n,i}^2 =1$ for all $n$, and let  $(W_n)_n$ be a sequence of symmetric non-random matrices in $\R^{n \times n}$ with zeroes on the diagonal that satisfy $2\sum_{i=1}^n \sum_{\ell \neq i} W_{n,i\ell}^2 \sigma_{n,i}^2 \sigma_{n,\ell}^2 =1$. For simplicity, we drop the subscript $n$ on $v_{n,i}$, $\sigma^2_{n,i}$, $\dot w_{n,i}$ and $W_n$. Define 
		\begin{align}
		\mathcal{S}_n = \sum_{i=1}^n \dot w_{i} v_i \quad  \text{and} \quad \mathcal{U}_n = \sum_{i=1}^n \sum_{\ell \neq i} W_{i\ell} v_i v_\ell.
		\end{align}

		\begin{lemma}\th\label{lem:clt}
			If $\max_i \E[v_i^4] + \sigma_i^{-2} = O(1)$,
			\begin{align}
			(i) \ \max_i \dot w_i^2 = o(1), \qquad 
			(ii) \ \lambda_{\max}(W^2) = o(1),
			\end{align}
			then $(\mathcal{S}_n,\mathcal{U}_n)' \xrightarrow{d} \mathcal{N}(0,I_2)$.
		\end{lemma}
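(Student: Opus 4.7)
The plan is to combine the Cram\'er--Wold device with a martingale central limit theorem. First, observe that $\mathbb{V}[\mathcal{S}_n]=\mathbb{V}[\mathcal{U}_n]=1$ by the stated normalizations and that $\mathrm{Cov}(\mathcal{S}_n,\mathcal{U}_n)=0$: for any triple with $j\neq\ell$, row-wise independence and $\E[v_i]=0$ force $\E[v_i v_j v_\ell]$ to vanish unless all three indices coincide, which is excluded. So it suffices to show, for each fixed $(a_1,a_2)\in\R^2$, that $T_n:=a_1\mathcal{S}_n+a_2\mathcal{U}_n\xrightarrow{d}\mathcal{N}(0,a_1^2+a_2^2)$. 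Using symmetry of $W$ and its vanishing diagonal, one can write
\begin{align}
T_n=\sum_{i=1}^n D_{n,i},\qquad D_{n,i}:=v_i\bigl(a_1\dot w_i+2a_2 Z_{i-1}\bigr),\qquad Z_{i-1}:=\sum_{\ell<i}W_{i\ell}v_\ell,
\end{align}
which is a square-integrable martingale difference array with respect to $\mathcal{F}_i=\sigma(v_1,\dots,v_i)$. Invoking a standard martingale CLT (e.g., Hall and Heyde, Theorem 3.2) reduces the problem to verifying convergence of the conditional variance $V_n:=\sum_i\E[D_{n,i}^2\mid\mathcal{F}_{i-1}]$ to $a_1^2+a_2^2$ in probability, together with a Lyapunov condition $\sum_i\E[D_{n,i}^4]\to 0$.

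The conditional variance expands as
\begin{align}
V_n=a_1^2\sum_i\dot w_i^2\sigma_i^2+4a_1a_2\sum_i\dot w_i\sigma_i^2 Z_{i-1}+4a_2^2\sum_i\sigma_i^2 Z_{i-1}^2.
\end{align}
The first piece equals $a_1^2$ exactly; the middle piece has mean zero and variance of order $\max_i\dot w_i^2\cdot\mathrm{tr}(W^2)=o(1)\cdot O(1)$. For the third piece, the normalization $2\sum_i\sum_{\ell\neq i}W_{i\ell}^2\sigma_i^2\sigma_\ell^2=1$ and symmetry give $\E\bigl[4\sum_i\sigma_i^2 Z_{i-1}^2\bigr]=1$, so the core of the argument is to show $\mathrm{Var}\bigl(\sum_i\sigma_i^2 Z_{i-1}^2\bigr)=o(1)$. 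Expanding $Z_{i-1}^2=\sum_{\ell,m<i}W_{i\ell}W_{im}v_\ell v_m$ and separating the diagonal ($\ell=m$) contributions from the off-diagonal ones, this variance is controlled by a constant multiple of $\mathrm{tr}(W^4)$, plus lower-order terms built from the bounded fourth moments of the $v_i$. For the Lyapunov step, the crude bound $\E[D_{n,i}^4]\le C\,\E[v_i^4]\bigl(\dot w_i^4+\E[Z_{i-1}^4]\bigr)$ together with Rosenthal's inequality yields $\sum_i\E[Z_{i-1}^4]\le C\,\mathrm{tr}(W^4)+C\max_{i\ell}W_{i\ell}^2\cdot\mathrm{tr}(W^2)$, while $\sum_i\dot w_i^4\le\max_i\dot w_i^2\cdot\sum_i\dot w_i^2=o(1)\cdot O(1)$ (the second factor being bounded since $\sum_i\dot w_i^2\sigma_i^2=1$ and $\max_i\sigma_i^{-2}=O(1)$).

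\textbf{Main obstacle.} The only genuinely delicate step is the variance bound for the quadratic $\sum_i\sigma_i^2 Z_{i-1}^2$, which is what forces the spectral hypothesis $\lambda_{\max}(W^2)=o(1)$ rather than merely $\mathrm{tr}(W^2)=o(1)$. The pivotal estimate is
\begin{align}
\mathrm{tr}(W^4)\le\lambda_{\max}(W^2)\cdot\mathrm{tr}(W^2),\qquad \max_{i\ell}W_{i\ell}^2\le\max_i(W^2)_{ii}\le\lambda_{\max}(W^2),
\end{align}
combined with the fact that $\mathrm{tr}(W^2)$ is of unit order by the variance normalization and the bounded $\sigma_i^{-2}$. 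It is precisely this top-eigenvalue control that lets the fourth-order traces governing both the conditional-variance fluctuations and the Lyapunov sum be sent to zero; the same structural ingredient underlies the Stein-method argument of S\o lvsten (2017) to which the paper alludes, and it is what makes $\lambda_{\max}(W^2)=o(1)$ both the natural and essentially the minimal hypothesis for this joint CLT.
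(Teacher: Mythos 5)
Your route is genuinely different from the paper's. The paper does not use a martingale CLT at all: it verifies the hypotheses of the Stein's-method lemmas A2.1 and A2.2 of S\o lvsten (2017), computing the perturbations $\Delta_i^0\mathcal{S}_n$, $\Delta_i^0\mathcal{U}_n$ and the conditional quantity $\E[T_n\mid V_n]$, and then controlling a handful of \emph{symmetric} functionals of $W$ --- $\mathrm{tr}(W^4)$, $\max_i\sum_{\ell\neq i}W_{i\ell}^2$, and $\sum_i(\sum_{\ell\neq i}\sigma_\ell^2\dot w_\ell W_{i\ell})^2=\norm{Wu}^2\le\lambda_{\max}(W^2)\norm{u}^2$. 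Because Stein's method never imposes an ordering on the indices, every quantity that appears is an untruncated function of $W$, and condition $(ii)$ bites directly.

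Your martingale decomposition has a genuine gap precisely where the ordering enters. Once you condition on $\mathcal{F}_{i-1}$, every term in $V_n$ involves the strictly lower-triangular truncation $\tilde W$ of $W$ (entries $W_{i\ell}\mathbf{1}_{\{\ell<i\}}$), not $W$ itself. Concretely: the cross term is $\sum_\ell v_\ell c_\ell$ with $c_\ell=\sum_{i>\ell}\dot w_i\sigma_i^2W_{i\ell}=(\tilde W'u)_\ell$, so its variance is of order $u'\tilde W\tilde W'u$, and the off-diagonal part of $\sum_i\sigma_i^2Z_{i-1}^2$ has variance of order $\mathrm{tr}\bigl((\tilde W'\Sigma\tilde W)^2\bigr)$. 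Neither $\lambda_{\max}(\tilde W\tilde W')$ nor $\mathrm{tr}\bigl((\tilde W\tilde W')^2\bigr)$ is controlled by $\lambda_{\max}(W^2)$ and $\mathrm{tr}(W^4)$: triangular truncation is not a contraction in operator norm (it can inflate it by a factor logarithmic in the effective rank), so your ``pivotal estimate'' $\mathrm{tr}(W^4)\le\lambda_{\max}(W^2)\mathrm{tr}(W^2)$ does not reach the quantities your argument actually produces. Moreover, the specific bound you assert for the middle piece --- variance of order $\max_i\dot w_i^2\cdot\mathrm{tr}(W^2)$ --- is not a valid inequality: for example, with $\dot w_i\propto n^{-1/2}$, $\sigma_i^2=1$ and $W_{i\ell}=c$ constant off the diagonal (normalized so $\mathrm{tr}(W^2)\asymp1$), the claimed product is $O(1/n)$ while the actual variance of $\sum_i\dot w_i\sigma_i^2Z_{i-1}$ is bounded away from zero; that configuration violates $(ii)$, but it shows your inequality does not follow from the quantities you invoke, so the $o(1)$ conclusion is not established. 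To repair the proof along your lines you would need either a separate argument that $\lambda_{\max}(\tilde W\tilde W')=o(1)$ under $(i)$--$(ii)$ (a nontrivial operator-theoretic fact about truncations of low-effective-rank matrices), or to abandon the filtration ordering --- which is essentially what the paper's Stein's-method argument accomplishes. Your Lyapunov step and the diagonal parts of the conditional variance are fine as written.
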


		This lemma extends the main result of Appendix A2 in \cite{soelvsten2017robust} to allow for $\{v_i\}_i$ to be an array of non-identically distributed variables and presents the conclusion in a way that is tailored to the application in this paper. The proof requires no substantially new ideas compared to \cite{soelvsten2017robust}, but we give it at the end of the next section for completeness.

		\subsubsection{Limit Distributions}
	
		\begin{theorem}\th\label{thm3S}
			If
			\begin{align}
			(i) \ \mathbb{V}[\hat \theta]\inverse \max_i \left( (\tilde x_i'\beta)^2 + (\check x_i'\beta)^2 \right) = o(1), \quad (ii) \ \frac{ \lambda_1^2}{\sum_{\ell=1}^r \lambda_\ell^2} = o(1),
			\end{align}
			and \th\ref{ass:reg} holds, then $\mathbb{V}[\hat \theta]^{-1/2} (\hat \theta - \theta ) \xrightarrow{d} \mathcal{N}(0,1)$.
		\end{theorem}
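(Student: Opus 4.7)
My plan is to decompose $\hat\theta-\theta$ into a linear statistic $L_n$ in the errors plus a degenerate quadratic statistic $Q_n$, and then to apply the joint CLT in \th\ref{lem:clt} to the normalized pair $(L_n,Q_n)/\sqrt{\mathbb{V}[\hat\theta]}$. Since $L_n$ and $Q_n$ are uncorrelated and $\mathbb{V}[\hat\theta]=\mathbb{V}[L_n]+\mathbb{V}[Q_n]$, joint asymptotic normality after normalization yields the univariate limit by summing coordinates.

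\emph{Decomposition.} Starting from $\hat\theta=\sum_i\sum_{\ell\neq i}C_{i\ell}y_iy_\ell$ and the matching identity $\theta=\sum_i\sum_{\ell\neq i}C_{i\ell}(x_i'\beta)(x_\ell'\beta)$, substituting $y_i=x_i'\beta+\varepsilon_i$ and using $C_{i\ell}=C_{\ell i}$ gives $\hat\theta-\theta=L_n+Q_n$ with $L_n=2\sum_i\varepsilon_i\sum_{\ell\neq i}C_{i\ell}x_\ell'\beta$ and $Q_n=\sum_i\sum_{\ell\neq i}C_{i\ell}\varepsilon_i\varepsilon_\ell$. A short reduction using $\sum_\ell B_{i\ell}x_\ell=\tilde x_i$ (from $BX=\tilde X$ for $B=XS_{xx}^{-1}AS_{xx}^{-1}X'$), $\sum_\ell M_{i\ell}x_\ell=0$, and $\check x_i=\sum_\ell M_{i\ell}M_{\ell\ell}^{-1}B_{\ell\ell}x_\ell$ collapses the inner sum to $\tilde x_i'\beta-\tfrac12\check x_i'\beta$, so $L_n=\sum_i(2\tilde x_i'\beta-\check x_i'\beta)\varepsilon_i$. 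The orthogonality $\sum_i(\tilde x_i'\beta)(\check x_i'\beta)=0$ (a consequence of $\sum_iM_{i\ell}\tilde x_i=0$) and the vanishing of $\mathbb{E}[\varepsilon_i\varepsilon_j\varepsilon_\ell]$ for three distinct indices give $\mathrm{Cov}(L_n,Q_n)=0$ and reproduce the variance formula stated right after \th\ref{thm3}.

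\emph{Applying the joint CLT.} I apply \th\ref{lem:clt} with $v_i=\varepsilon_i$, $\dot w_i=(2\tilde x_i'\beta-\check x_i'\beta)/\sqrt{\mathbb{V}[L_n]}$, and $W_{i\ell}=C_{i\ell}/\sqrt{\mathbb{V}[Q_n]}$; the degenerate subcases $\mathbb{V}[L_n]=0$ or $\mathbb{V}[L_n]/\mathbb{V}[\hat\theta]\to 1$ are disposed of on subsequences by using only one of the two limits. The fourth-moment and inverse-variance bound required by the lemma is exactly \th\ref{ass:reg}(i). The Lindeberg condition $\max_i\dot w_i^2=o(1)$ reduces, via $(2a-b)^2\le 8(a^2+b^2)$ and $\mathbb{V}[L_n]\ge\min_i\sigma_i^2\sum_i(2\tilde x_i'\beta-\check x_i'\beta)^2$, to theorem condition $(i)$. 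The lemma then yields $(L_n,Q_n)/\sqrt{\mathbb{V}[\hat\theta]}\xrightarrow{d}\mathcal{N}(0,\mathrm{diag}(\rho,1-\rho))$ along subsequences where $\mathbb{V}[L_n]/\mathbb{V}[\hat\theta]\to\rho\in[0,1]$, and summing coordinates gives the $\mathcal{N}(0,1)$ limit.

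\emph{Main obstacle.} The delicate step is the spectral condition $\lambda_{\max}(W^2)=o(1)$. Using that $B$ has the nonzero eigenvalues of $\tilde A$, the representation $C=B-\mathrm{diag}(B)-\tfrac12(MT+TM-2\,\mathrm{diag}(MT))$ with $T=\mathrm{diag}(M_{ii}^{-1}B_{ii})$, the bound $|B_{ii}|\le P_{ii}|\lambda_1|$, the leverage bound $M_{ii}^{-1}\le(1-c)^{-1}$ from \th\ref{ass:reg}(ii), and $\|M\|_{\mathrm{op}}=1$, I bound each correction term's operator norm by a constant times $|\lambda_1|$, yielding $\|C\|_{\mathrm{op}}=O(|\lambda_1|)$. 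Combined with a Frobenius-norm lower bound $\mathbb{V}[Q_n]\ge c'\sum_\ell\lambda_\ell^2$ for some $c'>0$---the subtler point, which again uses $\max_iP_{ii}<1$ to rule out the off-diagonal corrections collapsing $\|C\|_F^2$ much below $\|B\|_F^2=\sum_\ell\lambda_\ell^2$---theorem condition $(ii)$ delivers $\lambda_{\max}(W)^2=O(\lambda_1^2/\sum_\ell\lambda_\ell^2)=o(1)$, completing the verification.
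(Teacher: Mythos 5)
Your proposal is correct and follows the paper's own route almost exactly: the same decomposition of $\hat\theta-\theta$ into the linear term $\sum_i(2\tilde x_i'\beta-\check x_i'\beta)\varepsilon_i$ and the degenerate quadratic term $\sum_i\sum_{\ell\neq i}C_{i\ell}\varepsilon_i\varepsilon_\ell$, the same appeal to the joint CLT of \th\ref{lem:clt}, the same subsequence argument to dispose of the degenerate weightings, and the same reduction of the Lindeberg condition to hypothesis $(i)$. The one place you genuinely diverge is the spectral condition: the paper verifies the equivalent requirement $\text{trace}(W^4)=o(1)$ via $\text{trace}(C^4)\le c_U\,\text{trace}(B^4)\le c_U\lambda_1^2\,\text{trace}(\tilde A^2)$, whereas you bound $\lambda_{\max}(W^2)$ directly through the operator norm, using $\|B\|_{\mathrm{op}}=\abs{\lambda_1}$, $\abs{B_{ii}}\le P_{ii}\abs{\lambda_1}$, $M_{ii}^{-1}\le(1-c)^{-1}$ and $\norm{M}_{\mathrm{op}}=1$ to get $\norm{C}_{\mathrm{op}}=O(\abs{\lambda_1})$; this is clean and arguably more transparent than the trace chain. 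On the step you flag as subtle---the lower bound $\mathbb{V}[Q_n]\ge c'\sum_\ell\lambda_\ell^2$---your attribution of the mechanism to $\max_iP_{ii}<1$ is not quite right, and there is a sharper argument that neither you nor the paper spells out: writing $C=B-\tfrac12(MT+TM)$ with $T=\mathrm{diag}(M_{ii}^{-1}B_{ii})$, the identity $BM=0$ (since $X'M=0$) kills the cross term, so
\begin{align}
\text{trace}(C^2)=\text{trace}(B^2)+\tfrac14\,\text{trace}\left((MT+TM)^2\right)\ge\text{trace}(B^2)=\sum_{\ell=1}^r\lambda_\ell^2,
\end{align}
whence $\mathbb{V}[Q_n]\ge 2\min_i\sigma_i^4\sum_\ell\lambda_\ell^2$ exactly, with no constant-factor slack to control and no role for the leverage bound beyond making $T$ well defined. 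With that substitution your argument is complete.
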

	
		\begin{proof}
			The proof involves two steps: First, we decompose $\hat \theta$ into a weighted sum of two terms of the type described in \th\ref{lem:clt}. Second, we use \th\ref{lem:clt} to show joint asymptotic normality of the two terms. The conclusion that $\hat \theta$ is asymptotically normal is immediate from there. 
			
			\noindent \textbf{\textit{Decomposition}}
		 		The difference between $\hat \theta$ and $\theta$ is
				\begin{align}
					\hat \theta - \theta &
					= \sum_{i=1}^n \left( 2\tilde x_i'\beta - \check x_i'\beta \right) \varepsilon_{i} + \sum_{i=1}^n \sum_{\ell \neq i} C_{i\ell} \varepsilon_{i} \varepsilon_\ell,
				\end{align}
				where these two terms are uncorrelated and have variances
				\begin{align}
					V_\mathcal{S} = \sum_{i=1}^n (2\tilde x_i' \beta - \check x_i'\beta)^2 \sigma_i^2 \quad \text{and} \quad V_\mathcal{U} = 2\sum_{i=1}^n \sum_{\ell \neq i} C_{i\ell}^2 \sigma_i^2 \sigma_\ell^2.
				\end{align}
				Thus we write $\mathbb{V}[\hat \theta]^{-1/2}(\hat \theta - \theta) = \omega_1 \mathcal{S}_n + \omega_2 \mathcal{U}_n$ where
				\begin{align}
					\mathcal{S}_n &= V_\mathcal{S}^{-1/2}\sum_{i=1}^n \left( 2\tilde x_i'\beta - \check x_i'\beta \right) \varepsilon_{i}, && \mathcal{U}_n = V_{\mathcal{U}}^{-1/2} \sum_{i=1}^n \sum_{\ell \neq i} C_{i\ell} \varepsilon_{i} \varepsilon_\ell, \\
								\omega_1 &= \sqrt{V_\mathcal{S}/\mathbb{V}[\hat \theta]}, && \omega_2 = \sqrt{V_\mathcal{U}/\mathbb{V}[\hat \theta]}.
				\end{align}

			\noindent \textbf{\textit{Asymptotic Normality}}
				We will argue along converging subsequences. Move to a subsequence where $\omega_1$ converges. If the limit is zero, then $\mathbb{V}[\hat \theta]^{-1/2}(\hat \theta - \theta) = \omega_2 \mathcal{U}_n + o_p(1)$ and so it follows from \th\ref{res2} below and \th\ref{thm3}(ii) that $\hat \theta$ is asymptotically normal. Thus we consider the case where the limit of $\omega_1$ is nonzero.
				
				In the notation of \th\ref{lem:clt} we have
				\begin{align}
				\dot w_i = \frac{\left( 2\tilde x_i'\beta - \check x_i'\beta \right)}{V_\mathcal{S}^{1/2}} \quad \text{and} \quad W_{i\ell} = \frac{C_{i\ell}}{V_\mathcal{U}^{1/2}}.
				\end{align}
		
				For \th\ref{lem:clt}(i) we have
				\begin{align}
				\max_i \dot w_i^2 \le 4\omega_1\inverse \max_i \frac{(\tilde x_i'\beta)^2 + (\check x_i'\beta)^2 }{\mathbb{V}[\hat \theta]} = o(1),
				\end{align}
				where the last equality follows from \th\ref{thm3}(i) and the nonzero limit of $\omega_1$. 
				
				For \th\ref{lem:clt}(ii) we show instead that $\text{trace}(W^4) = o(1)$. It can be shown that for all $n$, $\text{trace}(C^4) \le c_U \cdot \text{trace}(B^4) = c_U \cdot \text{trace}(\tilde A^4) \le c_U  \lambda_1^2 \cdot \text{trace}(\tilde A^2)$ and ${V}_\mathcal{U} \ge c_L \min_i \sigma_{i}^4 \cdot \text{trace}(\tilde A)$, where the finite and nonzero constants $c_U$ and $c_L$ do not depend on $n$ (but depend on $\min_i M_{ii}$ which is bounded away from zero). Thus, \th\ref{ass:reg} implies that
				\begin{align}
				\text{trace}(W^4) \le  \frac{c_U  \lambda_1^2 \cdot \text{trace}(\tilde A^2)}{(c_L \min_i \sigma_{i}^4 \cdot \text{trace}(\tilde A^2))^2} = O\left( \frac{\lambda_1^2 }{\text{trace}(\tilde A^2)} \right) = o(1)
				\end{align}
				where the last equality follows from \th\ref{thm3}(ii).
		\end{proof}
		
		\begin{theorem}\th\label{thm4S}
			If $\max_i \mathsf{w}_{iq}' \mathsf{w}_{iq} = o(1)$, $\mathbb{V}[\hat \theta_q]\inverse \max_i \left( (\tilde x_{iq}'\beta)^2 + (\check x_{iq}'\beta)^2 \right) = o(1)$, and \th\ref{ass:reg,ass:eig} holds, then 
					\begin{enumerate}
				\item $\mathbb{V}[(\hat{\mathsf{b}}_q',\hat \theta_q)']^{-1/2}
				\left(
				(\hat{\mathsf{b}}_q',\hat \theta_q)'
				- \E[(\hat{\mathsf{b}}_q',\hat \theta_q)']
				\right)
				\xrightarrow{d} \mathcal{N}\left(0, I_{q+1} \right)$
				\item  $\hat \theta = \sum_{\ell = 1}^{q} \lambda_\ell\left(\hat b_\ell^2-\mathbb{V}[\hat b_{\ell}]\right) + \hat \theta_q + o_p(\mathbb{V}[\hat \theta]^{1/2})$
			\end{enumerate}
			for
			\begin{align}
				\mathbb{V}[(\hat {\mathsf{b}}_q',\hat \theta_q)'] &= \sum_{i=1}^n \begin{bmatrix}
					\mathsf{w}_{iq}\mathsf{w}_{iq}' \sigma_i^2 & 2\mathsf{w}_{iq}\left(\sum_{\ell \neq i} C_{i\ell q} x_\ell'\beta \right) \sigma_i^2 \\
					2\mathsf{w}_{iq}'\left(\sum_{\ell \neq i} C_{i\ell q} x_\ell'\beta \right) \sigma_i^2  & 4 \left(\sum_{\ell \neq i} C_{i\ell q} x_\ell'\beta \right)^2 \sigma_i^2 + 2 \sum_{\ell \neq i} C_{i\ell q}^2 \sigma_i^2 \sigma_\ell^2
				\end{bmatrix},
			\end{align}
			$C_{i \ell q} = B_{i \ell q} - 2\inverse M_{i \ell} \left( M_{ii}\inverse B_{ii q}   +  M_{\ell \ell}\inverse B_{\ell \ell q}  \right)$, $B_{i \ell q} = x_i' S_{xx}^{-1/2} \tilde A_q S_{xx}^{-1/2}  x_\ell$, $\tilde A_q = \sum_{\ell=q+1}^{r} \lambda_\ell q_\ell q_\ell',$ $\tilde x_{iq} = \sum_{\ell =1}^n B_{i\ell q} x_\ell$, and $\check x_{iq} = \sum_{\ell=1}^n M_{i\ell} M_{\ell\ell} \inverse B_{\ell\ell q} x_\ell$.
		\end{theorem}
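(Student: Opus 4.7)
My plan is to handle the two claims in parallel, mirroring the strategy used for \th\ref{thm3S} in the preceding proof. For claim 2, I will first show that replacing $\mathbb{V}[\hat b_\ell]$ with $\hat{\mathbb{V}}[\hat b_\ell]$ for $\ell = 1,\dots,q$ introduces only an $o_p(\mathbb{V}[\hat \theta]^{1/2})$ error, which by the proof of \th\ref{thm2S} reduces to controlling a weighted sum of $(\hat \sigma_i^2 - \sigma_i^2)$ with coefficients $\lambda_\ell w_{i\ell}^2$. The condition $\max_i \mathsf{w}_{iq}'\mathsf{w}_{iq} = o(1)$ together with \th\ref{ass:reg} will make this error negligible, completing the decomposition $\hat \theta = \sum_{\ell=1}^q \lambda_\ell(\hat b_\ell^2 - \mathbb{V}[\hat b_\ell]) + \hat \theta_q + o_p(\mathbb{V}[\hat \theta]^{1/2})$ once we identify $\hat \theta_q$ with the leave-out estimator of $\theta_q = \beta' A_q \beta$ where $A_q = S_{xx}^{1/2}\tilde A_q S_{xx}^{1/2}$.

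For claim 1, I will write $\hat \theta_q - \E[\hat \theta_q]$ using the U-statistic representation derived in the proof of \th\ref{thm3S} with $A$ replaced by $A_q$, giving
\begin{align}
\hat \theta_q - \E[\hat \theta_q] = \sum_{i=1}^n (2\tilde x_{iq}'\beta - \check x_{iq}'\beta)\varepsilon_i + \sum_{i=1}^n\sum_{\ell \neq i} C_{i\ell q}\varepsilon_i \varepsilon_\ell + o_p(\mathbb{V}[\hat \theta_q]^{1/2}),
\end{align}
where the $o_p$ term absorbs $\sum_i B_{iiq}(\sigma_i^2 - \hat \sigma_i^2)$ by the same argument as in \th\ref{thm3S}. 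Combined with $\hat{\mathsf{b}}_q - \mathsf{b}_q = \sum_i \mathsf{w}_{iq}\varepsilon_i$, I will invoke the Cramér-Wold device: for any $v = (v_1',v_{q+1})' \in \R^{q+1}$ with $v_1 \in \R^q$, the linear combination $v'((\hat{\mathsf{b}}_q - \mathsf{b}_q)',\hat \theta_q - \E[\hat \theta_q])'$ takes the form $\mathcal{S}_n^{(v)} + \mathcal{U}_n^{(v)}$ with linear weights $\dot w_i \propto v_1'\mathsf{w}_{iq} + v_{q+1}(2\tilde x_{iq}'\beta - \check x_{iq}'\beta)$ and quadratic kernel $W_{i\ell} \propto v_{q+1} C_{i\ell q}$.

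The remainder of the argument applies \th\ref{lem:clt} to $(\mathcal{S}_n^{(v)},\mathcal{U}_n^{(v)})'$ after normalizing by $v'\Sigma_q v$. Condition (i) of that lemma, $\max_i \dot w_i^2 = o(1)$, follows from the hypotheses $\max_i \mathsf{w}_{iq}'\mathsf{w}_{iq} = o(1)$ and $\mathbb{V}[\hat \theta_q]\inverse \max_i((\tilde x_{iq}'\beta)^2 + (\check x_{iq}'\beta)^2) = o(1)$ combined with the elementary inequality $v'\Sigma_q v \ge $ (a constant times) the diagonal entries, via a subsequence argument that splits according to whether $v_{q+1}$ vanishes relative to $\|v_1\|$. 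The main obstacle, and the heart of the proof, is verifying condition (ii), namely $\lambda_{\max}(W^2) = o(1)$. Since the largest eigenvalue of $\tilde A_q$ is $\lambda_{q+1}$, the same bounds used in \th\ref{thm3S} yield $\text{trace}(W^4) = O(\lambda_{q+1}^2/\text{trace}(\tilde A_q^2))$. \th\ref{ass:eig} gives $\lambda_{q+1}^2 = o(\sum_{\ell=1}^r \lambda_\ell^2)$, but I must show the denominator $\text{trace}(\tilde A_q^2) = \sum_{\ell=q+1}^r \lambda_\ell^2$ is not too small; this is where the subsequence split matters, since when $\text{trace}(\tilde A_q^2)$ is dominated by the linear variance contribution to $\mathbb{V}[\hat \theta_q]$, the term $\omega_2 \mathcal{U}_n^{(v)}$ becomes asymptotically negligible and asymptotic normality reduces to Lyapounov's CLT applied to the linear part alone. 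Along the complementary subsequence where the quadratic part contributes non-negligibly, \th\ref{ass:eig} combined with boundedness of $q$ forces $\text{trace}(\tilde A_q^2)$ to be of the same order as $\sum_{\ell=1}^r \lambda_\ell^2$, restoring $\lambda_{q+1}^2/\text{trace}(\tilde A_q^2) = o(1)$ and delivering (ii). Conclusion follows from Cramér-Wold.
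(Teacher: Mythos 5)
Your proposal follows essentially the same route as the paper's proof: the same exact decomposition $\hat\theta = \sum_{\ell\le q}\lambda_\ell(\hat b_\ell^2-\hat{\mathbb{V}}[\hat b_\ell]) + \hat\theta_q$ with $\hat\theta_q = \sum_i\sum_{\ell\neq i}C_{i\ell q}y_iy_\ell$ the leave-out estimator built from $A_q$, the same reduction of claim 2 to showing $\sum_{i}\bigl(\sum_{\ell\le q}\lambda_\ell w_{i\ell}^2\bigr)(\hat\sigma_i^2-\sigma_i^2)=o_p(\mathbb{V}[\hat\theta]^{1/2})$ via $\max_i\mathsf{w}_{iq}'\mathsf{w}_{iq}=o(1)$, the same Cramér--Wold grouping in which all linear-in-$\varepsilon$ terms form the first component of \th\ref{lem:clt} and $\mathcal{U}_n$ (uncorrelated with them) forms the second, and the same subsequence device on the relative weights $\omega_1,\omega_2$. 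One presentational slip: the term $\sum_i B_{iiq}(\sigma_i^2-\hat\sigma_i^2)$ is \emph{not} absorbed into an $o_p$ remainder — it is of the same order as everything else and is accounted for exactly by the $-\check x_{iq}'\beta$ piece of the linear weights and the $M_{i\ell}$ corrections inside $C_{i\ell q}$; since your displayed decomposition already carries those terms, the remainder in your display is in fact identically zero. The quantity that genuinely must be shown negligible is the complementary piece $\sum_i B_{ii,-q}(\hat\sigma_i^2-\sigma_i^2)$ with $B_{ii,-q}=B_{ii}-B_{iiq}$, which you do treat correctly under claim 2.

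The substantive problem is your verification of condition (ii) of \th\ref{lem:clt}. The implication you assert — that \th\ref{ass:eig} together with boundedness of $q$ forces $\text{trace}(\tilde A_q^2)=\sum_{\ell>q}\lambda_\ell^2$ to be of the same order as $\sum_{\ell=1}^r\lambda_\ell^2$ whenever the quadratic part contributes non-negligibly — is false. Take $\lambda_1=1$ and $\lambda_2=\dots=\lambda_r=\epsilon_n/\sqrt{r}$ with $\epsilon_n\rightarrow 0$: then $\lambda_1^2/\sum_\ell\lambda_\ell^2\rightarrow 1$ and $\lambda_2^2/\sum_\ell\lambda_\ell^2\rightarrow 0$, so \th\ref{ass:eig} holds with $q=1$, yet $\text{trace}(\tilde A_1^2)\approx\epsilon_n^2\ll\text{trace}(\tilde A^2)\approx 1$; and with $\beta=0$ the quadratic part is the whole of $\hat\theta_q$, so it is certainly not negligible. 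What your bound $\text{trace}(W^4)=O(\lambda_{q+1}^2/\text{trace}(\tilde A_q^2))$ actually requires is that the \emph{effective rank of the residual matrix} $\tilde A_q$ diverge, i.e.\ $\lambda_{q+1}^2/\sum_{\ell>q}\lambda_\ell^2=o(1)$ — a condition on the eigenvalues \emph{below} position $q$ that \th\ref{ass:eig} does not supply by itself (it holds in the example above only because $r\rightarrow\infty$ with the tail eigenvalues comparable; it fails if, say, $r=2$). You should either establish $\lambda_{q+1}^2/\sum_{\ell>q}\lambda_\ell^2=o(1)$ directly from whatever structure is available, or impose it. For what it is worth, this is also the thinnest point of the published argument, which at the corresponding step asserts the lower bound $V_\mathcal{U}^2\ge\omega_2^{-4}\min_i\sigma_i^8\,\text{trace}(\tilde A^2)^2$ — a bound phrased in terms of the full $\text{trace}(\tilde A^2)$ rather than $\text{trace}(\tilde A_q^2)$ — so your instinct that this is the heart of the proof is right; the resolution you propose, however, does not close it.
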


		\begin{proof}
	
			The proof involves two steps: First, we write $\hat \theta$ as the sum of (1a) a quadratic function applied to $\hat {\mathsf{b}}_q$, (1b) an approximation error which is of smaller order than $\mathbb{V}[\hat \theta]$, and (2) a weighted sum of two terms, $\mathcal{S}_{n}$ and $\mathcal{U}_n$, of the type described in \th\ref{lem:clt}. Second, we use \th\ref{lem:clt} to show that $(\hat {\mathsf{b}}_q',\mathcal{S}_{n},\mathcal{U}_n)' \in \R^{q+2}$ is jointly asymptotically normal. 
			
			\noindent \textbf{\textit{Decomposition and Approximation}}
				We have that
				\begin{align}
					\hat \theta &= \sum_{\ell =1}^{q} \lambda_\ell (\hat b_\ell^2 - \mathbb{V}[\hat b_\ell]) +\hat \theta_q  + o_p(\mathbb{V}[\hat \theta]^{1/2}) 
					\quad \text{for} \quad 
					\hat \theta_q = \sum_{i=1}^n \sum_{\ell \neq i} C_{i\ell q} y_i y_\ell
					\shortintertext{since}
					\hat \beta' A \hat \beta &= \sum_{\ell =1}^{q} \lambda_\ell \hat b_\ell^2 + \sum_{i=1}^n \sum_{\ell = 1}^n B_{i\ell q} y_i y_\ell  
					\shortintertext{and}
					\sum_{i = 1}^n B_{ii} \hat \sigma_i^2 &= \sum_{i=1}^n B_{ii\mathbf{1}}\sigma_i^2 + \sum_{i=1}^n B_{iiq}\hat \sigma_i^2  + \sum_{i=1}^n B_{ii,-q}( \hat \sigma_i^2 -  \sigma_i^2) \\
					&= \sum_{\ell =1}^{q} \lambda_\ell \mathbb{V}[\hat b_\ell] + \sum_{i=1}^n B_{iiq}\hat \sigma_i^2  + o_p(\mathbb{V}[\hat \theta]^{1/2})
				\end{align}
				where $B_{ii,-q} = B_{ii} - B_{ii q}$ and it follows from $\max_i \mathsf{w}_{iq}' \mathsf{w}_{iq} = o(1)$ and the calculations in the proof of \th\ref{thm2} that the mean zero random variable $\sum_{i=1}^n B_{ii,-q}( \hat \sigma_i^2 -  \sigma_i^2)$ is $o_p(\mathbb{V}[\hat \theta]^{1/2})$. 
				
				We will further center and rescale $\hat \theta_q$ by writing
				\begin{align}
					\mathbb{V}[\hat \theta_q]^{-1/2} \left(\hat \theta_q- \E[\hat \theta_q]\right) = \omega_1 \mathcal{S}_n + \omega_2 \mathcal{U}_n
				\end{align}
				where
				\begin{align}
				\mathcal{S}_{n} &= {V}_{\mathcal{S}}^{-1/2}\sum_{i=1}^n \left( 2\tilde x_{iq}'\beta - \check x_{iq}'\beta \right) \varepsilon_{i}, &&
				\mathcal{U}_{n} = {V}_{\mathcal{U}}^{-1/2} \sum_{i=1}^n \sum_{\ell \neq i} C_{i\ell q} \varepsilon_{i} \varepsilon_\ell, \\
				{V}_{\mathcal{S}} &=  \sum_{i=1}^n \left( 2\tilde x_{iq}'\beta - \check x_{iq}'\beta \right)^2 \sigma_{i}^2, 
				&& 
				{V}_{\mathcal{U}} = 2 \sum_{i=1}^n \sum_{\ell \neq i} C^2_{i\ell q} \sigma^2_{i} \sigma^2_\ell, \\
				\omega_1 &= \sqrt{{V}_{\mathcal{S}}/\mathbb{V}[\hat \theta_q]},
				&& 
				\omega_2 = \sqrt{{V}_{\mathcal{U}}/\mathbb{V}[\hat \theta_q]},
				\end{align}
				and $\mathcal{U}_{n}$ is uncorrelated with both $\mathcal{S}_n$ and $\hat {\mathsf{b}}_q$.

			\noindent \textbf{\textit{Asymptotic Normality}}
				As in the proof of \th\ref{thm3}, we will argue along converging subsequences and therefore move to a subsequence where $\omega_1$ converges. If the limit is zero, then the conclusion of the theorem follows from \th\ref{lem:clt} applied to $(\mathbb{V}[v'\hat {\mathsf{b}}_q]^{-1/2}(v'\hat {\mathsf{b}}_q-\E[v'\hat {\mathsf{b}}_q]),\mathcal{U}_n)'$ for $v \in \R^{q}$ with $v'v=1$. Thus we consider the case where the limit of $\omega_1$ is nonzero.
				
				Next we use \th\ref{lem:clt} to show that 
				\begin{align}
					\left(\frac{v'\hat {\mathsf{b}}_q-\E[v'\hat {\mathsf{b}}_q] +u\mathcal{S}_{n}}{\mathbb{V}[\hat {\mathsf{b}}_q +u\mathcal{S}_{n}]^{1/2}},\mathcal{U}_{n}\right)' \xrightarrow{d} \mathcal{N}(0,I_2)
				\end{align}
				for any non-random $(v',u)' \in \R^{q+1}$ with $v'v + u^2=1$. In the notation of \th\ref{lem:clt} we have
				\begin{align}
				\dot w_i = \frac{v' \mathsf{w}_{iq} + u {V}_{\mathcal{S}}^{-1/2}\left( 2\tilde x_{iq}'\beta - \check x_{iq}'\beta \right)} {\mathbb{V}[\hat {\mathsf{b}}_q +u\mathcal{S}_{n}]^{1/2}}  \quad \text{and} \quad W_{i\ell} = \frac{C_{i\ell q}}{{V}_\mathcal{U}^{1/2}}.
				\end{align}

				A simple calculation shows that $\mathbb{V}[v'\hat {\mathsf{b}}_q +u\mathcal{S}_{n}] \ge \min_i \sigma_i^2 \gg 0$, so $\max_i \dot w_i^2 = o(1)$ follows from \th\ref{thm4}(i), \th\ref{thm4}(ii), and $\omega_1$ being bounded away from zero.

				Similarly, we have as in the proof of \th\ref{thm3} that
				\begin{align}
					\text{trace}(C_q^4) &\le c \text{trace}(B_q^4) \le c \lambda_{q+1}^2 \sum_{\ell = q+1}^r \lambda_\ell^2
					\quad \text{and} \quad
					{V}_\mathcal{U}^{2} \ge \omega_2^{-4} \min_i \sigma_i^8 \text{trace}(\tilde A^2)^2
				\end{align}
				for $C_q = (C_{i\ell q})_{i,\ell}$ and $B_q = (B_{i\ell q})_{i,\ell}$, so \th\ref{ass:reg,ass:eig} yield $\text{trace}(W^4) = o(1)$.
		\end{proof}

	\subsubsection{Proof of a Central Limit Theorem}
	The proof of \th\ref{lem:clt} uses the notation and verifies the conditions of Lemmas A2.1 and A2.2 in \cite{soelvsten2017robust} referred to as SS2.1 and SS2.2, respectively. First, we show marginal convergence in distribution of $\mathcal{S}_n$ and $\mathcal{U}_n$. Then, we show joint convergence in distribution of $\mathcal{S}_n$ and $\mathcal{U}_n$. Let $V_n = (v_1,\dots,v_n)$ where $\{v_i\}_i$ are as in the setup of \th\ref{lem:clt}.
	
	Before starting we note that $\max_i \sigma^{-2}_i = O(1)$ and $2\sum_{i=1}^n \sum_{\ell \neq i} W_{i\ell}^2 \sigma_{i}^2 \sigma_{\ell}^2 =1$ implies that $\text{trace}(W^2) = \sum_{i=1}^n \sum_{\ell \neq i} W_{i\ell}^2 = O(1)$ and therefore that
	\begin{align}
		\lambda_{\max}(W^2) = o(1) \Leftrightarrow \text{trace}(W^4)=o(1).
	\end{align}
	
	\subsubsection*{Marginal Distributions}
	\begin{res}\th\label{res1}
		$\max_i \E[v_i^4] + \sigma_i^{-2}= O(1)$, $\sum_{i=1}^n \dot w_i^2 \sigma_i^2=1$, and \th\ref{lem:clt}(i) implies that $\mathcal{S}_n \xrightarrow{d} \mathcal{N}(0,1)$.
	\end{res}

	In the notation of SS2.1 we have,
	\begin{align}
		\Delta^0_i \mathcal{S}_n = \dot w_i v_i
		\quad \text{and} \quad E[T_n \mid V_n] = 1 + \tfrac{1}{2} \sum_{i=1}^n \dot w_i^2 (v_i^2 - \sigma_i^2),
	\end{align}
	and it follows from $\max_i \E[v_i^4] + \sigma_i^{-2}= O(1)$, $\sum_{i=1}^n \dot w_i^2 \sigma_i^2=1$, and \th\ref{lem:clt}(i) that
	\begin{align}
		E[T_n \mid V_n] \xrightarrow{\mathcal{L}^1} 1, \quad \sum_{i=1}^n \E[(\Delta_i^0 \mathcal{S}_n)^2] = 1, \qquad \sum_{i=1}^n \E[(\Delta^0_i \mathcal{S}_n)^4] \le \max_i \frac{\E[v_i^4]}{\sigma_i^{2}}  \dot w_i^2 = o(1),
	\end{align}
	so \th\ref{res1} follows from SS2.1.
	
	\begin{res}\th\label{res2}
		$\max_i \E[v_i^4] + \sigma_i^{-2}= O(1)$, $2\sum_{i=1}^n \sum_{\ell \neq i} W_{n,i\ell}^2 \sigma_{n,i}^2 \sigma_{n,\ell}^2 =1$, and \th\ref{lem:clt}(ii) implies that $\mathcal{U}_n \xrightarrow{d} \mathcal{N}(0,1)$.
	\end{res}

	In the notation of SS2.1 we have,
	\begin{align}
		\Delta^0_i \mathcal{U}_n = 2 v_i \sum_{\ell \neq i} W_{i\ell} v_\ell
		\quad \text{and} \quad E[T_n \mid V_n] = \sum_{i=1}^n \sum_{\ell \neq i} \sum_{k \neq i} (v_i + \sigma_i^2) W_{i\ell} W_{ik} v_\ell v_k,
	\end{align}
	and
	\begin{align}
		\sum_{i=1}^n \E[(\Delta_i^0 \mathcal{U}_n)^2] = 2, \qquad \sum_{i=1}^n \E[(\Delta_i^0 \mathcal{U}_n)^4] \le 2^5 \max_i \E[v_i^4]^2 \max_i \sigma_i^{-4} \max_i \sum_{\ell \neq i} W_{i\ell}^2,
	\end{align}
	where $\max_i \sum_{\ell \neq i} W_{i\ell}^2 \le \sqrt{\text{trace}(W^4)} = o(1)$. Now, split $E[T_n \mid V_n] - 1$ into three terms
	\begin{align}
		a_n &= \sum_{i=1}^n \sum_{\ell \neq i} \sigma_i^2 W_{i\ell}^2 (v_\ell + v_\ell^2 - \sigma_\ell^2) \\
		b_n &= 2 \sum_{i=1}^n \sum_{\ell \neq i} \sum_{k \neq i,\ell} \sigma_k^2 W_{\ell k} W_{ik} v_i v_\ell  + \sum_{i=1}^n \sum_{\ell \neq i} W_{i\ell}^2 v_i (v_\ell^2 - \sigma_\ell^2) \\
		c_n &= \sum_{i=1}^n \sum_{\ell \neq i} \sum_{k \neq i, \ell} W_{i\ell} W_{ik} (v_i^2 - \sigma_i^2) v_\ell v_k.
	\end{align}
	
	\subsubsection*{Interlude: Convergence in $\mathcal{L}^1$}
	$a_n, b_n,$ and $c_n$ are a linear sum, a quadratic sum, and a cubic sum. We will need to treat similar sums later, so we record some simple sufficient conditions for their convergence. For brevity, let $\sum_{i \neq \ell}^n = \sum_{i=1}^n \sum_{\ell \neq i},$ and $\sum_{i \neq \ell \neq k}^n = \sum_{i=1}^n \sum_{\ell \neq i} \sum_{k \neq i,\ell}$, etc. We use the notation $u_i =(v_{i1},v_{i2},v_{i3},v_{i4}) \in \R^4$ to denote independent random vectors in order that the result applies to combinations of $v_i$ and $v_i^2 - \sigma_i^2$ as in $a_n$, $b_n$, and $c_n$ above. For the inferential results we will also treat quartic sums, so we provide the sufficient conditions here.
	
	\begin{res}\th\label{res:L1}
		Let $S_{n1} = \sum_{i=1}^n \omega_i v_{i1}$, $S_{n2} = \sum_{i\neq\ell}^n \omega_{i\ell} v_{i1} v_{\ell 2}$, $S_{n3} = \sum_{i \neq \ell \neq k}^n \omega_{i\ell k} v_{i1} v_{\ell 2} v_{k 3}$, and
		$S_{n4} = \sum_{i \neq \ell \neq k \neq m}^n \omega_{i\ell k m} v_{i1} v_{\ell 2} v_{k3} v_{m4}$
		 where the weights $\omega_i$, $\omega_{i \ell}$, $\omega_{i \ell k}$, and $\omega_{i \ell k m}$ are non-random.
		 Suppose that $\E[u_i]=0$, $\max_i \E[u_i'u_i] = O(1)$.
		\begin{enumerate}
			\item If $\sum_{i=1}^n \omega_i^2 = o(1)$, then $S_{n1} \xrightarrow{\mathcal{L}^1} 0$.
			
			\item If $\sum_{i\neq\ell}^n \omega_{i\ell}^2 = o(1)$, then $S_{n2} \xrightarrow{\mathcal{L}^1} 0$.
			
			\item If $\sum_{i \neq \ell \neq k}^n \omega_{i\ell k}^2 = o(1)$, then $S_{n3} \xrightarrow{\mathcal{L}^1} 0$.
			
			\item If $\sum_{i \neq \ell \neq k \neq m}^n \omega_{i\ell k m}^2 = o(1)$, then $S_{n4} \xrightarrow{\mathcal{L}^1} 0$.
%
		\end{enumerate}
	\end{res}

	Consider $S_{n3}$, the other results follows from the same line of reasoning. In the notation of SS2.2 we have,
	\begin{align}
		\Delta_i^0 S_{n3}= v_{i1} \sum_{\ell \neq i} \sum_{k \neq i,\ell} \omega_{i\ell k} v_{\ell 2} v_{k 3} + v_{i2} \sum_{\ell \neq i} \sum_{k \neq i,\ell} \omega_{\ell i k} v_{\ell 1} v_{k 3} + v_{i3} \sum_{\ell \neq i} \sum_{k \neq i,\ell} \omega_{\ell k i} v_{\ell 1} v_{k 2}.
	\end{align}
	Focusing on the first term we have,
	\begin{align}
		\sum_{i=1}^n \E\left[\left(v_{i1} \sum_{\ell \neq i} \sum_{k \neq i,\ell} \omega_{i\ell k} v_{\ell 2} v_{k 3}\right)^2\right] 
		&\le \max_{i} \E[u_i'u_i]^3 \sum_{i \neq \ell \neq k}^n \left(\omega_{i\ell k}^2+\omega_{i\ell k}\omega_{ik\ell}\right) \\
		&\le 2\max_{i} \E[u_i'u_i]^3 \sum_{i \neq \ell \neq k}^n \omega_{i\ell k}^2,
	\end{align} 
	so the results follows from SS2.2, $\sum_{i \neq \ell \neq k}^n \omega_{i\ell k}^2 = o(1)$, and the observation that the last bound also applies to the other two terms in $\Delta_i^0 S_{n3}$.
	
	
	\subsubsection*{Marginal Distributions, Continued}
	
	To see how $a_n \xrightarrow{\mathcal{L}^1} 0$, $b_n \xrightarrow{\mathcal{L}^1} 0$ and $c_n \xrightarrow{\mathcal{L}^1} 0$ follows from \th\ref{res:L1}, let $\bar W_{i \ell} = \sum_{k = 1}^n W_{ik} W_{k\ell}$ and note that $\text{trace}(W^4) = \sum_{i=1}^n \sum_{\ell =1}^n \bar W_{i \ell}^2$. We have
	\begin{align}
		\sum_{i=1}^n \left(\sum_{\ell \neq i} \sigma_\ell^2 W_{i\ell}^2 \right)^2 
		\le \max_i \sigma_i^4 \sum_{i=1}^n \bar W_{i i}^2. \\
		\sum_{i=1}^n \sum_{\ell \neq i} \left(\sum_{k \neq i,\ell} \sigma_k^2 W_{\ell k} W_{ik} \right)^2 \le \max_i \sigma_i^4 \sum_{i=1}^n \sum_{\ell =1}^n \bar W_{i \ell}^2 \\
		\sum_{i=1}^n \sum_{\ell \neq i} W_{i\ell}^4 = O\left(\max_{i,\ell} W_{i\ell}^2\right) \\
		\sum_{i=1}^n \sum_{\ell \neq i} \sum_{k \neq i, \ell} W_{i\ell}^2 W_{ik}^2 
		= O\left( \max_i \sum_{\ell \neq i} W_{i\ell}^2\right),
	\end{align}
	all of which are $o(1)$ as $\text{trace}(W^4) = o(1)$.
	
	\subsubsection*{Joint Distribution}
	
	Let $(u_1,u_2)' \in R^2$ be given and non-random with $u_1^2 + u_2^2 = 1$. Define $\mathcal{W}_n = u_1 \mathcal{S}_n + u_2 \mathcal{U}_n$. \th\ref{lem:clt} follows if we show that $\mathcal{W}_n \xrightarrow{d} \mathcal{N}(0,1)$. In the notation of SS2.1 we have,
	\begin{align}
		\Delta_i^0 \mathcal{W}_n &= u_1 \dot w_i v_i + u_2 2 v_i \sum_{\ell \neq i} W_{i \ell} v_\ell
		\shortintertext{and}
		\E[T_n \mid V_n] &= u_1^2 \left(1 + \tfrac{1}{2} \sum_{i=1}^n \dot w_i^2 (v_i^2 - \sigma_i^2)\right) + u_2^2 \sum_{i=1}^n \sum_{\ell \neq i} \sum_{k \neq i} (v_i + \sigma_i^2) W_{i\ell} W_{ik} v_\ell v_k \\
		& + u_1 u_2 3 \sum_{i=1}^n \sum_{\ell \neq i} (v_i^2 + \sigma_i^2) \dot w_i W_{i \ell} v_j.
	\end{align}
	The proofs of \th\ref{res1} and \th\ref{res2} showed that
	\begin{align}
		\sum_{i=1}^n \E[(\Delta_i^0 \mathcal{W}_n)^2] = O(1), \ \sum_{i=1}^n \E[(\Delta_i^0 \mathcal{W}_n)^4] = o(1)
	\end{align} 
	and that the first two terms of $\E[T_n \mid V_n]$ converge to $u_1^2 + u_2^2 = 1$. Thus the lemma follows if we show that the ``conditional covariance''
	\begin{align}
		3 \sum_{i=1}^n \sum_{\ell \neq i} (v_i^2 + \sigma_i^2) \dot w_i W_{i \ell} v_j
	\end{align}
	converges to $0$ in $\mathcal{L}^1$. This conditional covariance involves a linear and a quadratic sum so
	\begin{align}
		\sum_{i=1}^n \left(\sum_{\ell \neq i} \sigma_\ell^2 w_\ell W_{i \ell}\right)^2 &\le \max_i \sigma_i^4 \max_\ell \lambda^2_\ell(W) \sum_{i=1}^n \dot w_i^2 = O(\max_\ell\lambda^2_\ell(W))\\
		\sum_{i=1}^n \sum_{\ell \neq i}  \dot w_i^2 W_{i \ell}^2 &\le \sum_{i=1}^n \sum_{\ell \neq i}  W_{i \ell}^2  \max_i \dot w_i^2 = O( \max_i \dot w_i^2 )
	\end{align}
	ends the proof.

	\subsection{Asymptotic Variance Estimation}\label{app:inf}
	
	This appendix provides restatements and proofs of \th\ref{lem:varCS,lem:4} which establish consistency of the proposes standard error estimators that rely on sample splitting. Furthermore, it gives adjustments to those standard errors that guarantee existence whenever two independent unbiased estimators of $x_i'\beta$ cannot be formed. However, these adjustments may provide a somewhat conservative assessment of the uncertainty in $\hat \theta$ as further investigated in the simulations of Section \ref{sec:MC}. 
	
	\begin{lemma}\th\label{lem:var2S}
		For $s =1,2$, suppose that $\widehat{x_i'\beta}_{-i,s} = \sum_{\ell \neq i}^n P_{i\ell,s} y_\ell$ satisfies $\sum_{\ell \neq i}^n P_{i\ell,s} x_\ell'\beta=x_i'\beta$, $P_{i\ell,1} P_{i\ell,2} =0$ for all $\ell$, and $\lambda_{\max}(P_sP_s') = O(1)$. 
		\begin{enumerate}
			\item If the conditions of \th\ref{thm3} hold and $\abs{\mathcal{B}} = O(1)$, then 
			$\frac{\hat \theta - \theta}{\hat {\mathbb{V}}[\hat \theta]^{1/2}} \xrightarrow{d} \mathcal{N}(0,1).$ 
			\item If the conditions of \th\ref{thm3} hold, then $\liminf_{n \rightarrow \infty}\Pr\left( \theta \in \left[ \hat \theta \pm z_{\alpha} \hat{\mathbb{V}}[\hat \theta]^{1/2}  \right] \right) \ge 1-\alpha.$
		\end{enumerate}
	\end{lemma}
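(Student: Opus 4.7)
The plan is to combine Theorem 3, which yields $(\hat\theta - \theta)/\mathbb{V}[\hat\theta]^{1/2} \xrightarrow{d} \mathcal{N}(0,1)$, with a ratio analysis of $\hat{\mathbb{V}}[\hat\theta]/\mathbb{V}[\hat\theta]$. Both claims then follow by Slutsky: for part 1 the ratio converges in probability to $1$, while for part 2 it is asymptotically bounded below by $1$, which inflates $z_\alpha$ and gives conservative coverage.

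First I would compute $\E[\hat{\mathbb{V}}[\hat\theta]]$ term by term. The plug-in term $4\sum_i (\sum_{\ell \neq i} C_{i\ell} y_\ell)^2 \tilde\sigma_i^2$ has mean $\mathbb{V}[\hat\theta] + 2\sum_i\sum_{\ell \neq i}\tilde C_{i\ell}\sigma_i^2\sigma_\ell^2$ because the sample-splitting conditions $\sum_{\ell \neq i}P_{i\ell,s} x_\ell'\beta = x_i'\beta$ and $P_{i\ell,1}P_{i\ell,2}=0$ make $\tilde\sigma_i^2$ unbiased for $\sigma_i^2$ and independent of the factor it multiplies. For pairs $(i,\ell) \notin \mathcal{B}$, the three cases defining $\widehat{\sigma_i^2\sigma_\ell^2}$ each pair mutually independent unbiased estimators of $\sigma_i^2$ and $\sigma_\ell^2$, so the correction term cancels the spurious $2\sum_{i\neq\ell}\tilde C_{i\ell}\sigma_i^2\sigma_\ell^2$ exactly. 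For $(i,\ell)\in\mathcal{B}$, the residual bias is $-2\tilde C_{i\ell}\bigl(\E[\widehat{\sigma_i^2\sigma_\ell^2}]-\sigma_i^2\sigma_\ell^2\bigr)$, which is non-negative: when $\tilde C_{i\ell} \ge 0$ the indicator $1_{\{\tilde C_{i\ell}<0\}}$ zeroes the estimator, and when $\tilde C_{i\ell}<0$ the inequality $\E[(y_\ell-\bar y)^2] \ge \sigma_\ell^2$ makes the difference non-negative. Under $\abs{\mathcal{B}}=O(1)$ plus the moment bounds of \th\ref{ass:reg} and the projection property, this bias is $O(1)$, hence negligible relative to the order of $\mathbb{V}[\hat\theta]$ that is assured by the Lindeberg-type condition (i) of Theorem 3 via $\mathbb{V}[\hat\theta] \ge \min_i \sigma_i^2 \sum_i\bigl((2\tilde x_i'\beta)^2 + (\check x_i'\beta)^2\bigr)$ combined with the eigenvalue ratio condition (ii).

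Second, I would bound $\mathbb{V}[\hat{\mathbb{V}}[\hat\theta]]$. Since $\hat{\mathbb{V}}[\hat\theta]$ is a polynomial of degree four in the $y_i$'s, its variance involves sums up to the eighth moment of the errors, multiplied by products of $C_{i\ell}$, $\tilde C_{i\ell}$ and of the weights $P_{i\ell,s}$ and $P_{im,-\ell}$ embedded in $\tilde\sigma_i^2$ and $\hat\sigma_{i,-\ell}^2$. The projection-property assumption $\lambda_{\max}(P_sP_s')=O(1)$ bounds the spectral contribution of these weight matrices, and combined with the trace bound $\text{trace}(\tilde A^4) \le \lambda_1^2 \text{trace}(\tilde A^2)$ together with $\lambda_1^2/\sum_\ell\lambda_\ell^2 = o(1)$ from Theorem 3(ii), each such cross term is shown to be $o(\mathbb{V}[\hat\theta]^2)$. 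This is where the main technical work lies: the estimator mixes three sources of randomness (the outer quadratic kernel $C$, the cross-fit $\tilde\sigma_i^2$, and the hat matrices $P_s$), and controlling the cross moments requires carefully pairing indices and invoking the trace bounds repeatedly. I expect this variance calculation, rather than the mean calculation, to be the main obstacle; the proof of Lemma 4 in Appendix \ref{app:inf} should follow essentially the same bookkeeping with only minor adjustments for the block structure of $\varSigma_q$.

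Finally, for part 1, combining the bias and variance bounds gives $\hat{\mathbb{V}}[\hat\theta]/\mathbb{V}[\hat\theta] \xrightarrow{p} 1$ and Slutsky yields the stated normal limit. For part 2, without imposing $\abs{\mathcal{B}}=O(1)$ the bias of $\hat{\mathbb{V}}[\hat\theta]$ remains non-negative by the argument above, so $\E[\hat{\mathbb{V}}[\hat\theta]]/\mathbb{V}[\hat\theta] \ge 1$; the variance bound still applies and delivers $\liminf \hat{\mathbb{V}}[\hat\theta]/\mathbb{V}[\hat\theta] \ge 1$ in probability. Writing the coverage event as $\{\abs{\hat\theta-\theta}/\mathbb{V}[\hat\theta]^{1/2} \le z_\alpha (\hat{\mathbb{V}}[\hat\theta]/\mathbb{V}[\hat\theta])^{1/2}\}$ and applying the portmanteau lemma with the standard normal limit yields $\liminf \Pr(\theta \in [\hat\theta \pm z_\alpha \hat{\mathbb{V}}[\hat\theta]^{1/2}]) \ge \Pr(\abs{Z}\le z_\alpha) = 1-\alpha$.
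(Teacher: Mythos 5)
Your two-step architecture---(i) show $\hat{\mathbb{V}}[\hat\theta]$ has a non-negative bias that vanishes relative to $\mathbb{V}[\hat\theta]$ when $\abs{\mathcal{B}}=O(1)$, (ii) show $\hat{\mathbb{V}}[\hat\theta]$ concentrates around its mean at rate $o_p(\mathbb{V}[\hat\theta])$, then conclude via Slutsky/portmanteau---is exactly the paper's proof strategy, and your identification of $\lambda_{\max}(P_sP_s')=O(1)$ and the trace bounds as the engine of the quartic variance calculation is also on target. However, two steps in your bias argument need repair. First, your justification that the plug-in term has mean $\mathbb{V}[\hat\theta]+2\sum_{i}\sum_{\ell\neq i}\tilde C_{i\ell}\sigma_i^2\sigma_\ell^2$ ``because $\tilde\sigma_i^2$ is \ldots independent of the factor it multiplies'' is inconsistent with the formula you state: if $\tilde\sigma_i^2$ were independent of $\bigl(\sum_{\ell\neq i}C_{i\ell}y_\ell\bigr)^2$ you would get $C_{i\ell}^2$ in place of $\tilde C_{i\ell}$. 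The cross term $2\sum_{m}C_{mi}C_{m\ell}(P_{mi,1}P_{m\ell,2}+P_{mi,2}P_{m\ell,1})$ inside $\tilde C_{i\ell}$ arises precisely because the split-sample predictions reuse the $y_\ell$, $\ell\neq i$, appearing in the outer sum; the correct route is the direct fourth-moment computation the paper carries out. Second, and more seriously, the claim that the residual bias from pairs in $\mathcal{B}$ ``is $O(1)$, hence negligible relative to the order of $\mathbb{V}[\hat\theta]$'' is backwards: $\mathbb{V}[\hat\theta]\rightarrow 0$ under the conditions of \th\ref{thm3} (it is of the order of $\mathrm{trace}(\tilde A^2)=o(1)$ in the no-signal case), so an $O(1)$ bias would swamp it entirely. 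What is actually needed, and what makes $\abs{\mathcal{B}}=O(1)$ sufficient, is that each of the finitely many offending terms is individually $o(\mathbb{V}[\hat\theta])$; this follows because $\max_{i,\ell}\abs{\tilde C_{i\ell}}=o(\mathbb{V}[\hat\theta])$, which is a consequence of $\mathrm{trace}(\tilde C^2)=o(\mathbb{V}[\hat\theta]^2)$ established in the concentration step. With those two corrections your outline matches the paper's proof; the remaining work is the quartic bookkeeping you correctly flag as the technical core, which the paper handles via Hadamard-product identities and the bound $\lambda_{\max}(M_1M_1'\odot M_2M_2')=O(1)$.
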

		
	\begin{proof}
		The proof continues in two steps: First, we show that $\hat {\mathbb{V}}[\hat \theta]$ has a positive bias which is of smaller order than ${\mathbb{V}}[\hat \theta]$ when $\abs{\mathcal{B}} = O(1)$. Second, we show that $\hat {\mathbb{V}}[\hat \theta] - \E[\hat {\mathbb{V}}[\hat \theta]] = o_p({\mathbb{V}}[\hat \theta])$. When combined with \th\ref{thm3}, these conclusions imply the two claims of the lemma.
		
		\noindent \textbf{\textit{Bias of $\hat {\mathbb{V}}[\hat \theta]$}}
		For the first term in $\hat {\mathbb{V}}[\hat \theta]$, a simple calculation shows that
		\begin{align}
		\E\left[ 4\sum_{i=1}^n \left( \sum_{\ell \neq i} C_{i\ell} y_\ell  \right)^2 \tilde \sigma_i^2 \right] 
		&= 4\sum_{i=1}^n \left(\sum_{\ell \neq i} C_{i\ell} x_\ell'\beta \right)^2 \sigma_i^2 + 4\sum_{i=1}^n \sum_{\ell \neq i} C_{i\ell}^2 \sigma_i^2 \sigma_\ell^2 \\
		&+4\sum_{i=1}^n \sum_{\ell \neq i} \sum_{m =1 }^n C_{mi} C_{m\ell} (P_{mi,1} P_{m\ell,2} + P_{mi,2} P_{m\ell,1}) \sigma_i^2 \sigma_\ell^2 \\
		&= \mathbb{V}[\hat \theta] + 2\sum_{i=1}^n \sum_{\ell \neq i} \tilde C_{i\ell} \sigma_i^2 \sigma_\ell^2.
		\end{align}
		For the second term in $\hat {\mathbb{V}}[\hat \theta]$, we note that if $P_{ik,-\ell}P_{\ell k,-i}=0 \text{ for all } k$, then independence between error terms yield $\E [\widehat{\sigma_i^2 \sigma_\ell^2} ] = \E [ \hat \sigma_{i,-\ell}^2] \E[ \hat \sigma_{\ell,-i}^2 ] = \sigma_i^2 \sigma_\ell^2.$ Otherwise if $P_{i\ell,1} + P_{i\ell,2}=0$, then 
		\begin{align}
		\E\left[\widehat{\sigma_i^2 \sigma_\ell^2}\right]
		&= \E\left[ \left(\varepsilon_i - \sum_{j \neq i} P_{ij,1} \varepsilon_j\right)\left(\varepsilon_i - \sum_{k \neq i} P_{ik,2} \varepsilon_k\right) \left( x_\ell'\beta + \varepsilon_\ell\right) \left(\varepsilon_\ell - \sum_{m \neq \ell} P_{\ell m, -i} \varepsilon_m\right)  \right] \\
		&= \sigma_i^2 \sigma_\ell^2 + x_\ell'\beta \E\left[ \left(\varepsilon_i - \sum_{j \neq i} P_{ij,1} \varepsilon_j\right)\left(\varepsilon_i - \sum_{k \neq i} P_{ik,2} \varepsilon_k\right) \sum_{m \neq \ell} P_{\ell m, -i} \varepsilon_m \right]
		\end{align}
		where the second term is zero since $P_{\ell i, -i} =0$ and $P_{ij,1}P_{ij,2}=0$ for all $j$. The same argument applies with the roles of $i$ and $\ell$ reversed when $P_{\ell i,1} + P_{\ell i,2}=0$. 
		
		Finally, when $(i,\ell) \in \mathcal{B}$ we have
		\begin{align}
			\E\left[\widehat{\sigma_i^2 \sigma_\ell^2}\right] &= \left(\sigma_i^2\left( \sigma_\ell^2 + ((x_\ell-\bar x)'\beta)^2 \right) + O\left(\frac{1}{n}\right)\right) 1_{\{\tilde C_{i\ell}<0\}}
		\end{align}
		where the remainder is uniform in $(i,\ell)$ and stems from the use of $\bar y$ as an estimator of $\bar x'\beta$. Thus for sufficiently large $n$, $\E[\tilde C_{i\ell} \widehat{\sigma_i^2 \sigma_\ell^2}]$ is smaller than $\tilde C_{i\ell} {\sigma_i^2 \sigma_\ell^2}$ leading to a positive bias in $\hat {\mathbb{V}}[\hat \theta]$. This bias is
		\begin{align}
			\sum_{(i,\ell) \in \mathcal{B}} \tilde C_{i\ell} \sigma_i^2 \left( \sigma_\ell^2 1_{\{\tilde C_{i\ell}>0\}} + ((x_\ell-\bar x)'\beta)^2 1_{\{\tilde C_{i\ell}<0\}} \right) + O\left(\frac{1}{n} {\mathbb{V}}[\hat \theta]\right)
		\end{align}
		which is ignorable when  $\abs{\mathcal{B}}= O(1)$.
		
		\noindent \textbf{\textit{Variability of $\hat {\mathbb{V}}[\hat \theta]$}}
		Now, $\hat {\mathbb{V}}[\hat \theta] - \E[\hat {\mathbb{V}}[\hat \theta]]$ involves a number of terms all of which are linear, quadratic, cubic, or quartic sums. \th\ref{res:L1} provides sufficient conditions for their convergence in $\mathcal{L}^1$ and therefore in probability. We have already treated versions of linear, quadratic, and cubic terms carefully in the proof of \th\ref{lem:clt}. Thus, we report here the calculations for the quartic terms (details for the remaining terms can be provided upon request) as they also highlight the role of the high-level condition $\lambda_{\max}(P_sP_s') = O(1)$ for $s=1,2$.
		
		The quartic term in $4\sum_{i=1}^n \left( \sum_{\ell \neq i} C_{i\ell} y_\ell  \right)^2 \tilde \sigma_i^2$ is $\sum_{i \neq \ell \neq  m \neq k }^n\omega_{i\ell mk} \varepsilon_i \varepsilon_\ell \varepsilon_m \varepsilon_k$ where
		\begin{align}
			\omega_{i\ell mk} = \sum_{j=1}^n C_{ji} C_{j\ell} M_{jm,1} M_{jk,2}
			\quad \text{and} \quad
			M_{i\ell,s} = 
			\begin{cases}
			1, & \text{if } i=\ell, \\
			-P_{i\ell,s}, & \text{if } i \neq \ell.
			\end{cases}
		\end{align}
		Letting $\odot$ denote Hadamard (element-wise) product and $M_s = I_n - P_s$, we have
		\begin{align}
			\sum_{i \neq \ell \neq  m \neq k }^n\omega_{i\ell mk}^2 
			&\le \sum_{i , \ell ,  m , k }^n\omega_{i\ell mk}^2 = \sum_{j,j'} (C^2)_{jj'}^2 (M_1 M_1')_{jj'}(M_2 M_2')_{jj'} \\
			& = \text{trace}\left( (C^2 \odot C^2) (M_1M_1' \odot M_2M_2') \right) \\
			&\le \lambda_{\max}\left( M_1M_1' \odot M_2M_2' \right) \text{trace}\left( C^2 \odot C^2 \right) 
			=O\left( \text{trace}\left( C^4\right) \right) = o\left( \mathbb{V}[\hat \theta]^2  \right)
		\end{align}
		where $\lambda_{\max}\left( M_1M_1' \odot M_2M_2' \right) = O(1)$ follows from $\lambda_{\max}(P_sP_s') = O(1)$ and we established the last equality in the proof of \th\ref{thm3}. The quartic term involved in $2\sum_{i=1}^n \sum_{\ell \neq i} \tilde C_{i\ell} \widehat{\sigma_i^2 \sigma_\ell^2}$ has variability of the same order as $\sum_{i \neq \ell \neq  m \neq k }^n\omega_{i\ell mk} \varepsilon_i \varepsilon_\ell \varepsilon_m \varepsilon_k$ where
		\begin{align}
			\omega_{i\ell mk} = \tilde C_{i\ell} M_{im,1} M_{lk,1} + \sum_{j=1}^n \tilde C_{ij} M_{im,1} M_{jk,1} M_{j\ell,2}.
		\end{align}
		Letting $\tilde C = (\tilde C_{i\ell})_{i,\ell}$, we find that
		\begin{align}
			\sum_{i \neq \ell \neq  m \neq k }^n\omega_{i\ell mk}^2  
			&\le 2\sum_{i , \ell}^n \tilde C_{i\ell}^2 (M_1M_1')_{ii}  (M_2M_2')_{\ell \ell}  +2\sum_{j,j'} \sum_{i}^n \tilde C_{ij} \tilde C_{ij'} (M_1M_1')_{ii} (M_1M_1')_{jj'} (M_2M_2')_{jj'}  \\
			&=O\left( \sum_{i , \ell}^n \tilde C_{i\ell}^2 + \text{trace}\left( (\tilde C^2 \odot M_1M_1') (M_1M_1' \odot M_2M_2') \right)  \right) \\
			&=O\left( \text{trace}\left( \tilde C^2\right)  \right).
		\end{align}
		We have $\tilde C = C \odot C + 2(C \odot P_1)'(C \odot P_2) + 2(C \odot P_2)'(C \odot P_1)$, from which we obtain that
		\begin{align}
			\text{trace}( \tilde C^2) = O\left( \left(\max_{i,\ell} C_{i\ell}^2 + \lambda_{\max}(C^2)\right) \text{trace}(C^2) \right)  = o\left(\mathbb{V}[\hat \theta]^2 \right)
		\end{align}
		where we established the last equality in the proof of \th\ref{thm3}.
	\end{proof}

	Section \ref{sec:varq} proposed standard errors for the case of $q>0$, but left a few details to the appendix since the definitions were completely analogous to the previous lemma. Those definitions are $\tilde C_{i\ell q} = C_{i\ell q}^2 + 2\sum_{m =1 }^n C_{mi q} C_{m\ell q} (P_{mi,1} P_{m\ell,2} + P_{mi,2} P_{m\ell,1})$ where $C_{i\ell q}$ was introduced in the proof of \th\ref{thm4} and is of the form $C_{i \ell q} = B_{i \ell q} - 2\inverse M_{i \ell} \left( M_{ii}\inverse B_{ii q}   +  M_{\ell \ell}\inverse B_{\ell \ell q}  \right)$ for $B_{i \ell q} =B_{i\ell} - \sum_{s=1}^q \lambda_s w_{is} w_{\ell s}$.

	Furthermore, the proposed standard error estimator relies on
	\begin{align}
		\widetilde{\sigma_i^2 \sigma_\ell^2} &= 
		\begin{cases}
		\hat \sigma_{i,-\ell}^2 \cdot \hat \sigma_{\ell,-i}^2, & \text{if }  P_{ik,-\ell}P_{\ell k,-i}=0 \text{ for all } k, \\
		\tilde \sigma_{i}^2 \cdot \hat \sigma_{\ell,-i}^2,  & \text{else if } P_{i\ell,1} + P_{i\ell,2}=0, \\
		\hat \sigma_{i,-\ell}^2 \cdot	\tilde \sigma_{\ell}^2,  & \text{else if } P_{\ell i,1} + P_{\ell i,2}=0, \\
		\hat \sigma_{i,-\ell}^2 \cdot (y_\ell - \bar y)^2 \cdot 1_{\{\tilde C_{i\ell q} < 0\}}, & \text{otherwise.}
		\end{cases}
	\end{align}
	\begin{lemma}\th\label{lem:var3S}
		For $s =1,2$, suppose that $\widehat{x_i'\beta}_{-i,s}$ satisfies $\sum_{\ell \neq i}^n P_{i\ell,s} x_\ell'\beta=x_i'\beta$, $P_{i\ell,1} P_{i\ell,2} =0$ for all $\ell$, and $\lambda_{\max}(P_sP_s') = O(1)$ where $P_s = (P_{i\ell,s})_{i,\ell}$.
		\begin{enumerate}
			\item If the conditions of \th\ref{thm4} hold and $\abs{\mathcal{B}}= O(1)$, then $\varSigma_q\inverse
			\hat\varSigma_q \xrightarrow{p} I_{q+1}.$
			
			\item If the conditions of \th\ref{thm4} hold, then $\liminf_{n \rightarrow \infty} \Pr\left( \theta \in \hat C_{\alpha,q}^{\theta} \right) \ge 1-\alpha.$
		\end{enumerate}
		
	\end{lemma}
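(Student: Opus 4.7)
\medskip

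\noindent\textit{Proof plan.} The strategy mirrors that of \th\ref{lem:var2S} but lifts the scalar calculations to a $(q+1)\times(q+1)$ matrix. To reduce to a scalar problem I will verify that $v'\hat\varSigma_q v / v'\varSigma_q v \xrightarrow{p} 1$ for every non-random unit $v = (v_1',v_2)' \in \R^{q+1}$ with $v_1 \in \R^q$ and $v_2 \in \R$. Expanding $v'\hat\varSigma_q v$ gives a weighted combination of three pieces: the upper-left Eicker-White piece $\sum_i (v_1'\mathsf{w}_{iq})^2 \hat\sigma_i^2$, the cross piece $4 v_2 \sum_i (v_1'\mathsf{w}_{iq})\bigl(\sum_{\ell \neq i} C_{i\ell q} y_\ell\bigr)\tilde\sigma_i^2$, and the $\hat\theta_q$-piece that has the same structure as the estimator in \th\ref{lem:var2S} with $C_{i\ell q}$ replacing $C_{i\ell}$. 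For each piece I separately analyze the bias $\E[v'\hat\varSigma_q v] - v'\varSigma_q v$ and the stochastic fluctuation.

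For the bias, unbiasedness of the upper-left block follows immediately from $\E[\hat\sigma_i^2]=\sigma_i^2$, which is \th\ref{lem:unbiased}. For the cross piece, the sample-splitting properties $P_{i\ell,1}P_{i\ell,2}=0$ and $\sum_\ell P_{i\ell,s} x_\ell'\beta = x_i'\beta$ give $\E[\tilde\sigma_i^2 y_\ell] = \sigma_i^2 (x_\ell'\beta)$ whenever $\ell \neq i$, so this block is also unbiased. For the $\hat\theta_q$-piece the calculations in the proof of \th\ref{lem:var2S} go through verbatim once $C_{i\ell}$ is replaced by $C_{i\ell q}$, giving $\E[\cdot]=\mathbb{V}[\hat\theta_q] + 2\sum_{i \neq \ell} \tilde C_{i\ell q}(\E[\widetilde{\sigma_i^2\sigma_\ell^2}] - \sigma_i^2\sigma_\ell^2)$, where the remainder vanishes on pairs outside $\mathcal{B}$ by the case analysis of $\widetilde{\sigma_i^2\sigma_\ell^2}$, is non-negative on $\mathcal{B}$ (up to an $O(n^{-1}\mathbb{V}[\hat\theta])$ term from replacing $\bar x'\beta$ by $\bar y$), and hence is asymptotically negligible relative to $\mathbb{V}[\hat\theta_q]$ under $|\mathcal{B}|=O(1)$.

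For the stochastic fluctuation I expand $v'\hat\varSigma_q v - \E[v'\hat\varSigma_q v]$ into a sum of linear, quadratic, cubic and quartic polynomials in the independent errors and apply \th\ref{res:L1}. The linear, quadratic and cubic terms are handled exactly as in the proof of \th\ref{lem:var2S}. The delicate step is the quartic term, which after expansion has coefficient $\omega_{i\ell mk}$ either of the form $\sum_{j} C_{ji,q} C_{j\ell,q} M_{jm,1} M_{jk,2}$ (from the $\hat\theta_q$-piece) or $\tilde C_{i\ell q} M_{im,1} M_{\ell k,2} + \sum_j \tilde C_{ij,q} M_{im,1} M_{jk,1} M_{j\ell,2}$ (from the correction term). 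Bounding $\sum_{i\neq\ell\neq m\neq k} \omega_{i\ell mk}^2$ reduces, by the Hadamard-product argument used in \th\ref{lem:var2S}, to $\lambda_{\max}(M_1M_1' \odot M_2M_2')\,\mathrm{trace}(C_q^2 \odot C_q^2)$ and an analogous expression with $\tilde C_q$. The projection hypothesis $\lambda_{\max}(P_sP_s')=O(1)$ controls the Hadamard factor, and the eigenvalue bookkeeping from the proof of \th\ref{thm4} yields $\mathrm{trace}(C_q^4)=o(\mathbb{V}[\hat\theta_q]^2)$ and $\mathrm{trace}(\tilde C_q^2) = o(\mathbb{V}[\hat\theta_q]^2)$, completing Part~1.

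For Part~2 I drop the $|\mathcal{B}|=O(1)$ restriction. The bias calculations above show $\E[\hat\varSigma_q] - \varSigma_q$ is asymptotically positive semi-definite: the upper-left and cross blocks remain unbiased, and the $\hat\theta_q$-piece is biased upward by $2\sum_{(i,\ell)\in\mathcal{B}} \tilde C_{i\ell q}\bigl(\sigma_i^2\sigma_\ell^2 \mathbf{1}_{\{\tilde C_{i\ell q}>0\}} + \sigma_i^2((x_\ell-\bar x)'\beta)^2 \mathbf{1}_{\{\tilde C_{i\ell q}<0\}}\bigr) + o(\mathbb{V}[\hat\theta])$. Combined with the stochastic-fluctuation bound, this gives $\hat\varSigma_q \succeq \varSigma_q + o_p(\|\varSigma_q\|)$ in the Loewner order, so the ellipsoid $\hat{\mathsf{E}}_{\alpha,q}$ based on $\hat\varSigma_q$ asymptotically contains the ellipsoid one would form using $\varSigma_q$ with the same critical value. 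Because $z_{\alpha,\kappa}$ is monotone in its arguments in the way required by \citet{andrews2016geometric}, coverage of the quadratic projection $\sum_\ell \lambda_\ell \dot b_\ell^2 + \dot\theta_q$ is at least as large as under the consistent-variance case covered by \th\ref{lem:inf}. The main obstacle is this last step: tracking how the data-dependent curvature estimate $\hat\kappa_q$ interacts with the inflation of $\hat\varSigma_q$ and verifying that the resulting projection interval still delivers at least nominal coverage uniformly; here I will exploit that the curvature only affects the critical value through the ratio $|\lambda_1|\hat{\mathbb{V}}[\hat b_1]/\hat{\mathbb{V}}[\hat\theta_q]^{1/2}$, which is stable under PSD inflation of $\hat\varSigma_q$.
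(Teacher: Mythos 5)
Your Part 1 tracks the paper's own proof: the diagonal blocks are delegated to the arguments of \th\ref{thm2S} and \th\ref{lem:var2S} with $C_{i\ell q}$ in place of $C_{i\ell}$, the cross block is unbiased because sample splitting forces $\E[\varepsilon_\ell \tilde\sigma_i^2]=0$ for $\ell\neq i$, and the fluctuations are dispatched by \th\ref{res:L1} with the quartic coefficients bounded through Hadamard products, $\lambda_{\max}(P_sP_s')=O(1)$, and $\text{trace}(C_q^4)=o(\mathbb{V}[\hat\theta_q]^2)$. This is essentially the paper's argument; my only quibble is that the off-diagonal entry should be normalized by the geometric mean $\mathbb{V}[v_1'\hat{\mathsf{b}}_q]^{1/2}\mathbb{V}[\hat\theta_q]^{1/2}$ (as the paper does) rather than by $v'\varSigma_q v$, so that the conclusion $\varSigma_q^{-1}\hat\varSigma_q\xrightarrow{p}I_{q+1}$ survives when the two diagonal variances are of different orders.

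Part 2 contains a genuine gap, and it sits precisely where you flag ``the main obstacle.'' Your proposed resolution --- that $\hat\kappa_q$ only enters through the ratio $\abs{\lambda_1}\hat{\mathbb{V}}[\hat b_1]/\hat{\mathbb{V}}[\hat\theta_q]^{1/2}$, which is ``stable under PSD inflation'' --- does not work: that formula is specific to $q=1$, and the inflation is concentrated in the $\hat{\mathbb{V}}[\hat\theta_q]$ entry, so the ratio, and with it $\hat\kappa_q$ and the critical value $z_{\alpha,\hat\kappa_q}$ (note $\rho(\cdot,\cdot,\kappa)$ is increasing in $\kappa$), strictly \emph{decreases}; nothing is stable, and a smaller critical value cuts against conservativeness. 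The paper closes the argument without any stability claim. First replace $\hat\varSigma_q$ by $\E[\hat\varSigma_q]$, since consistency for the expectation holds without $\abs{\mathcal{B}}=O(1)$. Write $\E[\hat\varSigma_q]^{-1/2}\varSigma_q\E[\hat\varSigma_q]^{-1/2}=\mathsf{Q}\mathsf{D}\mathsf{Q}'$ with $\mathsf{D}\preceq I_{q+1}$ by the positive semi-definite bias, and set $\xi=\mathsf{Q}'\E[\hat\varSigma_q]^{-1/2}\left((\hat{\mathsf{b}}_q',\hat\theta_q)'-\E[(\hat{\mathsf{b}}_q',\hat\theta_q)']\right)$, so that $\mathsf{D}^{-1/2}\xi\xrightarrow{d}\mathcal{N}(0,I_{q+1})$. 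The minimum-distance statistic equals $\min_{x\in S_2}(\xi-x)'(\xi-x)$ where the null manifold $S_2$ has curvature $\kappa(\E[\hat\varSigma_q])$ (curvature is rotation invariant), and the bound from \cite{andrews2016geometric} gives $\min_{x\in S_2}(\xi-x)'(\xi-x)\le\rho^2(\norm{\xi_{-1}},\abs{\xi_1},\kappa)$ for exactly that $\kappa$ --- the same one the interval's critical value is computed from. Coordinate-wise monotonicity of $\rho$ together with $\abs{\xi_j}\le\abs{(\mathsf{D}^{-1/2}\xi)_j}$ then bounds this by $\rho^2(\norm{(\mathsf{D}^{-1/2}\xi)_{-1}},\abs{(\mathsf{D}^{-1/2}\xi)_1},\kappa)$, which converges in distribution to $\rho^2(\chi_q,\chi_1,\kappa)$, whose $(1-\alpha)$ quantile is $z^2_{\alpha,\kappa}$ by construction. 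In short, the critical value is calibrated to the curvature of the manifold in the metric actually used, and conservativeness comes solely from $\mathsf{D}\preceq I$ acting through the monotonicity of $\rho$; your sketch needs this geometric step, not a stability claim about $\hat\kappa_q$.
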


	The following provides a proof of the first claim of this lemma, while we postpone a proof of the second claim to the end of Appendix \ref{app:conf}.	
	
	\begin{proof}
		The statements $\mathbb{V}[\hat {\mathsf{b}}_q]\inverse	\hat{\mathbb{V}}[\hat {\mathsf{b}}_q]\xrightarrow{p} I_{q} $ and $\mathbb{V}[\hat \theta_q]\inverse\hat{\mathbb{V}}[\hat \theta_q] \xrightarrow{p} 1$ follow by applying the arguments in \th\ref{thm2S,lem:var2S}. Thus we focus on the remaining claim that
		\begin{align}
			\delta(v) := \frac{\hat{\mathcal{C}}[v'\hat {\mathsf{b}}_q,\hat \theta_q]-{\mathcal{C}}[v'\hat {\mathsf{b}}_q,\hat \theta_q]}{\mathbb{V}[v'\hat {\mathsf{b}}_q]^{1/2} \mathbb{V}[\hat \theta_q]^{1/2}} \xrightarrow{p} 0
			\quad \text{where} \quad
			\hat{\mathcal{C}}[v'\hat {\mathsf{b}}_q,\hat \theta_q] = 2 \sum_{i=1}^n v' \mathsf{w}_{iq}\left(\sum_{\ell \neq i} C_{i\ell q} y_\ell \right) \tilde \sigma_i^2
		\end{align} 
		for all non-random $v \in \R^{q}$ with $v'v=1$. 
		
		\noindent\textbf{\textit{Unbiasedness of $\hat{\mathcal{C}}[v'\hat {\mathsf{b}}_q,\hat \theta_q]$}} 
		Since $\tilde \sigma_i^2$ is unbiased for $\sigma_i^2$, it follows that
		\begin{align}
			\E\left[\hat{\mathcal{C}}[v'\hat {\mathsf{b}}_q,\hat \theta_q]\right] 
			&= 2 \sum_{i=1}^n v' \mathsf{w}_{iq}\left(\sum_{\ell \neq i} C_{i\ell q} x_\ell'\beta \right) \sigma_i^2 + 2 \sum_{i=1}^n v' \mathsf{w}_{iq}\left(\sum_{\ell \neq i} C_{i\ell q} \E[\varepsilon_\ell \tilde \sigma_i^2 ] \right) 
			= {\mathcal{C}}[v'\hat {\mathsf{b}}_q,\hat \theta_q]
		\end{align}
		as split sampling ensures that $\E[\varepsilon_\ell \tilde \sigma_i^2 ]$ for $\ell \neq i$.
		
		\noindent \textbf{\textit{Variability of $\hat{\mathcal{C}}[v'\hat {\mathsf{b}}_q,\hat \theta_q]$}} Now, $\hat{\mathcal{C}}[v'\hat {\mathsf{b}}_q,\hat \theta_q]-{\mathcal{C}}[v'\hat {\mathsf{b}}_q,\hat \theta_q]$ is composed of the following linear, quadratic, and quartic sums:
		{\small
		\begin{align}
			&\sum_{i=1}^n v'\mathsf{w}_{iq} \left[ \left(\varepsilon_i^2-\sigma_i^2\right)\sum_{\ell \neq i} C_{i\ell q} x_\ell'\beta + \sigma_i^2\sum_{\ell \neq i} C_{i\ell q} \varepsilon_\ell + \sum_{\ell \neq i} C_{i\ell q} \sigma_\ell^2 \sum_{k \neq \ell} \left(M_{i\ell,1} M_{ik,2} + M_{i\ell,2} M_{ik,1}\right) \varepsilon_k \right] \\
			&\sum_{i=1}^n v'\mathsf{w}_{iq} \Bigg[ \sum_{\ell \neq i} C_{i\ell q} x_\ell'\beta \sum_{m} \sum_{k \neq m} M_{im,1}   M_{ik,2} \varepsilon_m\varepsilon_k + \sum_{\ell \neq i} C_{i\ell q} \varepsilon_\ell \left(\varepsilon_i^2-\sigma_i^2\right) \\
			&\phantom{\sum_{i=1}^n v'\mathsf{w}_{iq} \Bigg[}+ \sum_{\ell \neq i} C_{i\ell q} \sum_{k \neq \ell} \left(M_{i\ell,1} M_{ik,2} + M_{i\ell,2} M_{ik,1}\right) \varepsilon_k\left(\varepsilon_\ell^2-\sigma_\ell^2\right) \Bigg] \\
			& \sum_{i=1}^n v'\mathsf{w}_{iq} \sum_{\ell \neq i} C_{i\ell q}  \sum_{m \neq \ell} \sum_{k \neq m,\ell} M_{im,1}   M_{ik,2} \varepsilon_\ell\varepsilon_m\varepsilon_k
		\end{align}
		}%
		These seven terms are $o_p(\mathbb{V}[v'\hat {\mathsf{b}}_q]^{1/2} \mathbb{V}[\hat \theta_q]^{1/2})$ by \th\ref{res:L1} as outlined in the following.
		{\small	
		\begin{align}
			&\sum_{i=1}^n (v'\mathsf{w}_{iq})^2 \left( \sum_{\ell \neq i} C_{i\ell q} x_\ell'\beta \right)^2 = O( \max_{i} \mathsf{w}_{iq}'\mathsf{w}_{iq} \mathbb{V}[\hat \theta_q] ) = o(\mathbb{V}[v'\hat {\mathsf{b}}_q] \mathbb{V}[\hat \theta_q] ) \\
			&\sum_{\ell=1}^n \left(\sum_{i=1}^n v'\mathsf{w}_{iq} C_{i\ell q}\right)^2 = O(\lambda_{\max}( C_q^2 ) \mathbb{V}[v'\hat {\mathsf{b}}_q] ) = O( \lambda_{q+1}^2 \mathbb{V}[v'\hat {\mathsf{b}}_q] ) = o(\mathbb{V}[v'\hat {\mathsf{b}}_q] \mathbb{V}[\hat \theta_q] ) \\
			&\sum_{k=1}^n \left(\sum_{i=1}^n v'\mathsf{w}_{iq} \sum_{\ell} C_{i\ell q} M_{i\ell,1} M_{ik,2}  \right)^2 = O( \max_{i} \mathsf{w}_{iq}'\mathsf{w}_{iq} \text{trace}(C_qM_1 \odot C_qM_1) ) = o(\mathbb{V}[v'\hat {\mathsf{b}}_q] \mathbb{V}[\hat \theta_q] ) \\
			&\sum_{m=1}^n \sum_{k=1}^n \left( \sum_{i=1}^n v'\mathsf{w}_{iq} \sum_{\ell \neq i} C_{i\ell q} x_\ell'\beta M_{im,1}   M_{ik,2} \right)^2 = O\left( \sum_{i=1}^n (v'\mathsf{w}_{iq})^2 \left( \sum_{\ell \neq i} C_{i\ell q} x_\ell'\beta \right)^2 \right) \\
			&\sum_{i=1}^n \sum_{\ell \neq i} C_{i\ell q} ^2(v'\mathsf{w}_{iq})^2 = O( \max_{i} \mathsf{w}_{iq}'\mathsf{w}_{iq} \mathbb{V}[\hat \theta_q] ) \\
			&\sum_{k=1}^n \sum_{\ell=1}^n \left( \sum_{i=1}^n v'\mathsf{w}_{iq} C_{i\ell q} M_{i\ell,1} M_{ik,2} \right)^2 = O\left( \mathbb{V}[v'\hat {\mathsf{b}}_q]\lambda_{max}( (C_q \odot M_1)(C_q \odot M_1)' ) \right) = o(\mathbb{V}[v'\hat {\mathsf{b}}_q] \mathbb{V}[\hat \theta_q] ) \\
			&\sum_{\ell=1}^n \sum_{m=1}^n \sum_{k=1}^n \left( \sum_{i=1}^n v'\mathsf{w}_{iq} C_{i\ell q}  M_{im,1}   M_{ik,2} \right)^2 = O\left( \mathbb{V}[v'\hat {\mathsf{b}}_q] \lambda_{max}( C_q^2) \right) \qedhere
		\end{align}
		}%
	\end{proof}
	
	\subsubsection{Conservative Variance Estimation}\label{sec:cons}
	
	The standard error estimators considered in the preceding two lemmas relied on existence of the independent and unbiased estimators $\widehat{x_i'\beta}_{-i,1}$ and $\widehat{x_i'\beta}_{-i,2}$. This part of the appendix creates an adjustment for observations where these estimators do not exist. The adjustment ensures that one can obtain valid inference as stated in the lemma at the end of the subsection.

	For observations where it is not possible to create $\widehat{x_i'\beta}_{-i,1}$ and $\widehat{x_i'\beta}_{-i,2}$, we construct $\widehat{x_i'\beta}_{-i,1}$ to satisfy the requirements in \th\ref{lem:4} and set $P_{i\ell,2}=0$ for all $\ell$ so that $\widehat{x_i'\beta}_{-i,2}=0$. Then we define $\mathcal{Q}_i = 1_{\{\max_{\ell} P_{i\ell,2}^2=0\}}$ as an indicator that $\widehat{x_i'\beta}_{-i,2}$ could not be constructed as an unbiased estimator.

	Based on this we let 
	\begin{align}
		\hat{\mathbb{V}}_2[\hat \theta] &= 4\sum_{i=1}^n \left( \sum_{\ell \neq i} C_{i\ell} y_\ell  \right)^2 \tilde \sigma_{i,2}^2 - 2 \sum_{i=1}^n \sum_{\ell \neq i} \tilde C_{i\ell} \widehat{\sigma_i^2 \sigma_\ell^2}_2
	\end{align}
	where $\tilde \sigma_{i,2}^2 =(1-\mathcal{Q}_i) \tilde \sigma_{i}^2 + \mathcal{Q}_i (y_i - \bar y)^2$ and 
	{\small
	\begin{align}
	\widehat{\sigma_i^2 \sigma_\ell^2}_2 &= 
	\begin{cases}
	\hat \sigma_{i,-\ell}^2 \cdot \hat \sigma_{\ell,-i}^2, & \text{if }  P_{ik,-\ell}P_{\ell k,-i}=0 \text{ for all } k \text{ and } \mathcal{Q}_{i\ell}=\mathcal{Q}_{\ell i}=0 \\
	\tilde \sigma_{i}^2 \cdot \hat \sigma_{\ell,-i}^2,  & \text{else if } P_{i\ell,1} + P_{i\ell,2}=0\text{ and } \mathcal{Q}_i =\mathcal{Q}_{\ell i} =0, \\
	\hat \sigma_{i,-\ell}^2 \cdot	\tilde \sigma_{\ell}^2,  & \text{else if } P_{\ell i,1} + P_{\ell i,2}=0 \text{ and } \mathcal{Q}_\ell =\mathcal{Q}_{i \ell} =0, \\
	\hat \sigma_{i,-\ell}^2 \cdot (y_\ell - \bar y)^2 \cdot 1_{\{\tilde C_{i\ell} < 0\}}, & \text{else if } \mathcal{Q}_{i\ell}=0, \\
	(y_i - \bar y)^2 \cdot \hat \sigma_{\ell,-i}^2 \cdot 1_{\{\tilde C_{i\ell} < 0\}}, & \text{else if } \mathcal{Q}_{\ell i}=0, \\
	(y_i - \bar y)^2 \cdot (y_\ell - \bar y)^2 \cdot 1_{\{\tilde C_{i\ell} < 0\}}, & \text{otherwise}
	\end{cases}
	\end{align}
	}%
	where we let $\mathcal{Q}_{i\ell} = 1_{\{P_{i\ell,1}\neq 0 \neq \mathcal{Q}_i \}}$. The defintion of $\hat{\mathbb{V}}_2[\hat \theta]$ is such that $\hat{\mathbb{V}}_2[\hat \theta]=\hat{\mathbb{V}}[\hat \theta]$ when two independent unbiased estimators of $x_i'\beta$ can be formed for all observations, i.e., when $\mathcal{Q}_i=0$ for all $i$.

	Similarly, we let
	\begin{align}
		\hat\varSigma_{q,2} &= \sum_{i=1}^n \begin{bmatrix}
		\mathsf{w}_{iq} \mathsf{w}_{iq}' \hat \sigma_{i,2}^2 & 2 \mathsf{w}_{iq}\left(\sum_{\ell \neq i} C_{i\ell q} y_\ell \right) \tilde \sigma_{i,2}^2 \\
		2 \mathsf{w}_{iq}'\left(\sum_{\ell \neq i} C_{i\ell q} y_\ell \right) \tilde \sigma_{i,2}^2  & 4 \left(\sum_{\ell \neq i} C_{i\ell q} y_\ell \right)^2 \tilde \sigma_i^2 - 2 \sum_{\ell \neq i} \tilde C_{i\ell q}^2 \widetilde{\sigma_i^2 \sigma_\ell^2}_2
		\end{bmatrix}
	\end{align}
	where $\hat \sigma_{i,2}^2 = (1-\mathcal{Q}_i) \hat \sigma_i^2 + \mathcal{Q}_i(y_i - \bar y)^2$ and $\widetilde{\sigma_i^2 \sigma_\ell^2}_2$ is defined as $\widehat{\sigma_i^2 \sigma_\ell^2}_2$ but using $\tilde C_{i\ell q}$ instead of $\tilde C_{i\ell}$. 
	
	The following lemma shows that these estimators of the asymptotic variance leads to valid inference when coupled with the confidence intervals proposed in Sections \ref{sec:dist} and \ref{sec:variance}.
	
	\begin{lemma}\th\label{lem:conservative}
		Suppose that $\sum_{\ell \neq i}^n P_{i\ell,1} x_\ell'\beta=x_i'\beta$, either $\sum_{\ell \neq i}^n P_{i\ell,2} x_\ell'\beta=x_i'\beta$ or $\max_{\ell} P_{i\ell,2}^2=0$, $P_{i\ell,1} P_{i\ell,2} =0$ for all $\ell$, and $\lambda_{\max}(P_sP_s') = O(1)$ where $P_s = (P_{i\ell,s})_{i,\ell}$. 
		\begin{enumerate}
			\item If the conditions of \th\ref{thm3} hold, then $\liminf_{n \rightarrow \infty}\Pr\left( \theta \in \left[ \hat \theta \pm z_{\alpha} \hat{\mathbb{V}}_2[\hat \theta]^{1/2}  \right] \right) \ge 1-\alpha$.
			
			\item If the conditions of \th\ref{thm4} hold, then $\liminf_{n \rightarrow \infty} \Pr\left( \theta \in C_{\alpha}^{\theta}(\hat \varSigma_{q,2}) \right) \ge 1-\alpha$.

		\end{enumerate}
		
	\end{lemma}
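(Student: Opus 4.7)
The plan is to show that replacing missing split-sample pieces with the inflated surrogate $(y_i-\bar y)^2$ turns $\hat{\mathbb{V}}_2[\hat\theta]$ and $\hat\varSigma_{q,2}$ into asymptotically upward-biased analogues of the consistent estimators studied in \th\ref{lem:var2S,lem:var3S}, and then to transfer the asymptotic normality of \th\ref{thm3,thm4} through the confidence interval construction while exploiting monotonicity in the variance estimator. The starting observation is the inequality
\begin{align}
\E[(y_i-\bar y)^2] \;=\; (x_i'\beta-\bar x'\beta)^2 + \left(1-\tfrac{1}{n}\right)^2 \sigma_i^2 + \tfrac{1}{n^2}\sum_{j\neq i}\sigma_j^2 \;\ge\; \sigma_i^2 - O(n^{-1}),
\end{align}
together with its bivariate analogue $\E[(y_i-\bar y)^2(y_\ell-\bar y)^2]\ge \sigma_i^2\sigma_\ell^2 - O(n^{-1})$ under \th\ref{ass:reg}, which implies that any replacement of $\tilde\sigma_i^2$ by $(y_i-\bar y)^2$ or of $\hat\sigma_{i,-\ell}^2\cdot\hat\sigma_{\ell,-i}^2$ by an $(y\!-\!\bar y)^2$-based surrogate contributes non-negatively in the limit.

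For part (1) I would decompose $\hat{\mathbb{V}}_2[\hat\theta]-\mathbb{V}[\hat\theta]$ into the bias expression from the proof of \th\ref{lem:var2S} (which is non-negative up to an $O(n^{-1}\mathbb{V}[\hat\theta])$ remainder after applying the inequality above to every index $i$ with $\mathcal{Q}_i=1$ and every pair on which the modified $\widehat{\sigma_i^2\sigma_\ell^2}_2$ differs from the original) plus a mean-zero stochastic part. The stochastic part is, up to the additional linear/quadratic/quartic sums involving the $(y_i-\bar y)^2$ terms, the same collection of $U$-statistic-type sums handled in the proof of \th\ref{lem:var2S}; each new term can be bounded by the same Hadamard/trace manipulations (using $\lambda_{\max}(P_sP_s')=O(1)$ and $\mathrm{trace}(C^4)=o(\mathbb{V}[\hat\theta]^2)$) and shown to be $o_p(\mathbb{V}[\hat\theta])$ via \th\ref{res:L1}. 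Consequently $\mathbb{V}[\hat\theta]^{-1}\hat{\mathbb{V}}_2[\hat\theta]\ge 1-o_p(1)$, and combining with the CLT of \th\ref{thm3} gives $\liminf \Pr(\theta\in[\hat\theta\pm z_\alpha\hat{\mathbb{V}}_2[\hat\theta]^{1/2}])\ge 1-\alpha$.

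For part (2) I would apply the same bias/stochastic decomposition entrywise to $\hat\varSigma_{q,2}$ to establish that the limiting bias $\E[\hat\varSigma_{q,2}]-\varSigma_q$ is positive semi-definite up to an $o(\|\varSigma_q\|)$ remainder, and that $\varSigma_q^{-1/2}(\hat\varSigma_{q,2}-\E[\hat\varSigma_{q,2}])\varSigma_q^{-1/2}=o_p(1)$ by reusing the variability bounds from the proof of \th\ref{lem:var3S}. Given this, one writes $\hat\varSigma_{q,2}=\hat\varSigma_q^\star+R_n$ with $\hat\varSigma_q^\star$ satisfying $\varSigma_q^{-1}\hat\varSigma_q^\star\xrightarrow{p}I_{q+1}$ and $R_n$ asymptotically positive semi-definite; the ellipsoid $\hat{\mathsf{E}}_{\alpha,q}$ built with $\hat\varSigma_{q,2}$ then contains the one built with $\hat\varSigma_q^\star$, so the image interval $\hat C_{\alpha,q}^\theta$ contains the interval of \th\ref{lem:inf} provided the curvature-adjusted critical value $z_{\alpha,\hat\kappa_q}$ does not shrink under inflation. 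I would verify this last monotonicity by showing that $\hat\kappa_q$ is bounded above by its ``$\hat\varSigma_q^\star$-version'' up to $o_p(1)$ (since inflating the $(q+1,q+1)$ block of the variance weakly increases $\hat{\mathbb{V}}[\hat\theta_q]$ in the denominator defining $\kappa$), so the Andrews (2016) critical value only grows in the limit.

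The main obstacle will be this last step: the \cite{andrews2016geometric} critical value $z_{\alpha,\kappa}$ depends non-monotonically on $\kappa$ in general, and the curvature itself is a nonlinear functional of the full matrix $\hat\varSigma_{q,2}$, so ``larger variance $\Rightarrow$ larger confidence interval'' is not automatic. I expect one has to argue directly at the level of the optimization problem defining $\hat C_{\alpha,q}^\theta$: dominating $\hat\varSigma_q^\star$ by $\hat\varSigma_{q,2}$ in the PSD order enlarges the feasible set $\hat{\mathsf{E}}_{\alpha,q}$ for \emph{every} fixed critical value, and a separate argument shows the critical value itself is bounded below by $z_{\alpha,0}$ and bounded above by the projection-method value $z_{\alpha,\infty}$, so the interval always contains the asymptotically valid $q=0$ normal interval inflated by the PSD surplus. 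This pinning between the two extremes of the curvature range, together with parts (1)--(2) above, yields the desired $\liminf\Pr(\theta\in\hat C_{\alpha,q}^\theta(\hat\varSigma_{q,2}))\ge 1-\alpha$.
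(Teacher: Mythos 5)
Your part (1) follows the paper's route: show the bias of $\hat{\mathbb{V}}_2[\hat \theta]$ is non-negative up to an $O(n^{-1}\mathbb{V}[\hat\theta])$ remainder, show the stochastic part is $o_p(\mathbb{V}[\hat\theta])$ by the same trace/Hadamard bounds, and combine with \th\ref{thm3}. One caveat: your claim that ``any replacement\ldots contributes non-negatively'' is too quick. The surrogates $(y_\ell-\bar y)^2$ enter $\widehat{\sigma_i^2\sigma_\ell^2}_2$ only on the event $\tilde C_{i\ell}<0$; it is the interaction of this sign restriction with the minus sign in front of the correction term $-2\sum\tilde C_{i\ell}\widehat{\sigma_i^2\sigma_\ell^2}_2$ that makes overestimation of $\sigma_i^2\sigma_\ell^2$ conservative. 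Without tracking that, the direction of the bias is indeterminate.

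The genuine gap is in part (2), and you have correctly located it but not resolved it. Your fallback---that the critical value is pinned between $z_{\alpha,0}$ and the projection value, so the set contains ``the asymptotically valid $q=0$ normal interval inflated by the PSD surplus''---does not deliver $1-\alpha$ coverage: when $q>0$ the limit law of $\hat\theta$ is a non-central quadratic in normals, and containing a $z_{\alpha,0}$-based interval is exactly what fails to control size in that regime. Comparing the sets built from $\hat\varSigma_{q,2}$ and from a consistent $\hat\varSigma_q^\star$ forces you to compare $z_{\alpha,\kappa(\hat\varSigma_{q,2})}$ with $z_{\alpha,\kappa(\hat\varSigma_q^\star)}$, which is the non-monotonicity you cannot control. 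The paper avoids this comparison entirely. It never introduces a second confidence set: it works with the minimum-distance statistic normalized by $\E[\hat\varSigma_{q,2}]$ itself, with critical value $z_{\alpha,\kappa(\E[\hat\varSigma_{q,2}])}$ computed at that \emph{same} matrix. Writing $\mathsf{Q}\mathsf{D}\mathsf{Q}'$ for the spectral decomposition of $\E[\hat\varSigma_{q,2}]^{-1/2}\varSigma_q\E[\hat\varSigma_{q,2}]^{-1/2}$, the PSD bias gives $\mathsf{D}\le I_{q+1}$, and the rotated score $\xi$ satisfies $\mathsf{D}^{-1/2}\xi\xrightarrow{d}\mathcal{N}(0,I_{q+1})$. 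The geometric bound of \cite{andrews2016geometric} gives
\begin{align}
\min_{x\in S_2}(\xi-x)'(\xi-x)\le \rho^2\left(\norm{\xi_{-1}},\abs{\xi_1},\kappa\right)\le \rho^2\left(\norm{(\mathsf{D}^{-1/2}\xi)_{-1}},\abs{(\mathsf{D}^{-1/2}\xi)_1},\kappa\right),
\end{align}
where the second inequality uses only that $\rho(\cdot,\cdot,\kappa)$ in \eqref{eq:rho} is increasing in each of its first two arguments for \emph{fixed} $\kappa$, and the right-hand side converges to $\rho^2(\chi_q,\chi_1,\kappa)$, which is below $z^2_{\alpha,\kappa}$ with probability $1-\alpha$ by construction of the critical value. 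Monotonicity is thus pushed into the distance function rather than the critical value, and the curvature comparison you flagged as the main obstacle never arises. Without this (or an equivalent) device, your part (2) does not close.
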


	The following provides a proof of the first claim of this lemma, while we postpone a proof of the second claim to the end of Appendix \ref{app:conf}.	

	\begin{proof}
		As in the proof of \th\ref{lem:varCS} it suffices to show that $\hat{\mathbb{V}}_2[\hat \theta]$ has a positive bias in large samples and that $\hat{\mathbb{V}}_2[\hat \theta] - \E[\hat{\mathbb{V}}_2[\hat \theta]]$ is $o_p({\mathbb{V}}[\hat \theta])$. The second claim involves no new arguments relative to the proof of \th\ref{lem:varCS} and is therefore omitted. Thus we briefly report the positive bias in $\hat{\mathbb{V}}_2[\hat \theta]$.
		
		We have that
		\begin{align}
			\E\left[\hat{\mathbb{V}}_2[\hat \theta]\right] &= {\mathbb{V}}[\hat \theta] + 4\sum_{i : \mathcal{Q}_i=1} \left( \sum_{\ell \neq i} C_{i\ell} x_\ell'\beta  \right)^2 ((x_i-\bar x)'\beta)^2 \\
			&+2 \sum_{(i,\ell) \in \mathcal{B}_1} \tilde C_{i\ell} \sigma_i^2 \left( \sigma_\ell^2 1_{\{\tilde C_{i\ell}>0\}} + ((x_\ell-\bar x)'\beta)^2 1_{\{\tilde C_{i\ell}<0\}} \right)  \\
			&+2 \sum_{(i,\ell) \in \mathcal{B}_2} \tilde C_{i\ell} \sigma_\ell^2 \left( \sigma_i^2 1_{\{\tilde C_{i\ell}>0\}} + ((x_i-\bar x)'\beta)^2 1_{\{\tilde C_{i\ell}<0\}} \right) \\
			&+2 \sum_{(i,\ell) \in \mathcal{B}_3} \tilde C_{i\ell} \left( \sigma_i^2\sigma_\ell^2 1_{\{\tilde C_{i\ell}>0\}} + \left( 2 \sigma_i^2 ((x_\ell-\bar x)'\beta)^2 + ((x_i-\bar x)'\beta (x_\ell-\bar x)'\beta)^2\right)1_{\{\tilde C_{i\ell}<0\}} \right) \\
			&+ O\left(\frac{1}{n} {\mathbb{V}}[\hat \theta]\right)
 		\end{align}
		where the remainder stems from estimation of $\bar y$ and $\mathcal{B}_1, \ \mathcal{B}_2, \ \mathcal{B}_3 $ refers to pairs of observations that fall in each of the three last cases in the definition of $\widehat{\sigma_i^2 \sigma_\ell^2}_2$.
	\end{proof}

	\subsection{Inference with Nuisance Parameters}\label{app:conf}

	This Appendix starts by defining curvature and accompanying critical value for a given curvature as introduced in Section \ref{sec:variance}. Then it derives the closed form representation of $C_{\alpha}^\theta(\tilde \varSigma_1)$ for any variance matrix $\tilde \varSigma_1 \in \R^{2\times 2}$ where for general $q$ we have
	\begin{align}
		C_{\alpha}^\theta(\tilde \varSigma_q) &= \left[ \min_{(\dot b_1,\dots,\dot b_q,\dot \theta_q)'\in \mathsf{E}_\alpha(\tilde \varSigma_q)} \sum_{\ell=1}^{q} \lambda_\ell \dot b_\ell^2 + \dot \theta_q , \max_{(\dot b_1,\dots,\dot b_q,\dot \theta_q)'\in \mathsf{E}_\alpha(\tilde \varSigma_q)} \sum_{\ell=1}^{q} \lambda_\ell \dot b_\ell^2 + \dot \theta_q \right]
		\shortintertext{and}
		\mathsf{E}_\alpha(\tilde \varSigma_q) &= \left\{ (\mathsf{b}_q',\theta_q)' \in \R^{q+1} : \begin{pmatrix}
		\hat{\mathsf{b}}_q - \mathsf{b}_q \\ \hat \theta_q - \theta_q
		\end{pmatrix}'\tilde \varSigma_q\inverse \begin{pmatrix}
		\hat{\mathsf{b}}_q - \mathsf{b}_q \\ \hat \theta_q - \theta_q
		\end{pmatrix} \le z_{\alpha,{\kappa}(\tilde \varSigma_q)}^2\right\}.
	\end{align}
	Finally, it proofs validity of $\hat C_{\alpha,q}^\theta = C_{\alpha}^\theta(\hat \varSigma_q)$ and $C_{\alpha}^\theta(\hat \varSigma_{q,2})$ for any fixed $q$. As for $\hat \varSigma_q$ and $\hat \varSigma_{q,2}$, we partition $\tilde \varSigma_q$ into $\tilde \varSigma_q = \begin{bmatrix}
	\tilde{\mathbb{V}}[\hat {\mathsf{b}}_q] & \tilde{\mathbb{C}}[\hat {\mathsf{b}}_q,\hat \theta_{q}]'\\
	\tilde{\mathbb{C}}[\hat {\mathsf{b}}_q,\hat \theta_{q}] & \tilde{\mathbb{V}}[\hat \theta_{q}]
	\end{bmatrix}$ with $\tilde{\mathbb{V}}[\hat \theta_{q}] \in \R$. In Section \ref{sec:variance}, $\hat C_{\alpha,q}^\theta = C_{\alpha}^\theta(\hat \varSigma_q)$, $\hat{\mathsf{E}}_{\alpha,q} = \mathsf{E}_\alpha(\hat \varSigma_q)$, and $\hat \kappa_q = {\kappa}(\hat \varSigma_q)$.

	\subsubsection{Preliminaries}

	\noindent \textbf{\textit{Critical value function}}
	For a given curvature $\kappa >0$ and confidence level $1-\alpha$, the critical value function $z_{\alpha,\kappa}$ is the $(1-\alpha)$'th quantile of
	\begin{align}\label{eq:rho}
		\rho\left(\chi_{q},\chi_{1},\kappa\right) = \sqrt{ \chi^2_{q} + \left(\chi_{1} + \frac{1}{\kappa}\right)^2  } - \frac{1}{\kappa}
	\end{align}
	where $\chi^2_{q}$ and $\chi^2_{1}$ are independently distributed variates from the $\chi$-squared distribution with $q$ and $1$ degrees of freedom, respectively. $\rho\left(\chi_{q},\chi_{1},\kappa\right)$ is the Euclidean distance from $(\chi_{q},\chi_{1})$ to the circle with center $(0,-\frac{1}{\kappa})$ and radius $\frac{1}{\kappa}$. The critical value function at $\kappa=0$ is the limit of $z_{\alpha,\kappa}$ as $\kappa \downarrow 0$, which is the $(1-\alpha)$'th quantile of a central $\chi^2_{1}$ random variable. See \cite{andrews2016geometric} for additional details.

	\noindent \textbf{\textit{Curvature}}
	The confidence interval $C_{\alpha}^\theta(\tilde \varSigma_q)$ inverts hypotheses of the type $H_0 : \theta = c$ versus $H_1 : \theta \neq c$ based on the value of the test statistic
	\begin{align}
		\min_{\mathsf{b}_q,\theta_q : g(\mathsf{b}_q,\theta_q,c)=0} \begin{pmatrix}
		\hat {\mathsf{b}}_q - \mathsf{b}_q \\ \hat \theta_q - \theta_q
		\end{pmatrix}'\tilde \varSigma_q \inverse \begin{pmatrix}
		\hat {\mathsf{b}}_q - \mathsf{b}_q \\ \hat \theta_q - \theta_q
		\end{pmatrix} 
	\end{align} 
	where $g(\mathsf{b}_q,\theta_q,c) = \sum_{\ell=1}^{q} \lambda_\ell \dot b_{\ell}^2 + \theta_q - c$ and $\mathsf{b}_q = (\dot {b}_{1},\dots,\dot {b}_{q})'$. This testing problem depends on the manifold $S = \{ x = \tilde \varSigma_q^{-1/2}(\mathsf{b}_q,\theta_q)' : g(\mathsf{b}_q,\theta_q,c) = 0  \}$ for which we need an upper bound on the maximal curvature. We derive this upper bound using the parameterization $\mathbf{x}({\dot y}) = \tilde \varSigma_q^{-1/2}(\dot y_1,\dots,\dot y_{q},c - \sum_{\ell=1}^{q} \lambda_\ell \dot y_\ell^2)'$ which maps from $\R^{q}$ to $S$, is a homeomorphism, and has a Jacobian of full rank:	
	\begin{align}
		d\mathbf{x}({\dot y}) = \tilde \varSigma_q^{-1/2} 
		\begin{bmatrix} 		\diag(1,\dots,1) \\		-2\lambda_1 \dot y_1, \dots, -2\lambda_{q} \dot y_{q}		\end{bmatrix}
	\end{align}
	The maximal curvature of $S$, $\kappa(\tilde \varSigma_q)$, is then given as $\kappa(\tilde \varSigma_q) = \max_{\dot y \in \R^{q}} \kappa_{\dot y}$ where
	\begin{align}
		\kappa_{\dot y} &= \sup_{u \in \R^{q}} \frac{\norm{(I-P_{\dot y})V (u \odot u) }}{\norm{d\mathbf{x}({\dot y}) u}^2},
		\qquad V =  \tilde \varSigma_q^{-1/2} 
		\begin{bmatrix} 		0 \\		-2\lambda_1, \dots, -2\lambda_{q}	\end{bmatrix},
	\end{align}
	and $P_{\dot y} = d\mathbf{x}({\dot y})(d\mathbf{x}({\dot y})'d\mathbf{x}({\dot y}))\inverse d\mathbf{x}({\dot y})'$. See \cite{andrews2016geometric} for additional details. 

	\noindent \textbf{\textit{Curvature when $q=1$}}	
	In this case the maximization over $u$ drops out and we have 
	\begin{align}
		\kappa(\tilde \varSigma_1) = \max_{\dot y \in \R} \frac{\sqrt{V'V - \frac{(v'V)^2}{v'v}}}{v'v} \qquad \text{where } v = \tilde \varSigma_1^{-1/2}(1,-2\lambda_1 {\dot y})'
	\end{align}
	and $V = \tilde \varSigma_1^{-1/2}(0,-2\lambda_1)$. The value ${\dot y}^* = - \frac{\tilde \rho \tilde{\mathbb{V}}[\hat \theta_{q}]}{2 \lambda_1 \tilde{\mathbb{V}}[\hat b_1]}$ for $\tilde \rho = \frac{\tilde{\mathbb{C}}[\hat b_1,\hat \theta_{q}]}{\tilde{\mathbb{V}}[\hat b_1]^{1/2}\tilde{\mathbb{V}}[\hat \theta_{q}]^{1/2}}$ is both a minimizer of $v'v$ and $(v'V)^2$, so we obtain that $\kappa(\tilde \varSigma_1) = \frac{2\abs{\lambda_1} \tilde{\mathbb{V}}[\hat b_1] }{\tilde{\mathbb{V}}[\hat \theta_{q}]^{1/2}(1-\tilde \rho^2)^{1/2}}$.

	\noindent \textbf{\textit{Curvature when $q>1$}}	
	In this case we first maximize over $\dot y$ and then over $u$. For a fixed $u$ we want to find
	\begin{align}
		\max_{{\dot y} \in \R^{q}} \frac{\sqrt{V_u'V_u - V_u'P_{\dot y} V_u}}{v_{u,{\dot y}}'v_{u,{\dot y}}},
		\qquad \text{where } V_u = \tilde \varSigma_q^{-1/2} (0,-2\sum_{\ell=1}^{q}\lambda_\ell u_\ell^2), 
		\quad v_{u,{\dot y}} = \tilde \varSigma_q^{-1/2} (u', -2u'D_q{\dot y})',
 	\end{align}
	and $D_q = \text{diag}(\lambda_1,\dots,\lambda_q)$. The value for ${\dot y}$ that solves $-2D_q{\dot y} = \tilde{\mathbb{V}}[\hat {\mathsf{b}}_q]\inverse \tilde{\mathbb{C}}[\hat {\mathsf{b}}_q,\hat \theta_q]$ sets $P_{\dot y} V_u =0$ and minimizes $v_{u,{\dot y}}'v_{u,{\dot y}}$. Thus we obtain
	{\small
	\begin{align}
		\kappa(\tilde \varSigma_q) &= \frac{2 \max_{u \in \R^{q}} \frac{\abs{u'D_qu}}{u'\tilde{\mathbb{V}}[\hat {\mathsf{b}}_q]\inverse u}}{\left(\tilde{\mathbb{V}}[\hat \theta_{q}] - \tilde{\mathbb{C}}[\hat {\mathsf{b}}_q,\hat \theta_q]'\tilde{\mathbb{V}}[\hat {\mathsf{b}}_q]\inverse \tilde{\mathbb{C}}[\hat {\mathsf{b}}_q,\hat \theta_q] \right)^{1/2}}   
		=\frac{2\abs{\dot{ \dot \lambda}_1(\tilde{\mathbb{V}}[\hat {\mathsf{b}}_q]^{1/2} D_q \tilde{\mathbb{V}}[\hat {\mathsf{b}}_q]^{1/2})}}{\left(\tilde{\mathbb{V}}[\hat \theta_{q}] - \tilde{\mathbb{C}}[\hat {\mathsf{b}}_q,\hat \theta_q]'\tilde{\mathbb{V}}[\hat {\mathsf{b}}_q]\inverse \tilde{\mathbb{C}}[\hat {\mathsf{b}}_q,\hat \theta_q] \right)^{1/2}}
	\end{align}
	}%
	where $\dot{ \dot \lambda}_1(\cdot)$ is the eigenvalue of largest magnitude. This formula simplifies to the one derived above when $q=1$.

	\subsubsection{Closed Form Representation of $C_{\alpha}^\theta(\tilde \varSigma_1)$}
	An implicit representation of $C_{\alpha}^\theta(\tilde \varSigma_1)$ is
	\begin{align}
		C_{\alpha}^\theta(\tilde \varSigma_1) &= \left[ \lambda_1 b_{1,-}^2 + \theta_{1,-}  , \lambda_1 b_{1,+}^2 + \theta_{1,+}\right]
	\end{align}
	where $b_{1,\pm}$ and $\theta_{1,\pm}$ are solutions to
	{\small
	\begin{align}
	b_{1,\pm} &= \hat b_{1} \pm z_{\alpha,\kappa(\tilde \varSigma_1)} \left( \tilde{\mathbb{V}}[\hat b_{1}](1-\tilde a(b_{1,\pm}))\right)^{1/2}, \label{eq:pm1}
	\\
	\theta_{1,\pm} &=\hat \theta_1 - \tilde \rho \frac{\tilde{\mathbb{V}}[\hat \theta_1]^{1/2}}{\tilde{\mathbb{V}}[\hat b_{1}]^{1/2}}(\hat b_{1} - b_{1,\pm}) \pm z_{\alpha,\kappa(\tilde \varSigma_1)} \left(\tilde{\mathbb{V}}[\hat \theta_1] (1-\tilde \rho^2) \tilde a(b_{1,\pm}) \right)^{1/2} \label{eq:pm2}
	\end{align}
	}%
	for $\tilde a(\dot b_1)= \left( 1 + \left(  \frac{\text{sgn}(\lambda_1)\kappa(\tilde \varSigma_1) \dot b_{1}}{\tilde{\mathbb{V}}[\hat b_{1}]^{1/2}} + \frac{\tilde \rho}{\sqrt{1-\tilde \rho^2}} \right)^2 \right)\inverse$.
	
	This construction is fairly intuitive. When $\hat \rho =0$, the interval has endpoints that combine 
	\begin{align}
	\lambda_1\left(\hat b_{1} \pm z_{\alpha,\kappa(\tilde \varSigma_1)} \left( \tilde {\mathbb{V}}[\hat b_{1}](1-\tilde a(b_{1,\pm}))\right)^{1/2} \right)^2
	\quad \text{and} \quad
	\hat \theta_q \pm z_{\alpha,\kappa(\tilde \varSigma_1)}\left( \tilde {\mathbb{V}}[\hat \theta_q]  a(b_{1,\pm}) \right)^{1/2}
	\end{align}
	where $ a(\dot b_1)$ estimates the fraction of $\mathbb{V}[\hat \theta]$ that stems from $\hat \theta_1$ when $\E[\hat b_1] = \dot b_1$. When $\hat \rho$ is non-zero, $C_{\alpha}^\theta(\tilde \varSigma_1)$ involves an additional rotation of $(\hat b_{1},\hat \theta_1)'$. This representation of $C_{\alpha}^\theta(\tilde \varSigma_1)$ is however not unique as \eqref{eq:pm1},\eqref{eq:pm2} can have multiple solutions. Thus we derive the representation above together with an additional side condition that ensures uniqueness and represents $b_{1,\pm} $ and $\theta_{1,\pm}$ as solutions to a fourth order polynomial.

	\noindent \textbf{\textit{Derivation}}	
	The upper end of $C_{\alpha}^\theta(\tilde \varSigma_1)$ is found by noting that maximization over a linear function in $\theta_1$ implies that the constraint must bind at the maximum. Thus we can reformulate the bivariate problem as a univariate problem
	{\small
	\begin{align}
		\max_{(\dot b_1,\dot \theta_1)\in \mathsf{E}_\alpha(\tilde \varSigma_1)} \lambda_1  \dot b_1^2 +  \dot \theta_1 
		=\max_{ \dot b_1} \lambda_1 \dot b_1^2 + \hat \theta_1 - \tilde \rho \tfrac{\tilde{\mathbb{V}}[\hat \theta_1]^{1/2}}{\tilde{\mathbb{V}}[\hat b_1]^{1/2}}(\hat b_1 - \dot b_{1}) + \sqrt{\tilde{\mathbb{V}}[\hat \theta_1] (1-\tilde \rho^2) \left(z_{\alpha,\kappa(\tilde \varSigma_1)}^2 - \tfrac{(\hat b_1 - \dot b_1)^2}{\tilde{\mathbb{V}}[\hat b_1]} \right)}
	\end{align}
	}%
	where we are implicitly enforcing the constraint on $\dot b_1$ that the term under the square-root is non-negative. Thus we will find a global maximum in $\dot b_1$ and note that it satisfies this constraint. The first order condition for a maximum is
	\begin{align}
		2 \lambda_1 \dot b_{1} + \tilde \rho \tfrac{\tilde{\mathbb{V}}[\hat \theta_1]^{1/2}}{\tilde{\mathbb{V}}[\hat b_1]^{1/2}} + \tfrac{\hat b_1 - \dot b_{1}}{\tilde{\mathbb{V}}[\hat b_1]} \sqrt{\tfrac{\tilde{\mathbb{V}}[\hat \theta_1] (1-\hat \rho^2) }{z_{\alpha,\kappa(\tilde \varSigma_1)}^2 - \frac{(\hat b_1 - \dot b_{1})^2}{\tilde{\mathbb{V}}[\hat b_1]} }} = 0
	\end{align}
	which after a rearrangement and squaring of both sides yields $\frac{(\hat b_1 - \dot b_{1})^2}{\tilde{\mathbb{V}}[\hat b_1]} = (1- a(\dot b)) z_{\alpha,\kappa(\tilde \varSigma_1)}^2$. This in turn leads to the representation of $b_{1,\pm}$ given in \eqref{eq:pm1}. All solutions to this equation satisfies the implicit non-negativity constraint since any solution $\dot b$ satisfies
	\begin{align}
		z_{\alpha,\kappa(\tilde \varSigma_1)}^2 - \frac{(\hat b_1 - \dot b_1)^2}{\tilde{\mathbb{V}}[\hat b_1]} =  a(\dot b_1) z_{\alpha,\kappa(\tilde \varSigma_1)}^2 > 0.
	\end{align}
	A slightly different arrangement of the first order condition reveals the equivalent quartic condition
	\begin{align}\label{eq:zeroes}
		\tfrac{(\hat b_1 - \dot b_1)^2}{\tilde{\mathbb{V}}[\hat b_1]}\left( 1 + \left( \tfrac{\text{sgn}(\lambda_1) \kappa(\tilde \varSigma_1) \dot b_{1}}{\tilde{\mathbb{V}}[\hat b_1]^{1/2}} + \tfrac{\tilde \rho}{\sqrt{1-\tilde \rho^2}}\right)^2 \right) = \left(\frac{\text{sgn}(\lambda_1)\kappa(\tilde \varSigma_1) \dot b_{1}}{\tilde{\mathbb{V}}[\hat b_1]^{1/2}} + \tfrac{\tilde \rho}{\sqrt{1-\tilde \rho^2}}\right)^2 z_{\alpha,\kappa(\tilde \varSigma_1)}^2
	\end{align}
	which has at most four solutions that are given on closed form. Thus the solution $b_{1,+}$ can be found as the maximizer of
	\begin{align}
		\lambda_1 \dot b_1^2 + \hat \theta_1 - \tilde \rho \tfrac{\tilde{\mathbb{V}}[\hat \theta_1]^{1/2}}{\tilde{\mathbb{V}}[\hat b_1]^{1/2}}(\hat b_1 - \dot b_{1}) + 
		z_{\alpha,\kappa(\tilde \varSigma_1)}\left( \tilde {\mathbb{V}}[\hat \theta_q]  a(\dot b_{1}) \right)^{1/2}
	\end{align}
	among the at most four solutions to \eqref{eq:zeroes}. More importantly, the maximum is the upper end of $C_{\alpha}^\theta(\tilde \varSigma_1)$. Now, for the minimization problem we instead have
	{\small
		\begin{align}
		\min_{(\dot b_1,\dot \theta_1)\in \mathsf{E}_\alpha(\tilde \varSigma_1)} \lambda_1  \dot b_1^2 +  \dot \theta_1 
		=\min_{ \dot b_1} \lambda_1 \dot b_1^2 + \hat \theta_1 - \tilde \rho \tfrac{\tilde{\mathbb{V}}[\hat \theta_1]^{1/2}}{\tilde{\mathbb{V}}[\hat b_1]^{1/2}}(\hat b_1 - \dot b_{1}) - \sqrt{\tilde{\mathbb{V}}[\hat \theta_1] (1-\tilde \rho^2) \left(z_{\alpha,\kappa(\tilde \varSigma_1)}^2 - \tfrac{(\hat b_1 - \dot b_1)^2}{\tilde{\mathbb{V}}[\hat b_1]} \right)}
		\end{align}
	}%
	which when rearranging and squaring the first order condition again leads to \eqref{eq:zeroes} as a necessary condition for a minimum. Thus $b_{1,-}$ and the lower end of $C_{\alpha}^\theta(\tilde \varSigma_1)$ can be found by minimizing
	\begin{align}
		\lambda_1 \dot b_1^2 + \hat \theta_1 - \tilde \rho \tfrac{\tilde{\mathbb{V}}[\hat \theta_1]^{1/2}}{\tilde{\mathbb{V}}[\hat b_1]^{1/2}}(\hat b_1 - \dot b_{1}) - 
		z_{\alpha,\kappa(\tilde \varSigma_1)}\left( \tilde {\mathbb{V}}[\hat \theta_q]  a(\dot b_{1}) \right)^{1/2}
	\end{align}
	over the at most four solutions to \eqref{eq:zeroes}.

	
	\subsubsection{Asymptotic Validity}
	
	\begin{lemma}\th\label{lem:Sinf}
		If $\varSigma_q \inverse \hat \varSigma_q \xrightarrow{p} I_{q+1}$ and the conditions of \th\ref{thm4} hold, then 
		\begin{align}
		\liminf_{n \rightarrow \infty} \Pr\left( \theta \in \hat C_{\alpha,q}^{\theta} \right) \ge 1-\alpha.
		\end{align}
	\end{lemma}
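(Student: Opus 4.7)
The plan is to reduce the coverage problem to a classical geometric inference problem and then apply the Andrews–Mikusheva (2016) critical value construction. Observe first that $\theta$ lies in $\hat C_{\alpha,q}^\theta$ whenever there exists some point $(\dot{\mathsf b}_q',\dot\theta_q)' \in \hat{\mathsf{E}}_{\alpha,q}$ with $\sum_{\ell=1}^q \lambda_\ell \dot b_\ell^2 + \dot\theta_q = \theta$; in particular, setting $(\dot{\mathsf b}_q',\dot\theta_q)' = (\mathsf{b}_q',\E[\hat\theta_q])'$, where $\mathsf{b}_q = (b_1,\dots,b_q)'$, gives exactly $\theta$ because $\theta - \sum_{\ell=1}^q \lambda_\ell b_\ell^2 = \E[\hat\theta_q]$ by unbiasedness of $\hat\theta$ and the identity $\E[\hat b_\ell^2 - \mathbb{V}[\hat b_\ell]] = b_\ell^2$. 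Hence $\Pr(\theta \in \hat C_{\alpha,q}^\theta) \ge \Pr\bigl((\mathsf b_q',\E[\hat\theta_q])' \in \hat{\mathsf E}_{\alpha,q}\bigr)$, so it suffices to show that the right-hand side is asymptotically at least $1-\alpha$.

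Next, I would invoke \th\ref{thm4} together with the hypothesis $\varSigma_q\inverse\hat\varSigma_q \xrightarrow{p} I_{q+1}$ (via a continuous mapping argument and Slutsky) to conclude that
\begin{align}
\hat\varSigma_q^{-1/2}\bigl((\hat{\mathsf b}_q',\hat\theta_q)' - (\mathsf b_q',\E[\hat\theta_q])'\bigr) \xrightarrow{d} \mathcal{N}(0,I_{q+1}).
\end{align}
The event $(\mathsf b_q',\E[\hat\theta_q])' \in \hat{\mathsf E}_{\alpha,q}$ can then be rewritten, after standardizing by $\hat\varSigma_q^{-1/2}$, as the event that the distance (in the standard Euclidean norm) from an asymptotically standard normal random vector $Z$ to the origin is bounded by $z_{\alpha,\hat\kappa_q}$. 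More usefully for what follows, the same event is equivalent to the distance from $Z$ to the transformed manifold $S_\theta = \hat\varSigma_q^{-1/2}\{(\mathsf b_q',\theta_q) : \sum_{\ell=1}^q \lambda_\ell b_\ell^2 + \theta_q = \theta\}$, evaluated at the standardized observation, being bounded by $z_{\alpha,\hat\kappa_q}$.

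The crucial geometric step, which is the main obstacle, is to show that the squared distance from a standard normal vector in $\R^{q+1}$ to a manifold of curvature at most $\kappa$ is stochastically dominated by $\rho(\chi_q,\chi_1,\kappa)^2$ defined in \eqref{eq:rho}. This is exactly the result of \cite{andrews2016geometric}: the quantity $z_{\alpha,\kappa}^2$ is constructed as the $(1-\alpha)$ quantile of $\rho(\chi_q,\chi_1,\kappa)^2$ and provides the tight uniform bound. Applying this to $S_\theta$, whose maximum curvature equals $\kappa(\varSigma_q)$ by the closed-form derivation in Appendix \ref{app:conf}, gives limiting coverage at least $1-\alpha$ provided we use $z_{\alpha,\kappa(\varSigma_q)}$.

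Finally, I would show that using the plug-in $\hat\kappa_q = \kappa(\hat\varSigma_q)$ in place of $\kappa(\varSigma_q)$ does not undermine coverage. Since $\kappa(\cdot)$ is a continuous function of the entries of its argument (through eigenvalues of a symmetric matrix and smooth algebraic operations), $\varSigma_q\inverse\hat\varSigma_q \xrightarrow{p} I_{q+1}$ implies $\hat\kappa_q - \kappa(\varSigma_q) \xrightarrow{p} 0$, and in turn $z_{\alpha,\hat\kappa_q} \xrightarrow{p} z_{\alpha,\kappa(\varSigma_q)}$ by continuity of the critical-value function. Combining this with the geometric bound yields $\liminf_n \Pr(\theta \in \hat C_{\alpha,q}^\theta) \ge 1-\alpha$. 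The same argument, with $\hat\varSigma_{q,2}$ replacing $\hat\varSigma_q$ and the observation that $\hat\varSigma_{q,2} - \hat\varSigma_q$ is asymptotically positive semidefinite, delivers the analogous conservative claim stated in \th\ref{lem:conservative}.
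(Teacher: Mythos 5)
Your core geometric step---that the squared distance from an asymptotically standard normal vector to a manifold of maximal curvature $\kappa$ is stochastically dominated by $\rho(\chi_q,\chi_1,\kappa)^2$, so that $z_{\alpha,\kappa}$ delivers coverage---is exactly the engine of the paper's proof, which verifies the two inputs to Theorem 2 of \cite{andrews2016geometric} (joint asymptotic normality of $(\hat{\mathsf{b}}_q',\hat\theta_q)'$ with a consistently estimated variance, and the regularity/curvature of the manifold via their Lemma 1). However, your first two paragraphs contain a genuine logical error that, taken literally, breaks the argument. The reduction $\Pr(\theta \in \hat C_{\alpha,q}^\theta) \ge \Pr\bigl((\mathsf b_q',\E[\hat\theta_q])' \in \hat{\mathsf E}_{\alpha,q}\bigr)$ is a valid inequality, but it does \emph{not} suffice to bound the right-hand side below by $1-\alpha$: that probability converges to $\Pr(\chi^2_{q+1} \le z_{\alpha,\kappa}^2)$, which is \emph{strictly less} than $1-\alpha$ for any finite $\kappa$, precisely because $z_{\alpha,\kappa}$ is calibrated below the $\chi^2_{q+1}$ quantile to avoid the conservatism of the classical projection method. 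Relatedly, your claim that the event $\{(\mathsf b_q',\E[\hat\theta_q])' \in \hat{\mathsf E}_{\alpha,q}\}$ is ``equivalent'' to the event that the distance from the standardized observation to the manifold $S_\theta$ is at most $z_{\alpha,\hat\kappa_q}$ is false; the former implies the latter (the true point lies on the manifold, so the minimum distance is no larger than the distance to the true point), but not conversely. The correct starting point is the identity
\begin{align}
\left\{\theta \in \hat C_{\alpha,q}^\theta\right\} = \left\{ \min_{(\mathsf{b}_q',\theta_q)' :\ \sum_{\ell=1}^q \lambda_\ell \dot b_\ell^2 + \theta_q = \theta} \begin{pmatrix} \hat{\mathsf{b}}_q - \mathsf{b}_q \\ \hat\theta_q - \theta_q \end{pmatrix}'\hat\varSigma_q\inverse \begin{pmatrix} \hat{\mathsf{b}}_q - \mathsf{b}_q \\ \hat\theta_q - \theta_q \end{pmatrix} \le z_{\alpha,\hat\kappa_q}^2 \right\},
\end{align}
to which the Andrews--Mikusheva bound applies directly; your paragraphs three and four should be rewritten to start from this event rather than from the projection event.

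Two smaller points. First, once you standardize by $\hat\varSigma_q^{-1/2}$, the transformed manifold has maximal curvature $\kappa(\hat\varSigma_q) = \hat\kappa_q$, which matches the critical value $z_{\alpha,\hat\kappa_q}$ appearing in $\hat{\mathsf{E}}_{\alpha,q}$ exactly; the detour through $\kappa(\varSigma_q)$ and continuity of $\kappa(\cdot)$ and of $\kappa \mapsto z_{\alpha,\kappa}$ in your last paragraph is unnecessary (and delicate in a triangular-array asymptotic where the $\lambda_\ell$ and $\varSigma_q$ both vary with $n$), since Theorem 2 of \cite{andrews2016geometric} is stated for the estimated-variance statistic. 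Second, to apply their Lemma 1 you must also check that the defining function of the manifold is continuously differentiable with a Jacobian of full rank (rank one here), which the paper verifies explicitly and your write-up omits.
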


	\begin{proof}
		The following two conditions are the inputs to the proof of Theorem 2 in \cite{andrews2016geometric}, from which it follows that
		\begin{align}
		\liminf_{n \rightarrow \infty} \Pr\left( \theta \in \hat C_{\alpha,q}^{\theta} \right) 
		= \liminf_{n \rightarrow \infty} \Pr\left( \min_{({\mathsf{b}}_q',\theta_q)': g(\mathsf{b}_q,\theta_q,\theta)=0} \begin{pmatrix}
		\hat {\mathsf{b}}_q - {\mathsf{b}}_q \\ \hat \theta_q - \theta_q
		\end{pmatrix}'\hat \varSigma_q \inverse \begin{pmatrix}
		\hat {\mathsf{b}}_q - {\mathsf{b}}_q \\ \hat \theta_q - \theta_q
		\end{pmatrix} \le z_{\alpha,\hat{\kappa}_q}^2 \right) 
		\ge 1-\alpha
		\end{align}
		where $g(\mathsf{b}_q,\theta_q,\theta) = \sum_{\ell=1}^{q} \lambda_\ell \dot b_{\ell}^2 + \theta_q - \theta$ and $\mathsf{b}_q = (\dot {b}_{1},\dots,\dot {b}_{q})'$.
		
		Condition (i) requires that $\hat \varSigma_q^{-1/2}
		\left(
		(\hat {\mathsf{b}}_q',\hat \theta_q)'
		- \E[(\hat {\mathsf{b}}_q',\hat \theta_q)']
		\right)
		\xrightarrow{d} \mathcal{N}\left(0, I_{q+1} \right),$ which follows from \th\ref{thm4} and $\varSigma_q \inverse \hat \varSigma_q \xrightarrow{p} I_{q+1}$. 
		
		Condition (ii) is satisfied if the conditions of Lemma 1 in \cite{andrews2016geometric} are satisfied. To verify this, take the manifold
		\begin{align}
			\tilde S &= \left\{ \dot x \in \R^{q+1} : \tilde g(\dot x) = 0 \right\}
		\shortintertext{for}
			\tilde g(\dot x) &= \dot x' \hat \varSigma_q^{1/2} \begin{bmatrix}			D_q & 0 \\ 0 & 0			\end{bmatrix}\hat \varSigma_q^{1/2} \dot x + 
			(2 \E[\hat{\mathsf{b}}_q]',1)\begin{bmatrix}			D_q & 0 \\ 0 & 1			\end{bmatrix}\hat \varSigma_q^{1/2} \dot x.
		\end{align}
		The curvature of $\tilde S$ is $\hat \kappa$, $\tilde g(0)=0$, and $\tilde g$ is continuously differentiable with a Jacobian of rank 1. These are the conditions of Lemma 1 in \cite{andrews2016geometric}.
	\end{proof}

	\begin{proof}[Proof of the second claims in \th\ref{lem:var3S,lem:conservative}]
		The proof contains two main parts. One part is to establish that the biases of $\hat \varSigma_{q}$ and $\hat \varSigma_{q,2}$ are positive semidefinite in large samples, and that $\E[\hat \varSigma_{q}]\inverse \hat \varSigma_{q} - I_{q+1}$ and $\E[\hat \varSigma_{q,2}]\inverse \hat \varSigma_{q,2} - I_{q+1}$ are $o_p(1)$. These arguments are analogues to those presented in the proofs of \th\ref{lem:var3S,lem:conservative} and are therefore only sketched. The other part is to show that this positive semidefinite asymptotic bias in the variance estimator does not alter the validity of the confidence interval based on it. We only cover $\hat \varSigma_{q,2}$ as that estimator simplifies to $\hat \varSigma_{q}$ when the design is sufficiently well-behaved.
		
		\noindent \textbf{\textit{Validity}}
		First, we let $\mathsf{Q} \mathsf{D} \mathsf{Q}'$ be the spectral decomposition of $\E[\hat \varSigma_{q,2}]^{-1/2} \varSigma_{q} \E[\hat \varSigma_{q,2}]^{-1/2}$. Here, $\mathsf{Q} \mathsf{Q}' = \mathsf{Q}' \mathsf{Q} = I_{q+1}$ and all diagonal entries in the diagonal matrix $ \mathsf{D}$ belongs to $(0,1]$ in large samples. Now,
		{\small
		\begin{align}
			\Pr\left( \theta \in C_{\alpha}^{\theta}(\hat \varSigma_{q,2}) \right) 
			&= \Pr\left( \min_{({\mathsf{b}}_q',\theta_q)':  g(\mathsf{b}_q,\theta_q,\theta)=0} \begin{pmatrix}
			\hat {\mathsf{b}}_q - {\mathsf{b}}_q \\ \hat \theta_q - \theta_q
			\end{pmatrix}'\E[\hat \varSigma_{q,2}] \inverse \begin{pmatrix}
			\hat {\mathsf{b}}_q - {\mathsf{b}}_q \\ \hat \theta_q - \theta_q
			\end{pmatrix} \le z_{\alpha,\kappa(\E[\hat \varSigma_{q,2}])}^2 \right) + o(1)
		\end{align}
		}%
		where the minimum distance statistic above satisfies
		\begin{align}
			\min_{({\mathsf{b}}_q',\theta_q)':  g(\mathsf{b}_q,\theta_q,\theta)=0} \begin{pmatrix}
			\hat {\mathsf{b}}_q - {\mathsf{b}}_q \\ \hat \theta_q - \theta_q
			\end{pmatrix}'\E[\hat \varSigma_{q,2}] \inverse \begin{pmatrix}
			\hat {\mathsf{b}}_q - {\mathsf{b}}_q \\ \hat \theta_q - \theta_q
			\end{pmatrix} 
			= \min_{x \in S_2} (\xi - x)'(\xi - x)
		\end{align}
		where $S_2 = \{x : x = \mathsf{Q}'\E[\hat \varSigma_{q,2}]^{-1/2}\left(
		({\mathsf{b}}_q', \theta_q)'
		- \E[(\hat{\mathsf{b}}_q',\hat \theta_q)']
		\right), g(\mathsf{b}_q,\theta_q,\theta)=0  \}$ and the random vector $\xi = \mathsf{Q}'\E[\hat \varSigma_{q,2}]^{-1/2}\left(
		(\hat {\mathsf{b}}_q', \hat \theta_q)'
		- \E[(\hat{\mathsf{b}}_q',\hat \theta_q)']
		\right)$ has the property that $\mathsf{D}^{-1/2 }\xi \xrightarrow{d} \mathcal{N}(0,I_{q+1})$. From the geometric consideration in \cite{andrews2016geometric} it follows that $S_2$ has curvature of $\kappa(\E[\hat \varSigma_{q,2}])$ since curvature is invariant to rotations. Furthermore,
		\begin{align}
			\min_{x \in S_2} (\xi - x)'(\xi - x) &\le \rho^2\left(\norm{\xi_{-1}},\abs{\xi_1},\kappa(\E[\hat \varSigma_{q,2}])\right) \\
			&\le  \rho^2\left(\norm{(\mathsf{D}^{-1/2}\xi)_{-1}},\abs{(\mathsf{D}^{-1/2}\xi)_1},\kappa(\E[\hat \varSigma_{q,2}])\right) \\
		\end{align}
		where $\xi = (\xi_1,\xi_{-1}')'$ and $\mathsf{D}^{-1/2}\xi = ((\mathsf{D}^{-1/2}\xi)_1,(\mathsf{D}^{-1/2}\xi)_{-1}')$ and the first inequality follows from the proof of Theorem 1 in \cite{andrews2016geometric}. Thus
		\begin{align}
			\liminf_{n \rightarrow \infty} \Pr\left( \theta \in C_{\alpha}^{\theta}(\hat \varSigma_{q,2}) \right) 
			&= \liminf_{n \rightarrow \infty} \Pr\left(\min_{x \in S_2} (\xi - x)'(\xi - x) \le z_{\alpha,\kappa(\E[\hat \varSigma_{q,2}])}^2 \right)  \\
			&\ge \liminf_{n \rightarrow \infty} \Pr\left( \rho^2\left({\chi_{q}},{\chi_{1}},\kappa(\E[\hat \varSigma_{q,2}])\right) \le z_{\alpha,\kappa(\E[\hat \varSigma_{q,2}])}^2 \right) = 1-\alpha
		\end{align}
		since $(\norm{\xi_{-1}},\abs{\xi_1}) \xrightarrow{d} (\chi_{q},\chi_{1})$.
		
		\noindent \textbf{\textit{Bias and variability in $\hat \varSigma_{q,2}$}} We finish by reporting the positive semidefinite bias in $\hat \varSigma_{q,2}$. We have that
		\begin{align}
		\E\left[\hat \varSigma_{q,2}\right] &= \varSigma_{q} + \sum_{i : \mathcal{Q}_i=1} \sigma_i^2 \begin{pmatrix} \mathsf{w}_{iq} \\ 2\sum_{\ell \neq i} C_{i\ell} x_\ell'\beta	\end{pmatrix}\begin{pmatrix} \mathsf{w}_{iq} \\ 2\sum_{\ell \neq i} C_{i\ell} x_\ell'\beta	\end{pmatrix}' + \begin{bmatrix}
		0 & 0 \\ 0 & \mathcal{B}
		\end{bmatrix}
		+O\left(\frac{1}{n} {\mathbb{V}}[\hat \theta]\right)
		\end{align}
		where
		\begin{align}
			\mathcal{B}
			&=2 \sum_{(i,\ell) \in \mathcal{B}_1} \tilde C_{i\ell q} \sigma_i^2 \left( \sigma_\ell^2 1_{\{\tilde C_{i\ell q}>0\}} + ((x_\ell-\bar x)'\beta)^2 1_{\{\tilde C_{i\ell q}<0\}} \right)  \\
			&+2 \sum_{(i,\ell) \in \mathcal{B}_2} \tilde C_{i\ell q} \sigma_\ell^2 \left( \sigma_i^2 1_{\{\tilde C_{i\ell q}>0\}} + ((x_i-\bar x)'\beta)^2 1_{\{\tilde C_{i\ell q}<0\}} \right) \\
			&+2 \sum_{(i,\ell) \in \mathcal{B}_3} \tilde C_{i\ell q} \left( \sigma_i^2\sigma_\ell^2 1_{\{\tilde C_{i\ell q}>0\}} + \left( 2 \sigma_i^2 ((x_\ell-\bar x)'\beta)^2 + ((x_i-\bar x)'\beta (x_\ell-\bar x)'\beta)^2\right)1_{\{\tilde C_{i\ell q}<0\}} \right) 
		\end{align}
		for  $\mathcal{B}_1, \ \mathcal{B}_2, \ \mathcal{B}_3 $ referring to pairs of observations that fall in each of the three last cases in the definition of $\widetilde{\sigma_i^2 \sigma_\ell^2}_2$.
	\end{proof}
	
		\subsection{Verifying Conditions}\label{app:sufficiency}
	
	\begin{customthm}{1}
		The only non-immediate conclusions are that:
		\begin{align}
		\mathbb{V}[\hat \theta]\inverse \max_{i} (\tilde x_i'\beta)^2 &= O\left(\frac{\max_i (x_i'\beta)^2/n^2}{\min_i \sigma_i^2 \text{trace}(\tilde A^2)}\right) = O\left(\frac{\max_i (x_i'\beta)^2}{r}\right) \\
		\mathbb{V}[\hat \theta]\inverse \max_{i}(\check x_i'\beta)^2 &= O\left( \frac{\max_{i,j} M_{jj}^{-2} \left(P_{jj} - \frac{1}{n}\right)^2 (x_j'\beta)^2 \left( \sum_{\ell=1}^n \abs{M_{i\ell}} \right)^2/n^2}{\min_i \sigma_i^2 \text{trace}(\tilde A^2)} \right) \\
		& = O\left( \frac{\max_{i,j} (x_j'\beta)^2 \left( \sum_{\ell=1}^n \abs{M_{i\ell}} \right)^2}{r} \right).
		\end{align}
	\end{customthm}
	
	\begin{customthm}{2}
		We first derive the representations of $\hat \sigma_\alpha^2$ given in section \ref{sec:examples}.
		When there are no common regressors, the representation in \eqref{eq:AP} follows from $B_{ii} = \frac{1}{n T_{g(i)}}\left(1 - \frac{T_{g(i)}}{n}\right)$ and
		\begin{align}
		\hat \sigma_g^2 = \frac{1}{T_g} \sum_{t=1}^{T_g} y_{gt} \left(y_{gt} - \frac{1}{T_g-1}\sum_{s \neq t} y_{gs} \right) = \frac{1}{T_g} \sum_{i: g(i)=g} \hat \sigma_{i}^2
		\end{align}
		which yields that
		\begin{align}
		\sum_{i=1}^n B_{ii} \hat \sigma_i^2 = \frac{1}{n} \sum_{g=1}^N \left(1 - \frac{T_{g}}{n}\right) \hat \sigma_g^2.
		\end{align}
		With common regressors, it follows from the formula for block inversion of matrices that
		\begin{align}
		\tilde X' &= A S_{xx}\inverse \begin{bmatrix} D' \\ X' \end{bmatrix} = \frac{1}{n} \begin{bmatrix} \left(D' - \bar d \mathbf{1}_n'\right)\left(I - X \left(X'(I-P_D)X'\right)\inverse X'(I-P_D) \right) \\ 0\end{bmatrix} \\
		& = \frac{1}{n} \begin{bmatrix} D' - \bar d \mathbf{1}_n' - \hat \varGamma' X'(I-P_D) \\ 0\end{bmatrix} 
		\end{align}
		where $D = (d_1,\dots,d_n)'$, $X = (x_{g(1)t(1)},\dots,x_{g(n)t(n)})'$, $P_D = DS_{dd}\inverse D'$, $\mathbf{1}_n = (1,\dots,1)'$, and $S_{dd} = D'D$. Thus it follows that
		\begin{align}
		\tilde x_i = \frac{1}{n}\begin{pmatrix}
		d_i - \bar d - \hat \varGamma'(x_{g(i)t(i)} - \bar x_{g(i)}) \\ 0
		\end{pmatrix}.
		\end{align}
		
		The no common regressors claims are immediate. With common regressors we have
		\begin{align}
		P_{i\ell} = T_{g(i)}\inverse \mathbf{1}_{\{g(i)=g(\ell)\}} + n\inverse (x_{g(i)t(i)} - \bar x_{g(i)})'W\inverse (x_{g(\ell)t(\ell)} - \bar x_{g(\ell)}) = T_{g(i)}\inverse \mathbf{1}_{\{i=\ell\}} + O(n\inverse)
		\end{align}
		where $W = \frac{1}{n}\sum_{g=1}^N \sum_{t=1}^T (x_{gt} - \bar x_g)(x_{gt} - \bar x_g)'$ so $P_{ii} \le C <1$ in large samples. The eigenvalues of $\tilde A$ are equal to the eigenvalues of
		\begin{align}
		\frac{1}{n}\left(I_N - n S_{dd}^{-1/2} \bar d \bar d' S_{dd}^{-1/2}  \right)\left( I_N + \frac{1}{n} S_{dd}^{1/2} D'X W\inverse X'D S_{dd}^{-1/2} \right)
		\end{align}
		which in turn satisfies that $\frac{c_1}{n} \le \lambda_\ell \le \frac{c_2}{n}$ for $\ell = 1,\dots,N-1$ and $c_2 \ge c_1 > 0$ not depending on $n$.
		$w_i'w_i = O(P_{ii})$ so \th\ref{thm2} applies when $N$ is fixed and $\min_g T_g \rightarrow \infty$. Finally,
		\begin{align}
		\max_i \mathbb{V}[\hat \theta]\inverse(\tilde x_i'\beta )^2 &= O \left( \frac{\max_{g,t} \alpha_g^2 + \norm{x_{gt}}^2 \frac{1}{n}\sum_{i=1}^n \norm{x_{g(i)t(i)}}^2 \sigma^2_\alpha}{N}\right) \\
		\max_i \mathbb{V}[\hat \theta]\inverse(\check x_i'\beta )^2 &= O \left( \frac{\max_{i,j} (x_j'\beta)^2 \left( \sum_{\ell=1}^n \abs{M_{i\ell}} \right)^2}{N}\right)
		\end{align}
		and $\sum_{\ell =1}^n \abs{M_{i\ell}} = O(1)$ so  \th\ref{thm3} applies when $N \rightarrow \infty$.
		
		We finish this example with a setup where an unbalanced panel leads to a bias and inconsistency in $\hat \theta_{\text{HO}}$. Consider 
		\begin{align}
		y_{{g} t} =\alpha_{{g}} + \varepsilon_{{g} t} && ({g}=1,\dots,N, \ t = 1,\dots,T_{g})
		\end{align}
		where $N$ is even, $(T_g=2,\E[\varepsilon_{g t}^2]=2\sigma^2)$ for $g \le N/2$ and $(T_g=3,\E[\varepsilon_{g t}^2]=\sigma^2)$ for $g > N/2$, and the estimand is,
		\begin{align}
		\theta = \frac{1}{n} \sum_{g=1}^N T_g \alpha_g^2 \qquad \text{where } n=\sum_{g=1}^N T_g = \frac{5N}{2}.
		\end{align}
		Here we have that $\tilde A = I_N/n$ and $\text{trace}(\tilde A^2) = N/n^2 = o(1)$ as $n \rightarrow \infty$ so the leave-out estimator is consistent. Furthermore,
		\begin{align}
		nB_{ii} = P_{ii} =\begin{cases}
		\frac{1}{2}, &\text{if } i \le N, \\  \frac{1}{3}, & \text{otherwise},
		\end{cases}
		&&
		\sigma_{i}^2 = \begin{cases}
		2\sigma^2, &\text{if } i \le N, \\  \sigma^2,  & \text{otherwise},
		\end{cases}
		\end{align}
		so
		\begin{align}
		\E[\tilde \theta] -\theta &= \sum_{i=1}^n B_{ii} \sigma_i^2 = \frac{\sigma^2}{n} \left( N + \frac{N}{2}\right) = \frac{3\sigma^2}{5}, \\
		\E[\hat \theta_\text{HO}] - \theta &= \sigma_{nB_{ii},\sigma_i^2} +S_B \frac{n}{n-N} \sigma_{P_{ii},\sigma_i^2} 
		=\frac{2\sigma^2}{50} + \frac{2}{3} \times \frac{2\sigma^2}{50} = \frac{\sigma^2}{15}.
		\end{align}
	\end{customthm}

	\begin{customthm}{3}
		$\tilde A$ is diagonal with $N$ diagonal entries of $ \frac{1}{n}\frac{T_g}{S_{zz,g}}$, so $\lambda_g = \frac{1}{n}\frac{T_g}{S_{zz,g}}$ for $g=1,\dots,N$. $\text{trace}(\tilde A^2) \le \frac{\lambda_1}{\min_g S_{zz,g}} \frac{1}{n} \sum_{g=1}^N T_g = O(\lambda_1)$. $\max_i w_i'w_i = \max_{g,t} \frac{(z_{gt} - \bar z_g)^2}{S_{zz,g}} = o(1)$ when $\min_g S_{zz,g} \rightarrow \infty$. Furthermore, $\mathbb{V}[\hat \theta]\inverse  = O(\frac{n^2}{N})$, so
		\begin{align}
		\mathbb{V}[\hat \theta]\inverse \max_{i} (\tilde x_i'\beta)^2 &= O \left( \max_{g,t} \frac{z_{gt}^2 \delta_g^2}{N S_{zz,g}} \right) = o(1),
		\end{align}
		and $M_{i\ell}=0$ if $g(i) \neq g(\ell)$ so
		\begin{align}
		\mathbb{V}[\hat \theta]\inverse \max_i (\check x_i'\beta)^2 &= O \left( \max_g \left(\frac{n\sum_{i : g(i)=g } B_{ii}}{\sqrt{N}}\right)^2 \right) = O \left( \max_g \left(\frac{T_{g}}{\sqrt{N} S_{xx,g}}\right)^2 \right) = o(1)
		\end{align}
		both under the condition that $N \rightarrow \infty$ and $\frac{\sqrt{N} S_{xx,1}}{T_{1}}\rightarrow \infty$. Used above:
		\begin{align}
		P_{i\ell} &= T_{g(i)}\inverse \mathbf{1}_{\{g(i) = g(\ell)\}} + \frac{(z_{g(i)t(i)} - \bar z_{g(i)}) (z_{g(i)t(\ell)} - \bar z_{g(i)})}{S_{zz,g(i)}}  \mathbf{1}_{\{g(i) = g(\ell)\}} \\
		B_{ii} &= \frac{1}{n} \frac{z_{g(i)t(i)} - \bar z_{g(i)}}{S_{zz,g(i)}} \frac{T_{g(i)}}{S_{zz,g(i)}}.
		\end{align}
		Finally, 
		\begin{align}
		\max_i \mathsf{w}_{iq}'\mathsf{w}_{iq} &= \max_{t} \frac{(z_{1t} - \bar z_1)^2}{S_{zz,1}} = o(1) \\
		\mathbb{V}[\hat \theta_q]\inverse \max_{i} (\tilde x_{iq}'\beta)^2 &= O \left( \max_{g \ge 2,t} \frac{z_{gt}^2 \delta_g^2}{N S_{zz,g}} \right) = o(1), \\
		\mathbb{V}[\hat \theta_q]\inverse \max_i (\check x_{iq}'\beta)^2 &= O \left( \max_{g \ge 2} \left(\frac{T_{g}}{\sqrt{N} S_{xx,g}}\right)^2 \right) = o(1)
		\end{align}
		under the conditions that $\frac{\sqrt{N}}{T_2} S_{zz,2} \rightarrow \infty$ and $S_{zz,1} \rightarrow \infty$. Thus, \th\ref{thm4} applies when $ \frac{\sqrt{N}}{T_{1}} S_{zz,{1}} = O(1)$.
	\end{customthm}

	\begin{customthm}{4}
			Let $\dot f_{i} = (\mathbf{1}_{\{j({g},t)=0\}},f_i')' =(\mathbf{1}_{\{j({g},t)=0\}},\mathbf{1}_{\{j({g},t)=1\}},\dots,\mathbf{1}_{\{j({g},t)=J\}})'$ and define the following partial design matrices with and without dropping $\psi_0$ from the model:
		\begin{align}
		S_{ff} &= \sum_{i=1}^n f_i f_i', &  S_{\dot f \dot f} &= \sum_{i=1}^n \dot f_i \dot f_i', &
		S_{\Delta f \Delta f} &= \sum_{g=1}^N \Delta f_g \Delta f_g', &  S_{\Delta \dot f \Delta \dot f} &= \sum_{g=1}^N \Delta \dot f_g \Delta \dot f_g',
		\end{align}
		where $\Delta \dot f_g = \dot f_{i(g,2)} -  \dot f_{i(g,1)}$. Letting $\dot D$ be a diagonal matrix that holds the diagonal of $S_{\Delta \dot f \Delta \dot f}$ we have that
		\begin{align}
		E = \dot D S_{\dot f \dot f}\inverse
		\quad \text{and} \quad 
		\mathcal{L} = \dot D^{-1/2} S_{\Delta \dot f \Delta \dot f} \dot D^{-1/2}.
		\end{align}
		$S_{\Delta \dot f \Delta \dot f}$ is rank deficient with $S_{\Delta \dot f \Delta \dot f} \mathbf{1}_{J+1} = 0$ from which it follows that the non-zero eigenvalues of $E^{1/2} \mathcal{L} E^{1/2}$ (which are the non-zero eigenvalues of $S_{\dot f \dot f}\inverse S_{\Delta \dot f \Delta \dot f}$) are also the eigenvalues of $S_{\Delta f \Delta f} (S_{ff}\inverse + \frac{\mathbf{1}_{J} \mathbf{1}_{J}'}{S_{\dot f \dot f,11}})$. Finally, from the Woodbury formula we have that $A_{ff}$ is invertible with
		\begin{align}
		A_{ff} \inverse = n(S_{ff} - n \bar f \bar f')\inverse = n\left( S_{ff}\inverse  + n\frac{S_{ff}\inverse \bar f \bar f' S_{ff}\inverse}{1 - n\bar f' S_{ff}\inverse \bar f}  \right) = n\left( S_{ff}\inverse  + \frac{\mathbf{1}_{J} \mathbf{1}_{J}'}{S_{\dot f \dot f,11}} \right),
		\end{align}
		so
		\begin{align}
		\lambda_\ell = \lambda_\ell(A_{ff} S_{\Delta f \Delta f}\inverse) = \frac{1}{\lambda_{J+1-\ell}(S_{\Delta f \Delta f} A_{ff}\inverse)} = \frac{1}{n \lambda_{J+1-\ell}(E^{1/2} \mathcal{L} E^{1/2})}.
		\end{align}
		With $E_{jj}=1$ for all $j$, we have that
		\begin{align}
		\frac{\lambda_1^2}{\sum_{\ell=1}^J \lambda_\ell^2} = \frac{\dot \lambda_J^{-2}}{\sum_{\ell=1}^J \dot \lambda_\ell^{-2}} \le \frac{4}{(\sqrt{J} \dot \lambda_J)^2}
		\end{align}
		since $\dot \lambda_\ell \le 2$ \citep[][Lemma 1.7]{chung1997spectral}.	An algebraic definition of Cheeger's constant $\mathcal{C}$ is
		\begin{align}
		\mathcal{C} = \min_{X \subseteq \{0,\dots,J\} : \sum_{j \in X} \dot D_{jj}\le \frac{1}{2}\sum_{j=0}^J \dot D_{jj}} \frac{- \sum_{j \in X} \sum_{k \notin X } S_{\Delta \dot f \Delta \dot f,jk}}{ \sum_{j \in X} \dot D_{jj} }
		\end{align}
		and it follows from the Cheeger inequality $\dot \lambda_J \ge 1 - \sqrt{1-\mathcal{C}^2}$ \cite[][Theorem 2.3]{chung1997spectral} that $\sqrt{J} \dot \lambda_J \rightarrow \infty$ if $\sqrt{J} \mathcal{C} \rightarrow \infty$.
		
		For the stochastic block model we consider $J$ odd and order the firms so that the first $(J+1)/2$ firms belongs to the first block, and the remaining firms belong to the second block. We assume that $\Delta \dot f_g$ is generated \emph{i.i.d.} across $g$ according to 
		\begin{align}
		\Delta \dot f = \mathrm{W} (1-\mathrm{D}) + \mathrm{B} \mathrm{D}
		\end{align}
		where $(\mathrm{W},\mathrm{B},\mathrm{D})$ are mutually independent, $P(\mathrm{D}=1) = 1 - P(\mathrm{D}=0) = p_b \le \frac{1}{2}$, $\mathrm{W}$ is uniformly distributed on $\{ v \in \R^{J+1} : v'\mathbf{1}_{J+1}=0, v'v=2, \max_j v_j =1, v'c=0 \}$, and $\mathrm{B}$ is uniformly distributed on $\{ v \in \R^{J+1} : v'\mathbf{1}_{J+1}=0, v'v=2, \max_j v_j =1, (v'c)^2=4 \}$ for $c =  (\mathbf{1}_{(J+1)/2}',-\mathbf{1}_{(J+1)/2}')'$. In this model $E_{jj}=1$ for all $j$. The following lemma characterizes the large sample behavior of $S_{\Delta \dot f \Delta \dot f}$ and $\mathcal{L}$. Based on this lemma it is relatively straightforward (but tedious) to verify the high-level conditions imposed in the paper.
		
		\begin{lemma}
			Suppose that $\frac{\log(J)}{n p_b} + \frac{J \log(J)}{n} \rightarrow 0$ as $n \rightarrow \infty$ and $J \rightarrow \infty$. Then
			\begin{align}
			\norm*{ \underline{\mathcal{L}}^\dagger \tfrac{J+1}{n} S_{\Delta \dot f \Delta \dot f} - I_{J+1} + \tfrac{\mathbf{1}_{J+1} \mathbf{1}_{J+1}'}{J+1} } =o_p\left( 1 \right) 
			\quad \text{and} \quad
			\norm*{\underline{\mathcal{L}}^\dagger \mathcal{L} - I_{J+1} + \tfrac{\mathbf{1}_{J+1} \mathbf{1}_{J+1}'}{J+1} } = o_p\left( 1 \right) 
			\end{align}
			where $\underline{\mathcal{L}} = I_{J+1} - \frac{\mathbf{1}_{J+1} \mathbf{1}_{J+1}'}{J+1}  - (1-2p_b)\frac{cc'}{J+1}$ and $\norm{\cdot}$ returns the largest singular value of its argument. Additionally, $\max_{\ell} \dot{\underline\lambda}_\ell\inverse \abs*{{\dot{\lambda}_\ell - \dot{\underline\lambda}_\ell}} = o_p(1)$
			where $\dot{\underline\lambda}_1 \ge \dots \ge \dot{\underline\lambda}_J$ are the non-zero eigenvalues of $\underline{\mathcal{L}}^\dagger$.
		\end{lemma}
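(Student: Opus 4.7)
The plan is to reduce the lemma to exact mean computations combined with concentration inequalities that respect the singular structure of $\underline{\mathcal{L}}$, whose pseudoinverse has spectral norm $1/(2p_b)$ along the $c$-direction. A direct enumeration over the uniform distributions of $\mathrm{W}$ and $\mathrm{B}$, using the identity $J_{B_1} + J_{B_2} = \tfrac{1}{2}(\mathbf{1}\mathbf{1}' + cc')$, gives $\E[\mathrm{W}\mathrm{W}'] = \tfrac{2}{J+1}\bigl(I - \tfrac{\mathbf{1}\mathbf{1}' + cc'}{J+1}\bigr) + O(J^{-2})$ and $\E[\mathrm{B}\mathrm{B}'] = \tfrac{2}{J+1}\bigl(I - \tfrac{\mathbf{1}\mathbf{1}' - cc'}{J+1}\bigr)$. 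Mixing with weights $(1-p_b, p_b)$ yields $\E[\Delta \dot f \Delta \dot f'] = \tfrac{2}{J+1}\underline{\mathcal{L}} + O(J^{-2})$, so $\E\bigl[\tfrac{J+1}{n} S_{\Delta \dot f \Delta \dot f}\bigr] = \underline{\mathcal{L}} + O(J^{-1})$.

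For the first claim I would write $X := \tfrac{J+1}{n} S_{\Delta \dot f \Delta \dot f} - \underline{\mathcal{L}}$ and decompose $\underline{\mathcal{L}}^\dagger = P_{\mathbf{1}c^\perp} + \tfrac{1}{2p_b} P_c$ with $P_\mathbf{1} = \tfrac{\mathbf{1}\mathbf{1}'}{J+1}$, $P_c = \tfrac{cc'}{J+1}$, and $P_{\mathbf{1}c^\perp} = I - P_\mathbf{1} - P_c$. Because $\Delta \dot f_g'\mathbf{1} = 0$ forces $X\mathbf{1} = 0$ deterministically, and because the ranges of $P_{\mathbf{1}c^\perp}$ and $P_c$ are orthogonal, it suffices to control $\|P_{\mathbf{1}c^\perp} X\|$ and $\tfrac{1}{2p_b}\|P_c X\|$ separately. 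Matrix Bernstein on the iid summands $\tfrac{J+1}{n}\Delta \dot f_g\Delta \dot f_g'$, which satisfy $\|Y_g\| \le 2(J+1)/n$ and $\|\sum_g \E[Y_g^2]\| \le 2(J+1)/n$, delivers $\|X\| = O_p(\sqrt{J\log J/n}) = o_p(1)$ under $J\log J/n \to 0$, handling the $P_{\mathbf{1}c^\perp}$ part. For $\tfrac{1}{2p_b}\|P_c X\|$ the key structural fact is $c'\Delta \dot f_g = 2\eta_g \mathrm{D}_g$ with $\eta_g \in \{\pm 1\}$, which yields the exact identities
\begin{align}
c' X P_c = \tfrac{2}{N}\Bigl(\textstyle\sum_g \mathrm{D}_g - N p_b\Bigr) c', \qquad c' X (I - P_c) = \tfrac{2(J+1)}{n}\sum_g \mathrm{D}_g \eta_g (\mathrm{B}_g^\perp)',
\end{align}
where $\mathrm{B}_g^\perp := \mathrm{B}_g - 2\eta_g c/(J+1)$ lies in the range of $P_{\mathbf{1}c^\perp}$ and satisfies $\E[\eta \mathrm{B}^\perp \mid \mathrm{D}=1] = 0$ by the block-swap symmetry $c \mapsto -c$. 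Scalar Bernstein on the Bernoulli sum $\sum_g \mathrm{D}_g$ controls the first identity at rate $O_p(\sqrt{\log J/(np_b)}) = o_p(1)$, while a conditional vector Bernstein for the second identity, exploiting that $\E[\mathrm{B}^\perp(\mathrm{B}^\perp)'\mid \mathrm{D}=1] = \tfrac{2}{J+1}P_{\mathbf{1}c^\perp}$ is isotropic on a $(J-1)$-dimensional subspace of scale $2/(J+1)$ and that only $N_B \approx N p_b$ summands contribute, produces the matching $o_p(1)$ bound.

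The second claim follows by showing that $\dot D$ concentrates multiplicatively around $\tfrac{n}{J+1} I$: each $\dot D_{jj}$ is a sum of iid bounded indicators with mean $n/(J+1)$, so scalar Bernstein with a union bound over $j$ gives $\max_j |\tfrac{J+1}{n} \dot D_{jj} - 1| = o_p(1)$ under $J\log J/n \to 0$. Writing $\mathcal{L} = \bigl(\tfrac{J+1}{n}\dot D\bigr)^{-1/2} \tfrac{J+1}{n} S_{\Delta \dot f \Delta \dot f} \bigl(\tfrac{J+1}{n}\dot D\bigr)^{-1/2}$ and combining with the first claim yields the conclusion. For the eigenvalue statement I would separate the bulk indices $\ell < J$ (where $\dot{\underline\lambda}_\ell = 1$) from $\ell = J$ (where $\dot{\underline\lambda}_J = 2 p_b$). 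The bulk indices are handled by Weyl's inequality combined with $\|\mathcal{L} - \underline{\mathcal{L}}\| \le \|\underline{\mathcal{L}}\|\cdot\|\underline{\mathcal{L}}^\dagger(\mathcal{L} - \underline{\mathcal{L}})\| = o_p(1)$. For $\ell = J$, the upper bound $\dot\lambda_J \le c'\mathcal{L} c / (c'c) = 2 p_b + o_p(p_b)$ follows from the Rayleigh quotient together with the sharper bound $\|P_c(\mathcal{L} - \underline{\mathcal{L}})\| = o_p(p_b)$ implied by the first two claims and the orthogonality of ranges; the matching lower bound $\dot\lambda_J \ge 2 p_b (1 - o_p(1))$ comes from the variational formula $\dot\lambda_J = \min_{v \perp \mathbf{1}} v'\mathcal{L} v / (v'v)$ after decomposing $v = \alpha c/\sqrt{J+1} + u$ with $u \perp \{\mathbf{1}, c\}$ and bounding the cross-term via the same sharper bound.

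The main obstacle will be controlling the $\tfrac{1}{2p_b}\|P_c X\|$ contribution under the stated weak hypothesis: a naive matrix-Bernstein bound on $\|X\|$ followed by $\|P_c X\| \le \|X\|$ would cost an extra factor $1/p_b$ and require $J\log J/(np_b^2) \to 0$, strictly stronger than assumed. Avoiding this requires exploiting the exact structural reduction via $c'\Delta \dot f_g = 2\eta_g\mathrm{D}_g$, the block-swap symmetry forcing $\E[\eta \mathrm{B}^\perp \mid \mathrm{D}=1] = 0$, and the observation that only the $N_B$ between-block movers contribute to $P_c X$. This conditioning separates the concentration rate into a scalar piece governed by $\log J/(n p_b)$ and a vector piece whose matrix variance along the range of $P_{\mathbf{1}c^\perp}$ is damped by $1/(J+1)$, producing the rate needed to match the semi-sparsity assumption.
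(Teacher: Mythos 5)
Your mean computation, your treatment of the diagonal blocks $P_{\mathbf{1}c^\perp}X$ and $P_cXP_c$, the $\dot D$ concentration step, and the eigenvalue comparison are all sound, but there is a genuine gap at exactly the point you flag as the main obstacle: the cross-block $\tfrac{1}{2p_b}P_cXP_{\mathbf{1}c^\perp}$ does not vanish under the stated hypotheses, and your proposed resolution does not rescue it. Writing $u = P_{\mathbf{1}c^\perp}Xc = \tfrac{2(J+1)}{n}\sum_g \mathrm{D}_g\eta_g \mathrm{B}_g^\perp$ (exact, since $P_{\mathbf{1}c^\perp}\underline{\mathcal{L}}c=0$), the contribution to the one-sided operator norm is $\norm{u}/(2p_b\sqrt{J+1})$. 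Although the covariance of each summand is indeed $\tfrac{2}{J+1}P_{\mathbf{1}c^\perp}$ per direction, the Euclidean norm aggregates over the $J-1$ directions of that subspace, so $\E\norm{\sum_g \mathrm{D}_g\eta_g\mathrm{B}_g^\perp}^2 \approx 2Np_b$ with no damping, giving $\norm{u}/(2p_b\sqrt{J+1}) \asymp \sqrt{(J+1)/(np_b)}$. This quantity is not $o(1)$ under the hypotheses: take $J=n^{0.6}$ and $p_b=n^{-1/2}$, so that $\log(J)/(np_b)\to 0$ and $J\log(J)/n\to 0$ but $J/(np_b)=n^{0.1}\to\infty$. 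Since $\norm{u}$ concentrates around its root-mean-square, the one-sided quantity $\norm{\underline{\mathcal{L}}^\dagger X}$ genuinely blows up in this regime, so no concentration inequality can close your argument as structured.

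The paper's proof avoids this by never forming the one-sided product: it conjugates symmetrically, bounding $\norm{(\underline{\mathcal{L}}^\dagger)^{1/2}\tfrac{J+1}{n}(S_{\Delta\dot f\Delta\dot f}-\E[S_{\Delta\dot f\Delta\dot f}])(\underline{\mathcal{L}}^\dagger)^{1/2}}$. Under symmetric conjugation the $(c,\perp)$ cross-block picks up only one factor of $(2p_b)^{-1/2}$ rather than $(2p_b)^{-1}$, the whole object becomes a centered sum of rank-one matrices $s_gs_g'$ with $s_g's_g = O(J/n + 1/(np_b))$, and Oliveira's matrix concentration inequality delivers the rate $\sqrt{J\log J/n + \log J/(np_b)}=o_p(1)$. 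The symmetrized matrix is similar to the one-sided product, so it has the same eigenvalues, and the quadratic-form sandwich $v'\underline{\mathcal{L}}v(1\pm\epsilon)$ plus Courant--Fischer (which is all that is used downstream, and all that your eigenvalue step needs) follows from the symmetric bound. To repair your proof you should either prove the symmetric version and note that it implies the eigenvalue and quadratic-form conclusions, or recognize that the largest singular value of the non-symmetric product requires the strictly stronger condition $J/(np_b)\to 0$.
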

		
		\begin{proof}
			First note that 
			\begin{align}
			\tfrac{J+1}{n} \E[S_{\Delta \dot f \Delta \dot f}] - \underline{\mathcal{L}} = \tfrac{2+2p_b}{J-1}  \left(I_{J+1} - \tfrac{\mathbf{1}_{J+1} \mathbf{1}_{J+1}'}{J+1} - \tfrac{cc'}{J+1} \right) + \tfrac{4p_b}{J-1} \tfrac{cc'}{J+1},
			\end{align}
			and $\underline{\mathcal{L}}^\dagger = I_{J+1} - \frac{\mathbf{1}_{J+1} \mathbf{1}_{J+1}'}{J+1}  - \left(1-\tfrac{1}{2p_b}\right)\frac{cc'}{J+1}$, so
			\begin{align}
			\norm*{ \underline{\mathcal{L}}^\dagger \tfrac{J+1}{n} \E[S_{\Delta \dot f \Delta \dot f}] - I_{J+1} + \tfrac{\mathbf{1}_{J+1} \mathbf{1}_{J+1}'}{J+1} } &= \norm*{ \tfrac{2+2p_b}{J-1}  \left(I_{J+1} - \tfrac{\mathbf{1}_{J+1} \mathbf{1}_{J+1}'}{J+1} - \tfrac{cc'}{J+1} \right) + \tfrac{2}{J-1} \tfrac{cc'}{J+1} } \\&= \tfrac{2+2p_b}{J-1}
			\end{align}
			Therefore, we can instead show that $\norm{S} = o_p(1)$ for the zero mean random matrix
			\begin{align}
			S = (\underline{\mathcal{L}}^\dagger)^{1/2} \tfrac{J+1}{n} \left( S_{\Delta \dot f \Delta \dot f} - \E[S_{\Delta \dot f \Delta \dot f}] \right) (\underline{\mathcal{L}}^\dagger)^{1/2} = \sum_{g=1}^N s_g s_g' - \E[s_g s_g']
			\end{align}
			where $s_g = \sqrt{\frac{J+1}{n}} \Delta \dot f_g - \frac{\sqrt{2p_b} - 1}{\sqrt{2p_b n}} \Delta \dot f_g'c \frac{c}{\sqrt{J+1}}$. Now since
			\begin{align}
			s_g's_g = O\left( \frac{J}{n} + \frac{1}{np_b} \right)
			\quad \text{and} \quad
			\norm*{\sum_{g=1}^N \E[s_g s_g' s_gs_g']} = O\left( \frac{J}{n} + \frac{1}{ n p_b} \right)
			\end{align}
			it follows from \cite[][Corollary 7.1]{oliveira2009concentration} that $\Pr(\norm{S}\ge t) \le 2(J+1) e^{-\frac{t^2 (\frac{J}{n} + \frac{1}{np_b})}{c(8+4t)}}$ for some constant $c$ not depending on $n$. Letting $t \propto \sqrt{\frac{\log(J/\delta_n)}{n p_b} + \frac{J \log(J/\delta_n)}{n}}$ for $\delta_n$ that approaches zero slowly enough that $\frac{\log(J/\delta_n)}{n p_b} + \frac{J \log(J/\delta_n)}{n} \rightarrow 0$ yields the conclusion that $\norm{S}= o_p(1)$.
			
			Since $\mathcal{L} = \dot D^{-1/2} S_{\Delta \dot f \Delta \dot f} \dot D^{-1/2}$ the second conclusion follows from the first if $\norm{\tfrac{J+1}{n} \dot D - I_{J+1}} = o_p(1)$. We have $\tfrac{J+1}{n} \E[\dot D] = I_{J+1}$ and $\tfrac{J+1}{n}\dot D_{jj} = \tfrac{J+1}{n} \sum_{g=1}^N (\Delta \dot f_g'e_j)^2$ where $e_j$ is the $j$-th basis vector in $\R^{J+1}$ and $\Pr((\Delta \dot f_g'e_j)^2 =1) = 1 - \Pr((\Delta \dot f_g'e_j)^2 =0) = \frac{2}{J+1}$. Thus it follows from $\mathbb{V}(\tfrac{J+1}{n} \dot D_{jj}) \le 2 \frac{J+1}{n}$ and standard exponential inequalities that $\norm{\tfrac{J+1}{n} \dot D - I_{J+1}} = \max_j \abs{\tfrac{J+1}{n} \dot D_{jj}-1} = o_p(1)$ since $\frac{J \log(J)}{n} \rightarrow 0$.
			
			Finally, we note that $\norm*{\underline{\mathcal{L}}^\dagger \mathcal{L} - I_{J+1} + \tfrac{\mathbf{1}_{J+1} \mathbf{1}_{J+1}'}{J+1} } \le \epsilon$ implies $$v'\underline{\mathcal{L}}v(1-\epsilon) \le v'{\mathcal{L}}v \le v'\underline{\mathcal{L}}v(1+\epsilon) $$ which together with the Courant-Fischer min-max principle yields $(1-\epsilon) \le \frac{\dot{\lambda}_j}{\dot{\underline\lambda}_j}  \le (1+\epsilon)$.
		\end{proof}
		
		Next, we will verify the high-level conditions of the paper in a model that uses $\frac{n}{J+1}\underline{\mathcal{L}}$ in place of $S_{\Delta \dot f \Delta \dot f}$ and $\frac{1}{n}\underline{\mathcal{L}}^\dagger$ in place of $\tilde A$ and $\frac{n}{J+1} I_{J+1}$ in place of $\dot D$. Using an underscore to denote objects from this model we have
		\begin{align}
		\max_g \underline P_{gg} &= \max_g \tfrac{J+1}{n}\Delta \dot f_g' \underline{\mathcal{L}}^\dagger \Delta \dot f_g = 2\tfrac{J+1}{n} + 2\tfrac{(1-2p_b)}{np_b} = o(1), \\
		\text{trace}(\underline{\tilde A}^2) &= \frac{\text{trace}((\underline{\mathcal{L}}^{\dagger})^2)}{n^2} = \frac{J-1}{n^2} + \frac{1}{4(np_b)^2} = o(1), \\
		\frac{\underline \lambda_1^2}{\sum_{\ell=1}^J \underline \lambda_\ell^2} & = \frac{1}{\underline{\dot \lambda}_J^{2}\text{trace}((\underline{\mathcal{L}}^{\dagger})^2)} = \frac{1}{(J-1)4p_b^2 + 1}
		\end{align}
		which is $o(1)$ if and only if $\sqrt{J} p_b \rightarrow \infty$, and $\frac{\underline \lambda_2^2}{\sum_{\ell=1}^J \underline \lambda_\ell^2} \le \frac{1}{J}$. Furthermore,
		\begin{align}
		\max_g \underline{\mathsf{w}}_{g1}^2 & = \max_g \left(\frac{c'(\underline{\mathcal{L}}^\dagger)^{1/2} \Delta \dot f_g}{\sqrt{n}}\right)^2 
		=  \left(\frac{2}{\sqrt{2p_bn}}\right)^2 = \frac{2}{np_n} = o(1), \\
		\max_g (\underline{\tilde x}_g'\beta)^2 &= \max_g  \left(\frac{1}{n} \psi' \underline{\mathcal{L}}^\dagger \Delta \dot f_g \right)^2 \le \frac{2}{n^2} \left[ \max_g (\Delta \dot f_g'\psi)^2 + \left(1 - \frac{1}{2p_b}\right)^2(\bar \psi_{cl,1} - \bar \psi_{cl,2})^2 \right] \\
		&= O\left( \frac{1}{n^2} + \frac{1}{(np_b)^2} \right)
		\end{align}
		which is $o\left( \mathbb{V}[\hat \theta] \right)$ if $\sqrt{J} p_b \rightarrow \infty$ as $\text{trace}(\underline{\tilde A}^2) = O(\mathbb{V}[\hat \theta])$ and
		\begin{align}
		\max_g (\underline{\tilde x}_{g1}'\beta)^2  = \max_g \left(\frac{1}{n} \psi' \Delta \dot f_g \right)^2 = O\left(\frac{1}{n^2}\right) = o\left( \mathbb{V}[\hat \theta] \right).
		\end{align}
		Finally,
		\begin{align}
		\max_g (\check x_{g}'\beta)^2 &= O\left( \sum_{g=1}^N B_{gg}^2 \right) = O\left( \max_g B_{gg} \text{trace}({\tilde A}) \right)
		\shortintertext{where}
		\max_g \underline B_{gg} &= \max_g \Delta \dot f_g' \frac{J+1}{n^2} (\underline{\mathcal{L}}^\dagger)^2 \Delta \dot f_g = 2\frac{J+1}{n^2} + \frac{1- 4p_b^2}{(np_b)^2} = O\left(\text{trace}(\underline {\tilde A^2})\right)\\
		\text{trace}(\underline{\tilde A}) &= \frac{J-1}{n} + \frac{1}{2p_b n} = o(1)
		\end{align}
		so $\max_g \underline B_{gg} \text{trace}(\underline {\tilde A}) = O(\text{trace}(\underline {\tilde A^2}))o(1).$
		
		Finally, we use the previous lemma to transfer the above results to their relevant sample analogues.
		{\footnotesize
			\begin{align}
			\max_g \abs{P_{gg} - \underline P_{gg}} 
			&= \max_g \abs{\Delta \dot f_g' (S_{\Delta \dot f \Delta \dot f}^\dagger - \tfrac{J+1}{n}\underline {\mathcal{L}}^\dagger)\Delta \dot f_g} \\
			&= \tfrac{J+1}{n} \max_g \abs*{\Delta \dot f_g' (\underline {\mathcal{L}}^\dagger)^{1/2} \left(\underline {\mathcal{L}}^{1/2} \tfrac{n}{J+1}S_{\Delta \dot f \Delta \dot f}^\dagger \underline {\mathcal{L}}^{1/2} - I_{J+1} + \tfrac{\mathbf{1}_{J+1} \mathbf{1}_{J+1}'}{J+1}\right)(\underline {\mathcal{L}}^\dagger)^{1/2}\Delta \dot f_g} \\
			&=O\left(\norm*{ \underline{\mathcal{L}}^\dagger \tfrac{J+1}{n} S_{\Delta \dot f \Delta \dot f} - I_{J+1} + \tfrac{\mathbf{1}_{J+1} \mathbf{1}_{J+1}'}{J+1} }\right) \max_{g} \underline P_{gg} = o\left(\max_{g} \underline P_{gg}\right)\\
			\abs*{\text{trace}({\tilde A}^2-\underline{\tilde A}^2)} &= \abs*{\sum_{\ell=1}^J \frac{1}{n^2 \dot \lambda_\ell^2} - \frac{1}{n^2\underline {\dot\lambda}_\ell^2}} 
			= \text{trace}(\underline{\tilde A}^2) O\left( \max_\ell \abs*{\frac{\dot \lambda_\ell - \underline {\dot\lambda}_\ell}{\underline {\dot\lambda}_\ell}} \right) = o_p\left(\text{trace}(\underline{\tilde A}^2) \right) \\
			\abs*{\frac{\lambda_1^2}{\sum_{\ell=1}^J \lambda_\ell^2} - \frac{\underline \lambda_1^2}{\sum_{\ell=1}^J \underline {\lambda}_\ell^2}} &= \frac{\underline \lambda_1^2}{\sum_{\ell=1}^J \underline \lambda_\ell^2} O\left( \frac{\abs{\dot \lambda_J - \underline{\dot \lambda}_J}}{\underline{\dot\lambda}_J} + \frac{\abs*{\text{trace}(\underline{\tilde A}^2-{\tilde A}^2)}}{\text{trace}(\underline{\tilde A}^2)}  \right) = o_p(1)
			\end{align}
		}
		with a similar argument applying to $\frac{\lambda_2^2}{\sum_{\ell=1}^J \lambda_\ell^2} - \frac{\underline \lambda_2^2}{\sum_{\ell=1}^J \underline {\lambda}_\ell^2}$. Furthermore,
		{\footnotesize
			\begin{align}
			\max_g {\mathsf{w}}_{g1}^2 &=  \max_g \left( \Delta \dot f_g (\tfrac{J+1}{n} \underline{\mathcal{L}}^\dagger)^{1/2}(\underline{\mathcal{L}} \tfrac{n}{J+1}S_{\Delta \dot f \Delta \dot f}^\dagger )^{1/2} q_1 \right)^2 \le \norm{(\underline{\mathcal{L}} \tfrac{n}{J+1}S_{\Delta \dot f \Delta \dot f}^\dagger )^{1/2} } \max_{g} \underline P_{gg} = o_p(1)  \\
			\shortintertext{and $\max_g \abs{({\tilde x}_g'\beta)^2 - (\underline{\tilde x}_g'\beta)^2} =o_p(\text{trace}(\underline {\tilde A^2}))$ since}
			\max_g ({\tilde x}_g'\beta-  \underline{\tilde x}_g'\beta)^2 &= \tfrac{J+1}{n^2} \max_g \left( \Delta \dot f_g' \underline{\mathcal{L}}^\dagger \left( \underline{\mathcal{L}}S_{\Delta \dot f \Delta \dot f} \dot D - I_{J+1} + \tfrac{\mathbf{1}_{J+1} \mathbf{1}_{J+1}'}{J+1} \right)\tfrac{\psi}{\sqrt{J+1}} \right)^2 \\
			&\le \norm*{\underline{\mathcal{L}}S_{\Delta \dot f \Delta \dot f} \dot D - I_{J+1} + \tfrac{\mathbf{1}_{J+1} \mathbf{1}_{J+1}'}{J+1} } \max_g \underline B_{gg} \frac{\norm{\psi}^2}{J+1} \\
			&= o_p(\text{trace}(\underline {\tilde A^2}))
			\end{align}
		}%
		and this also handles $\max_i \abs{({\tilde x}_{g1}'\beta)^2 - (\underline{\tilde x}_{g1}'\beta)^2} =o_p(1)$ as the previous result does not depend on the behavior of $\sqrt{J} p_b$.
		Finally,
		{\footnotesize
			\begin{align}
			\max_g \abs{ B_{gg} - \underline B_{gg}} &= \frac{J+1}{n^2} \max_g  \abs*{ \Delta \dot f_g' \underline{\mathcal{L}}^\dagger \left( \tfrac{n}{J+1} \underline{\mathcal{L}} S_{\Delta \dot f \Delta \dot f}^\dagger  \dot D S_{\Delta \dot f \Delta \dot f}^\dagger \underline{\mathcal{L}}   - I_{J+1} + \tfrac{\mathbf{1}_{J+1} \mathbf{1}_{J+1}'}{J+1} \right)  \underline{\mathcal{L}}^\dagger\Delta \dot f_g  }\\
			&\le \norm*{ \tfrac{n}{J+1} \underline{\mathcal{L}} S_{\Delta \dot f \Delta \dot f}^\dagger \tfrac{J+1}{n} \dot D \tfrac{n}{J+1}S_{\Delta \dot f \Delta \dot f}^\dagger \underline{\mathcal{L}}   - I_{J+1} + \tfrac{\mathbf{1}_{J+1} \mathbf{1}_{J+1}'}{J+1} } \max_g \underline B_{gg} \\
			&= o_p(\max_g \underline B_{gg}) \\
			\abs*{\text{trace}(\underline{\tilde A} - \tilde A)} &= \abs*{\sum_{\ell=1}^J \frac{1}{n \dot \lambda_\ell} - \frac{1}{n\underline {\dot\lambda}_\ell}}  = \text{trace}(\underline{\tilde A}) O\left( \max_\ell \abs*{\frac{\dot \lambda_\ell - \underline {\dot\lambda}_\ell}{\underline {\dot\lambda}_\ell}} \right) = o_p\left(\text{trace}(\underline{\tilde A}) \right) \\
			\end{align}	
		}
		
	\end{customthm}

\end{appendices}

\end{document}